\documentclass[preprint, 12pt,1p, proof, authoryear]
{elsarticle}
\makeatletter
\def\ps@pprintTitle{%
 \let\@oddhead\@empty
 \let\@evenhead\@empty
 \def\@oddfoot{}%
 \let\@evenfoot\@oddfoot}
\makeatother

\usepackage{tikz}
\usetikzlibrary{patterns}
\usepackage{longtable, booktabs,multirow,lipsum,array}
\usepackage[authoryear, round]{natbib}
\usepackage{amssymb}
\usepackage{amsmath}
\usepackage{enumerate}
\usepackage{mwe}
\usepackage[toc, header]{appendix}
\usepackage{mathtools}
\usepackage{graphicx}
\usepackage[dvipsnames]{xcolor}
\usepackage{soul}
\usepackage{hyperref}
\usepackage{amsfonts}
\usepackage{graphicx}
\usepackage[utf8]{inputenc}
\usepackage[all]{xy}
\usepackage{csquotes}
\usepackage{ulem}
\usepackage{amsmath}
\usepackage{amsthm}
\usepackage{import}
\usepackage{adjustbox}
\usepackage{caption}
\usepackage{pdflscape}
\usepackage{tikzpeople}
\usepackage{wasysym}
\usepackage{setspace}
\usepackage{booktabs}
\usepackage{arydshln}
\usepackage{comment}
\usepackage{mathbbol}

\usepackage{subcaption} 
\usepackage{hyperref} 
\usepackage{rotating} 
\usepackage{prodint}

\usepackage{ mathrsfs }
\usepackage[flushleft]{threeparttable}
\usepackage{algorithm,algpseudocode}

\usepackage[sectionbib]{bibunits}
\defaultbibliography{references}
\defaultbibliographystyle{plainnat}

\usepackage{titlesec}

\usepackage{enumitem}

\usepackage{titletoc}

\newcommand\DoToC{%
  \startcontents
  \printcontents{}{1}{\textbf{ }\vskip3pt\hrule\vskip5pt}
  \vskip3pt\hrule\vskip5pt
}

\usepackage{stmaryrd}
\usepackage{arydshln}

\usepackage{tikz}
\usetikzlibrary{positioning}

\usepackage{algorithm}
\usepackage{algpseudocode}
\usepackage{booktabs}

\usepackage{geometry}
\hypersetup{
    colorlinks=true,  
    linkcolor=green, 
    citecolor=red,
    filecolor=magenta, 
    urlcolor=cyan, 
}

\usepackage{tocloft}

\newcommand{\expectation}{\mathbb{E}}

\newcommand{\prstudyp}{
  \scalebox{0.5}{
      \begin{tikzpicture}
        \draw (0,0) -- 
              ++(0.4,0) --
              ++(0,0.3) -- cycle;
      \end{tikzpicture}
    }
}

\newcommand{\prgapp}{
  \scalebox{0.5}{
      \begin{tikzpicture}
        \draw (0,0) --
              ++(0,0.3) -- 
              ++(0.4,0) -- 
              cycle; 
      \end{tikzpicture}
    }
}

\newtheorem{lemma}{Lemma}
\newtheorem{proposition}{Proposition}
\newtheorem*{proposition*}{Proposition}

\newtheorem{corollary}{Corollary}
\newtheorem{definition}{Definition}

\newcommand\independent{\protect\mathpalette{\protect\independenT}{\perp}}
\def\independenT#1#2{\mathrel{\rlap{$#1#2$}\mkern2mu{#1#2}}}

\newcommand{\norminf}[1]{{\sup\limits_{s \in \left[0, t\right]} \left| #1 \right|}}

\newcommand{\norminfdiscrete}[1]{{\sup\limits_{m \in \{0, \dots, k \}} \left| #1 \right|}}

\newcommand{\CTC}{(CTC)}

\newcommand{\RExch}{{(\prstudyp - \text{e})}}
\newcommand{\RCons}{{(\prstudyp - \text{c})}}
\newcommand{\RPos}{{(\prstudyp - \text{p})}}

\newcommand{\RAID}{(\prstudyp-{\text{ecp}})}
\newcommand{\FAID}{ 
(\prstudyp\prgapp-\text{ecp})
}

\newcommand{\hi}{{HI}}

\newcommand{\mb}{
  \mathcal{M}\left(\mathbb{B}\right)
}
\newcommand{\mbs}{
  \mathcal{M}\left(\mathbb{B}_{\prstudyp}\right)
}
\newcommand{\mbf}{
  \mathcal{M}\left(\mathbb{B}_{\prstudyp\prgapp}\right)
}

\newcommand{\mbps}{
\mathcal{M}(\mathbb{B}^*) 
}

\newcommand{\mbcirc}{
\mathcal{M}(\mathbb{B}^{\circ}) 
}

\newcommand{\RNPcont}{{(\prstudyp-\text{np-cont})}}

\newcommand{\ctFull}{{\prstudyp\prgapp - \text{ctc}}}

\newcommand{\ctPost}{{(\neg\prstudyp)\prgapp - \text{ctc}}}

\newcommand{\ass}[1]{\textcolor{blue}{#1}}

\begin{document}

\begin{frontmatter}
    \title{Causal inference with staggered entries and effects that change over calendar time} 
\author{
\vspace{3ex}
Lorenzo Gasparollo\footnotemark[1] 
\textit{and} Mats J. Stensrud
}

\address{\vspace{4ex}
Institute of Mathematics, Ecole Polytechnique Fédérale de Lausanne, Switzerland
\vspace{-4ex}}

\footnotetext[1]{\textbf{First and corresponding author:}
Lorenzo Gasparollo, Institute of Mathematics, Ecole Polytechnique Fédérale de Lausanne, Switzerland. \url{lorenzo.gasparollo@epfl.ch}}

    \begin{abstract}
Studies with staggered entry, in which individuals enroll at different calendar times, are ubiquitous in medicine and related disciplines. Because these studies usually have a fixed administrative end of follow-up, identification of the estimand of interest relies on assumptions about the right-censoring mechanism. The assumptions are often considered plausible only conditional on covariates, including the time an individual entered the study ($E$). 
Yet, censoring assumptions formulated conditional on $E$ are ill-posed due to positivity violations. Here, we study the consequences of this issue for common procedures in causal survival analysis, such as those based on marginal structural models. We further give conditions for valid identification and introduce sensitivity analyses to assess practical implications. We illustrate our methodology through two case studies. The first builds on the seminal article by Hernán et al. (2000) on the effect of zidovudine treatment. The second reanalyzes a recent study on the effect of the mRNA vaccine in patients receiving immune checkpoint inhibitors.
\end{abstract}

\begin{keyword}
Causal inference, survival analysis, censoring, staggered entries, sensitivity analysis.
\end{keyword}

\end{frontmatter}

\clearpage
\setstretch{1}

\begin{bibunit}
    \section{Introduction}
\label{sec: introduction}
A characteristic feature of studies with staggered entries, which are common in medicine and related fields, is that individuals are right-censored when the study ends. 

This type of censoring is often referred to as administrative, generalized type I, or end-of-follow-up censoring. Investigators who analyze staggered-entry data routinely rely on independent censoring assumptions \citep{andersen_statistical_1993, klein_survival_2003} or, alternatively, sequential exchangeability assumptions, i.e., independence assumptions involving potential outcomes under interventions on censoring \citep{robins_new_1986, hernan_causal_2025}. These \textit{censoring assumptions} permit us to draw inference about outcomes over time, even when complete outcome data are unavailable for some individuals.

Censoring assumptions are often considered plausible conditional on covariates. In particular, many studies adjust for the calendar time at which an individual entered the study, $E$, thereby operating under censoring assumptions formulated conditional on $E$. The reason is that adjustment for $E$ is needed when $E$ depends on the outcome of interest, as administrative censoring itself depends on $E$, see, for example, \citet[p. 230]{hernan_causal_2025} and \citet{robins_recovery_1992, robins_g-estimation_1992, robins_estimation_1992, robins_analytic_1993, robins_adjusting_1994}. 
Such a dependence reflects calendar-time changes, that is, variation in risk over calendar time, which has historically been modeled as a period effect \citep{clayton_models_1987} and referred to as ``calendar-time trends'' \citep{hansen_estimating_2017} or ``secular trends'' \citep{hernan_causal_2025}. Adjustment for $E$ is common in studies based on classical survival analysis methods (e.g., \citet{van_not_improving_2024, harrysson_temporal_2025, lian_survival_2025}) as well as in target trial emulations (e.g., \citet{garcia-albeniz_effectiveness_2017, dickerman_avoidable_2019}). The seminal works of \citet{hernan_marginal_2000, hernan_observational_2008}, which have received thousands of citations and inspired a large literature on causal inference from observational data in which adjustment for $E$ is common, also adjusted for $E$.

We will show, however, that conventional adjustment for $E$ is often problematic under calendar-time changes: conditioning on $E$ renders conventional censoring assumptions ill-posed, whereas excluding $E$ leads to violations of the conventional assumptions. 

Our aim is therefore to assess the consequences of conditioning on $E$ and to develop procedures permitting valid identification of the parameter corresponding to the mathematical formulation of the research question of interest, i.e., the estimand of interest. Methods introduced by Robins, such as g-estimation of structural nested failure time models \citep{robins_analytic_1993,robins_adjusting_1994} and parametric g-computation \citep{robins_new_1986}, can be used to adequately adjust for $E$.

In this work, however, we study two widely used classes of procedures for causal inference, for example those routinely used in target trial emulations. In practice, adjustment for $E$ is carried out for both classes, although its justification is often unclear.

The first class, hereafter referred to as the weighted survival curve (WSC) procedures, includes methods based on the Kaplan–Meier estimator \citep{kaplan_nonparametric_1958} and its stratified \citep{cupples_ageadjusted_1995} and weighted \citep{cole_adjusted_2004,xie_adjusted_2005} formulations, which are often applied in medical and epidemiologic studies \citep{westreich_time_2010, weintraub_comparative_2012, bleicher_time_2016,kishan_radical_2018, pouwels_estimating_2020}. We show that, without relying on untestable conditions that are often scientifically implausible, WSC procedures generally target parameters that differ from the estimand of interest, and we give an explicit expression for this difference. The second class, hereafter referred to as MSM-hazard procedures, concerns the parameters of a Marginal Structural Model (MSM) \citep{hernan_marginal_2001, hernan_observational_2008}. In practice, these procedures are widely used in target trial emulations and cover (an approximation of) the proportional hazards model \citep{cox_regression_1972} as a special case. Previous work has suggested that MSM-hazard procedures cannot address positivity violations \citep{moore_causal_2012} or that they can but under unspecified conditions \citep{cole_constructing_2008}. Here, we articulate explicit assumptions under which identification of MSM-hazard parameters is possible.

For both classes of procedures, we elaborate on the assumptions required for valid identification, which we term \textit{extrapolation assumptions}, as they rely on extrapolations beyond the support of the observed data. 

Because extrapolation assumptions are generally not expected to hold under calendar-time changes, we propose sensitivity analyses that quantify the minimum post-study change in outcome risk sufficient to nullify the inferred effect, thereby allowing investigators to assess the robustness of their conclusions to the extrapolation assumptions.

We establish most of our results within a discrete-time potential-outcome framework for causal inference, with further results derived in a continuous-time counting-process framework. These results are illustrated through two examples.
The first is a simulation study inspired by the seminal work of \citet{hernan_marginal_2000} on the effect of zidovudine treatment on survival among HIV-positive men. It illustrates our theoretical results and provides additional support for their findings. The second is a target trial emulation analysis of a recent study published in Nature on the effect of SARS-CoV-2 mRNA vaccination in patients receiving immune checkpoint inhibitors \citep{grippin_sars-cov-2_2025}. It demonstrates the interpretation and implications of extrapolation assumptions in a target trial emulation analysis.

    \section{Censoring assumptions articulated conditional on $E$ are ill-posed}
\label{sec: censoring_assumptions_are_illposed}

\subsection{Notation and observed data structure}
Consider a population of i.i.d. individuals for whom
we measure variables at discrete time points $k \in \{0, \dots, K\}$ \textit{from} the calendar time they entered the study, $E$, with $E \in \{0, \dots, \mathcal{T}\}$, where $\mathcal{T}$ denotes the (translated) calendar time of the administrative end of study. Let $K, \mathcal{T} > 0$. 

At each $k \in \{0, \dots, K\}$, let $L_k$ denote baseline (for $k=0$) and time-varying (for $k>0$) covariates; $C_{k-1}$ the right-censoring indicator (1 right-censored, 0 uncensored, with $C_{-1}=0$); $A_k$ the treatment indicator (1 treatment, 0 control); and $Y_k$ the event indicator (1 death, 0 survival). We assume that $Y_k = 1 \Rightarrow Y_{k+1}=1$ and $C_{k-1} = 1 \Rightarrow C_{k}=1$ a.s. for $k \in \{0, \dots, K-1\}$, and adopt a temporal order $(L_k, C_{k-1}, A_k, Y_k)$ for each $k \in \{0, \dots, K\}$; any variable in $(L_m, C_{m-1}, A_m, Y_m)$ for $m \notin \{0, \dots, K\}$ is considered to be $\emptyset$.

We let overbars denote treatment, covariate, or survival histories, for example, $\overline{A}_{k} \equiv \{A_m : m \in \{0, \dots, k\}\}$, and let $\overline{0}$ and $\overline{1}$ denote vectors of $(k+1)$ zeros and ones, respectively, for $k \in {0,\dots,K}$. To lighten notation, for events $\overline{Y}_k =  \overline{0}$ and $\overline{C}_{k-1} = \overline{0}$ we write $Y_k=0$ and $C_{k-1}=0$, respectively.

Finally, we will also give results for the special case of a point treatment $A$ in place of $\overline{A}_K$, with no time-varying confounders ($L_{k} = \emptyset, k > 0$).

The remaining notation is introduced as needed throughout the article, and Table \ref{tab: nomenclature} in Appendix \ref{sec_app: guide_appendices} describes all the symbols succinctly.

\subsection{Illustrative example}
\label{sec: illustrative_example_censoring_assumptions_ill_posed}
\subsubsection{Exchangeability between counterfactual outcomes as a censoring assumption}
Consider the effect of zidovudine treatment on survival in human immunodeficiency virus-positive men, as studied in the seminal paper by \citet{hernan_marginal_2000}.\footnote{Our arguments also extend to studies adopting more recent formulations of the target trial emulation framework. Since individuals can be eligible in multiple (non-randomized) nested trials \citep{hernan_observational_2008}, the individual-trial-specific calendar time of eligibility can be conceptualized as the calendar time of study-entry $E$ in a population generated from the original target population by cloning individuals every time they are eligible in one of the nested trials.} 
One way to conceptualize this effect is through an intervention on two variables at each $k \in \{0, \dots, K\}$: the treatment under scientific investigation, $A_k$, and the right-censoring indicator, $C_{k-1}$.
\footnote{
Introducing interventions on censoring may be unfamiliar to some readers outside the causal inference literature. However, this is a standard representation of right censoring in the potential-outcomes framework (see, e.g., \citealp{robins_correcting_2000, hernan_causal_2025}) and is adopted here to make the identifying assumptions explicit.
} 
The intervention on $A_k$ evaluates the effect of assigning the treatment under scientific investigation according to a treatment strategy, $a_k$, that differs from that observed in the data, while the intervention that sets $C_{k-1}$ to 0 prevents loss to follow-up and administrative censoring.

For $\overline{a}_k\in\{\overline{0},\overline{1}\}$, let $Y_k^{\overline{a}_k, {c}_{k-1}=0}$ denote the outcome that would occur at follow-up $k$ under such an intervention, and consider an estimand traditionally of interest,
\begin{align}\label{eq: survival_contrast}
S_k^{\overline{a}_k} \coloneqq 
P \{Y_k^{\overline{a}_k, {c}_{k-1}=0} = 0\}, \quad
\text{\footnotesize $k\in\{0,\dots,K\}.$}
\end{align}
In words, $S_k^{\overline{a}_k}$ is the survival function, hereafter \textit{survival}, resulting from an intervention where, possibly contrary to fact, individuals received treatments $\overline{a}_k$ (e.g., zidovudine if $\overline{a}_k=\overline{1}$) and remained uncensored through follow-up $k$.

When adjustment for $E$ is deemed necessary for the identification of $S_k^{\overline{a}_k}$ from the observed data, as in \citep{hernan_marginal_2000}, consider the classical
\textbf{ex}changeability (\ass{ex}), \textbf{con}sistency (\ass{con}), and \textbf{pos}itivity (\ass{pos}) assumptions \citep{ hernan_causal_2025}, 

which can be articulated by requiring
\begin{itemize}[leftmargin=2cm, labelsep=0.1cm]
    \item[\ass{ex}$(e,k)$]:  {\footnotesize$\left\{Y_K^{\overline{a}_K, {c}_{K-1}=0}, \dots, Y_k^{\overline{a}_k, c_{k-1}=0} \right\}
\independent
\mathbb{1}_{\{A_k = a_k, C_{k-1}=0\}}
\mid \overline{L}_k, \overline{A}_{k-1}=\overline{a}_{k-1}, {Y}_{k-1} = {C}_{k-2} = 0, E=e, \, \overline{a}_K\in\{\overline{0},\overline{1}\}$,}
\item[\ass{con}$(e,k)$]:   {\footnotesize $E = e, \overline{A}_{k} = \overline{a}_k, {C}_{k-1}={0}  \Longrightarrow Y_k^{\overline{a}_k, c_{k-1}=0} = Y_k, \, \overline{a}_k\in\{\overline{0},\overline{1}\}$}, and
\item[\ass{pos}$(e,k)$]: {\footnotesize$P\left( {A}_{k}=a_k, C_{k-1}=0  \mid \overline{L}_k, {Y}_{k-1} = 0, \overline{A}_{k-1}=\overline{a}_{k-1}, {C}_{k-2}=0, E=e\right) > 0, \, \overline{a}_k\in\{\overline{0},\overline{1}\},$} 
\end{itemize} 
to hold a.s. for every $(e,k) \in \{0, \dots, K\}\times \{0, \dots, \mathcal{T}\}$. To simplify the presentation, we wrote $E$ and $L_0$ separately, even if $E \in L_0$. We use this convention throughout the remainder of Section \ref{sec: censoring_assumptions_are_illposed}.

\subsubsection{The problem: conditioning on $E$}
\label{sec: the_problem_conditioning_on_e}
The articulation of the positivity assumption \ass{pos} depends on that of the exchangeability assumption \ass{ex}: under \ass{con}, \ass{pos} is formulated to characterize the well-definedness of \ass{ex}. As discussed by \citet{robins_correcting_2000}, \ass{ex} serves the same purpose as independent censoring assumptions \citep{andersen_statistical_1993, klein_survival_2003} in that it enables identification of the parameter of interest (e.g., survival) from incomplete observations. But a formulation of \ass{ex} conditional on $E$ is problematic because of a violation of positivity (\ass{pos}): for individuals who enter the study late in calendar time, we cannot observe outcomes far into their follow-up if they are event-free at the administrative end of the study, i.e., if they experience neither failure nor right-censoring before administrative censoring. We will subsequently make this issue precise, study its consequences, and consider potential solutions. 
 
To fix ideas, consider the case where $\mathcal{T}=K=2$ and take any $l_0 \in \textnormal{supp}(P(L_0))$, such that, say, $e=1$. Because entries are staggered, individuals who entered at calendar time $\tau=(e=1)$, and who remained alive and uncensored by follow-up $k=1$, $\{E=1, Y_1=C_{0}=0\}$, will necessarily be censored at follow-up $k=2$, that is, $C_1=1$. Figure \ref{fig: secc} illustrates this phenomenon, which is a violation of positivity (\ass{pos}). More elaborately, we have that
{\footnotesize
\begin{multline}
    0 \leq P(A_{\textcolor{teal}{k\equiv 2}} =1, C_{\textcolor{teal}{k-1 \equiv 1}} = 0 \mid {L}_2={l}_2, Y_1 = 0, \overline{A}_1 = \overline{1},   {C}_0 = 0, L_1 = l_1, L_0 =l_0, \textcolor{teal}{E=1}) \\
    \leq P(C_{1} = 0 \mid {L}_2={l}_2, Y_1 = 0,  \overline{A}_1 = \overline{1}, {C}_0 = 0, L_1 = l_1, L_0=l_0, \textcolor{teal}{E=1}) =0 , \\  \text{ for $P-$ almost all $l_2, l_1$},
\end{multline}
}
i.e., $P(A_{\textcolor{teal}{k\equiv 2}} =1, C_{\textcolor{teal}{k-1 \equiv 1}} = 0 \mid {L}_2={l}_2, Y_1 = 0, \overline{A}_1 = \overline{1}, {C}_0 = 0, L_1 = l_1, L_0=l_0, \textcolor{teal}{E=1}) =0$ for \text{$P-$ almost all $l_2, l_1$}, i.e., \ass{pos}($e=1,k=2$), and thus \ass{pos}, is violated. This implies that the conditioning sets defining \ass{ex} lie outside the support of the observed data distribution, and therefore \ass{ex} is ill-posed. See Appendix \ref{sec_app: censoring_assumptions_meaning_violation} for a formal justification.
\begin{figure}[H]
\includegraphics[width=13.5cm]{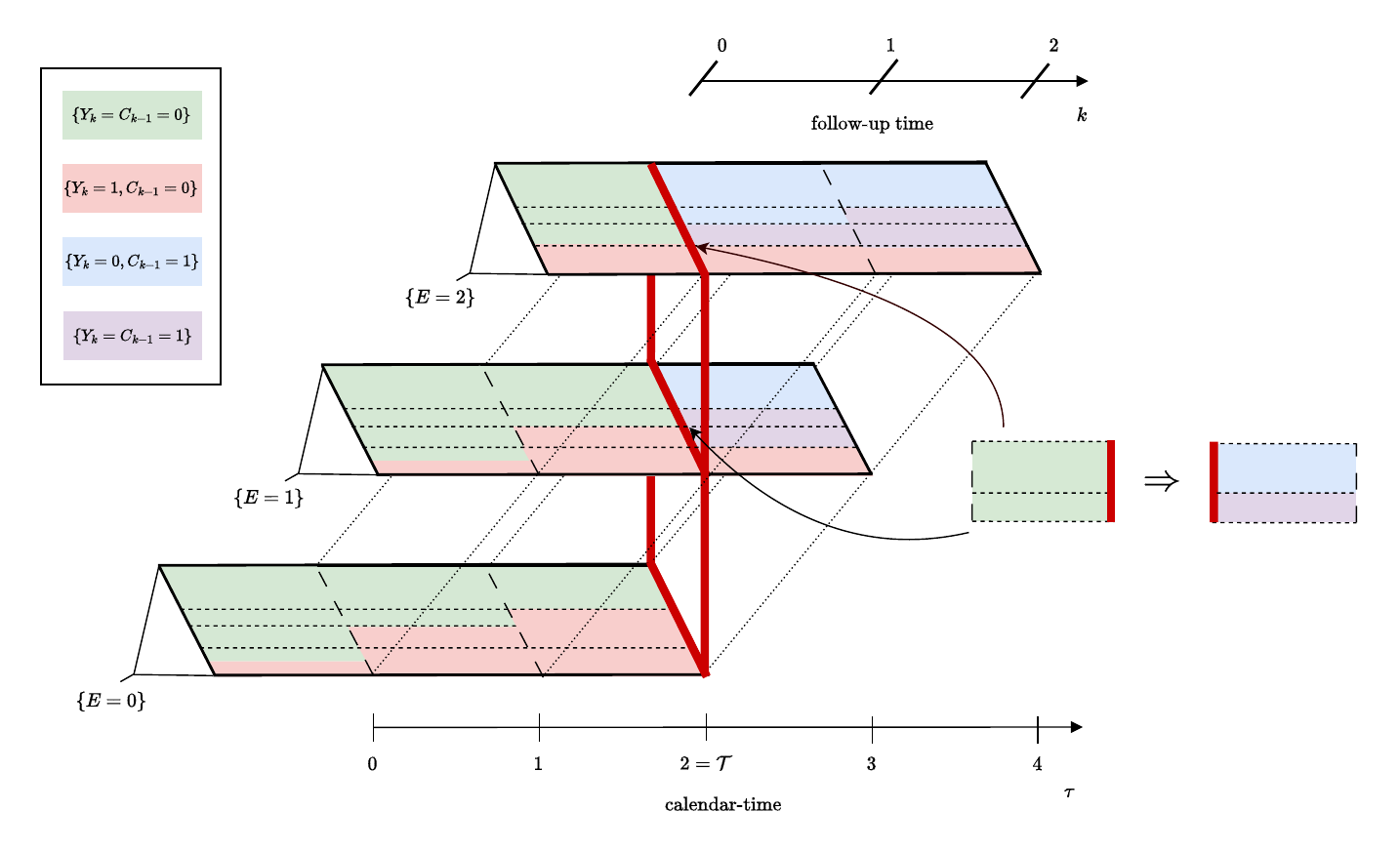}
\caption{The red vertical lines are the time of administrative end of the study $2=\mathcal{T}$, which, depending on $E$, corresponds to different individual-specific follow-up times $k$. Thus, the event of interest is censored if it happens after $k=1$ for $\{E=1\}$, or after $k=0$ for $\{E=2\}$.}
\label{fig: secc}
\end{figure}

It is the dependence between $E$ and the outcome of interest, here $Y_k^{\overline{a}_k,c_{k-1}=0}$, that usually prompts researchers to formulate censoring assumptions, here \ass{ex}, conditional on $E$. This follows because event-free individuals are censored at the calendar time of the administrative end of study, $\mathcal{T}$, i.e.,  
\begin{multline}\label{eq: secc}
    \mathbb{1}_{\{E + k \geq \mathcal{T} + 1\}} = 1, Y_{k-1}=0, C_{k-2}=0 \Longrightarrow  C_{k-1} = 1, \text{ $P-$\text{a.s.}} \\
    \text{ for every } k \in \{0, \dots, K\},
\end{multline} 
and therefore $C_{k-1}$ itself depends on $E$. That is, $E$ can be thought of as a confounder for the effect of the \textit{composite} treatment, $(C_{k-1}, A_{k}=a_k)$, on the outcome of interest, $Y_k^{\overline{a}_k,c_{k-1}=0}$ \citep{hernan_causal_2025}. In particular, what invariably leads to the ill-posedness of \ass{ex} is a dependence between $E$ and $Y_k^{\overline{a}_k, c_{k-1}=0}$ that arises at a calendar time $\tau$ \textit{greater} than the administrative end of the study $\mathcal{T}$: while the condition \ass{pos}$(e,k)$ need not be a priori violated for $\tau = (e+k) \leq \mathcal{T}$, it is necessarily violated if $\tau = (e+k) > \mathcal{T}$ and it is this violation that renders \ass{ex} ill-posed.

Such a line of reasoning applies to any outcome defined under an intervention that prevents censoring. For example, consider a two-arm randomized controlled trial (RCT) where only a single treatment, $A \in \{0,1\}$, is randomized at baseline, and let $Y_k^{c_{k-1}=0}$ denote the outcome under an intervention that abolishes censoring. Suppose that $E$, in the arm $A=1$, is believed to be associated with $Y_k^{c_{k-1}=0}$ so that the researcher operates under the censoring assumption $\left(Y_K^{{c}_{K-1}=0}, \dots, Y_k^{c_{k-1}=0}\right)\independent C_{k-1} \mid {Y}_{k-1} = 0, {C}_{k-2} = 0, E, A=1$ for every $k\in\{0,1,2\}$. Then, for $k=2$ and $E=1$, from \eqref{eq: secc}, it follows that
\begin{align}
    P(C_{\textcolor{teal}{k-1 \equiv 1}} = 0 \mid Y_1 = 0,  {C}_0 = 0, \textcolor{teal}{E=1}, A=1) = 0,
\end{align}
i.e., positivity is violated and, correspondingly, the censoring assumption is ill-posed.

We say that calendar-time changes \ass{\CTC} are present when dependencies between the outcome of interest and the time of study entry arising during the post-study period result in the following recurring identification problem:
{\footnotesize
\begin{multline*}
\text{``censoring assumptions are only deemed plausible conditional on time of study-entry} \\
\Rightarrow 
\text{the censoring mechanism
\eqref{eq: secc} implies positivity is violated} \\
\Rightarrow \text{censoring assumptions are ill-posed''}. 
\end{multline*}
}
We elaborate on the notion of \ass{\CTC} in Appendix \ref{sec_app: notion_double_bind}, clarifying how it differs from previously known formulations. Appendix \ref{sec_app: continuous_time_secc} discusses how this problem manifests in continuous-time settings and elaborates on the sense in which censoring assumptions are ill-posed when formulated within a classical continuous-time framework, e.g., as in \citep{andersen_statistical_1993}. Insodoing, we clarify that the issues presented here do not arise because of our specific adoption of the potential-outcome framework. Unless otherwise stated, we henceforth focus on discrete-time settings.
    \section{Ill-posed censoring assumptions in WSC and MSM-hazard procedures: consequences and solutions}
\label{sec: consequences_and_solutions_in_wsc_and_msm_hazard}
Here, we aim to describe the consequences of \ass{\CTC} in WSC and MSM-hazard procedures and to establish conditions that characterize when survival $S_k^{\overline{a}_k}$, as in eq. \eqref{eq: survival_contrast}, is identifiable from the observed data. 

Denote by $\ass{\FAID}$ the intersection of \ass{ex}, \ass{con}, and \ass{pos}. For assumption $\ass{\FAID}$ to be satisfied, \ass{ex}$(e,k)$ and  \ass{con}$(e,k)$ and \ass{pos}$(e,k)$ must hold for every $(e,k) \in \{0, \dots, \mathcal{T}\} \times \{0, \dots, K\}$, that is, they must all hold during the study period, i.e., at every calendar time $\tau = (e+k) \in  \{0,\dots, \mathcal{T}\}$ \textit{and} during the post-study period, i.e., at every calendar time $\tau = (e+k) \in\{\mathcal{T}+1,\dots, \mathcal{T}+K\}$. As elaborated in Section \ref{sec: censoring_assumptions_are_illposed}, however, it is because \ass{pos}$(e,k)$ is violated during the post-study period that \ass{ex}, and thus $\ass{\FAID}$, is ill-posed. We therefore articulate a restricted formulation of $\ass{\FAID}$, denoted by $\ass{\RAID}$, in which conditions are required to hold solely during the \textit{study period}, i.e., $\ass{\RAID}$ is satisfied when \ass{ex}$(e,k)$, \ass{con}$(e,k)$, and \ass{pos}$(e,k)$
hold for every $(e,k)$ such that $\tau=(e+k) \in \{0, \dots, \mathcal{T}\}$. 

Figure \ref{fig: faid_vs_raid} illustrates the distinction between $\ass{\RAID}$, required to hold only in the green area, and $\ass{\FAID}$, which instead must hold for every $k \in \{0, \dots, K\}$, i.e., in both the green and lavender areas. We refer the reader to Appendix \ref{sec_app: censoring_assumptions_discrete_time} for alternative definitions of $\ass{\RAID}$ and $\ass{\FAID}$, and to Appendix \ref{sec_app: continuous_time_secc} for analogous continuous-time conditions.
 
\begin{figure}[ht]
\centering\includegraphics[width=10cm]{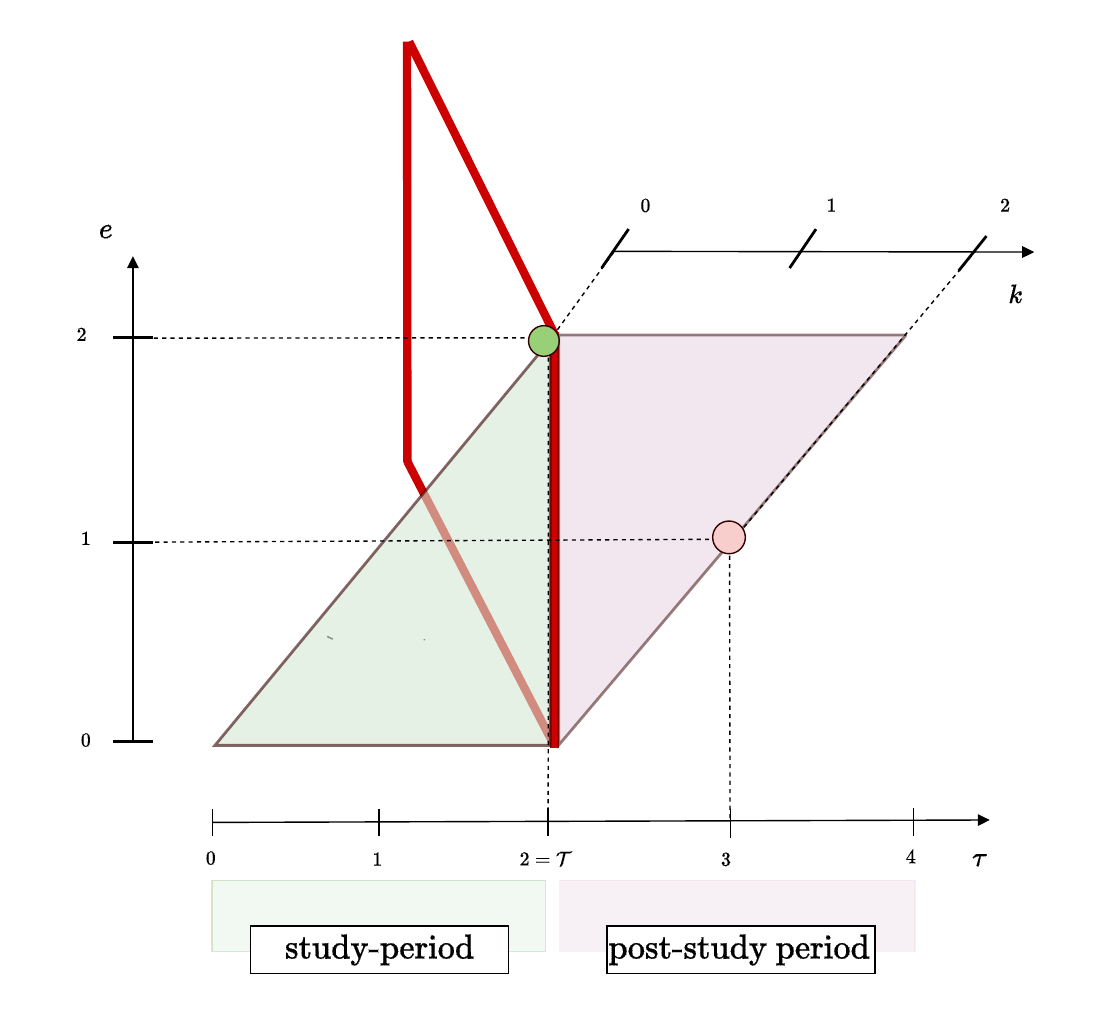}
\caption{Consider \ass{pos}$(e,k)$ in the context of our running example with $K=\mathcal{T}=2$. Condition \ass{pos}$(e=2,k=0)$ is not automatically violated: $(e=2,k=0)$ corresponds to calendar time $\tau=(e+k)=2$ located in the study period; more generally, this holds for every $(e,k)$ such that $e+k \leq 2$, so $\ass{\protect\RAID}$ is not automatically ill-posed. Condition \ass{pos}$(e=2,k=1)$, however, is violated: it corresponds to calendar time $\tau=(e+k)=3$ located in the post-study period; thus, $\ass{\protect\FAID}$ is ill-posed.}
\label{fig: faid_vs_raid}
\end{figure}

We discuss existing methods proposed to address \ass{\CTC} and related issues in Appendix \ref{sec_app: literature_review}. Therein, we also elaborate on why our contributions differ from previous works. Unless otherwise stated, we hereafter assume that the maximum follow-up time, $K$, equals the (translated) calendar time of the administrative end of the study, $\mathcal{T}$, i.e., $K=\mathcal{T}$.

\newpage
\subsection{Weighted Survival Curves procedures}
\label{sec: wsc_procedures}
Let $k \in \{0, \dots, K\}$. For each $m \in \{0, \dots, k\}$, let $h_m^{*, \overline{a}_m} \coloneqq P^*(Y_m=1 \mid \overline{A}_m=\overline{a}_m, C_{m-1}=0, Y_{m-1}=0)$ denote the hazard at follow-up $m$ for individuals who received treatments $\overline{a}_m$ and remained uncensored through follow-up $m$ in the pseudopopulation $P^{*} \equiv P^{*}(P)$ generated from $P$ by means of the weights $\{W_m\}_{m=0}^k$. Consider
\begin{align}\label{eq: wscp_y1_km}
    S_k^{*,
    \textnormal{km}, \overline{a}_k} \coloneqq \prod_{m=0}^k (1 - h_m^{*, \overline{a}_m}),
\end{align}
which represents the discrete-time analogue of the statistical parameter targeted by the Kaplan–Meier estimator \citep{kaplan_nonparametric_1958} for the special case in which $W_m=1$ for every $m \in \{0, \dots, k\}$. Expression \eqref{eq: wscp_y1_km} also represents the statistical parameter targeted by the IPTW-adjusted Kaplan–Meier estimator \citep{cole_adjusted_2004, xie_adjusted_2005} when $\{W_m\}_{m=0}^k$ depends on the inverse probability of receiving treatment. See Appendix \ref{sec_app: iptw_technical} for a formal definition and account of the role of weights in IPTW-theory.

Under $\ass{\FAID}$, it is well-known that $S_k^{\overline{a}_k}$ can be expressed through hazards in the pseudopopulation $P^*$ \citep{hernan_causal_2025}; $h_m^{*, \overline{a}_m}$ equals \begin{align}
    h_m^{\overline{a}_m} \coloneqq P(Y_m^{a_m,  \overline{a}_{m-1}, {c}_{m-1}=0}=1 \mid Y_{m-1}^{\overline{a}_{m-1}, {c}_{m-2}=0} = 0),
\end{align}
for every $m \in \{0, \dots, k\}$ and, thus, $S_k^{*, \textnormal{km}, \overline{a}_k}$ equals $\prod_{m=0}^k (1 - h_m^{\overline{a}_m}) = S_k^{\overline{a}_k}$. 
Yet, under \eqref{eq: secc}, we show that if $\ass{\RAID}$ holds, then $S_k^{*, \textnormal{km}, \overline{a}_k}$ equals 
\begin{align}\label{eq: wscp_y1_km_biased}
    \prod_{m=0}^k(1 - h_{\prstudyp,m}^{\overline{a}_m} ), 
\end{align}
where we use the symbol $\prstudyp$ in 
\begin{align*}
    h_{\prstudyp,m}^{\overline{a}_m} \coloneqq P(Y_m^{a_m,\overline{a}_{m-1}, {c}_{m-1}=0}=1 \mid Y_{m-1}^{ \overline{a}_{m-1}, {c}_{m-2}=0} = 0, E \in \{0, \dots, \mathcal{T}-m\})
\end{align*}
to denote that the counterfactual hazard at follow-up time $m$ pertains only to individuals who entered the study at a calendar time $e$ such that $\tau = (e + m) \in \{0, \dots, \mathcal{T}\}$; that is, individuals who entered during the calendar-time period $\{0, \dots, \mathcal{T}-m\}$, rather than the entire population.
See Corollary \ref{corollary: kiss_when_e_not_in_v} in Appendix \ref{sec_app: iptw_preliminaries} for a formal proof.

By definition of $h_{\prstudyp,k}^{\overline{a}_k}$ we have $h_{\prstudyp,0}^{\overline{a}_0} = h_{0}^{\overline{a}_0}$, and thus $S_k^{*, \textnormal{km}, \overline{a}_k}$ coincides with $S_k^{\overline{a}_k}$ if 
\begin{align}\label{eq: condition_no_bias_km}
h_{\prstudyp,m}^{\overline{a}_m} = h_{m}^{\overline{a}_m}, \quad \textnormal{for every $m \in \{1, \dots, k\}$},
\end{align}
in the absence of perfect cancellations.

Returning to our running example, where $\mathcal{T}=K=2$, this implies that for a given treatment strategy $\overline{a}_2$, $S_2^{*, \textnormal{km}, \overline{a}_2}$ will differ from $S_2^{\overline{a}_2}$ whenever the proportion of individuals experiencing failure at follow-up time $k=1$ differs between individuals with $E \in \{0,1\}$ and the entire population ($E \in \{0,1,2\}$), i.e., whenever $h_{\prstudyp,1}^{\overline{a}_1} \neq h_{1}^{\overline{a}_1}$.\footnote{Analogous arguments apply to survival estimands analogous to $S_k^{\overline{a}_k}$; see Appendix \ref{sec_app: supplementary_wsc_deferred_main_text} for these and complementary results on a companion procedure.} 

Letting $h_m^{\overline{a}_m}(e) \coloneqq P(Y_m^{a_m,  \overline{a}_{m-1}, {c}_{m-1}=0}=1 \mid Y_{m-1}^{\overline{a}_{m-1}, {c}_{m-2}=0} = 0, E=e)$, assumption \eqref{eq: condition_no_bias_km} can be equivalently assessed by verifying that
\begin{align}\label{eq: hazard_variation_dependence_wsc_main}
\frac{\expectation\left(\mathbb{1}_{\{E+m \textcolor{teal}{\leq} \mathcal{T}\}} \cdot \prod_{j=0}^m (1 - h_j^{\overline{a}_j}(E))\right)}{\expectation\left(\mathbb{1}_{\{E+m \leq \mathcal{T}\}} \cdot \prod_{j=0}^{m\textcolor{teal}{-1}} (1 - h_j^{\overline{a}_j}(E))\right)}= 
\frac{\expectation\left(\mathbb{1}_{\{E+m \textcolor{teal}{>} \mathcal{T}\}} \cdot \prod_{j=0}^m (1 - h_j^{\overline{a}_j}(E))\right)}{\expectation\left(\mathbb{1}_{\{E+m > \mathcal{T}\}} \cdot \prod_{j=0}^{m\textcolor{teal}{-1}} (1 - h_j^{\overline{a}_j}(E))\right)}
\end{align}
holds for every $m \in \{1, \dots, k\}$ and every $k \in \{1, \dots, K\}$; see Corollary \ref{corollary: equivalence_hazard_under_ct_special} of Appendix \ref{sec_app: notion_ctc_discrete_time} for details. Equation \eqref{eq: hazard_variation_dependence_wsc_main} clarifies that assumption \eqref{eq: condition_no_bias_km} imposes a variation dependence between hazards in the study and post-study periods, a restriction that is difficult to justify on substantive grounds under \ass{\CTC}. Hence, whether $S_k^{\overline{a}_k}$ coincides with $S_k^{*, \textnormal{km}, \overline{a}_k}$ for all $k \in \{0, \dots, K\}$ reduces to determining whether \ass{\CTC} are present. Addressing this question based on the data, however, requires extrapolation (in calendar time) beyond the observed law $P^{\textnormal{o}} \equiv P^{\textnormal{o}}(P)$. For example, consider a researcher who finds that $E \independent_{P^*} Y_m \mid \overline{A}_{m}, C_{m-1}=Y_{m-1}=0$ for every $m \in \{0, \dots, k\}$ and therefore assumes that \ass{\CTC} are not present. Any dependence between $E$ and $Y_k^{\overline{a}_k, c_{k-1}=0}$ that arises during the post-study period is not identifiable from $P^{\textnormal{o}}$ without additional assumptions, as no observed data are available to assess it. We illustrate the practical significance of this limitation in Section \ref{sec: case_study_mrna_vaccine}.
\subsection{MSM-hazard procedures}
\label{sec: msm_hazard_procedures} 
MSM-hazard procedures often rely on two assumptions in addition to the nonparametric identification assumptions \ass{$\FAID$} assumed by WSC procedures. To be explicit, let $k \in \{0, \dots, K\}$ and consider
\begin{align}
         h_k^{\overline{a}_{k}}(v) \coloneqq P(Y_k^{a_k, \overline{a}_{k-1}, {c}_{k-1}=0}=1 \mid Y_{k-1}^{\overline{a}_{k-1}, {c}_{k-2}=0} = 0, V=v), \quad V \subseteq L_0, 
         \label{eq: counterfactual_hazard}
\end{align}
that is, the hazard at time $k$ for the subset of individuals $\{V=v\}$ under an intervention that sets $\overline{A}_k$ to the treatment strategy $\overline{a}_k$ and abolishes censoring. A common assumption is
\begin{itemize}[leftmargin=1cm, labelsep=0.5cm]
    \item[\ass{HI}] : for every $k \in \{0, \dots, K\}$ and every $\overline{a}_{k}$, $h_k^{\overline{a}_k}(v)$ does not depend on $\overline{a}_{k-1}$,     \label{cond: hi}
\end{itemize}
which, although not strictly needed to establish our results, we assume throughout, thereby allowing us to write $h_k^{\overline{a}_k}(v)$ as $h_k^{a_k}(v)$. Second, $h_k^{a_k}(v)$ must be correctly specified, an assumption that we articulate, coherent with the one assumed by \citet{hernan_marginal_2000, hernan_observational_2008, dickerman_avoidable_2019}, as
\begin{itemize}[leftmargin=2cm, labelsep=0.5cm]
    \item[\ass{${\mb}_{\mid V}^{\textnormal{cs}}$}] : there exists a unique $\beta_{\prstudyp\prgapp} \in \mathbb{B}$ such that for every $a_k \in \{0,1\}$, $k \in \{0, \dots, K\}$ and for all $v \in \textnormal{supp}(P(V))$, $h_k^{a_k}(v) = p(v,a_k,k; {\beta}_{\prstudyp\prgapp}) \in \mb$, with $\mb$ denoting a regression model defined as a collection of functions  $p(\cdot;\beta)$ indexed by a set $\mathbb{B} \ni \beta$.
    \label{ass: cs}
\end{itemize}
For example, in the work by \citet{hernan_marginal_2000},
{\footnotesize
\begin{align}\label{eq: model_hernan_00}
    \mathcal{M}(\mathbb{B}) = \Big\{(v,a_k,k) \mapsto \textnormal{expit}\Big( 
    &\langle \textnormal{ordinal}(e), \beta_{E} \rangle \nonumber\\
    +\ &\langle \textnormal{categorical}(v\setminus e), \beta_{V\setminus E} \rangle \nonumber\\
    +\ &\langle a_k, \beta_{A} \rangle \nonumber\\
    +\ &\langle \textnormal{natural cubic splines}(k), \beta_{K_{S}} \rangle 
    \Big):\nonumber \\
    &\textnormal{ for }\left(\beta_E, \beta_{V\setminus E}, \beta_{A_k}, \beta_{K_{S}}\right)^{\textnormal{T}} \equiv \beta \textnormal{ in some } \mathbb{B} \subseteq \mathbb{R}^{d},  d>0
    \Big\}. 
\end{align}
}
The subscript $V$ emphasizes that the correct specification is formulated for hazards defined conditional on a vector of baseline covariates $V$. For a fixed $V$, we will interpret condition \ass{${\mb}_{\mid V}^{\textnormal{cs}}$} as a function of $\mb$ that is either true or false. Throughout, we will drop the subscript, unless emphasis is needed.

Together, assumption  \ass{\hi} and condition \ass{${\mb}^{\textnormal{cs}}$} for a given $\mb$ impose a regularity structure on $h_k^{a_k}(v)$: \ass{\hi} establishes that, minimally, the non-response variables determining \eqref{eq: counterfactual_hazard} are $(v,a_k,k)$ while \ass{${\mb}^{\textnormal{cs}}$} determines its functional form via $p(v,a_k,k; \beta_{\prstudyp\prgapp})$ as a function of $(v,a_k,k)$.
Thus, when the estimand of interest is survival, in our case $S_k^{\overline{a}_k}$, MSM-hazard procedures encode additional assumptions on $S_k^{\overline{a}_k}$ beyond those imposed by WSC procedures, since, under \ass{\hi} and \ass{${\mb}^{\textnormal{cs}}$},
\begin{multline} \label{eq: survival_via_hazards}
   S_k^{\overline{a}_k} =\expectation\left(\prod_{m=0}^{k}\left(1-h_m^{a_m}(V)\right)\right)=\expectation\left(\prod_{m=0}^{k}\left(1-p(V,a_m,m;\beta_{\prstudyp\prgapp})\right)\right).
\end{multline}
The MSM-hazard procedure usually targets a statistical parameter $\beta^{*} \equiv \beta^{*}(P^*)$ of a pseudopopulation $P^{*}(P)$ generated from $P$ through weighting by $\{ W_m\}_{m=0}^K$. For the trivial choice of the weights $W_m=1$, which yields $P^*=P^{\textnormal{o}}$, procedures adopting the Cox Proportional Hazard model \citep{cox_regression_1972}
fall within the class of MSM-hazard procedures, since pooled logistic models (e.g., \eqref{eq: model_hernan_00}) approximate the Cox model when the hazards within discrete-time intervals are sufficiently small \citep{dagostino_relation_1990, hernan_marginal_2000}.
 
While not always stated explicitly, MSM-hazard procedures must be consistent with the assumption that $h_k^{*, a_k}(v) \coloneqq P^*(Y_k = 1 \mid A_k = a_k, Y_{k-1} = 0, C_{k-1} = 0, V=v)$, i.e., the hazard of failure at time $k$ in the pseudopopulation $P^*$ for individuals in $\{A_k = a_k, {C}_{k-1}=0, V=v\}$, is also correctly specified:
\begin{itemize}[leftmargin=2cm, labelsep=0.5cm]
    \item[\ass{${\mb}_{\mid V}^{*\textnormal{-cs}}$}] : there exists a unique $\beta^* \in \mathbb{B}$ such that for every $a_k \in \{0,1\}$, $k \in \{0, \dots, K\}$ and for all $v \in \textnormal{supp}(P^*(V \mid A_k = a_k, Y_{k-1} = 0, C_{k-1} = 0))$, $h_k^{*,a_k}(v)=p(v,a_k,k; {\beta}^*) \in \mb$, with $\mb$ denoting a regression model, e.g., \eqref{eq: model_hernan_00}.
    \label{ass: mcp_pseudo}
    \end{itemize}
Just like \ass{${\mb}_{\mid V}^{\textnormal{cs}}$}, for a fixed $V$, we interpret \ass{${\mb}_{\mid V}^{*\textnormal{-cs}}$} as a function of $\mb$ that is either true or false and we drop the subscript unless emphasis is needed.
Under \ass{$\FAID$} and \ass{${\mbps}^{*\textnormal{-cs}}$} for some working model $\mbps$, the statistical parameter targeted by the procedure, $\beta^*$, can be shown \citep{robins_marginal_2000} to coincide with  $\beta_{\prstudyp\prgapp}$, and, from \eqref{eq: survival_via_hazards}, thus 
\begin{align}\label{eq: survival_star}
   S_k^{*, \textnormal{msm}, \overline{a}_k} \coloneqq \expectation\left(\prod_{m=0}^{k}\left(1-p(V,a_m,m;\beta^{*})\right)\right) 
\end{align}
to coincide with $S_k^{\overline{a}_k}$.

However, we have no guarantee that  $S_k^{*, \textnormal{msm}, \overline{a}_k}$ is equal to $S_k^{\overline{a}_k}$ because positivity (\ass{pos}) is violated  (and thus \ass{$\FAID$} is ill-posed). Relatedly, $\beta^*$ is not guaranteed to be a parameter characterizing $S_k^{\overline{a}_k}$, without imposing additional assumptions.

\subsubsection{Valid identification of $S_k^{\overline{a}_k}$ in MSM-hazard procedures}
Here we articulate assumptions that allow identification of $S_k^{\overline{a}_k}$ when positivity fails.

\begin{proposition}\label{proposition: msm_hazard_characterization}
Under \ass{$\RAID$}, if \ass{${\mbps}^{*\textnormal{-cs}}$} holds and if $\mbps$ is identifiable from the observed data, then, under mild regularity conditions,
{\small
\begin{align*}
{\mbps}_{\mid V}^{\textnormal{cs}} \textnormal{ with }
    \textnormal{$E \in V$ } \Leftrightarrow  S_k^{\overline{a}_k} = S_k^{*, \textnormal{msm}, \overline{a}_k}& \\
    \textnormal{ \small for every $k \in \{0, \dots, K\}$ and $\overline{a}_k \in \{\overline{0}, \overline{1}\}$}.
\end{align*}
}
\end{proposition}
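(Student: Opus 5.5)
The plan is to reduce both directions to a statement about hazards conditional on $V\ni E$, exploiting the fact that under \ass{$\RAID$} the observed-data (pseudopopulation) hazards identify the counterfactual hazards exactly on the study-period region $\{(e,k): e+k\le\mathcal{T}\}$ and nowhere else. The first step is to establish, as in the argument behind Corollary \ref{corollary: kiss_when_e_not_in_v}, that under \ass{$\RAID$} and \eqref{eq: secc}, for every $(v,a_k,k)$ with $e+k\le \mathcal{T}$ (where $e$ is the $E$-component of $v$),
\begin{align*}
h_k^{*,a_k}(v) = h_k^{a_k}(v).
\end{align*}
I would use sequential exchangeability, consistency and positivity at each $(e,m)$ with $m\le k$ and the standard IPTW telescoping argument in the pseudopopulation. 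I would also show that $\textnormal{supp}(P^*(V\mid A_k=a_k,Y_{k-1}=0,C_{k-1}=0))$ is exactly the set of $v$ with $e+k\le\mathcal{T}$, since by \eqref{eq: secc} nobody with $e+k>\mathcal{T}$ is uncensored at $k$. Consequently \ass{${\mbps}^{*\textnormal{-cs}}$} says precisely that $h_k^{a_k}(v)=p(v,a_k,k;\beta^*)$ on the study-period region.

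For ($\Rightarrow$), assume \ass{${\mbps}_{\mid V}^{\textnormal{cs}}$} with $E\in V$. Then $h_k^{a_k}(v)=p(v,a_k,k;\beta_{\prstudyp\prgapp})$ for all $v$, in particular on the study-period region, where it also equals $p(v,a_k,k;\beta^*)$. The identifiability hypothesis on $\mbps$ (uniqueness of the parameter given the values of $p$ on the observed support, which is the "mild regularity" content) then forces $\beta^*=\beta_{\prstudyp\prgapp}$. Plugging this into \eqref{eq: survival_via_hazards} and \eqref{eq: survival_star} gives $S_k^{\overline{a}_k}=S_k^{*,\textnormal{msm},\overline{a}_k}$ for all $k$ and $\overline{a}_k$. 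Conditioning on $V\ni E$ is essential here, so that the hazard regression is defined within levels of $E$ and the extrapolation is to unobserved $(e,k)$ cells rather than an averaging over a changing mix of entry times.

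For ($\Leftarrow$), assume the survival identities for all $k$ and both constant strategies. Write $S_k^{\overline{a}_k}=\expectation\prod_{m\le k}(1-h_m^{a_m}(V))$ and its model analogue with $p(\cdot;\beta^*)$, and proceed by induction on $k$. Taking ratios $S_k/S_{k-1}$ alone only yields equality of averaged hazards, not of $h_k^{a_k}(v)$ pointwise, so this is the step I expect to be the main obstacle. My first attempt is to use that the equality must hold for every $V$-stratum implicitly through the regularity conditions. Concretely, I would interpret the right side of the equivalence as holding for the $V$-conditional survivals, i.e.\ $S_k^{\overline{a}_k}(v)$, which is what one gets by including $V$ in the standardization. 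Then dividing consecutive $v$-specific survivals (positive by regularity) gives $h_k^{a_k}(v)=p(v,a_k,k;\beta^*)$ for all $v$, $k$, $a_k$, which is exactly \ass{${\mbps}^{\textnormal{cs}}$} with $\beta_{\prstudyp\prgapp}=\beta^*$, uniqueness again coming from identifiability. If only marginal equality is available, I would instead argue by contraposition: if the correct-specification condition fails at some post-study cell $(e,k)$, a perturbation of $h_k^{a_k}$ on $\{E=e\}$ changes $S_k^{\overline{a}_k}$ but not $S_k^{*,\textnormal{msm},\overline{a}_k}$, excluding perfect cancellations as a regularity condition, in line with the caveat stated after \eqref{eq: condition_no_bias_km}.
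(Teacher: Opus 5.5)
Your $(\Rightarrow)$ direction is the paper's argument. Your contrapositive fallback for $(\Leftarrow)$ is, in substance, the paper's treatment of the case $E\in V$. But $(\Leftarrow)$ has a genuine hole.

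The right-hand side must force \emph{both} that $E\in V$ and that ${\mbps}^{\textnormal{cs}}_{\mid V}$ holds, so its contrapositive has two cases. Everything you write presupposes $E\in V$: your Step~1 description of the support, and your reading of ${\mbps}^{*\textnormal{-cs}}$ as $h_k^{a_k}(v)=p(v,a_k,k;\beta^*)$ on study-period cells, only make sense when $e$ is a coordinate of $v$. You never show that the survival identities fail when $E\notin V$, and that case needs a different ingredient.

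With $E\notin V$, \eqref{eq: secc} is preserved in $P^*$, so the event $\{Y_{k-1}=0,A_k=a_k,C_{k-1}=0,V=v\}$ lies inside $\{E\le\mathcal{T}-k\}$. The IPTW argument therefore identifies $h_{\prstudyp,k}^{a_k}(v)$ rather than $h_k^{a_k}(v)$ (Corollary~\ref{corollary: kiss_when_e_not_in_v}, Lemma~\ref{lemma: omission_e_is_problematic}). With ${\mbps}^{*\textnormal{-cs}}$ and the support condition (RSUP), this gives $p(v,a_k,k;\beta^*)=h_{\prstudyp,k}^{a_k}(v)$ for all $v$, hence $S_k^{*,\textnormal{msm},\overline{a}_k}=\expectation\prod_{m=0}^{k}(1-h_{\prstudyp,m}^{a_m}(V))$.

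That this differs from $S_k^{\overline{a}_k}$ for some $k$ is not automatic; it is exactly where calendar-time changes enter. The paper has to assume away the cancellation $(\textnormal{pc}_i)$ explicitly. Without it the survival identities can hold with $E\notin V$, and $(\Leftarrow)$ is false. For example, if the counterfactual hazards do not depend on $E$ given $V$, then $h_{\prstudyp,k}^{a_k}=h_k^{a_k}$. Your unspecified mild regularity has to contain this condition and (RSUP).

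Two smaller repairs. First, drop your first attempt at $(\Leftarrow)$. Reading the right-hand side as equality of $V$-conditional survivals replaces the hypothesis by a strictly stronger one, whereas the statement concerns the marginal $S_k^{\overline{a}_k}$.

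Second, replace the perturbation wording in the fallback by the identity the paper actually uses. Since $h_m^{a_m}(v)=p(v,a_m,m;\beta^*)$ whenever $e+m\le\mathcal{T}$,
\[
S_k^{\overline{a}_k}-S_k^{*,\textnormal{msm},\overline{a}_k}=\expectation\Bigl(\mathbb{1}_{\{E+k>\mathcal{T}\}}\prod_{m=0}^{\mathcal{T}-E}\bigl(1-p(V,a_m,m;\beta^*)\bigr)\Bigl[\prod_{m=\mathcal{T}-E+1}^{k}\bigl(1-h_m^{a_m}(V)\bigr)-\prod_{m=\mathcal{T}-E+1}^{k}\bigl(1-p(V,a_m,m;\beta^*)\bigr)\Bigr]\Bigr).
\]
Given identifiability, failure of ${\mbps}^{\textnormal{cs}}$ means $h_m^{a_m}(v)\neq p(v,a_m,m;\beta^*)$ on some post-study cell. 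Using $p<1$ and excluding the cancellation $(\textnormal{pc}_{ii})$, this difference is then nonzero for some $k$.
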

In words, under a relaxation of the canonical assumption, invoking \ass{$\RAID$} in place of \ass{$\FAID$}, and under a correctly specified and identifiable working model $\mbps$, restrictions on the functional form of the hazards are necessary and sufficient for $S_k^{*, \textnormal{msm}, \overline{a}_k}$ (eq. \eqref{eq: survival_star}) to coincide with $S_k^{\overline{a}_k}$. These restrictions require $h_k^{a_k}(v)$ to be correctly specifiable when $E$ is included in $V$ for the identifiable working model $\mbps$. Proof of Proposition \ref{proposition: msm_hazard_characterization}, which, along with a definition of the mild regularity assumptions, is given in Appendix \ref{sec_app: staggered_entry_characterization}, makes it clear that analogous results apply under relaxations of \ass{\hi}, as well as in settings
with a point treatment $A$.

To quantify departures of $S_k^{*, \textnormal{msm}, \overline{a}_k}$ from $ S_k^{\overline{a}_k}$ for $k \in \{0, \dots, K\}$ when ${\mbps}_{\mid V}^{\textnormal{cs}}$ with $E \in V$ is not met, we derived the tightest possible bounds under the assumptions of Proposition \ref{proposition: msm_hazard_characterization} and under conditions that bound the otherwise unknown $h_k^{a_k}(v)$ as a function of $p(v,a_k,k;{\beta}^*)$. See Corollary \ref{corollary: bounds_ass_characterization_not_met} of Appendix \ref{sec_app: staggered_entry_characterization}. 

\subsubsection{Remarks on the interpretation of the conditions involved in Proposition \ref{proposition: msm_hazard_characterization}}
\label{sec: extrapolation_assumptions_in_msm_hazard}
Throughout this section,  all equalities are understood to hold for any $k \in \{1, \dots, K\}$, for every $a_k \in \{0,1\}$, and for all $v \in \textnormal{supp}(P(V))$. Suppose further that \ass{$\RAID$} holds and that \ass{${\mbps}_{\mid V}^{*\textnormal{-cs}}$} is satisfied for some $V \not\ni E$. Then, $p(v,a_k,k;\beta^*) \in \mbps$ is given by
{\small
\begin{align*}
    h_{\prstudyp,k}^{a_k}(v) \coloneqq P(Y_k^{a_k,  \overline{a}_{k-1}, {c}_{k-1}=0}=1 \mid Y_{k-1}^{\overline{a}_{k-1}, {c}_{k-2}=0} = 0, V=v, E \in \{0, \dots, \mathcal{T}-k\})
\end{align*}
}
for every $k \in \{1, \dots, K\}$; see Lemma \ref{lemma: omission_e_is_problematic} of Appendix \ref{sec_app: staggered_entry_characterization} for a proof. Thus, $p(v,a_k,k;\beta^*)$ targets the hazards for the subpopulation of individuals entering before or at calendar time $\mathcal{T}-k$, $\{E \leq \mathcal{T}-k, V=v \}$, and not for the target subpopulation, $\{ E \leq \mathcal{T}, V=v \} = \{V=v\}$. This distinction matters because, under \ass{\CTC}, $E$ depends on the outcome of interest $Y_k^{\overline{a}_k, c_{k-1}=0}$ given $V$, and hence $h_{\prstudyp,k}^{a_k}(v)$ is unlikely to equal $h_k^{a_k}(v)$ unless $E \in V$ (see Appendix \ref{sec_app: notion_ctc_discrete_time}, e.g., Proposition \ref{proposition: equivalence_hazard_under_ct}). Accordingly, \ass{${\mbps}_{\mid V}^{*\textnormal{-cs}}$} must be formulated with $E \in V$. In particular, if $E \notin V$, then \ass{${\mbps}_{\mid V}^{\textnormal{cs}}$} and \ass{${\mbps}_{\mid V}^{*\textnormal{-cs}}$} cannot hold simultaneously for the same parameters. That is, for $\beta^*, \beta_{\prstudyp\prgapp} \in \mathbb{B}^*$, $\beta^* \neq \beta_{\prstudyp\prgapp}$, regardless of the flexibility of $\mbps$. See Lemma \ref{lemma: omission_e_is_problematic} in Appendix \ref{sec_app: staggered_entry_characterization} for a proof, and Section \ref{sec: case_study_hernan_00_sim} for an illustration of the practical consequences of this incompatibility result.
    
\section{C-values and extrapolation curves}
\label{sec: c_values_and_extrapolation_curves}
For WSC and MSM-hazard procedures, identification of $S_k^{\overline{a}_k}$ therefore relies on extrapolations from $P^{\textnormal{o}}$ that constrain the laws concerning the \textit{post-study period}. Yet, in many settings these \textit{extrapolation assumptions} will not be met under \ass{\CTC}. In such cases, researchers often seek pragmatic solutions to address their substantive questions using the observed data; our aim is to assist them in this task.

Suppose, as in \citet{hernan_marginal_2000}, that we want to compare the effect of zidovudine ($a_k=1$) vs. no zidovudine ($a_k=0$) by means of a Proportional Hazard MSM as specified by a working model $\mbps$ defined as in, say, \eqref{eq: model_hernan_00}. The protective effect of zidovudine found by \citet{hernan_marginal_2000} implicitly relied on the extrapolation assumption \ass{${\mbps}^{\textnormal{cs}}$}. Suppose, however, that, based on subject-matter knowledge, \ass{${\mbps}^{\textnormal{cs}}$} is considered implausible, as the post-study hazards implied by the procedure are deemed to deviate from the true counterfactual hazards, $h_k^{a_k}(v)$ (we will offer a substantive justification for this hypothesis in Section \ref{sec: c_values_case_studies}). A pragmatic question is whether the protective effect of zidovudine would have changed if we collected observations during the post-study period. Here we will describe one such approach, and we also give some alternative ideas in Appendix \ref{sec_app: first_sensitivity_analysis}. 

\subsection{Definition of c-values}
One way to address our pragmatic question is to ask what minimal post-study variation would suffice to nullify the protective effect of treatment. C-values provide a principled way to formalize this idea; they capture the smallest $\textnormal{c}$ such that a $\textnormal{c}$-fold \textit{increase} in post-study risk under the most beneficial treatment strategy (here, receipt of zidovudine) and a corresponding $\textnormal{c}$-fold \textit{decrease} under the least beneficial treatment strategy (here, no treatment) would suffice to nullify the protective effect of zidovudine. To fix ideas, let $k \in \{1, \dots, K\}$, define $R_k^{\rightsquigarrow,a} \coloneqq 1- S_k^{*, \textnormal{msm}, \overline{a}_k}$ ($a \in \{0,1\}$, $\overline{a}_k = \overline{1} \cdot a)$, and consider
\begin{align*}
   RR_k^{\rightsquigarrow}\coloneqq \frac{R_k^{\rightsquigarrow,1}}{R_k^{\rightsquigarrow,0}}.
\end{align*}
$RR_k^{\rightsquigarrow}$ is the ratio, as implied by a given procedure (here, MSM-hazard), of the cumulative risk (or incidence) accrued by follow-up $k$ under receipt of zidovudine to that under no treatment; that is, the ratio between the proportions of individuals who experienced failure \textit{by} follow-up time $k$ under the treatment strategies $\overline{a}_k=\overline{1}$ ($R_k^{\rightsquigarrow,1}$) and $\overline{a}_k=\overline{0}$ ($R_k^{\rightsquigarrow,0}$). For every $a \in \{0,1\}$, $R_k^{\rightsquigarrow,a}$ can be decomposed into the sum of two proportions: one relative to individuals who \textit{first} experienced failure during the study period, $R_{\prstudyp,k}^{a}$, and one relative to individuals who \textnormal{first} experienced failure during post-study period, $R_{\prgapp,k}^{\rightsquigarrow,a}$, i.e.
$R_k^{\rightsquigarrow,a} = R_{\prstudyp,k}^{a} + R_{\prgapp,k}^{\rightsquigarrow,a}$ for $a \in \{0,1\}$.
We say that post-study variations, $\textnormal{c}_{k}^{a} > 0$ for $a \in \{0,1\}$, nullify a given effect inferred under extrapolation assumptions if the induced
 \textit{shifted risks}, $R_k^{\textnormal{c}, a}(\textnormal{c}_{k}^{a}) \coloneqq R_{\prstudyp,k}^{a} + \textnormal{c}_{k}^{a} \cdot R_{\prgapp,k}^{\rightsquigarrow,a}$, $a\in\{0,1\}$, are equal, i.e., 
\begin{align}\label{eq: shifted_risks_equality}
    \frac{R_k^{\textnormal{c}, 1}(\textnormal{c}_{k}^{1} )}{R_k^{\textnormal{c}, 0}(\textnormal{c}_{k}^{0} )} = 1.
\end{align}
We define $\textnormal{c}_k^{0}$ and $\textnormal{c}_k^{1}$ as the c-values for the treatment strategies $\overline{0}$ and $\overline{1}$, respectively, if $(\textnormal{c}_k^{0}, \textnormal{c}_k^{1})$ is the unique solution to \eqref{eq: shifted_risks_equality} that minimizes $(\ln(\textnormal{c}_{k}^{0} \cdot \textnormal{c}_{k}^{1}))^2$, i.e., if it penalizes departures of $\textnormal{c}_k^{0} \cdot \textnormal{c}_k^{1}$ from 1. 

C-values admit closed-form expressions. In particular, when a perfectly symmetric solution exists, i.e., when $(\textnormal{c}_k^{0}, \textnormal{c}_k^{1})$ satisfies both $\textnormal{c}_k^{0} \cdot \textnormal{c}_k^{1} = 1$ and \eqref{eq: shifted_risks_equality}, as in the case studies we will present in Section \ref{sec: c_values_case_studies}, we have $\textnormal{c}_{k}^{0} = \left(-d_k + \sqrt{d_k^2 + 4m_k}\right) \cdot {(2m_k) }^{-1} , \textnormal{c}_{k}^{1} = (\textnormal{c}_{k}^{0})^{-1} $ and $d_k= (R_{\prstudyp,k}^{0} - R_{\prstudyp,k}^{1})\cdot {(R_{\prgapp,k}^{\rightsquigarrow,1})}^{-1}, m_k=R_{\prgapp,k}^{\rightsquigarrow,0} \cdot {(R_{\prgapp,k}^{\rightsquigarrow,1})}^{-1}$. See Appendix \ref{sec_app: c_value} for a formal description. 

As $k$ decreases, the proportion of individuals for whom extrapolation is required and who are subject to variations that could nullify a given effect increases. Accordingly, $\textnormal{c}_{k}^{0} + \textnormal{c}_{k}^{1}$ will often increase as $k$ decreases, reflecting that larger post-study variations are needed when a greater proportion of individuals rely on extrapolation. For some $k \in \{1, \dots, K\}$ no pair $(\textnormal{c}_k^{0}, \textnormal{c}_k^{1})$ satisfying \eqref{eq: shifted_risks_equality} might give shifted risks compatible with $R_{\prstudyp,k}^{a}$ and $R_{\prgapp,k}^{\rightsquigarrow,a}$ for $a \in \{0,1\}$. This occurs when the proportion of individuals who can first experience failure during the post-study period under the most beneficial strategy is too small to nullify the effect. In such cases, we say that c-values at follow-up $k$ do not exist. See Appendix \ref{sec_app: existence_c_values} for a formal mathematical treatment.

\subsection{Sensitivity analysis with c-values and extrapolation curves}
C-values play a role analogous to e-values introduced by \citet{ding_sensitivity_2016}. E-values quantify the minimum strength of association between a treatment–outcome pair attributable to an unmeasured variable required to explain away the effect under no unmeasured confounding; c-values quantify variations induced by the censoring mechanism due to staggered entries, formalized in \eqref{eq: secc}, which precludes observation of event-free individuals during the post-study period.\footnote{Like e-values, c-values accommodate arbitrary estimation procedures, as they are defined via a nonparametric decomposition of the risks $R^{\rightsquigarrow,a}$ for $a \in \{0,1\}$. Thus, although introduced here in the context of MSM-hazard procedures, they are not tied to any specific estimation procedure.}

To offer a complementary interpretation of the variations captured by the c-values, $\textnormal{c}_k^0$ and $\textnormal{c}_k^1$, we derive additive shifts, $\delta_k^0$ and $\delta_k^1$, in the post-study hazards that induce equivalent shifted risks. In the case studies of Section \ref{sec: case_studies}, for every $a \in \{0,1\}$ ($\overline{a}_k = a \cdot \overline{1}$), $\delta_k^a$ is the unique solution to
\begin{align*}
    1-\expectation\left(\prod_{m=0}^{k}\left(1-p_{\delta_k^a}(V,a_m,m;\beta^{*})\right)\right) = R_k^{\textnormal{c}, a}(\textnormal{c}_k^a),
\end{align*}
where $p_{\delta_k^a}(v,a_m,m;\beta^{*}) = p(v,a_m,m;\beta^{*})$ if $e+m \leq \mathcal{T}$ and $p(v,a_m,m;\beta^{*}) + \delta_{k}^a$ if $e+m > \mathcal{T}$ (with $e \in v$). See Appendix \ref{sec_app: equivalence_c_values_hazards} for a formal mathematical treatment.

Finally, letting $K_{\textnormal{ex}}$ denote the minimum follow-up for which c-values exist, we refer to the collections
\begin{align*}
    \{{\textnormal{c}}_k^0, {\textnormal{c}}_k^1, \delta_k^{0}, \delta_k^{1} : k \in \{K_{\textnormal{ex}}, \dots, K\}\}
\end{align*}
as \textit{extrapolation curves}. For each $k \in \{K_{\textnormal{ex}}, \dots, K\}$, these curves characterize the magnitude and direction of post-study variations, relative to those implied by the extrapolation assumptions, required to nullify a given effect. Rather than reasoning about assumptions that would permit identification of $S_k^{\overline{a}_k}$ but are likely to be deemed implausible, investigators \textit{first} reason about the magnitude and direction of post-study variations on the risk or additive-hazard scale and \textit{then} compare these with those implied by the extrapolation curves. We illustrate these ideas in the two case studies in Section \ref{sec: c_values_case_studies}.

    \section{Case studies}
\label{sec: case_studies}
The first case study is a more stylized, didactic example in which we analyze the effect of zidovudine on survival in HIV-positive men \citep{hernan_marginal_2000} using WSC and MSM-hazard procedures. We describe six scenarios, each based on a synthetic dataset generated using the technique proposed by \citet{seaman_simulating_2024} and constructed to be compatible with user-specified MSMs. This allows direct comparison between a ground truth and procedure-specific estimates.
We then reanalyze a recent study published in Nature that investigated the effect of SARS-CoV-2 mRNA vaccination on clinical outcomes in patients treated with immune checkpoint inhibitors \citep{grippin_sars-cov-2_2025}. Specifically, we emulate a target trial building on the specification proposed by \citet{dumas_re-evaluating_2025}; the emphasis here, however, is on the tenability of the additional extrapolation assumptions required for sound interpretation of the target trial. We also relate these considerations to target trial practice more broadly and, in Appendix \ref{sec_app: additional_complications_grippin}, discuss an additional issue in the original analysis of \citet{grippin_sars-cov-2_2025}.
We conclude by illustrating how c-values and extrapolation curves can supplement the findings of \citet{hernan_marginal_2000} and complement those of \cite{dumas_re-evaluating_2025}. 

\subsection{Case study 1: the effect of zidovudine on survival among HIV-positive men}
\label{sec: case_study_hernan_00_sim}
Following the original analysis of \citet{hernan_marginal_2000}, we considered $16 = (K=\mathcal{T}=15) + 1$ equally-spaced time intervals of 6 months, each representing an individual's visit in the MACS cohort (details of the MACS cohort can be found in \citep{hernan_marginal_2000}). To focus on our main points, we assumed administrative censoring as the sole source of censoring; the specific data structure can be found in Appendix \ref{sec_app: case_study_1_data_structure}.
We study six scenarios; scenarios A), B), and C) are distinguished from their respective counterparts, D), E), and F), only in that individuals are followed even during the post-study period, i.e., until failure or the end of the follow-up time $K$, whichever comes first, i.e., individuals are not administratively censored. \ass{$\RAID$} holds for scenarios A), B), and C), while \ass{$\FAID$} holds for scenarios D), E), and F).

In each of the six scenarios, the estimand of interest is survival under receipt of zidovudine $S_k \equiv S_k^{\overline{a}_k = \overline{1}}$ for every $k \in \{0,\dots, K\}$. In Scenario A), \ass{$\mathcal{M}_{\circ}^{\textnormal{cs}}$} holds for a stylized version of \citet{hernan_marginal_2000} in which the outcome is negatively dependent on the time since-study entry, i.e., $h_{k}^{a_k}(v) \in \mathcal{M}_{\circ}$ decreases as $e\in v$ increases (for any fixed $k$, $a_k,$ and $v\setminus e$). 

Scenario B) differs from scenario A) only in that \ass{$\mathcal{M}_{\circ\circ}^{\textnormal{cs}}$} holds for a model 
$\mathcal{M}_{\circ\circ}$ that presents an additional term relative to $\mathcal{M}_{\circ}$, $\prgapp_{\textnormal{int}}(e,k)$, on the logit scale $-0.5 \cdot \mathbb{1}_{\{a_k = 1, e + k > \mathcal{T}\}}$, to which a substantive interpretation is given in Section \ref{sec: c_values_case_studies}. Scenario C) differs from Scenario B) only in that the study-entry distribution, $P(E)$, rather than being uniformly distributed, is skewed toward the origin to capture the enrollment dynamics of individuals in the MACS cohort, as specified by a discrete-time analogue of an exponential distribution. The exact data-generating mechanisms are reported in Appendix \ref{sec_app: scenarios_dgms}.

For each scenario, we considered survival curve estimates derived from three procedures:
\begin{enumerate}
    \item $\hat{S}_k^{*, \textnormal{msm}} \equiv \hat{S}_k^{*, \textnormal{msm}, \overline{a}_k=\overline{1}}$ (MSM-hazard; we assume \ass{${\mathcal{M}_{\circ}}_{\mid V}^{*\textnormal{-cs}}$} in scenarios A), B), and C) and 
    \ass{${\mathcal{M}_{\circ\circ}}_{\mid V}^{*\textnormal{-cs}}$} in scenarios D), E), and F); $E \in V$ in all scenarios),
    \item $\hat{S}_k^{*, \textnormal{msm}-\textnormal{we}} \equiv \hat{S}_k^{*, \textnormal{msm-we}, \overline{a}_k=\overline{1}}$ (MSM-hazard; we assume \ass{${\mathcal{M}_{\textnormal{sat}}}_{\mid V}^{*\textnormal{-cs}}$}, $E \notin V$, for a saturated model $\mathcal{M}_{\textnormal{sat}}$),
    \item $\hat{S}_k^{*, \textnormal{km}} \equiv \hat{S}_k^{*, \textnormal{km}, \overline{a}_k=\overline{1}}$ (WSC-Kaplan-Meier; targeting $S_k^{*,
    \textnormal{km}}$ as in \eqref{eq: wscp_y1_km}).
\end{enumerate}
Explicit forms and point estimates for each estimator are reported and tabulated in Appendices \ref{sec_app: procedures_estimators_case_study_1} and \ref{sec_app: numerical_results_case_study_1}, respectively. In Appendix \ref{sec_app: plots_case_study_1} we further compare these curves with the estimand of interest, $S_k$, implied by the scenario-specific data-generating mechanism, thereby validating the theoretical results derived in Section \ref{sec: consequences_and_solutions_in_wsc_and_msm_hazard}.

\begin{figure}[H]
    \centering
    \adjustbox{trim={.0 \width} {.01\height} {0\width} {.02\height},clip}%
{
    \begin{tikzpicture}[scale=1]
        \node[anchor=north, scale=0.8] at (0,0){%
            \pgfimage{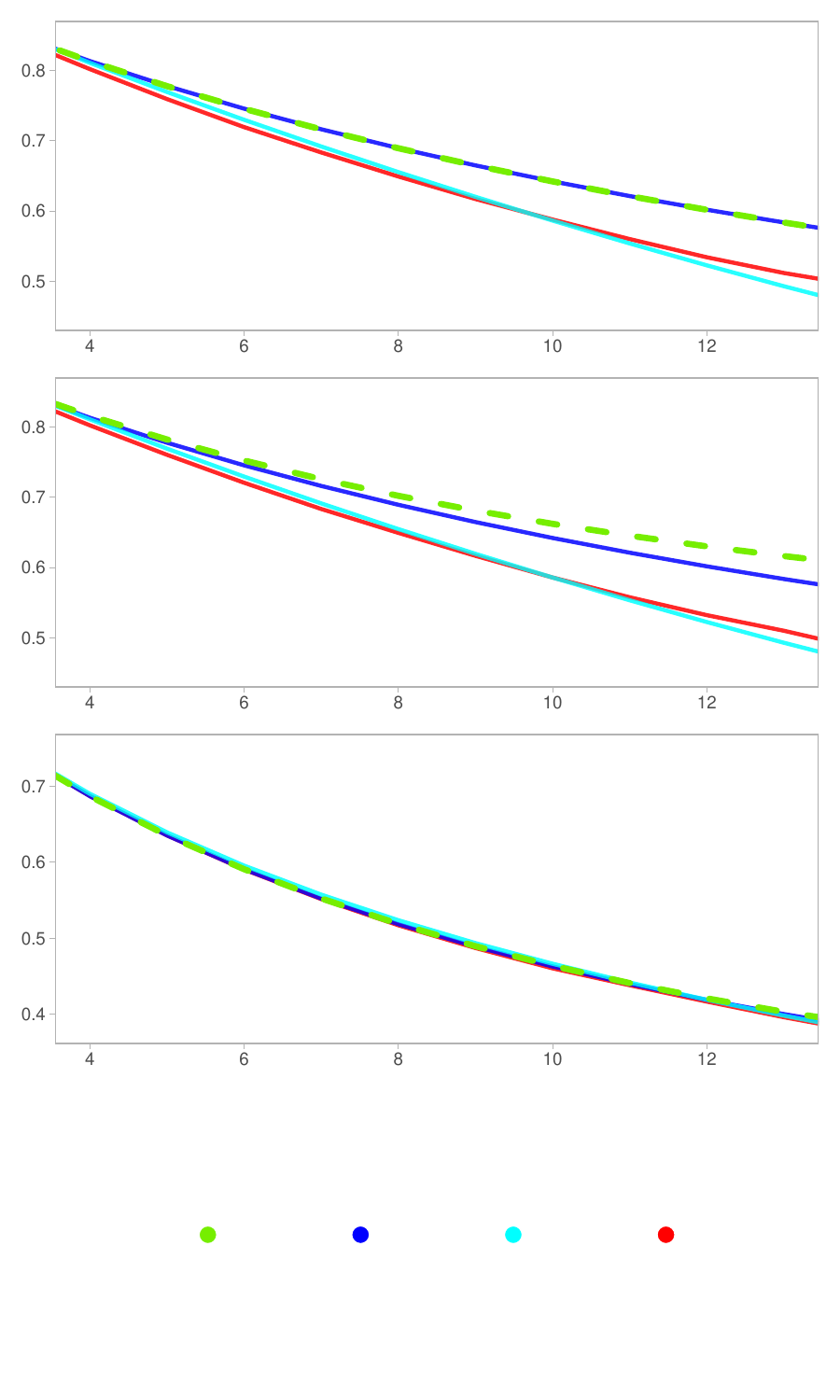}};

        \node[anchor=north west] at (-7.5,-0.4){\small A)}; 
        \node[anchor=north west] at (-7.5,-5.5){\small B)}; 
        \node[anchor=north west] at (-7.5,-10.7){\small C)}; 
        
        \node[anchor=north] at (0.5,-16){\small $k+1$};
        \node[anchor=north west] at (-3.4,-17){\small ${S}_k$};
        \node[anchor=north west] at (-1.2,-17){\small $\hat{S}_k^{*,\text{msm}}$};
        \node[anchor=north west] at (1,-17){\small $\hat{S}_k^{*,\text{msm-we}}$};
        \node[anchor=north west] at (3.3,-17){\small $\hat{S}_k^{*,\text{km}}$}; 
\end{tikzpicture}
}
\setlength{\abovecaptionskip}{-40pt}
\caption{Simulation results for Scenarios A), B), and C) with sample size $n = 5 \cdot 10^6$. For clarity, the x-axis displays time $k^{\prime} \coloneqq k+1 \in \{4, \dots, 13\}$ (all survival curves equal 1 at $k^{\prime}=0$).}
\label{fig: sim_hernan_00_reduced}
\end{figure}

The top panel of Figure \ref{fig: sim_hernan_00_reduced} presents the results for Scenario A). The estimate $\hat{S}_k^{*, \textnormal{msm}}$ coincides, up to a finite-sample error, with $S_k$ as the conditions of Proposition \ref{proposition: msm_hazard_characterization} are met: \ass{$\RAID$} and \ass{${\mathcal{M}_{\circ}}_{\mid V}^{\textnormal{cs}}$} with $E \in V$ hold. For $\hat{S}_k^{*, \textnormal{msm}-\textnormal{we}}$, conditions of Proposition \ref{proposition: msm_hazard_characterization} are not met as $E \notin V$. Hence, no matter how flexible the model $\mathcal{M}(\mathbb{B}_{\textnormal{sat}})$ is, it will never satisfy
\ass{$\mathcal{M}_V(\mathbb{B}_{\textnormal{sat}})^{\textnormal{cs}}$}, as \ass{$\mathcal{M}(\mathbb{B}_{\textnormal{sat}})_V^{\textnormal{cs}}$} and \ass{$\mathcal{M}(\mathbb{B}_{\textnormal{sat}})_V^{*-\textnormal{cs}}$} cannot hold simultaneously if $E \notin V$. Specifically, the working model $p(v,a_k,k;\beta^*)$ targets $h_{\prstudyp,k}^{a_k}(v)$, not $h_k^{a_k}(v)$, and thus $\hat{S}_k^{*, \textnormal{msm}}$ targets $\expectation\left(\prod_{m=0}^{k}\left(1-h_{\prstudyp,m}^{a_m=1}(V)\right)\right) \neq \expectation\left(\prod_{m=0}^{k}\left(1-h_{m}^{a_m=1}(V)\right)\right)= S_k$. Similarly, as elaborated in Section \ref{sec: wsc_procedures},
$\hat{S}_k^{*, \textnormal{km}}$ targets $\prod_{m=0}^{k}(1-h_{\prstudyp,m}^{\overline{a}_m=\overline{1}})$, see eq. \eqref{eq: wscp_y1_km_biased}. Thus, $\hat{S}_k^{*, \textnormal{msm}-\textnormal{we}}$ and $\hat{S}_k^{*, \textnormal{km}}$ underestimate $S_k$ because the outcome of interest is negatively dependent on time since study-entry (both marginally and conditional on $V$), implying that $h_k^{a_k=1}(v) < h_{\prstudyp, k}^{a_k=1}(v)$ (for all $v \in \textnormal{supp}(P(V))$) and $h_k^{a_k=1} < h_{\prstudyp, k}^{a_k=1}$ at each $k \in \{0, \dots, K\}$. 

The middle panel of Figure \ref{fig: sim_hernan_00_reduced} presents the results for Scenario B). None of the survival estimates approaches $S_k$. Departures of $\hat{S}_k^{*, \textnormal{msm}-\textnormal{we}}$ and $\hat{S}_k^{*, \textnormal{km}}$ from $S_k$ can be explained by arguments analogous to those presented for scenario A); $\hat{S}_k^{*, \textnormal{msm}}$ deviates from $S_k$ as the interaction term $\prgapp_{\textnormal{int}}(e,k)$ in  $h_k^{a_k}(v)$ prevents conditions of Proposition \ref{proposition: msm_hazard_characterization} from being satisfied. 

We emphasize that in scenarios A) and B) the discrepancies from the procedure-specific survival curves and $S_k$ are attributable to administrative censoring. Had we observed individuals \textnormal{even} during the post-study period, as in scenarios D) and E), the same procedures could have led to markedly different estimates; $\hat{S}_k^{*, \textnormal{km}}$ under \ass{$\FAID$}, and $\hat{S}_k^{*, \textnormal{msm}}$ and $\hat{S}_k^{*, \textnormal{msm}-\textnormal{we}}$ additionally under \ass{${\mathcal{M}_{\circ\circ}}_{\mid V}^{*-\textnormal{cs}}$}, for any choice of $V$, including $E \notin V$, would have coincided, up to a finite-sample error, with $S_k$. See Figure \ref{fig: sim_hernan_00_df} in Appendix \ref{sec_app: plots_case_study_1} for the corresponding plots.

The bottom panel of Figure \ref{fig: sim_hernan_00_reduced} presents the survival curves for scenario C). As argued in Appendix \ref{sec_app: comments_cf_study_case_1}, these curves are nearly indistinguishable because survival is defined marginally with respect to $P(E)$ and is therefore dominated by individuals entering early, who constitute nearly the entire target population ($P(E \in \{0,\dots, 5\}) \simeq 0.95$). Thus, the survival curves in scenarios C) and F) are essentially indistinguishable, indicating that the use of post-study observations is inconsequential for identification of $S_k$ (see Figure \ref{fig: sim_hernan_00_df} in Appendix \ref{sec_app: case_study_1}). This phenomenon does not extend to estimands defined within the subpopulation $\{E=e\}$. In particular, if the estimands of interest were $S_{k}(e) \coloneqq P(Y_k^{\overline{a}_k=\overline{1}, {c}_{k-1}= 0}=0\mid E=e)$ for $(e,k) \in \{0,\dots,15\}^2$, then a modified version of procedure 1), producing estimates $\hat{S}_{k}^{*,\textnormal{msm}}(e)$, would underestimate $S_{k}(e)$ whenever $\tau=(e+k)\in \{16,\dots,30\}$, as the interaction term $\prgapp_{\textnormal{int}}(e,k)$ is not identifiable from the observed data; see Figure \ref{fig: s_e_diff} in Appendix \ref{sec_app: case_study_1}.

\subsection{Case study 2: effect of SARS-CoV-2 mRNA vaccination on survival in patients treated with immune checkpoint inhibitors}
\label{sec: case_study_mrna_vaccine}
We now illustrate our ideas in a real-data setting. Consider one-month-spaced discrete follow-up times $k \in \{0, \dots, K=44\}$ and the data structure 
with a point treatment $A$ and no time-varying confounders. When $A$ is a point treatment, the estimands of interest simplify to
\begin{align}\label{estimand_mrna}
    S_k^{a} \coloneqq P \{Y_k^{a, c_{k-1}=0} = 0\} \\ 
    a \in \{0,1\}, \, \textnormal{ for } k \in\{0, \dots, K=44\}. \notag
\end{align}
In words, $S_k^{a}$ is the survival under SARS-CoV-2 mRNA vaccination $(a=1)$ and no vaccination $(a=0)$, had censoring been abolished.
Throughout, we consider two target trials (hereafter, trials) to estimate \eqref{estimand_mrna} among patients with non-small cell lung cancer $(n=277 \textnormal{ per arm } a \in \{0,1\})$. Trial A) is specified as in \citet{dumas_re-evaluating_2025}. Trial B) differs from trial A) only in that administrative censoring is treated as informative; accordingly, we assume \ass{$\RAID$} to allow for potential dependence between the outcome and the time of study-entry. 

Figure \ref{fig: real_data_analysis} illustrates the weighted Kaplan–Meier survival curves under vaccination, $\hat{S}_k^{*,\textnormal{km},a=1}$, and no vaccination, $\hat{S}_k^{*,\textnormal{km},a=0}$, estimated following trial B). The precise definition of \ass{$\RAID$} is provided, and point estimates are tabulated in Appendix \ref{sec_app: specifics_numerical_case_study_2}.

\begin{figure}[H]
    \centering%
{
    \begin{tikzpicture}[scale=1]
        \node[anchor=north, scale=0.65] at (0,0){%
        \pgfimage{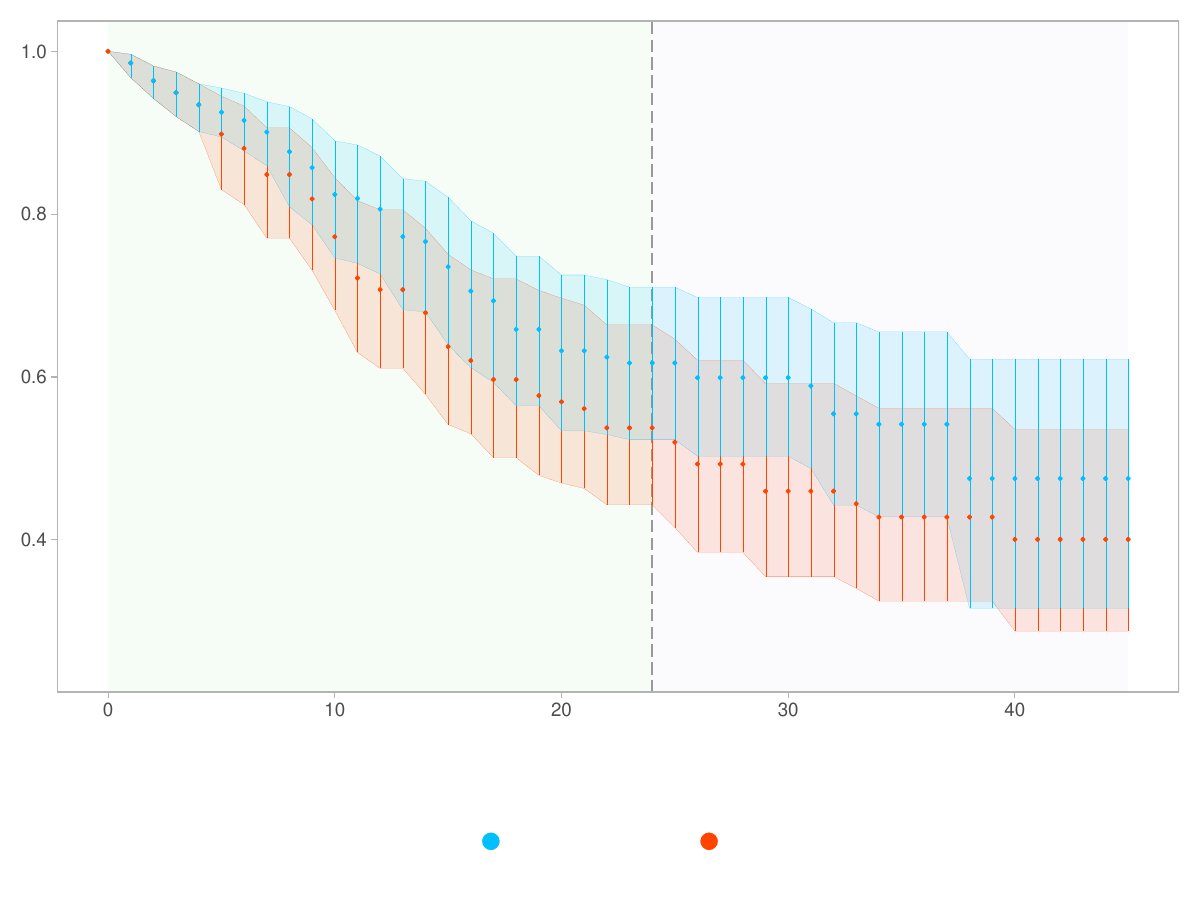}
        };    
        \node[anchor=north] at (0.5,-8){\scriptsize $k+1$};
        \node[anchor=north west] at (-1.5,-8.6){\scriptsize $\hat{S}_k^{*,\text{km},a=1}$};
        \node[anchor=north west] at (0.9,-8.6){\scriptsize $\hat{S}_k^{*,\text{km},a=0}$}; 
\end{tikzpicture}
}
\caption{$\hat{S}_k^{*,\text{km},a=1}$ and $\hat{S}_k^{*,\text{km},a=0}$ estimates (95\% point-wise  confidence intervals obtained by non parametric bootstrap). The x-axis reports time $k^{\prime} = k+1$ (at $k^{\prime}=0$, time of the first ICI initiation, all survival curves, by default, equal 1). $\hat{S}_{k^{\prime}}^{*,\text{km},a=1}=\hat{S}_{k^{\prime}}^{*,\text{km},a=0}$ during the grace period $\{0,1,2,3\}$ ($\simeq$ 100 days). The dashed-grey vertical line marks the follow-up time $k+1=23+1$ after which extrapolation assumptions are needed for the parameter targeted by $\hat{S}_k^{*,\text{km},a}$ to coincide with $S_k^{a}$.}
\label{fig: real_data_analysis}
\end{figure}
The confidence intervals presented in Figure \ref{fig: real_data_analysis}, which resemble those of trial A), intersect at every $k \in \{0, \dots, K\}$, suggesting an insufficient sample size to detect a positive effect of the vaccine. 
However, even with a (near infinite) sample, assumptions about post-study-period features are needed to interpret $S_k^{a}$ as the parameter targeted by $\hat{S}_k^{*,\textnormal{km}, a}$ for every $k \in \{0, \dots, K\}$, and to ensure the validity of the designs of both trials A) and B), that is, that the assumptions underlying the trial designs are compatible with the underlying data-generating mechanism.

To make this concrete, recall that eligible individuals in both trials entered the study from December 2020 ($\tau=0$) until September 2022 ($\tau=21$) and were followed until August 2024 (included), the administrative end of the study ($\tau=\mathcal{T}=44$). In particular, the eligible target population for which $S_k^{a}$ is evaluated is $\{E \leq 21\}$, i.e., $P\{E \leq 21\} = P\{E \leq \mathcal{T}\}=1$. We now distinguish two cases.

Suppose \ass{\CTC} are absent. Then, the design of trial A) is trivially valid, and for trial B), we have
\begin{multline*}
    \frac{{S}_k^{*,\textnormal{km},a}}{S_k^a}= \frac{\prod_{m=0}^{k}\left(1 - \textcolor{teal}{h_{m}^{a}}\right)}{\prod_{m=0}^{k}\left(1 - h_{m}^{a}\right)} = 1,
\end{multline*}
where $h_{m}^{a} \coloneqq P(Y_m^{a, c_{m-1}=0}=1 \mid Y_{m-1}^{a, c_{m-2}=0} = 0)$.
That is, the design of trial B) is valid because all hazards in $S_k^{*,\textnormal{km},a}$ refer to the target eligible population, $\{E \leq 21\}$.

Suppose instead that \ass{\CTC} are present, even if $Y_k$ is independent of $E$ for every arm $a \in \{0,1\}$ in $P^*$. Then, the design of trial A) is invalid, as it assumes the absence of \ass{\CTC} when they are in fact present. For trial B), define
\begin{align*}
    h_{\prstudyp,m}^{a} \coloneqq P(Y_m^{ a, c_{m-1}=0}=1 \mid Y_{m-1}^{a, c_{m-2}=0} = 0, \textcolor{teal}{E \in \{0, \dots, \mathcal{T}-m\}}).
\end{align*}
For $k > 23$, we have 
\begin{multline*}
    \frac{{S}_k^{*,\textnormal{km},a}}{S_k^a}= \frac{\prod_{m=0}^{23}\left(1 - h_{m}^{a}\right) \cdot \prod_{m=24}^{k}\left(1 - h_{\prstudyp, m}^{a}\right)}{\prod_{m=0}^{23}\left(1 - h_{m}^{a}\right) \cdot \prod_{m=24}^{k}\left(1 - h_{m}^{a}\right)} = \frac{\prod_{m=24}^{k}\left(1 - \textcolor{teal}{h_{\prstudyp, m}^{a}}\right)}{\prod_{m=24}^{k}\left(1 - h_{m}^{a}\right)}.
\end{multline*}
Thus, ${S}_k^{*,\textnormal{km},a}$ coincides with $S_k^{a}$ only if, for every $m\in\{24, \dots, k\}$,
\begin{align}\label{eq: mrna_condition_no_bias_km}
h_{\prstudyp, m}^{a}=h_{m}^{a}
\end{align}
which is isomorphic to assumption \eqref{eq: condition_no_bias_km}. Hence, using arguments analogous to those in Section \ref{sec: wsc_procedures}, \eqref{eq: mrna_condition_no_bias_km} is unlikely to hold under \ass{\CTC} and therefore
\begin{multline*}
\frac{{S}_k^{*,\textnormal{km},a}}{S_k^a}=  \frac{\prod_{m=24}^{k}\left(1 - \textcolor{teal}{h_{\prstudyp, m}^{a}}\right)}{\prod_{m=24}^{k}\left(1 - h_{m}^{a}\right)} \neq 1.
\end{multline*}
That is, the design of trial B) is invalid because it no longer targets $S_k^a$: among the hazards defining $S_k^{*,\textnormal{km}, a}$, only those evaluated at $k \leq 23$ (light-green area) correspond to the target eligible population, $\{E \leq 21\}$; for $k > 23$ (lavender area) the hazards instead correspond to different populations depending on the follow-up time $k$, i.e.  $\{E \leq \mathcal{T} - k\} \subset \{E \leq 21\}$.\footnote{For example, if $k=24$, only individuals $\{E \leq \mathcal{T}-k = 44-24 = 20\}$ are included; if $k=25$, only individuals $\{E \leq \mathcal{T}-k = 44-25 = 19\}$ are included, and so on.}

Thus, the designs of both trials A) and B) are invalid, irrespective of whether assumptions concerning the study period are compatible with \ass{\CTC}. In particular, in trial B), the absence of evidence in $P^*$ of dependence between $Y_k$ and $E$ for both vaccinated and unvaccinated individuals does not rule out dependence between $E$ and $Y_k^{a, c_{k-1}=0}$ arising during the post-study period, and hence the presence of \ass{\CTC}. This point, hitherto not captured by existing notions of calendar-time changes, illustrates the limits of what can be learned about trial design from the observed data.

\subsubsection{Remarks on calendar time and target trial procedures}
Assumptions isomorphic to \eqref{eq: mrna_condition_no_bias_km} could, in principle, be falsified had individuals been observed even during the post-study period.
However, had we conducted the same analysis at the time the original study \citet{grippin_sars-cov-2_2025} was received (November 2024, $\tau=47$), assumption \eqref{eq: mrna_condition_no_bias_km} could not have been falsified: the only observations that could have been collected for the target population, $\{E \leq 21\}$, in the post-study period are limited to follow-up $k=26$, as $\tau^{\prime}=47 < (21 + k)=\tau^{\prime\prime}$ for $k \in \{27, \dots, K=44\}$. Thus, when asserting that $h_{k}^a=h_{\prstudyp,k}^a$ for $k \in \{27, \dots, K=44\}$, and hence that \eqref{eq: mrna_condition_no_bias_km} holds, we must reason about events that had not yet occurred at the time the analysis was conducted.

While not directly applicable to our analysis, these considerations are critical to the validity of target trial–based analyses. For instance, consider the important work by \citet{dickerman_avoidable_2019} investigating the effect of statin initiation on cancer-free survival. Let time be discretized into equally spaced 1-year intervals. \citet{dickerman_avoidable_2019} designed a target trial using MSM hazard–based procedures. Individuals were eligible if they entered the study between 1999 ($\tau = 0$) and 2015 ($\tau = 16$) and were followed until 2016 ($\mathcal{T} = 17$). The reported outcomes were cancer-free survival curves under statin initiation vs. no initiation up to $K = 10$ years. Because the analysis was conducted in 2019 ($\tau = 20$), at any follow-up time $k \in \{1, \dots, K\}$, survival must be derived by evaluating MSM-hazard models during the post-study period. 

Establishing the absence of an effect of statin initiation would not have been possible at the time the analysis was conducted without extrapolation: $\tau^{\prime}=20 < (17 + k) = \tau^{\prime\prime}$ for $k \in \{4, \dots, 10\}$. While likely inconsequential in the compelling analysis of \citet{dickerman_avoidable_2019}, this step is formally required to interpret $S_k^{*,\textnormal{km},a}$ as $S_k^{a}$ for every $k \in \{0, \dots, K\}$ and $a \in \{0,1\}$. We return to this issue in the next section.

\subsection{Sensitivity analyses with extrapolation curves}
\label{sec: c_values_case_studies}
In Sections \ref{sec: case_study_hernan_00_sim} and \ref{sec: case_study_mrna_vaccine}, we elaborated on the complications stemming from not observing individuals after the end of the study. We now illustrate the use of c-values and extrapolation curves as a sensitivity analysis strategy to corroborate the findings of \citet{hernan_marginal_2000}, as reproduced in scenario C) of case study 1, and to assess the reliability of the effect of SARS-CoV-2 mRNA vaccination inferred from a target trial emulation reanalysis of \citep{grippin_sars-cov-2_2025}.

\subsubsection{Case study 1}
\label{sec: c_values_case_study_1}
Consider procedure 1). Let $\hat{RR}_k^{\rightsquigarrow} = (1 - \hat{S}_k^{*, \textnormal{msm},\overline{a}_k=\overline{1}}) \cdot {(1 - \hat{S}_k^{*, \textnormal{msm},\overline{a}_k=\overline{0}})}^{-1}$. 
Because the estimated effect of zidovudine is protective (i.e., $\hat{\beta}_A^* < 0$), the proportional hazards MSM implies that, for sufficiently large $n$, $\hat{RR}_k^{\rightsquigarrow} < 1$ for every $k \in \{0, \dots, K\}$.\footnote{We have $\textnormal{sign}(\hat{\beta}_A^*) = \textnormal{sign}(\hat{RR}_k^{\rightsquigarrow} - 1)$ with high probability.} 

C-values exist for $k \in \{K_{\textnormal{ex}} = 13, 14, 15=K\}$, i.e., for $k \in \{1,\dots, 12\}$, the proportion of individuals who can first experience failure during the post-study period is insufficient to nullify the protective effect of zidovudine. Numerical values of the extrapolation curves are reported in Table \ref{table: ex_curves_case_study_1}. 

These results, together with subject-matter considerations, provide additional support for the protective effect of zidovudine reported by \citet{hernan_marginal_2000}.

In the MACS cohort, individuals (HIV-positive men) received zidovudine in two main forms: either as monotherapy or in combination with other therapies. In \citet{hernan_marginal_2000}, it is assumed that once an individual initiates zidovudine therapy, he remains on zidovudine; this implies that if zidovudine is later used in combination with other therapies, then both monotherapy and combined therapy (with the adjuvant therapy complementing the monotherapy starting during the post-study period) are consistent versions of zidovudine defined as a nominally-fixed treatment. Hence, the effect of zidovudine investigated in \citet{hernan_marginal_2000} pertains to zidovudine administered as both a monotherapy and combined therapy. 

Suppose that, consistent with later findings \citep{darbyshire_delta_1996}, combined therapy is more effective than monotherapy, as in the true data-generating mechanism of scenario C), where the interaction term $\prgapp_{\textnormal{int}}(e,k)$, given on the logit scale by $-0.5 \cdot \mathbb{1}_{\{a_k = 1, e + k > \mathcal{T}\}}$, is present. Because close to the end of the study conducted by \citet{hernan_marginal_2000}, the most prevalent type of zidovudine administration shifted from monotherapy to combined therapy \citep{detels_effectiveness_1998}, the effect of zidovudine as a single, nominally-fixed treatment must have improved during the post-study period relative to the study period. Such an improvement, however, could not be captured by the estimate $\hat{\beta}_A^{*}$ because it manifested entirely during the post-study period. An implication is that extrapolation assumptions are not met in \citet{hernan_marginal_2000}, but the failure of the extrapolation assumptions likely underestimated the true effect. Specifically, nullifying the protective effect of zidovudine would require a substantial increase in the (extrapolated) post-study hazards under receipt of zidovudine (see $\delta_k^{1}$ in Table \ref{table: ex_curves_case_study_1}). Yet, the deviations from the post-study hazards under receipt of treatment are expected to be strictly negative (in the true data-generating mechanism, they vary within $[-0.06, -0.01]$). Therefore, for any $k \in \{K_{\textnormal{ex}} = 13, 14, 15=K\}$, the variations required to nullify the protective effect of zidovudine seem to be clinically implausible.
\begin{table}[ht]
\centering
\scriptsize
\begin{tabular}{ccccccc}
  \toprule
$k$ & $\text{c}_k^0+\text{c}_k^1$ & $\text{c}_k^0$ & $\text{c}_k^1$ & $\delta_k^0$ & $\delta_k^1$ \\
  \midrule
   13 & 7.54 & 0.13 & 7.41 & -0.10 & 0.43 \\
   14 & 5.11 & 0.20 & 4.90 & -0.09 & 0.21 \\
   15 & 3.68 & 0.30 & 3.38 & -0.05 & 0.11 \\
   \bottomrule
\end{tabular}
\caption{Extrapolation curves for case study 1, scenario C).}
\label{table: ex_curves_case_study_1}
\end{table}

\subsubsection{Case study 2}
\label{sec: c_values_case_study_2}
Consider now case study 2. Owing to the limited sample size, we estimated the (extrapolated) effect of the vaccine by fitting a proportional hazards MSM, which yields a protective effect, i.e., $\hat{RR}_k^{\rightsquigarrow} < 1$ for all $k \in \{0, \dots, K\}$. C-values exist for $k \in \{27, \dots, 44\}$; for $k \leq 26$, the proportion of individuals who can first experience failure during the post-study period is insufficient to nullify the protective effect of the vaccine. Numerical values of extrapolation curves are reported in Table \ref{table: ex_curves_case_study_2} and depicted in Figure \ref{fig: extrapolation_curves_mrna_vaccine}. To interpret these results, we contrast two clinically motivated scenarios that differ in the direction of post-study variations under vaccination. Suppose first that mRNA vaccination improves survival by gradually increasing the tumour sensitivity to immune checkpoint inhibitors. Such immune remodeling develops over many months, so the hazards under vaccination would only differ from the hazards under no vaccination at larger follow-up times $k$. This could lead to an underestimation of the protective effect of vaccination, i.e., for $k \in \{K_{\textnormal{ex}} = 27, \dots, 44=K\}$, the deviations from the (extrapolated) hazards under vaccination are expected to be strictly negative. Thus, analogous arguments to those presented in Section \ref{sec: c_values_case_study_1} can be used to corroborate the protective effect of the vaccine. Conversely, suppose that vaccinated individuals are more likely to complete early therapy and survive long enough to receive later-line regimens, which are highly toxic. Then, because post-study events are associated with higher follow-up times $k$,\footnote{For any given population $\{E=e\}$, the follow-up $k_{\prstudyp}$ for event $\textnormal{Ev}_{\prstudyp}$ occurring during the study period is smaller than that ($k_{\prgapp}$) of an event $\textnormal{Ev}_{\prgapp}$ occurring in the same population, but during the post-study period: $e+k_{\prstudyp} \leq \mathcal{T}$, but $e+k_{\prgapp} > \mathcal{T}$ $\Rightarrow$ $ k_{\prstudyp} < k_{\prgapp}$.} the inability to measure these during the study period would lead to an overestimation of the protective effect of the vaccine. Such knowledge can guide the choice of the maximum follow-up $K_{\textnormal{exe}}$ beyond which estimates should not be considered reliable. If quantitative knowledge is available, $K_{\textnormal{exe}}$ can be selected by specifying a threshold for post-study variations; for example, setting a threshold of $0.05$ on the additive-hazard scale and reporting estimates up to the largest $K_{\textnormal{exe}}$ such that $\delta_{K_{\textnormal{exe}}}^{1} > 0.05$ yields $K_{\textnormal{exe}} = 35$. In the absence of quantitative knowledge, $K_{\textnormal{exe}}$ may be selected using an elbow criterion based on $\textnormal{c}_k^{0} + \textnormal{c}_k^{1}$, reflecting a conservative worst-case profile (in our case, this gives $K_{\textnormal{exe}} = 32$, see Figure \ref{fig: extrapolation_curves_mrna_vaccine}). 
\begin{table}[H]
\centering
\scriptsize
\begin{tabular}{ccccccccccc}
  \toprule
$k$ & $\textnormal{c}_k^0 + \textnormal{c}_k^1$ & $\textnormal{c}_k^0$ & $\textnormal{c}_k^1$ & $\delta_k^0$ & $\delta_k^1$ \\ 
  \midrule
    27 & 21.702 & 0.046 & 21.656 & -0.042 & 0.945 \\
    28 & 15.210 & 0.066 & 15.144 & -0.039 & 0.463 \\
    29 & 11.311 & 0.089 & 11.222 & -0.036 & 0.280 \\
    30 & 8.801 & 0.115 & 8.686 & -0.031 & 0.188  \\
    31 & 7.154 & 0.143 & 7.011 & -0.029 & 0.139 \\
    32 & 6.007 & 0.171 & 5.836 & -0.026 & 0.107 \\
    33 & 5.100 & 0.204 & 4.896 & -0.023 & 0.083 \\
    34 & 4.432 & 0.238 & 4.194 & -0.021 & 0.066 \\
    35 & 3.920 & 0.274 & 3.646 & -0.019 & 0.053 \\
    36 & 3.518 & 0.312 & 3.206 & -0.017 & 0.044 \\
    37 & 3.220 & 0.348 & 2.872 & -0.016 & 0.037 \\
    38 & 2.987 & 0.384 & 2.603 & -0.014 & 0.031 \\
    39 & 2.805 & 0.419 & 2.385 & -0.013 & 0.027 \\
    40 & 2.661 & 0.453 & 2.209 & -0.012 & 0.023 \\
    41 & 2.548 & 0.485 & 2.063 & -0.012 & 0.020 \\
    42 & 2.457 & 0.515 & 1.942 & -0.011 & 0.018 \\
    43 & 2.386 & 0.543 & 1.843 & -0.010 & 0.016 \\
    44 & 2.327 & 0.569 & 1.758 & -0.009 & 0.015 \\
   \bottomrule
\end{tabular}
\caption{Numerical results for extrapolation curves for case study 2.}
\label{table: ex_curves_case_study_2}
\end{table}
\begin{figure}[ht]
    \centering
{
    \begin{tikzpicture}[scale=1]
        \node[anchor=north, scale=0.7] at (0,0){%
        \pgfimage{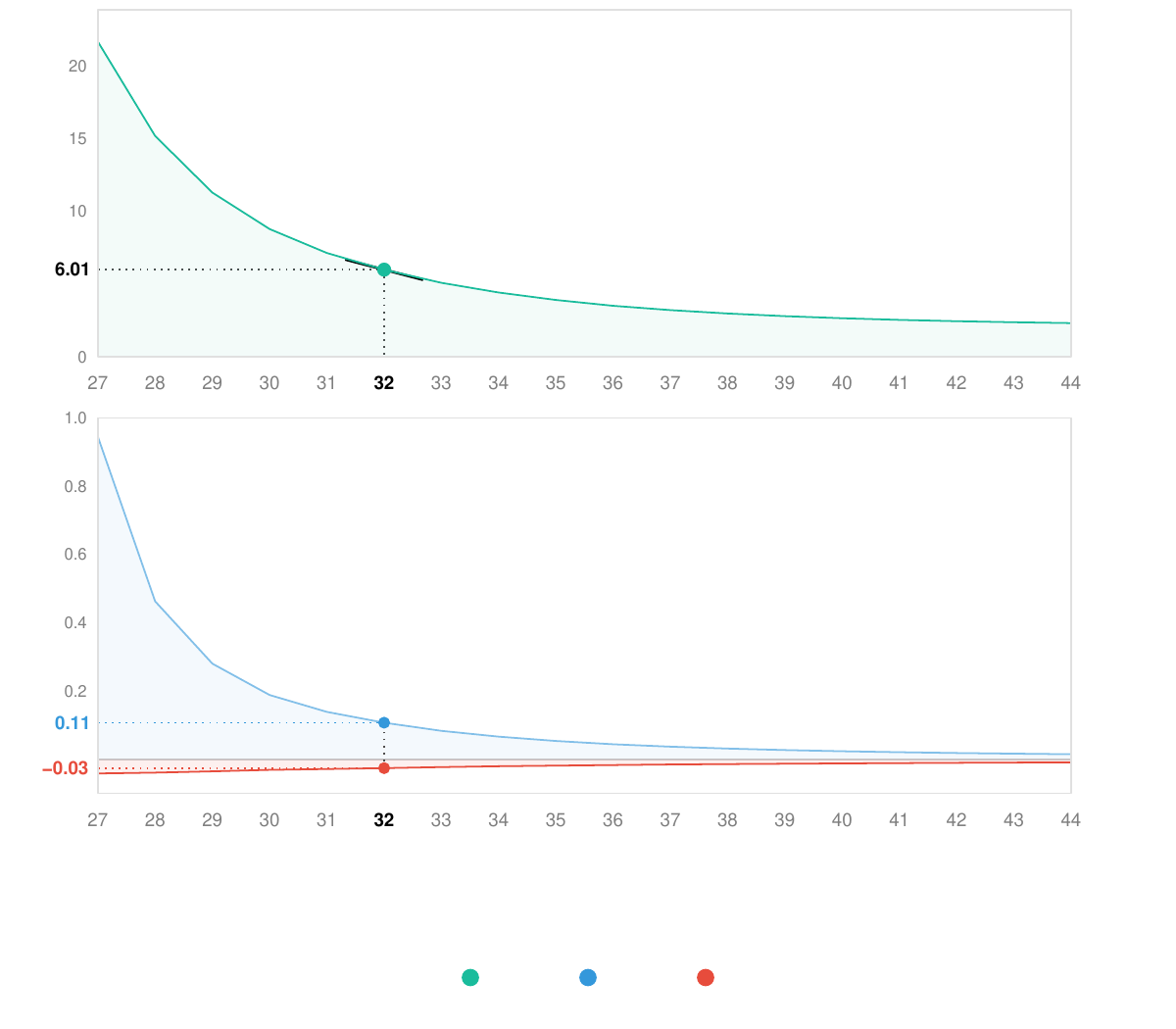}
        };    
       \node[anchor=north] at (0,-10.2){\footnotesize $k$};
        \node[anchor=north west] at (-2.2,-11.2){\footnotesize $\text{c}_k^0 + \text{c}_k^1$};
        \node[anchor=north west] at (-0.3,-11.2){\footnotesize $ \delta_{k}^1$}; 
        \node[anchor=north west] at (1.1,-11.2){\footnotesize $ \delta_{k}^0$}; 
\end{tikzpicture}
}
\caption{Extrapolation curves for case study 2. We have $K_{\text{ex}}=27$. The elbow criterion, reflecting a conservative worst-case scenario, yields $K_{\text{exe}} = 32$.}
\label{fig: extrapolation_curves_mrna_vaccine}
\end{figure}

    \clearpage
    \section{Discussion}
Our work establishes conditions under which MSM-hazard and WSC procedures target causal effects. We also discuss the practical consequences of failing to meet these conditions in different settings. 
For example, in MSM-hazard procedures, omitting $E$ from the conditioning set $V$ is often incompatible with an investigator's working assumptions. In target trial emulation, extrapolation assumptions are required in staggered-entry studies.
To support the use of these procedures in practice, we proposed extrapolation curves as sensitivity measures and illustrated their use in practice. We believe these results will help applied researchers understand and structure their causal analyses. We also believe that with the proliferation of analyses based on observational data (e.g., health records), our results will be increasingly relevant in practice, where the role of $E$ often needs to be considered. 

\section{Acknowledgments}
This work was supported by the Swiss National Science Foundation (Grant number: 207436). Lorenzo dedicates this work to his cousin Francesco.

\subsection*{Software and AI Disclosure}
The code used in Section 5 builds on that provided by \citet{seaman_simulating_2024} and \citet{dumas_re-evaluating_2025}. The authors acknowledge the use of Large Language Models (LLMs) (ChatGPT and Gemini) during the preparation of this manuscript. In the main text, LLMs were used, in part, to assist in reviewing the literature, formulating the substantive justifications for the clinically motivated scenarios in Section 5.3, improving the language, and generating custom symbols and tables. In the Web appendices, LLMs were used to assist in refining the presentation of the proofs, to support the development of the proofs of Propositions S5–S7 (Web appendices D.3 and D.4), to draft the first two paragraphs on page 2, to improve the language, and to develop and polish code. The use of LLMs did not contribute to the study design, methodology, statistical analyses, interpretation of the results, or scientific conclusions. All AI-assisted content was critically reviewed, verified, and revised by the authors. The authors take full responsibility for the content of the manuscript.

    \clearpage
    \renewcommand{\refname}{References} 
    \renewcommand{\bibname}{References}
    
    \bibliographystyle{biom}
    \putbib[bibliography_main] 
\end{bibunit}

\clearpage

\begin{bibunit}
    \begin{appendices}
\setcounter{page}{1}
\setcounter{table}{0}
\renewcommand{\thetable}{S\arabic{table}}
\setcounter{figure}{0}
\renewcommand{\thefigure}{S\arabic{figure}}
\setcounter{equation}{0}
\renewcommand{\theequation}{S\arabic{equation}}
\setcounter{definition}{0}
\renewcommand{\thedefinition}{S\arabic{definition}}
\setcounter{proposition}{0}
\renewcommand{\theproposition}{S\arabic{proposition}}
\setcounter{corollary}{0}
\renewcommand{\thecorollary}{S\arabic{corollary}}
\setcounter{lemma}{0}
\renewcommand{\thelemma}{S\arabic{lemma}}
\setcounter{theorem}{0}
\renewcommand{\thetheorem}{S\arabic{theorem}}

\begin{center}

    {\Huge \bfseries Web Appendices} \\[1.5ex]
    {\large \color{gray} of \textit{Causal inference with staggered entries and effects that change over calendar-time}}
\end{center}
\vspace{3em}

\titlelabel{Web Appendix \thetitle.\quad}
\begingroup
  \hypersetup{linkcolor=black, linktoc=all} 
  
  \linespread{1.5}\selectfont 
  
  \makeatletter
  \renewcommand{\@tocrmarg}{3.5em} 
  \renewcommand{\@pnumwidth}{2em} 
  \makeatother
  
  \DoToC 
\endgroup
\clearpage

\addtocontents{toc}{\protect\setcounter{tocdepth}{1}}
\begin{center}
    \textbf{Guide to the Web Appendices}
\end{center}
\label{sec_app: guide_appendices}
Appendix \ref{sec_app: double_bind_censoring_assumptions_literature_iptw}  formalizes calendar-time changes, clarifies why standard censoring assumptions may be ill-posed, and introduces compatible alternatives together with IPTW results. Appendix \ref{sec_app: consequences_of_ctc} develops the main technical results for WSC and MSM-hazard procedures, including identification, compatibility, and bounds. Appendix \ref{sec_app: extrapolation_assumptions_and_c_values} focuses on interpretation and practice: it contains the extrapolation assumptions underlying MSM-hazard procedures and introduces c-values and extrapolation curves as sensitivity tools. Appendix \ref{sec_app: case_studies_specifics} presents the case studies, and elaborates on the data-generating mechanisms, estimators, and numerical results.

Notation is adapted to each section, and any changes are stated explicitly. Assumptions, in contrast to the main text, are not highlighted in blue. 

Equalities involving random variables are understood almost surely. In proofs, we use the following conventions:
\begin{center}
\begin{minipage}[c]{0.25\textwidth}
\centering
\begin{align*}
    X=& Y \\
    =& W & \Leftarrow Z
\end{align*}
\end{minipage}
\hfill $\equiv$ \hfill
\begin{minipage}[c]{0.25\textwidth}
\centering
\begin{align*}
    &X \\
    =& Y \\
    =& W & \Leftarrow Z
\end{align*}
\end{minipage}
\hfill $\equiv$ \hfill
\begin{minipage}[c]{0.25\textwidth}
\centering
\begin{align*}
    \left[X=Y \text{ and } Z\right] \Rightarrow Y=W,
\end{align*}
\end{minipage}
\end{center}

\begin{center}
\begin{minipage}[c]{0.25\textwidth}
\centering
\begin{align*}
    &X \\
    \Rightarrow& Y & \Leftarrow{Z}
\end{align*}
\end{minipage}
\hfill $\equiv$ \hfill
\begin{minipage}[c]{0.25\textwidth}
\centering
\begin{align*}
     X\Rightarrow& Y & \Leftarrow{Z}
\end{align*}
\end{minipage}
\hfill $\equiv$ \hfill
\begin{minipage}[c]{0.25\textwidth}
\centering
\begin{align*}
    X \text{ and } Z \Rightarrow Y.
\end{align*}
\end{minipage}
\end{center}\clearpage
\clearpage
\section{List of symbols}
\label{sec_app: guide_appendices}

\begin{longtable}{
    >{\raggedright\arraybackslash}p{0.20\textwidth}
    >{\raggedright\arraybackslash}p{0.57\textwidth}
    >{\centering\arraybackslash}p{0.15\textwidth}
}
\caption{Nomenclature}
\label{tab: nomenclature}
\\
\toprule
\textbf{Symbol}
&
\textbf{Meaning}
&
\textbf{First introduced in Section}
\\
\midrule
\endfirsthead

\multicolumn{3}{c}
{\tablename\ \thetable\ -- \textit{continued from previous page}}
\\
\toprule
\textbf{Symbol}
&
\textbf{Meaning}
&
\textbf{First introduced in Section}
\\
\midrule
\endhead

\midrule
\multicolumn{3}{r}{\textit{Continued on next page}}
\\
\endfoot

\bottomrule
\endlastfoot

$E$
&
Calendar time of study entry.
&
\ref{sec: introduction}
\\

$e$
&
A realization of the calendar time of study entry $E$.
&
\ref{sec: censoring_assumptions_are_illposed}
\\

$k$
&
Discrete follow-up time measured from study entry.
&
\ref{sec: censoring_assumptions_are_illposed}
\\

$m$
&
Generic follow-up-time index, generally ranging from $0$ to $k$.
&
\ref{sec: censoring_assumptions_are_illposed}
\\

$j$
&
Generic index, typically used to index follow-up times inside products or sums.
&
\ref{sec: consequences_and_solutions_in_wsc_and_msm_hazard}
\\

$K$
&
Maximum follow-up time.
&
\ref{sec: censoring_assumptions_are_illposed}
\\

$\mathcal{T}$
&
Calendar time of the administrative end of the study.
&
\ref{sec: censoring_assumptions_are_illposed}
\\

$\tau$
&
Calendar time corresponding to study entry $e$ and follow-up $k$, namely
$\tau=e+k$.
&
\ref{sec: censoring_assumptions_are_illposed}
\\

$L_k$
&
Baseline covariates when $k=0$ and time-varying covariates when $k>0$.
&
\ref{sec: censoring_assumptions_are_illposed}
\\

$l_k$
&
A realization of the covariate vector $L_k$.
&
\ref{sec: censoring_assumptions_are_illposed}
\\

$C_{k-1}$
&
Right-censoring indicator before the outcome at follow-up $k$:
$C_{k-1}=1$ denotes right censoring and $C_{k-1}=0$ denotes remaining
uncensored.
&
\ref{sec: censoring_assumptions_are_illposed}
\\

$A_k$
&
Treatment indicator at follow-up $k$:
$A_k=1$ denotes treatment and $A_k=0$ denotes control.
&
\ref{sec: censoring_assumptions_are_illposed}
\\

$a_k$
&
Treatment value assigned at follow-up $k$ under a treatment strategy.
&
\ref{sec: illustrative_example_censoring_assumptions_ill_posed}
\\

$A$
&
Point treatment assigned at baseline, used in the special case without
time-varying treatment.
&
\ref{sec: censoring_assumptions_are_illposed}
\\

$a$
&
A realization of the point treatment $A$, with $a\in\{0,1\}$.
&
\ref{sec: case_study_mrna_vaccine}
\\

$Y_k$
&
Event indicator at follow-up $k$:
$Y_k=1$ denotes death or failure and $Y_k=0$ denotes survival.
&
\ref{sec: censoring_assumptions_are_illposed}
\\

$\overline{A}_k$
&
Treatment history through follow-up $k$,
$\overline{A}_k=\{A_m:m\in\{0,\dots,k\}\}$.
&
\ref{sec: censoring_assumptions_are_illposed}
\\

$\overline{a}_k$
&
Treatment strategy through follow-up $k$.
&
\ref{sec: illustrative_example_censoring_assumptions_ill_posed}
\\

$\overline{L}_k$
&
Covariate history through follow-up $k$.
&
\ref{sec: censoring_assumptions_are_illposed}
\\

$\overline{Y}_k$
&
Event history through follow-up $k$.
&
\ref{sec: censoring_assumptions_are_illposed}
\\

$\overline{C}_{k-1}$
&
Censoring history through follow-up $k-1$.
&
\ref{sec: censoring_assumptions_are_illposed}
\\

$\overline{0}$
&
A vector of $k+1$ zeros; its dimension is determined by context.
&
\ref{sec: censoring_assumptions_are_illposed}
\\

$\overline{1}$
&
A vector of $k+1$ ones; its dimension is determined by context.
&
\ref{sec: censoring_assumptions_are_illposed}
\\

$\emptyset$
&
An absent or undefined variable, including variables indexed outside
$\{0,\dots,K\}$.
&
\ref{sec: censoring_assumptions_are_illposed}
\\

$\mathbb{1}_{\{\cdot\}}$
&
Indicator function of the event appearing between braces.
&
\ref{sec: illustrative_example_censoring_assumptions_ill_posed}
\\

$\operatorname{supp}(P(X))$
&
Support of the distribution of a random variable $X$.
&
\ref{sec: illustrative_example_censoring_assumptions_ill_posed}
\\

$Y_k^{\overline{a}_k,c_{k-1}=0}$
&
Potential outcome at follow-up $k$ under treatment strategy
$\overline{a}_k$ and an intervention abolishing censoring through
follow-up $k-1$.
&
\ref{sec: illustrative_example_censoring_assumptions_ill_posed}
\\

$Y_k^{a,c_{k-1}=0}$
&
Potential outcome at follow-up $k$ under point treatment $a$ and an
intervention abolishing censoring through follow-up $k-1$.
&
\ref{sec: case_study_mrna_vaccine}
\\

$S_k^{\overline{a}_k}$
&
Survival probability at follow-up $k$ under treatment strategy
$\overline{a}_k$ and abolished censoring:
\[
S_k^{\overline{a}_k}
=
P\left\{
Y_k^{\overline{a}_k,c_{k-1}=0}=0
\right\}.
\]
&
\ref{sec: illustrative_example_censoring_assumptions_ill_posed}
\\

&
Survival probability at follow-up $k$ under point treatment $a$ and
abolished censoring:
\[
S_k^a
=
P\left\{
Y_k^{a,c_{k-1}=0}=0
\right\}.
\]
&
\ref{sec: case_study_mrna_vaccine}
\\

\ass{ex}$(e,k)$
&
Conditional sequential exchangeability assumption for treatment and
censoring at study-entry time $e$ and follow-up $k$.
&
\ref{sec: illustrative_example_censoring_assumptions_ill_posed}
\\

\ass{con}$(e,k)$
&
Consistency assumption at study-entry time $e$ and follow-up $k$.
&
\ref{sec: illustrative_example_censoring_assumptions_ill_posed}
\\

\ass{pos}$(e,k)$
&
Positivity assumption at study-entry time $e$ and follow-up $k$.
&
\ref{sec: illustrative_example_censoring_assumptions_ill_posed}
\\

\ass{\CTC}
&
Calendar-time changes: dependencies between study entry and the outcome
of interest that may arise during the post-study period.
&
\ref{sec: the_problem_conditioning_on_e}
\\

$\ass{\FAID}$
&
Intersection of the full exchangeability, consistency, and positivity
conditions, required during both the study and post-study periods.
&
\ref{sec: consequences_and_solutions_in_wsc_and_msm_hazard}
\\

$\ass{\RAID}$
&
Restricted version of $\ass{\FAID}$ in which exchangeability,
consistency, and positivity are required only during the study period.
&
\ref{sec: consequences_and_solutions_in_wsc_and_msm_hazard}
\\

$P$
&
Underlying population law.
&
\ref{sec: censoring_assumptions_are_illposed}
\\

$P^{\mathrm{o}}$
&
Observed-data law induced by the population law $P$.
&
\ref{sec: wsc_procedures}
\\

$P^*$
&
Pseudopopulation law generated from $P$ by weighting.
&
\ref{sec: wsc_procedures}
\\

${\{W_j\}}_{j=0}^m$
&
Weight used to generate the pseudopopulation $P^*$ at follow-up $m$.
&
\ref{sec: wsc_procedures}
\\

$h_m^{*,\overline{a}_m}$
&
Hazard at follow-up $m$ among individuals following treatment strategy
$\overline{a}_m$ and remaining uncensored in the pseudopopulation
$P^*$.
&
\ref{sec: wsc_procedures}
\\

$h_m^{\overline{a}_m}$
&
Marginal counterfactual hazard at follow-up $m$ under treatment strategy
$\overline{a}_m$ and abolished censoring.
&
\ref{sec: wsc_procedures}
\\

$h_{\prstudyp,m}^{\overline{a}_m}$
&
Counterfactual hazard at follow-up $m$ restricted to individuals whose calendar time $E+m$ lies within the study period.
&
\ref{sec: wsc_procedures}
\\

$h_m^{\overline{a}_m}(e)$
&
Counterfactual hazard at follow-up $m$, conditional on study entry
$E=e$.
&
\ref{sec: wsc_procedures}
\\

$S_k^{*,\mathrm{km},\overline{a}_k}$
&
Survival parameter targeted by a weighted Kaplan--Meier or weighted
survival-curve procedure:
\[
S_k^{*,\mathrm{km},\overline{a}_k}
=
\prod_{m=0}^k
\left(1-h_m^{*,\overline{a}_m}\right).
\]
&
\ref{sec: wsc_procedures}
\\

$J_k$
&
Set of follow-up times at which the study-period hazard differs from the
target-population hazard.
&
\ref{sec: wsc_procedures}
\\

$V$
&
Vector of baseline covariates used to define conditional counterfactual
or pseudopopulation hazards; it may include $E$.
&
\ref{sec: msm_hazard_procedures}
\\

$v$
&
A realization of the baseline-covariate vector $V$.
&
\ref{sec: msm_hazard_procedures}
\\

$h_k^{\overline{a}_k}(v)$
&
Counterfactual hazard at follow-up $k$ under strategy
$\overline{a}_k$, conditional on $V=v$.
&
\ref{sec: msm_hazard_procedures}
\\

$h_k^{a_k}(v)$
&
Counterfactual hazard at follow-up $k$ conditional on $V=v$ when,
under \ass{HI}, the hazard depends on the current treatment $a_k$ but
not on past treatment.
&
\ref{sec: msm_hazard_procedures}
\\

\ass{HI}
&
Hazard-independence restriction requiring
$h_k^{\overline{a}_k}(v)$ not to depend on the past treatment history
$\overline{a}_{k-1}$.
&
\ref{sec: msm_hazard_procedures}
\\

$\mathcal{M}(\mathbb{B})$
&
A regression model represented as a collection of functions
$p(\,\cdot\,;\beta)$.
&
\ref{sec: msm_hazard_procedures}
\\

$\mathbb{B}$
&
Parameter space indexing the regression model $\mathcal{M}$.
&
\ref{sec: msm_hazard_procedures}
\\

$p(v,a_k,k;\beta)$
&
Model-implied hazard as a function of baseline covariates $v$, treatment
$a_k$, follow-up $k$, and parameter $\beta$.
&
\ref{sec: msm_hazard_procedures}
\\

$\beta$
&
Generic parameter indexing a regression model.
&
\ref{sec: msm_hazard_procedures}
\\

$\beta_{\prstudyp\prgapp}$
&
Parameter governing the correctly specified counterfactual-hazard model
over both the study and post-study periods.
&
\ref{sec: msm_hazard_procedures}
\\

$\beta^*$
&
Statistical parameter indexing the hazard model in the pseudopopulation
$P^*$.
&
\ref{sec: msm_hazard_procedures}
\\

$\ass{{\mathcal{M}(\mathbb{B})}_{\mid V}^{\mathrm{cs}}}$
&
Correct-specification condition for the counterfactual hazards
conditional on $V$.
&
\ref{sec: msm_hazard_procedures}
\\

$h_k^{*,a_k}(v)$
&
Hazard at follow-up $k$ in the pseudopopulation $P^*$, conditional on
$A_k=a_k$, remaining alive and uncensored, and $V=v$.
&
\ref{sec: msm_hazard_procedures}
\\

$\ass{{\mathcal{M}(\mathbb{B})}_{\mid V}^{*\textnormal{-cs}}}$
&
Correct-specification condition for the pseudopopulation hazards
conditional on $V$.
&
\ref{sec: msm_hazard_procedures}
\\

$S_k^{*,\mathrm{msm},\overline{a}_k}$
&
Survival parameter targeted by an MSM-hazard procedure:
\[
S_k^{*,\mathrm{msm},\overline{a}_k}
=
\mathbb{E}
\left(
\prod_{m=0}^k
(1-p(V,a_m,m;\beta^*))
\right).
\]
&
\ref{sec: msm_hazard_procedures}
\\

$h_{\prstudyp,k}^{a_k}(v)$
&
Counterfactual hazard at follow-up $k$, conditional on $V=v$, among
individuals whose entry time permits observation at follow-up $k$ during
the study period.
&
\ref{sec: extrapolation_assumptions_in_msm_hazard}
\\

$R_k^{\rightsquigarrow,a}$
&
Procedure-implied cumulative risk by follow-up $k$ under treatment
$a$, defined as
$R_k^{\rightsquigarrow,a}
=1-S_k^{*,\mathrm{msm},\overline{a}_k}$.
&
\ref{sec: c_values_and_extrapolation_curves}
\\

$RR_k^{\rightsquigarrow}$
&
Procedure-implied cumulative risk ratio at follow-up $k$:
\[
RR_k^{\rightsquigarrow}
=
\frac{R_k^{\rightsquigarrow,1}}
     {R_k^{\rightsquigarrow,0}}.
\]
&
\ref{sec: c_values_and_extrapolation_curves}
\\

$R_{\prstudyp,k}^{a}$
&
Proportion of individuals who first experience failure during the study
period under treatment strategy $a$.
&
\ref{sec: c_values_and_extrapolation_curves}
\\

$R_{\prgapp,k}^{\rightsquigarrow,a}$
&
Procedure-implied proportion of individuals who first experience
failure during the post-study period under treatment strategy $a$.
&
\ref{sec: c_values_and_extrapolation_curves}
\\

$\mathrm{c}_k^a$
&
C-value at $k$ under the strategy $a$.
&
\ref{sec: c_values_and_extrapolation_curves}
\\

$R_k^{\mathrm{c},a}(\mathrm{c}_k^a)$
&
Shifted cumulative risk under treatment strategy $a$:
\[
R_k^{\mathrm{c},a}(\mathrm{c}_k^a)
=
R_{\prstudyp,k}^{a}
+
\mathrm{c}_k^a \cdot
R_{\prgapp,k}^{\rightsquigarrow,a}.
\]
&
\ref{sec: c_values_and_extrapolation_curves}
\\

$(\mathrm{c}_k^0,\mathrm{c}_k^1)$
&
Pair of c-values minimizing the penalized departure from symmetry while
equalizing the shifted risks under treatment and control.
&
\ref{sec: c_values_and_extrapolation_curves}
\\

$d_k$
&
Quantity entering the closed-form expression for symmetric c-values:
\[
d_k
=
\frac{
R_{\prstudyp,k}^{0}-R_{\prstudyp,k}^{1}
}{
R_{\prgapp,k}^{\rightsquigarrow,1}
}.
\]
&
\ref{sec: c_values_and_extrapolation_curves}
\\

$m_k$
&
Ratio of post-study risks entering the closed-form expression for
symmetric c-values:
\[
m_k
=
\frac{
R_{\prgapp,k}^{\rightsquigarrow,0}
}{
R_{\prgapp,k}^{\rightsquigarrow,1}
}.
\]
&
\ref{sec: c_values_and_extrapolation_curves}
\\

$\delta_k^a$
&
Additive shift in the post-study hazards under treatment strategy $a$
that induces the same shifted cumulative risk as the corresponding
c-value.
&
\ref{sec: c_values_and_extrapolation_curves}
\\

$p_{\delta_k^a}(v,a_m,m;\beta^*)$
&
Hazard obtained by adding $\delta_k^a$ to the model-implied hazard during
the post-study period while leaving study-period hazards unchanged.
&
\ref{sec: c_values_and_extrapolation_curves}
\\

$K_{\mathrm{ex}}$
&
Minimum follow-up time at which c-values exist.
&
\ref{sec: c_values_and_extrapolation_curves}
\\

$\widehat{S}_k^{*,\mathrm{msm}}$
&
Estimated survival curve obtained from the MSM-hazard procedure that
includes $E$ among the baseline covariates.
&
\ref{sec: case_study_hernan_00_sim}
\\

$\widehat{S}_k^{*,\mathrm{msm-we}}$
&
Estimated survival curve obtained from an MSM-hazard procedure that
excludes $E$ from the baseline covariates.
&
\ref{sec: case_study_hernan_00_sim}
\\

$\widehat{S}_k^{*,\mathrm{km}}$
&
Estimated survival curve obtained from the weighted Kaplan--Meier
procedure.
&
\ref{sec: case_study_hernan_00_sim}
\\

$\mathcal{M}_{\circ}$
&
Working hazard model used in Scenario A of the first case study.
&
\ref{sec: case_study_hernan_00_sim}
\\

$\mathcal{M}_{\circ\circ}$
&
Hazard model used in Scenarios B and C, obtained by augmenting
$\mathcal{M}_{\circ}$ with a post-study interaction.
&
\ref{sec: case_study_hernan_00_sim}
\\

$\mathcal{M}_{\mathrm{sat}}$
&
Saturated working model excluding study-entry time $E$.
&
\ref{sec: case_study_hernan_00_sim}
\\

$\prgapp_{\mathrm{int}}(e,k)$
&
Post-study interaction term allowing the treatment effect to change
when $e+k>\mathcal{T}$.
&
\ref{sec: case_study_hernan_00_sim}
\\

$S_k(e)$
&
Survival at follow-up $k$ under the specified treatment strategy,
conditional on study entry $E=e$.
&
\ref{sec: case_study_hernan_00_sim}
\\

$\widehat{RR}_k^{\rightsquigarrow}$
&
Estimated procedure-implied cumulative risk ratio at follow-up $k$.
&
\ref{sec: c_values_case_study_1}
\\

$K_{\mathrm{exe}}$
&
Maximum follow-up time retained for reporting estimates on the basis of
a prespecified sensitivity threshold or an elbow criterion.
&
\ref{sec: c_values_case_study_2}
\\

\end{longtable}\clearpage
\section{Censoring assumptions and IPTW under calendar-time changes}
\label{sec_app: double_bind_censoring_assumptions_literature_iptw}

This Appendix concerns censoring assumptions under calendar-time changes and provides a foundation for establishing the results for WSC and MSM-hazard procedures. The aim of this Appendix is fourfold.

First, we introduce our notion of calendar-time changes and establish preliminary results within a discrete-time causal inference framework. We also formulate an analogous notion for continuous-time survival analysis frameworks.

Second, we define censoring assumptions within causal inference and survival analysis frameworks, show that these assumptions are ill-posed, and formulate alternative assumptions that are not automatically violated.

Third, we review the literature on calendar-time changes and censoring assumptions. In particular, we elaborate on how our notion of calendar-time changes differs from existing ones by capturing features that are relevant for decision-making. We then elaborate on the advantages of an estimand-based methodology in these settings.

Fourth, we restate and prove both existing and new results in IPTW theory\footnote{In the literature, the acronym IPCW is used when the weights are constructed from the conditional distribution of the right-censoring indicator \citep{koul_regression_1981, robins_correcting_2000}, whereas the term IPTW applies when the weights are based on the conditional distribution of the actual treatment assigned \citep{robins_marginal_2000,hernan_marginal_2000}. If the censoring indicator is conceptualized as a treatment, the term IPTW can also be used when weights are constructed from the conditional distributions of both the actual treatment and the right-censoring indicator. We adopt this terminology hereafter.} that will be instrumental in establishing the results in Section \ref{sec: consequences_and_solutions_in_wsc_and_msm_hazard} of the main text.

\subsection{Calendar-time changes as dependencies between $E$ and the target outcome of interest}
\label{sec_app: notion_double_bind}
\subsubsection{Discrete-time counterfactual framework}
\label{sec_app: notion_ctc_discrete_time}
We introduce two distinct notions of dependence between the counterfactual outcome and the calendar time of study-entry. 

Let $L_0^- \coloneqq L_0 \setminus \{E\}$ denote the baseline covariates (not including $E$), and let $B \subseteq L_0^-$ be a given subset. We say that calendar-time changes are present during the study period and post-study period within the subpopulation defined by $B$ at follow-up $k$ and for the treatment strategy $\overline{a}_k$ if, respectively
\begin{itemize}[leftmargin=2.5cm, labelsep=0.5cm]
\item[$\prstudyp_k^{\overline{a}_k} - \textnormal{ctc}(B)$]: 
\begin{align}\label{dep: prstudyp}
    Y_k^{\overline{a}_k, c_{k-1}=0} \not\independent E \mid E + k \leq \mathcal{T}, Y_{k-1}^{\overline{a}_{k-1}, c_{k-2}=0}=0, B \textnormal{ and }
\end{align}
\item[$\prgapp_k^{\overline{a}_k} - \textnormal{ctc}(B)$]: 
\begin{align}\label{dep: prgapp}
    Y_k^{\overline{a}_k, c_{k-1}=0} \not\independent E \mid E + k > \mathcal{T}, Y_{k-1}^{\overline{a}_{k-1}, c_{k-2}=0}=0, B
\end{align}
\end{itemize}
hold. While not strictly required, to simplify our exposition, we will assume throughout that calendar-time changes exhibit the following regularity property:
\begin{itemize}[leftmargin=3.5cm, labelsep=0.5cm]
    \item[ctc-homogeneity]: if there exists at least one subset $B \subseteq L_0^-$ such that $\prstudyp_k^{\overline{a}_k} - \textnormal{ctc}(B)$ holds (respectively, $\prgapp_k^{\overline{a}_k} - \textnormal{ctc}(B)$), then the dependence holds for every possible subset $B' \subseteq L_0^-$. 
\end{itemize}

Assumption ctc-homogeneity implies that the dependence between $E$ and the counterfactual hazard cannot be nullified by any combination of measured baseline characteristics. Under this assumption, we drop the $(B)$ notation and simply refer to these conditions as $\prstudyp_k^{\overline{a}_k} - \textnormal{ctc}$ and $\prgapp_k^{\overline{a}_k} - \textnormal{ctc}$.

It is precisely the (deemed) presence of $\prgapp_k^{\overline{a}_k} - \textnormal{ctc}$ that prompts the investigator to articulate censoring assumptions conditional on $E$, which in turn leads to the positivity violations discussed in the main text. Thus, we will say that \textit{calendar-time changes} are present at follow-up $k$ for strategy $\overline{a}_k$ if dependencies between the outcome of interest and the time of study-entry arise during the post-study period:
\begin{align}\label{expression: ctc}
\prgapp_k^{\overline{a}_k} - \textnormal{ctc}. 
\end{align}

We will simply say that calendar-time changes (\CTC{}) are present if $\prgapp_k^{\overline{a}_k}-\textnormal{ctc}$ holds for at least one $k \in \{0,\dots,K\}$ under the strategy under consideration (or, when considering a contrast, to at least one treatment strategy). To lighten notation, we henceforth omit subscripts and superscripts when referring to these conditions: $\prstudyp-\textnormal{ctc}$ denotes that $\prstudyp_k^{\overline{a}_k}-\textnormal{ctc}$ holds for at least one $k$, while $\neg(\prstudyp-\textnormal{ctc})$ denotes that it holds for no $k \in \{0,\dots,K\}$.

Assumption $\prgapp - \textnormal{ctc}$ can be decomposed into two mutually exclusive cases
\begin{align}\label{expression: ctc_decomposition}
\prgapp - \textnormal{ctc} \Leftrightarrow \textnormal{ either } \prstudyp\prgapp - \textnormal{ctc} \textnormal{ or } (\neg\prstudyp)\prgapp - \textnormal{ctc}, 
\end{align}
where:
\begin{itemize}[leftmargin=3cm, labelsep=0.5cm]
\item[$\prstudyp\prgapp - \textnormal{ctc}$]: $\prstudyp - \textnormal{ctc}$ and $\prgapp - \textnormal{ctc}$,\label{ass: ct_full}
\item[$(\neg\prstudyp)\prgapp - \textnormal{ctc}$]: $\neg(\prstudyp - \textnormal{ctc})$ and $\prgapp - \textnormal{ctc}$. \label{ass: ct_post}
\end{itemize}

Condition $\prstudyp\prgapp - \textnormal{ctc}$ asserts that the dependence between the outcome of interest and the time of study-entry arises \textit{both} during the study period and in the post-study period. Conversely, condition $(\neg\prstudyp)\prgapp - \textnormal{ctc}$ asserts that this dependence arises \textit{only} in the post-study period. In Appendix \ref{sec_app: literature_ctc}, we illustrate how this decomposition allows us to distinguish our notion of calendar-time changes from existing notions.

Finally, analogous considerations extend to other data structures, provided that the definitions of the primitives $\prstudyp - \textnormal{ctc}$ and $\prgapp - \textnormal{ctc}$ are adapted accordingly. For example, in the special case of a point treatment $A$, it suffices to replace $Y_k^{\overline{a}_k, c_{k-1}}$ with $Y_k^{a, c_{k-1}}$ in \eqref{dep: prstudyp} and \eqref{dep: prgapp}. We defer the formulation for the continuous-time case to Appendix \ref{sec_app: continuous_time_ct}.

\subsubsection{Calendar-time changes and functional form of counterfactual hazards}

\CTC{} carries the following implications for the functional form of counterfactual hazards.

\begin{proposition}[Hazard equality under calendar-time changes]\label{proposition: equivalence_hazard_under_ct}
Suppose \CTC{} holds.
Let $E \notin V$. Fix $k \in \{1, \dots, K\}$, $\overline{a}_k \in \{\overline{0},\overline{1}\}$. For every $m \in \{1, \dots, k\}$ and every $v$ at which the quantities below are well-defined:
\begin{enumerate}
    \item under $\ctFull$:
    {\scriptsize
            \begin{align}
        &h_{\prstudyp, m}^{\overline{a}_m}(v) = h_{m}^{\overline{a}_m}(v) \notag \\
            \Leftrightarrow& 
         \frac{\expectation_{E \mid V=v}\left(\mathbb{1}_{\{E+m \leq \mathcal{T}\}} \cdot \prod_{l=0}^m (1 - h_l^{\overline{a}_l}(v,E))\right)}{\expectation_{E \mid V=v}\left(\mathbb{1}_{\{E+m \leq \mathcal{T}\}} \cdot \prod_{l=0}^{m-1} (1 - h_l^{\overline{a}_l}(v,E))\right)} = \frac{\expectation_{E \mid V=v}\left(\mathbb{1}_{\{E+m > \mathcal{T}\}} \cdot \prod_{l=0}^m (1 - h_l^{\overline{a}_l}(v,E))\right)}{\expectation_{E \mid V=v}\left(\mathbb{1}_{\{E+m > \mathcal{T}\}} \cdot \prod_{l=0}^{m-1} (1 - h_l^{\overline{a}_l}(v,E))\right)}.\label{eq: hazard_variation_dependence_ct}
        \end{align}}
    \item under $\ctPost$: the above equivalence holds with the hazards $h_l^{\overline{a}_l}(v,E)$ in the numerator and denominator of the left-hand fraction replaced by $h_{\prstudyp,l}^{\overline{a}_l}(v)$.
\end{enumerate}

\begin{proof}
    Suppose $\ctFull$ holds. Fix $k \in \{1, \dots, K\}$, $\overline{a}_k \in \{\overline{0},\overline{1}\}$. Take $m \in \{1, \dots, k\}$ and let 
    \begin{align*}
        S_{m}^{\overline{a}_m}(v,e) \coloneqq \prod_{j=0}^m \left(1 - h_j^{\overline{a}_j}(v,e)\right);
    \end{align*}
    to lighten notation, define
    {\footnotesize
    \begin{align*}
        X_{m}(v) \coloneqq& \expectation_{E \mid V=v}\left(\mathbb{1}_{\{E+m \leq \mathcal{T}\}} \cdot S_{m}^{\overline{a}_m}(v,E)\right), X_{m}^{-}(v) \coloneqq& \expectation_{E \mid V=v} \left(\mathbb{1}_{\{E+m \leq \mathcal{T}\}} \cdot S_{m-1}^{\overline{a}_{m-1}}(v,E)\right),\\
        Z_m(v) \coloneqq & \expectation_{E \mid V=v}\left(\mathbb{1}_{\{E+m > \mathcal{T}\}} \cdot S_{m}^{\overline{a}_m}(v,E) \right), Z_m^-(v) \coloneqq& \expectation_{E \mid V=v}\left(\mathbb{1}_{\{E+m > \mathcal{T}\}} \cdot S_{m-1}^{\overline{a}_{m-1}}(v,E)\right).
    \end{align*}
    }
It is not difficult to show that $h_m^{\overline{a}_m}(v) = 1 - \frac{X_{m}(v) + Z_m(v)}{X_{m}^-(v)+Z_{m}^-(v)}$ and $h_{\prstudyp,m}^{\overline{a}_m}(v) = 1 - \frac{X_{m}(v)}{X_{m}^-(v)}$. Thus,
{\footnotesize
\begin{align*}
  h_{\prstudyp,m}^{\overline{a}_m}(v) = h_{m}^{\overline{a}_m}(v) & \Leftrightarrow& \frac{X_{m}(v) + Z_m(v)}{X_{m}^-(v) + Z_{m}^-(v)} = \frac{X_m(v)}{X_{m}^-(v)} \\
        &\Leftrightarrow& X_{m}^-(v) X_{m}(v) + X_{m}^-(v) Z_{m}(v) = X_{m}(v) X_{m}^-(v) + X_{m}(v) Z_{m}^-(v) \\
        &\Leftrightarrow& \frac{X_{m}(v)}{X_{m}^-(v)} = \frac{Z_m(v)}{Z_{m}^-(v)}.
    \end{align*}
}
 If $\ctPost$ holds, then  for any fixed $v$ and $\overline{a}_k$, $h_l^{\overline{a}_l}(v,e)$ is constant for every $l + e \leq \mathcal{T}$; thus  $h_l^{\overline{a}_l}(v,e)= h_{\prstudyp,l}^{\overline{a}_l}(v)$.
\end{proof}
\end{proposition}
\begin{corollary}\label{corollary: equivalence_hazard_under_ct_special}
Suppose \textnormal{\CTC{}} holds.
Let $E \notin V$. Fix $k \in \{1, \dots, K\}$, $\overline{a}_k \in \{\overline{0},\overline{1}\}$. For every $m \in \{1, \dots, k\}$ for which the parameters below are well-defined:
\begin{enumerate}
    \item under $\ctFull$:
    {\footnotesize
        \begin{align}\label{eq: hazard_variation_dependence_ct_special}
&h_{\prstudyp, m}^{\overline{a}_m}= h_{m}^{\overline{a}_m} \notag \\
    \Leftrightarrow& 
 \frac{\expectation\left(\mathbb{1}_{\{E+m \leq \mathcal{T}\} }\cdot \prod_{l=0}^m (1 - h_l^{\overline{a}_l}(E))\right)}{\expectation\left(\mathbb{1}_{\{E+m \leq \mathcal{T}\} } \cdot \prod_{l=0}^{m-1} (1 - h_l^{\overline{a}_l}(E))\right)} = \frac{\expectation\left(\mathbb{1}_{\{E+m > \mathcal{T}\} } \cdot \prod_{l=0}^m (1 - h_l^{\overline{a}_l}(E))\right)}{\expectation\left(\mathbb{1}_{\{E+m > \mathcal{T}\} } \cdot \prod_{l=0}^{m-1} (1 - h_l^{\overline{a}_l}(E))\right)}.
 \end{align}
 }
    \item under $\ctPost$: the above equivalence holds with the hazards $h_l^{\overline{a}_l}(E)$ in the numerator and denominator of the left-hand fraction replaced by $h_{\prstudyp,l}^{\overline{a}_l}$.
\end{enumerate}
 \begin{proof}
     Follows from Proposition \ref{proposition: equivalence_hazard_under_ct} by choosing $V =\emptyset$. 
 \end{proof}
\end{corollary}

The following lemma characterizes the primitive conditions \(\prstudyp_k^{\bar a_k}-\mathrm{ctc}(B)\) and \(\prgapp_k^{\bar a_k}-\mathrm{ctc}(B)\) in terms of non-constancy of the counterfactual hazard as a function of $E$.

\begin{lemma}[Equivalence of \CTC{} and non-constant hazards]\label{lemma: ctc_hazard_equivalence}
Let $B \subseteq L_0^-$. Then,
\begin{align*}
    &\prstudyp_k^{\overline{a}_k} - \textnormal{ctc}(B) \\
    \Leftrightarrow& P(Y_k^{\overline{a}_k, c_{k-1}=0} = 1 \mid Y_{k-1}^{\overline{a}_{k-1}, c_{k-2}=0} = 0, B, E=e)
\end{align*}
is not constant as a function of $e$ over the conditional support of $E$ among individuals satisfying $e+k \leq \mathcal{T}$. Analogously
\begin{align*}
    &\prgapp_k^{\overline{a}_k} - \textnormal{ctc}(B) \\
    \Leftrightarrow& P(Y_k^{\overline{a}_k, c_{k-1}=0} = 1 \mid Y_{k-1}^{\overline{a}_{k-1}, c_{k-2}=0} = 0, B, E=e)
\end{align*}
is not constant as a function of $e$ over the conditional support of $E$ among individuals satisfying $e+k > \mathcal{T}$.
\end{lemma}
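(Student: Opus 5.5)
The argument is the elementary fact that, for a binary variable, dependence on $E$ within a conditioning event is the same thing as its conditional success probability varying with $E$. I treat the study-period statement; the post-study one follows by replacing the event $\{E+k\leq\mathcal{T}\}$ with $\{E+k>\mathcal{T}\}$ throughout.

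Fix $B$, $k$ and $\overline{a}_k$, and write
\[
D \coloneqq \{E+k\leq\mathcal{T},\ Y_{k-1}^{\overline{a}_{k-1},c_{k-2}=0}=0\}.
\]
The outcome $Y_k^{\overline{a}_k,c_{k-1}=0}$ is binary. By definition, $\prstudyp_k^{\overline{a}_k}-\textnormal{ctc}(B)$ says that $Y_k^{\overline{a}_k,c_{k-1}=0}$ and $E$ are dependent under $P(\cdot\mid D,B)$. I prove the contrapositive: independence holds if and only if
\[
g(e,b) \coloneqq P(Y_k^{\overline{a}_k,c_{k-1}=0}=1\mid Y_{k-1}^{\overline{a}_{k-1},c_{k-2}=0}=0,B=b,E=e)
\]
is constant in $e$ over the conditional support of $E$ given $D$ and $B=b$, for ($P$-almost) all $b$.

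\textbf{Step 1: remove the calendar-time restriction from the conditioning.} The event $\{E+k\leq\mathcal{T}\}$ is a function of $E$. Once we condition on $E=e$ with $e+k\leq\mathcal{T}$, it holds surely, so
\[
P(Y_k^{\overline{a}_k,c_{k-1}=0}=1\mid D,B=b,E=e)=g(e,b)
\]
for every $e$ in the relevant support. The restriction therefore only changes which values of $e$ are considered, which is exactly the support clause in the statement.

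\textbf{Step 2: binary independence is constancy of the conditional mean.}
\begin{itemize}
\item If the outcome is independent of $E$ under $P(\cdot\mid D,B=b)$, then $P(Y=1\mid D,B=b,E=e)=P(Y=1\mid D,B=b)$ for all supported $e$. By Step 1, $g(\cdot,b)$ is constant.
\item Conversely, suppose $g(e,b)=c(b)$ on the support. Then the conditional law of the binary outcome given $(D,B=b,E=e)$ is Bernoulli$(c(b))$ and does not depend on $e$. Integrating over $E\mid D,B=b$ gives the marginal Bernoulli$(c(b))$, so the joint law factorizes and independence holds.
\end{itemize}
Negating both sides yields the claimed equivalence.

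\textbf{Main obstacle: measure-theoretic care.} Constancy must be read as almost-sure constancy with respect to the conditional law of $E$ given $D$ and $B$. Since $E$ is discrete on $\{0,\dots,\mathcal{T}\}$, this just means constancy on the atoms of positive probability. The quantifier over $b$ should match how $\not\independent$ is read in the definition: dependence for some $b$ of positive probability, equivalently non-constancy for such a $b$. I will state this convention explicitly so that the two directions match.
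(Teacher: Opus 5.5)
Your proposal is correct and follows essentially the same route as the paper. The paper's proof observes that the indicator $\mathbb{1}_{\{E+k\leq\mathcal{T}\}}$ is redundant once you condition on $E=e$ (your Step 1) and then invokes the tower property to equate conditional independence of the binary outcome with constancy of the hazard in $e$ (your Step 2); your explicit handling of almost-sure constancy and of the quantifier over $b$ spells out what the paper leaves implicit.
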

\begin{proof}
    Follows from the tower property and by noting that:
    \begin{align*}
        &P(Y_k^{\overline{a}_k, c_{k-1}=0} = 1 \mid Y_{k-1}^{\overline{a}_{k-1}, c_{k-2}=0} = 0, B, \mathbb{1}_{\{E+k \leq \mathcal{T}\}} =1, E=e) \\
        =& P(Y_k^{\overline{a}_k, c_{k-1}=0} = 1 \mid Y_{k-1}^{\overline{a}_{k-1}, c_{k-2}=0} = 0, B, E=e) \textnormal{ for $e + k \leq \mathcal{T}$}, \\
        &P(Y_k^{\overline{a}_k, c_{k-1}=0} = 1 \mid Y_{k-1}^{\overline{a}_{k-1}, c_{k-2}=0} = 0, B, \mathbb{1}_{\{E+k > \mathcal{T}\}} =1, E=e) \\
        =& P(Y_k^{\overline{a}_k, c_{k-1}=0} = 1 \mid Y_{k-1}^{\overline{a}_{k-1}, c_{k-2}=0} = 0, B, E=e) \textnormal{ for $e + k > \mathcal{T}$}.
    \end{align*}
\end{proof}

\subsubsection{Continuous-time}
\label{sec_app: continuous_time_ct}
\textbf{Notation and data structure}. Let $[0,t]$ denote the time window within which we want to evaluate survival ($t \leq \mathrm{T})$, with $\mathrm{T}$ denoting the calendar time of the administrative end of the study (which here also denotes the maximum follow-up time) and, with slight abuse of notation, 
let $E$ denote the discrete time of study-entry.\footnote{To remain notationally consistent we will write $E \in \{0, \dots, \mathcal{T}\}$ by which, in continuous time settings, we mean $ E \in \{0, \Delta, 2\Delta,\dots, \mathcal{T} \cdot \Delta \}$ with $(\mathcal{T}+1)\cdot \Delta = \mathrm{T}$; contrary to the discrete-time settings, here $e$ is equal to the actual calendar time but it is denoted as the index of the $e$-th study-entry time. That is, $e$ denotes both the index and the calendar time. Also, we will assume that $E$ takes finitely many values, albeit our results can be generalized to continuous $E$.} 

Let $T$ denote the continuous event time under the strategy of interest, whose indexing is suppressed throughout this continuous-time formulation, and let $C_{\textnormal{loss}}$ denote the continuous loss-to-follow-up time. We assume $T$ and $C_{\textnormal{loss}}$ admit densities with respect to the Lebesgue measure. Let $C = \min(C_{\textnormal{loss}}, t_E)$ denote the total right-censoring time, where $t_E \coloneqq \mathrm{T} - E$. Conditional on $E=e$, $C$ has an absolutely continuous component on $[0,t_e)$ and, whenever $P(C_{\textnormal{loss}} \geq t_e \mid E=e)>0$, a discrete point mass at $t_e$. Finally, let $\Tilde{T} = \min(T, C)$ denote the observed event time and consider throughout that observations are iid draws from a single law $P^{\textnormal{c}}$. 

Letting $\alpha_e(s)$ denote the hazard of the (uncensored) failure times for $\{E=e\}$ at time $s$, \textbf{a continuous-time formulation of \CTC{}}, as well as its mutually exclusive decomposition ($\prstudyp\prgapp - \textnormal{ctc}$ and $(\neg\prstudyp)\prgapp - \textnormal{ctc}$), are defined analogously to the discrete-time case upon adopting the following primitives:
\begin{itemize}[leftmargin=2.5cm, labelsep=0.5cm]
\item[${\prstudyp}_s - \textnormal{ctc}$]: 
\begin{align}\label{eq: prstudyp_ct_cont}
    e \mapsto \alpha_e(s) \textnormal{ is not constant for }e : e+s \leq \mathrm{T};
\end{align}
\item[${\prgapp}_s - \textnormal{ctc}$]: 
\begin{align}\label{eq: prgapp_ct_cont}
    e \mapsto \alpha_e(s) \textnormal{ is not constant for }e : e+s > \mathrm{T}.
\end{align}
\end{itemize}
Here, $E$ constitutes the only baseline covariate and therefore there is no need to implicitly assume ctc-homogeneity. 

\subsection{Censoring assumptions in staggered-entry settings: preliminaries}
\label{sec_app: censoring_assumptions_meaning_violation}
In both the classical survival and causal inference literature, censoring assumptions can be interpreted as non-parametric conditions allowing the identification of a given parameter (or estimand) of interest. Here, we argue that certain non-parametric assumptions are ill-posed in their canonical form, and formulate their restricted form.

\textbf{Assumptions ill-posed, assumptions violated.} 

We consider any proposition that relies on a conditional probability to be \textbf{ill-posed} at $x$ if $x \notin \textnormal{supp}(P(X))$.

In that case, there exists an open ball $B(x, \delta)$ with probability zero, i.e., $P(X \in B(x, \delta)) = 0$, and therefore the observed law does not determine $P(Y\mid X=x)$ (its value can be specified arbitrarily since different specifications at $x$ are compatible with the same observed law.). This differs from conditioning at points $x\in\operatorname{supp}(P(X))$ that have probability zero, where the conditioning value is nevertheless approached by observations with positive probability in every neighborhood.

See the work of \citep{gill_causal_2001} for a formal mathematical treatment.

We say that a \textbf{condition is violated} in two cases: either it evaluates to false, or it enters the positivity assumption and is ill-posed (in which case we additionally say that positivity is violated).
    
For instance, in our illustrative example of Section \ref{sec: the_problem_conditioning_on_e} of the main text, where $K=\mathcal{T}=2$, consider the condition
\begin{align}\label{cond: pos_ill_vio}
    P(C_{1} =0 \mid Y_{1} = 0, \overline{A}_{1}=\overline{a}_1, C_{0}=0, \overline{L}_1=\overline{l}_1, E = 2) > 0
\end{align}
involved in the articulation of the positivity assumption. Because an individual entering at $E=2$ cannot possibly reach follow-up $k=1$ without exceeding the maximum study time $\mathcal{T}=2$, the condition is ill-posed because $(0,\bar a_1,0,\bar l_1,2)
\notin \operatorname{supp}(Y_1,\bar A_1,C_0,\bar L_1,E)$. We therefore say that \eqref{cond: pos_ill_vio} is violated (and correspondingly that $\textnormal{ex}(e=2, k=2)$ is ill-posed).

\subsection{Censoring assumptions: discrete-time counterfactual framework}
\label{sec_app: censoring_assumptions_discrete_time}
In Section \ref{sec: censoring_assumptions_are_illposed} of the main text, we showed that the censoring assumptions are ill-posed. Here, we establish the same violation under an alternative definition of positivity. We then formulate alternative censoring assumptions that are compatible with our working assumptions, focusing in particular on the point treatment setting presented in Section \ref{sec: censoring_assumptions_are_illposed} of the main text and used in the case study of Section \ref{sec: case_study_mrna_vaccine}. 

\subsubsection{Alternative formulation of positivity}
Suppose $E \in L_0$. Following \citep{gill_causal_2001}, consider the following definition of positivity for our illustrative example of Section \ref{sec: illustrative_example_censoring_assumptions_ill_posed}:
{\footnotesize
\begin{multline}
\overline{a}_k = \overline{a}_{k}^{\dagger} \textnormal{ and } \left(e,\overline{l}_{k}, {y}_{k-1}=0, c_{k-2}=0, \overline{a}_{k-1}\right) \in \textnormal{supp}(E, \overline{L}_{k}, {Y}_{k-1}, {C}_{k-2}, \overline{A}_{k-1})  \\ 
\Longrightarrow \left(e, \overline{a}_{k}, 
c_{k-1}=0,
\overline{l}_{k}, {y}_{k-1}=0\right) \in \textnormal{supp}(E, \overline{A}_{k},
{C}_{k-1}, \overline{L}_{k}, {Y}_{k-1}), \\
\textnormal{for every $k \in \{0, \dots, K\}$,} 
\end{multline}
}
where we write $E=e$, even though $E \in L_0$, in the conditioning to emphasize the dependence of the condition on $E$.
We now prove that this condition is not met, that is, the alternative formulation of positivity is violated.

\begin{proof}
Suppose, for simplicity, that $A_k$ and $L_k$ are discrete. Fix $k^* \in \{1, \dots, K\}$ and $\overline{a}_{k^*}^{\dagger}$. Take  $(e, \overline{l}_{k^*}, \overline{a}_{k^*}^{\dagger})$ such that:
\begin{align}
    & (e, \overline{l}_{k^*}, \overline{a}_{k^*-1}, 0,0) \in \textnormal{supp}(E, \overline{L}_{k^*}, \overline{A}_{k^*-1}, {C}_{k^*-2}, {Y}_{k^*-1}), \label{c_proof: pos_vio_1}\\
    & e + k^* - 1 = \mathcal{T}. \label{c_proof: pos_vio_2}
\end{align}
Then,
{\footnotesize
\begin{align*}
\eqref{c_proof: pos_vio_1}& \Rightarrow P \left( E=e, \overline{L}_{k^*}=\overline{l}_{k^*}, \overline{A}_{k^*-1}=\overline{a}_{k^*-1}^{\dagger}, {C}_{k^*-2}=0, {Y}_{k^*-1}=0 \right) > 0; \\
    \eqref{c_proof: pos_vio_1} \textnormal{ and }\eqref{c_proof: pos_vio_2}  &\Rightarrow (e, \overline{l}_{k^*}, \overline{a}_{k^*}^{\dagger}, 0,0) \notin \textnormal{supp}(E, \overline{L}_{k^*}, \overline{A}_{k^*}, {C}_{k^*-1}, {Y}_{k^*-1}) \\
    &\Rightarrow P \left( E=e, \overline{L}_{k^*}=\overline{l}_{k^*}, \overline{A}_{k^*}=\overline{a}_{k^*}^{\dagger}, {C}_{k^*-1}=0, {Y}_{k^*-1}=0 \right) = 0,
\end{align*}}
which concludes the proof.
\end{proof}

\subsubsection{Restricted formulation} 
What prompts the researcher to formulate censoring assumptions conditional on $E$ is the belief that
\begin{align}\label{ass: e-of}
    Y_k^{\overline{a}_k, c_{k-1}=0} \not\independent \mathbb{1}_{\{A_{k}=a_{k}, C_{k-1}=0\}} \mid \overline{L}_{k} \setminus \{E\}, Y_{k-1}=0, \overline{A}_{k-1} = \overline{a}_{k-1}, C_{k-2}=0 \tag{E-conf}
\end{align}
does not hold for some $k \in \{0, \dots, K\}$.  While it can be shown that there exist causal models for which $\RAID$ is compatible with \eqref{ass: e-of}, \CTC{}, and \eqref{eq: secc}, here we illustrate how $\RAID$ can be formulated under different configurations (see Table \ref{table: settings_and_data_structures}). To this end, it suffices to adapt the definitions of ex($e,k$), pos($e,k$), and con($e,k$) to the desired configuration. 

\begin{table}[h!] 
\centering
\footnotesize
\begin{tabular}{>{\bfseries}m{8cm} c c}
\toprule
Setting and data structure & tx & tv covariates \\
\midrule
I: \label{table: conf_1} point tx; no intervention; base conf & $C_{k-1}$ & $\emptyset$ (only $L_0$) \\
II: \label{table: conf_2} point tx; no intervention; tv conf & $C_{k-1}$ & $L_k$ \\
III: \label{table: conf_3} point tx; intervention; base conf & $(C_{k-1}, A)$ & $\emptyset$ (only $L_0$) \\
IV: \label{table: conf_4} point tx; intervention; tv conf & $(C_{k-1}, A)$ & $L_k$ \\
\addlinespace
V: \label{table: conf_5} repeated tx(s); no intervention; base conf & $C_{k-1}$ & $\emptyset$ (only $L_0$) \\ 
VI: \label{table: conf_6} repeated tx(s); no intervention; tv conf & $C_{k-1}$ & $L_k$ \\ 
VII: \label{table: conf_7} repeated tx(s); intervention(s); base conf & $C_{k-1}, A_{k}$ & $\emptyset$ (only $L_0$) \\
VIII: \label{table: conf_8} repeated tx(s); intervention(s); tv conf & $C_{k-1}, A_{k}$ & $L_k$ \\
\bottomrule
\end{tabular}
\caption{Configurations. We will always assume an intervention on the right-censoring indicator. By point tx, we refer to settings with a baseline point treatment; by repeated tx, to settings with repeated treatments over time. By base conf, we refer to settings wherein only adjustment for baseline covariates is needed; by tv conf we refer to settings where adjustment for time-varying confounders is needed.}
\label{table: settings_and_data_structures}
\end{table}

For example, for configuration \hyperref[table: conf_3]{\textbf{III}} we have:
\begin{itemize}
    \item[ex$(e,k)$]:  
    {\tiny
    \begin{align*}
        \begin{cases}
       (Y_K^{a, c_{K-1}=0}, \dots, Y_k^{a, c_{k-1}=0}) \independent C_{k-1} \mid  {Y}_{k-1} = 0,{C}_{k-2} = 0, A=a, L_0, E=e \textnormal{ for every $a\in\{0,1\}$} & k > 0 \\
       (Y_K^{a, c_{K-1}=0}, \dots, Y_k^{a, c_{k-1}=0}) \independent \{ C_{-1}, A \} \mid L_0, E=e & k=0
   \end{cases}
    \end{align*}
    }
\item[con$(e,k)$]:   for every $a \in \{0,1\}$, $E = e, A = a, {C}_{k-1}={0}  \Longrightarrow Y_k^{a, c_{k-1}=0} = Y_k \quad \textnormal{a.s.}$, and
\item[pos$(e,k)$]: {\footnotesize
    \begin{align*}
        \textnormal{for every $a \in \{0,1\}$} \\
        \begin{cases}
       P\left( C_{k-1}=0 \mid  {Y}_{k-1} = 0,{C}_{k-2} = 0, A=a, L_0, E=e\right) > 0  & k > 0 \\
       P\left( C_{-1}=0, A = a \mid L_0, E=e\right) > 0 & k=0
   \end{cases},
   \end{align*}}
\end{itemize} 
where, to emphasize the role of $E$ in the following conditions, we wrote $E=e$ in the conditioning sets even if $E \subseteq L_0$. The formulations presented above can be relaxed. For example, the variable $E$, and thus $E=e$, in the conditioning sets of  ex($e,k$), pos($e,k$), and con($e,k$), can alternatively be replaced by $\mathbb{1}_{\{E \leq \mathcal{T}-k\}} = \mathbb{1}_{\{e \leq \mathcal{T}-k\}}$; in such a case, every $e \in \{0, \dots, \mathcal{T}-k\}$ carries the same information, i.e., whether the event at follow-up $k$ for individuals $E=e$ occurred during the study period ($\mathbb{1}_{\{E \leq \mathcal{T}-k\}}=1$) or during the post-study period ($\mathbb{1}_{\{E \leq \mathcal{T}-k\}}=0$).

\subsection{Censoring assumptions: continuous-time survival analysis framework}
\label{sec_app: continuous_time_secc}
\subsubsection{A continuous-time formulation of \eqref{eq: secc}}
A continuous-time formulation of \eqref{eq: secc} can be articulated as follows:
\begin{align}
    \textnormal{supp} \left(C\mid E=e\right) \subseteq [0, t_e], \quad \textnormal{ for all $e$.}
    \label{cond: secc_continuous_time}
\end{align}
This condition asserts that administrative censoring strictly occurs within the interval $[0, t_e]$. Because the observed time is defined as $\tilde{T} = \min(T, C)$, it follows that the observed event (failure or censoring) must occur within this interval, meaning $\textnormal{supp}(\tilde{T} \mid E=e) \subseteq [0, t_e]$ for all $e$.

Yet, in virtually all settings $t$ is arguably implicitly chosen so a non-zero proportion of individuals experience the event of interest after $t$, that is,
\begin{align}\label{ass: cp}
    [0, t] \subset \textnormal{supp}\left({T} \mid E=e\right), \quad \textnormal{ for all $e$.} \tag{\textbf{C.P}}
\end{align}
Together with \eqref{cond: secc_continuous_time} this condition implies that for all $e$ such that $e+t > \mathrm{T}$, i.e., $t > t_e$, during $(t_e, t] = [0, t] \setminus [0, t_e]$ we cannot observe \textit{any} event of interest for alive individuals at time $t_e$: if an individual is alive at $t_e$ it will be censored right after; if she is not, she has either been censored or has experienced the event before. Thus for all $e$ such that $e+t > \mathrm{T}$, i.e., $t > t_e$:
\begin{align}\label{cond: secc_ramifications_continuous_time}
    \textnormal{supp}\left(\Tilde{T} \mid E=e\right) \subseteq [0, t_e] \subset \textnormal{supp}\left({T} \mid E=e\right).
\end{align}
By construction, the set of $e$ satisfying \eqref{cond: secc_ramifications_continuous_time} is not empty.

Consider now two censoring assumptions, both formulated conditional on $E$:

\begin{enumerate}
    \item Following \citep{andersen_statistical_1993}[pag. 51], 
    \begin{multline}\label{ass: andersen_censoring_ass}
        P^{\textnormal{c}}(\Tilde{T} \in [s, s+ds), \mathbb{1}_{\{T = \Tilde{T}\}} = 1 | \Tilde{T} \geq s, E=e) \\
        = \mathbb{1}_{\{\Tilde{T} \geq s\}} \alpha_e(s) \textnormal{d}s, \\
        s \in [0, t], \textnormal{ for every } e.
    \end{multline}
     In words, at every calendar time of study-entry $e$, for individuals $\{E=e\}$, the right-censoring process does not alter the intensities of the event failure; such a condition permits the identification of the true intensities from censored observations;
    \item Following \citet{robins_correcting_2000}[Eq. 2]:
    \begin{multline}\label{ass: robins_censoring_ass}
    P^{\textnormal{c}}\left(T \in [s, s+ds) \mid T \geq s, C \geq s, E = e\right)  \\
    = P^{\textnormal{c}}\left(T \in [s, s+ds) \mid T \geq s, E = e\right) \\
    s \in [0, t], \textnormal{ for every } e.
    \end{multline}
    In words, at every calendar time of study-entry $e$, for individuals $\{E=e\}$ experiencing the event of interest at time $s$ among individuals who have not experienced it yet is the same as that resulting from considering only uncensored individuals.
\end{enumerate}

For any fixed $e$, condition \ref{cond: secc_ramifications_continuous_time} implies that the conditions in \eqref{ass: andersen_censoring_ass} and \eqref{ass: robins_censoring_ass}, in accordance with our convention, are ill-posed at any $e$ and $s$ such that $s \in (t_e, t]$.

\subsubsection{Restricted formulation}
To establish the results in Appendix \ref{sec_app: continuous_time_wsc}, and in keeping with the discrete-time formulation, we will consider a restricted reformulation of assumption \ref{ass: andersen_censoring_ass} which requires the condition to hold only during the study period:
\begin{multline}\label{ass: cra1}
    P^{\textnormal{c}}(\Tilde{T} \in [s, s+ds), \mathbb{1}_{\{T = \Tilde{T}\}} = 1 | \Tilde{T} \geq s, E=e) \\
    =  \alpha_e(s) \textnormal{d}s, \\
    s \in [0, t_e), \textnormal{ for every } e, \tag{C.R.A1}
\end{multline}
where we wrote $s \in [0, t_e)$ in place of $s \in [0, t]$ for every $e$.

For the assumptions to be well-defined, we will assume:
\begin{align}\label{ass: cra2}
P^c(\tilde{T} \ge s \mid E=e) > 0, \quad \textnormal{ for every } e \textnormal{ and every } s \leq t_e. \tag{C.R.A2}
\end{align}

Since $C = \min(C_{\textnormal{loss}}, t_E)$, we have $P^c(\tilde{T} = t_e \mid E=e) > 0$ whenever individuals have a non-zero probability of surviving both the event process and the loss-to-follow-up process until the study end. Assumption \eqref{ass: cra2} guarantees this positivity, ensuring that the risk set does not deterministically vanish and rendering the asymptotic evaluation of the Nelson–Aalen and Kaplan–Meier estimators meaningful over the closed interval $[0, t_e]$.

To establish our results in Appendix \ref{sec_app: continuous_time_wsc}, we will consider:
\begin{itemize}[leftmargin=3cm, labelsep=0.5cm]
    \item[($\prstudyp-\textnormal{np-cont}$)] \hyperref[ass: cp]{(C.P)} and \hyperref[ass: cra1]{(C.R.A1)} and \hyperref[ass: cra2]{(C.R.A2)}. \label{ass: r.np.cont}
\end{itemize}

\subsection{Literature review}
\label{sec_app: literature_review}
\subsubsection{Existing literature on calendar-time changes}
\label{sec_app: literature_ctc}
Previous work has acknowledged the positivity problems arising from administrative censoring, see, e.g., \citet{robins_estimation_1992, robins_recovery_1992}. The causal inference literature has broadly discussed two main ways to resolve the positivity violation we described in Section \ref{sec: the_problem_conditioning_on_e} of the main text: adopt procedures that extrapolate beyond the support of the observed data, or modify the estimand by either restricting the target population or changing the intervention of interest \citep{moore_causal_2012, petersen_diagnosing_2012, jensen_identification_2024}. 

In epidemiology and causal inference, several terms have been proposed to describe changes in risk over calendar time. In epidemiology, these changes are commonly referred to as secular trends or period effects, which capture systematic variation in disease incidence associated with calendar time. Classically, this is formalized through age–period–cohort models, where period effects isolate temporal changes that simultaneously impact multiple cohorts \citep{clayton_models_1987}.  More broadly, studies where individuals initiate treatment at different calendar times must account for underlying shifts in risk, i.e., must account for calendar-time changes (see, for example, \citep{hernan_observational_2008}). Yet, despite this broad consensus across applied disciplines, the formal mathematical articulation of calendar-time changes has received little attention in the causal inference and classical survival analysis literature.

An exception is the work of \citet{hansen_estimating_2017}, who proposed a non-parametric definition of these calendar-time changes. Motivated by time trends in disease incidence, their study brought attention to the complications arising when conditioning on $E$ is necessary for censoring assumptions to hold. In their framework, the presence of calendar-time trends, i.e., whether $E$ is associated with the outcome $T$, i.e., $T\not\independent E$, is characterized by whether the independent censoring assumption for the administrative part of censoring holds conditional on $E$. This naturally follows because the administrative part of right-censoring depends on $E$ once the calendar time at which the study ends, $\mathrm{T}$, is fixed.

\citet{hansen_estimating_2017} posit that the independence of the administrative part of censoring is equivalent to the absence of calendar-time trends, an assumption they describe as testable.\footnote{\textit{"independence of the administrative part of the censoring is equivalent to the absence of calendar-time trends, which is in fact testable."}} However, their condition $(T\not\independent E)$ is only assessable during the study period. More precisely, using the continuous-time notation introduced in Section \ref{sec_app: continuous_time_secc} (where $t_e = \mathrm{T} - e$), only whether the condition $P(T\leq t \mid  E=e)$ is constant for every $e\in \{0, \dots, \mathcal{T}\}$ and for every $t \leq \textcolor{teal}{t_e}$ can be tested from the observed data. Equivalently, only the variation of the hazard $\alpha_e(s)$ with $e$ such that $s\leq t_e$ can be assessed from the observed data. Thus, only $\prstudyp-\textnormal{ctc}$ is testable from the observed data: $\prgapp-\textnormal{ctc}$ is not. In particular, the hazard may be constant with respect to $e$ throughout the observable study period but vary with $e$ in the post-study period. Hence, failure to detect dependence between $T$ and $E$ during the study period does not establish $T\independent E$:
\begin{align*}
    \neg(\prstudyp-\textnormal{ctc}) \not\Rightarrow \neg(\prgapp-\textnormal{ctc}).
\end{align*}
Hence for their notion of calendar-time trends to be testable from the observed data, the following extrapolation is needed: $\neg (\prstudyp - \textnormal{ctc}) \Rightarrow \neg (\prgapp - \textnormal{ctc})$. 
Thus, in their framework, a testable presence of calendar-time changes is compatible only if $\ctFull$ holds. It is not possible if $\ctPost$ holds. Our notion of \CTC{} accommodates both configurations and provides a more general definition that (i) can be adapted to different frameworks, thereby situating our contributions at the intersection of classical continuous-time and counterfactual frameworks for causal inference, and (ii) captures features relevant for decision making, as illustrated in our re-analysis of a recent influential study (Case Study 2, Section \ref{sec: case_study_mrna_vaccine}).

\subsubsection{The advantages of an estimand-based approach}
This section presupposes that the reader is already familiar with Section \ref{sec: consequences_and_solutions_in_wsc_and_msm_hazard} of the main text and Appendix \ref{sec_app: wsc_discrete}. 

In the presence of calendar-time changes, \citet{hansen_estimating_2017} propose the following strategy. Assume a proportional hazards if deemed plausible; if not, report survival by strata defined by $E$ (in discrete-time, report $S_{k}^{*,\textnormal{km}}(e)$ as in Section \ref{sec: wsc_procedures} for $k \leq \mathcal{T}-e$, $P^*=P^{\textnormal{o}}$). They also argued that a weighted average of $E$-stratified Kaplan-Meier estimates \citep[Eq. 5, p. 8]{hansen_estimating_2017} is a useful summary measure of risk within the study period (e.g., in discrete time, $S_k^{*,\textnormal{km-strat}, \overline{a}_k}$ as in \eqref{eq: wscp_y1} with $P^*=P^{\textnormal{o}}$; and in continuous time, $S^-(t)$ as in \eqref{eq: truncated_cont_survival}).

Consider now the first solution proposed by \citep{hansen_estimating_2017} to counter the presence of calendar-time changes: the use of Proportional Hazards model. The Proportional Hazards model can be thought of as a special case of MSM-hazard procedures. As our characterization (Proposition \ref{proposition: proof_characterization}), which applies to a wide range of settings, makes clear, valid estimation does not require such a model. Thus, relying exclusively on proportional hazards may be overly restrictive in observational settings where such an assumption is implausible. Furthermore, our approach provides an extensive discussion of the assessment and practical implications of extrapolation assumptions (see Sections \ref{sec: wsc_procedures} and \ref{sec: extrapolation_assumptions_in_msm_hazard}), offering a guide for applied researchers.
When the proportional hazards assumption is violated, the descriptive solution across strata proposed by \citet{hansen_estimating_2017} targets a parameter equivalent to an $E$-stratified Kaplan-Meier estimator (e.g., $S_k^{*,\textnormal{km-strat}, \overline{a}_k}$). However, adopting this as a target parameter presents certain limitations in broader contexts. While the authors correctly note that strata comparisons require similar years of entry and follow-up, the $E$-stratified Kaplan-Meier can exhibit undesirable properties in settings where treatment effects manifest over longer follow-up times.  Specifically, as follow-up $k$ increases, the term involving the restricted survival time increasingly dominates the estimator, potentially inflating the inferred treatment effect. See expression \eqref{eq: bias_wscp_y1} of Appendix \ref{sec_app: supplementary_wsc_deferred_main_text}. Thus, while considering an $E$-stratified Kaplan-Meier estimator may be appropriate for the scenarios considered in their work, it is more difficult to justify its use in more general settings. 

\subsection*{Relation to this work}
The estimand-based methodology addresses these limitations by first clearly defining the target estimand and then systematically studying how the statistical parameter targeted by a given procedure (such as WSC procedures) deviates from it. This approach has some advantages.

First, it eliminates the need to rely on purely descriptive analyses when standard modeling assumptions fail. In our work, we characterize the deviation from the targeted estimand in WSC procedures, articulate the conditions required for valid identification in MSM-hazard procedures, and formulate bounds for cases where these conditions are unmet.

Second, our results extend the investigation of consequences of calendar-time changes to a other data structures, including point versus multiple treatments and baseline versus time-varying confounders. Because adjusting for $E$ is frequently necessary in observational studies to satisfy independent censoring assumptions, our results apply to prevailing methodologies, such as target trial emulation. 

Finally, when the assumptions are not met, we propose to use c-values and extrapolation curves. We illustrate how they can refine and assess decisions based on the target estimand(s) of interest.

\subsection{Inverse Probability of Treatment theory}
\label{sec_app: iptw_technical}
Here we establish results for Inverse Probability of Treatment Weighting (IPTW) that are compatible with the positivity violation described in Section \ref{sec: illustrative_example_censoring_assumptions_ill_posed} of the main text. We present these results in discrete time, as the foundational formal theory for IPTW (and IPCW) has traditionally been developed in discrete-time settings; however, the same results can be extended to continuous-time settings.

\textbf{Notation and abbreviations.} We will assume that any probability law $P(X)$ for a vector $X$ admits a density $f_X(x)$ with respect to a dominating measure $\mu$. 
When $(X,Y) = (0,0)$ we will just denote $(X,Y)=0$.  For any two disjoint vectors $X_1$, $X_2$, by marginalizing over $X_1$ at $x_2$ we mean $\int \mathbb{1}_{\{x_2^{\prime} = x_2 \}} f_{(X_1, X_2)}(x_1^{\prime}, x_2^{\prime}) d\mu(x_1^{\prime})$ and write ``marg. over $X_1$''. If a variable is indexed by a negative number in a proposition or function (with the only exception of $C_{-1}$, and unless otherwise specified), it is considered undefined and should be omitted from the expression. For example, $f_{A_0\mid \overline{A}_{-1}, \overline{Y}_{-1}, V } (a_0 \mid \overline{a}_{-1}, 0, v) \equiv f_{A_0 \mid V}(a_0 \mid v)$ and $f_{A_0 \mid \overline{L}_{0}, \overline{A}_{-1}, \overline{Y}_{-1} } (a_0,\mid {l}_{0}, \overline{a}_{-1}, 0) \equiv f_{A_0 \mid L_0}(a_0 \mid l_0)$. Finally, let
\begin{itemize}[leftmargin=1.8cm, labelsep=0.5cm]
    \item[$\RExch$] : ex$(e,k)$ for every $(e,k)$ such that $ \tau = (e+k) \in  \{0,\dots, \mathcal{T}\}$; \label{ass: r.e}
    \item[$\RCons$] : con$(e,k)$ for every $(e,k)$ such that $ \tau = (e+k) \in  \{0,\dots, \mathcal{T}\}$;\label{ass: r.c}
    \item[$\RPos$] : pos$(e,k)$ for every $(e,k)$ such that $ \tau = (e+k) \in  \{0,\dots, \mathcal{T}\}$.\label{ass: r.p}
\end{itemize}

\subsubsection{Meaning of law $P^*$ generated by $P$ by means of weights $\{W_m\}_{m=0}^k$}
\label{sec_app: meaning_of_law_generated_by_weights}
Let $P^{\textnormal{o}} \equiv P^{\textnormal{o}}(P)$ denote the observed data law.
Let $k \in \{0, \dots, K\}$. We say that the weights $\{W_m\}_{m=0}^k$, where
\begin{multline}\label{eq: general_weights}
        W_m(\overline{a}_m, \overline{l}_m) \coloneqq \prod_{j=0}^m \frac{\phi_j\left(\overline{a}_j,v\right)}{f_{A_j, C_{j-1} \mid \overline{L}_{j}, \overline{A}_{j-1},
    {C}_{j-2},
    {Y}_{j-1} } (a_j, 0\mid \overline{l}_{j}, \overline{a}_{j-1}, 0)} \\
    \textnormal{for } P^{\textnormal{o}} - \textnormal{almost all } \overline{a}_m, \overline{l}_m \textnormal{ and for every } m \in \{0, \dots, k\},
\end{multline}
where $\phi_j$ is a function that depends on $\overline{a}_j,v$, $V \subseteq L_0$, generates a law $P^*$ from $P$ through follow-up $k$ if $P^*$ differs from $P^{\textnormal{o}}$ only in its conditional distributions
{\footnotesize
\begin{multline}\label{def: conditionals_of_ps}
P^*(A_m= a, C_{m-1} = c_{m-1} \mid {\overline{L}_m=\overline{l}_m, Y_{m-1}=0, \overline{A}_{m-1} = \overline{a}_{m-1}, C_{m-2}=0})  \\
    \coloneqq\int \mathbb{1}_{\{a^{\prime} = a\}}\mathbb{1}_{\{c_{m-1}^{\prime} = c_{m-1}\}} \left(\frac{\mathbb{1}_{\{c_{m-1} = 0\}} \cdot \phi_m\left(a, \overline{a}_{m-1},v\right) + \mathbb{1}_{\{c_{m-1} = 1\}} \cdot \phi_m^{\textnormal{n}}\left(a, \overline{a}_{m-1},v\right)}{f_{A_{m}, C_{m-1} \mid \overline{L}_{m}, \overline{A}_{m-1},
    \overline{C}_{m-2},
    \overline{Y}_{m-1} } (a, c_{m-1}\mid \overline{l}_{m}, \overline{a}_{m-1}, 0)}\right)  \\
    \textnormal{d}P(A_m=a^{\prime}, C_{m-1} = c_{m-1}^{\prime} \mid {\overline{L}_m=\overline{l}_m, Y_{m-1}=0, \overline{A}_{m-1} = \overline{a}_{m-1}, C_{m-2}=0}), \\
    a,c_{m-1} \in \{0,1\}, \textnormal{for } P^{\textnormal{o}} - \textnormal{almost all } \overline{a}_m, \overline{l}_m \textnormal{ and for every } m \in \{0, \dots k\},
\end{multline}
where $\phi_m^{\textnormal{n}}\left(a, \overline{a}_{m-1},v\right)$ is defined so that 
\begin{align*}
   P^*(A_m= a, C_{m-1} = c_{m-1} \mid {\overline{L}_m=\overline{l}_m, Y_{m-1}=0, \overline{A}_{m-1} = \overline{a}_{m-1}, C_{m-2}=0}) 
\end{align*}
is a probability law for $P^{\textnormal{o}} - \textnormal{almost all } \overline{l}_m, \overline{a}_{m-1} \textnormal{ and for every } m \in \{0, \dots k\}$.
}

Weighting iid observations drawn from $P^{\textnormal{o}}$ is equivalent to drawing iid observations from $P^*$.

In our elaborations, we will consider the so-called stabilized weights \citep{hernan_marginal_2000, hernan_observational_2008, dickerman_avoidable_2019}, 
\begin{align}\label{eq: stabilized_weights}
\phi_m(\overline{a}_m, v) \coloneqq f_{A_m, C_{m-1}\mid \overline{A}_{m-1}, 
    \overline{C}_{m-2},
    \overline{Y}_{m-1}, V } (a_m, 0 \mid \overline{a}_{m-1},0, 0, v), 
\end{align}
and so 
\begin{align*}
\phi_m^{\textnormal{n}}(\overline{a}_m, v) \coloneqq f_{A_m, C_{m-1}\mid \overline{A}_{m-1}, 
    \overline{C}_{m-2},
    \overline{Y}_{m-1}, V } (a_m, 1 \mid \overline{a}_{m-1},0, 0, v).
\end{align*}

When $k$ is not specified, we consider $k=K$. When the weights are not mentioned explicitly, we will simply say that $P^*$ is generated by $P$. \footnote{By convention, we define $W_{-1}$=1 a.s..}

\subsubsection{Technical results}
\label{sec_app: iptw_preliminaries}
We establish results for the setting and data structure presented in the main text (\hyperref[table: conf_8]{\textbf{VIII}} of Table \ref{table: settings_and_data_structures}), which constitutes the most mathematically involved setting; extensions to simpler data structures are straightforward. Throughout, we will consider weights defined as in \eqref{eq: general_weights}.

\begin{lemma}[Property \eqref{eq: secc} is preserved in $P^*$]
\label{lemma: secc_is_preserved_after_iptw}
Suppose \eqref{eq: secc} holds. Consider a law $P^*$ generated from $P$ via some weights $\{W_k\}_{k=0}^K$. Then
\begin{multline}\label{eq: secc_ps}
    \mathbb{1}_{\{E + k \geq \mathcal{T} + 1\}} = 1, Y_{k-1}=0, C_{k-2}=0 \Longrightarrow C_{k-1} = 1, \textnormal{ $P^*-$\textnormal{a.s.}} \\
    \textnormal{ for every } k \in \{0, \dots, K\}.
\end{multline} 
\begin{proof}
    The result follows from observing that all conditional distributions of $P^*$ equal those of $P$, with the exception of those in \eqref{def: conditionals_of_ps}. The result follows by noting that a law is absolutely continuous with respect to itself and that the conditional laws of $P^*$ involved in \eqref{def: conditionals_of_ps} are absolutely continuous with respect to those in $P$. 
\end{proof}
\end{lemma}

Figure \ref{fig: iptw_preserves_secc} offers an intuitive illustration of why the result of Lemma \ref{lemma: secc_is_preserved_after_iptw} holds true.

The following lemma reestablishes well-known identification results.

\begin{lemma}
\label{lemma: kiss}
Suppose $\RAID$ holds. Let $P^*$ be a law generated from $P$ by some weights ${\{W_k\}}_{k=0}^K$ with $\phi_k(\overline{a}_k,\cdot)$ depending at most on $V \subseteq L_0$ for every $k \in \{0, \dots, K\}$. Suppose $E \in V$. Then, for every $k \in \{0, \dots, K\}$ the following expressions coincide:
\begin{align}
    &h_k^{\overline{a}_k^+}(v), \label{eq: p_star_id_1}\\
    &P^*\left( Y_{k} = 1 \mid Y_{k-1} = 0, \overline{A}_{k} = \overline{a}_k^{+}, C_{k-1} = 0, V=v \right), \label{eq: p_star_id_3}
\end{align}
for all $\overline{a}_k^{+}$ and $v$ such that $e(v) + k \le \mathcal{T}$ (ensuring the above probabilities are well-defined).\footnote{We explicitly introduce the notation $\stackrel{\prstudyp}{=}$ to indicate such an equality; see Appendix \ref{sec_app: notion_correct_specification}.}
\end{lemma}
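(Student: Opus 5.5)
The plan is to prove the identity by forward induction on $k$. We show that the pseudopopulation joint law of $(Y_k, \overline{A}_k, C_{k-1})$ restricted to $\{V=v\}$, with $e(v)+k\le\mathcal{T}$, reproduces the counterfactual law of $(\overline{Y}_k^{\overline{a}_k^+,c=0})$ given $V=v$, up to a factor that does not depend on the outcome. Concretely, fix $v$ with $E=e(v)$ and $e(v)+k\le\mathcal{T}$. Then every $m\le k$ also satisfies $e(v)+m\le\mathcal{T}$, so \RExch, \RCons and \RPos hold at $(e(v),m)$ for all $m\in\{0,\dots,k\}$. This is the key observation: conditioning on $V\ni E$ confines us entirely to the green region of Figure \ref{fig: faid_vs_raid}. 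As a result, the ill-posed post-study conditions are never invoked, and \RPos guarantees that the denominators in \eqref{eq: general_weights} are positive on the relevant histories.

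\textbf{Step 1 (g-formula in $P$).} Using \RExch and \RCons at $(e(v),m)$, $m=0,\dots,k$, together with \RPos for well-definedness, I would establish the standard identity
\[
P\big(Y_k^{\overline{a}_k^+,c_{k-1}=0}=0\mid V=v\big)=\int \prod_{m=0}^{k}\big[P(Y_m=0\mid \overline{L}_m,\overline{A}_m=\overline{a}_m^+,C_{m-1}=0,Y_{m-1}=0)\,dP(L_m\mid\cdot)\big].
\]
This is proved by the usual iterated-expectation argument: at stage $m$, exchangeability removes $\mathbb{1}_{\{A_m=a_m^+,C_{m-1}=0\}}$ from the conditioning, and consistency replaces counterfactuals by observed variables. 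The same argument at $k-1$ gives the survivor function, and the hazard $h_k^{\overline{a}_k^+}(v)$ is one minus the ratio of the two.

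\textbf{Step 2 (IPTW in $P^*$).} Using the definition of $P^*$ through \eqref{def: conditionals_of_ps}, with $\phi_m$ depending only on $(\overline{a}_m,v)$, I would show that in $P^*$ the pair $(A_m,C_{m-1})$ given the past is independent of $\overline{L}_m$ given $(V,\overline{A}_{m-1},Y_{m-1}=0,C_{m-2}=0)$. Hence the $P^*$ joint law of $(\overline{L}_k,\overline{Y}_k)$ on $\{\overline{A}_k=\overline{a}_k^+,C_{k-1}=0,V=v\}$ factorizes as $\prod_m\phi_m(\overline{a}_m^+,v)$ times the g-formula integrand of Step 1. Lemma \ref{lemma: secc_is_preserved_after_iptw} ensures that the $P^*$ support matches that of $P$, so no new mass appears. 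Marginalizing over $\overline{L}_k$, the $\phi$ factors cancel in the conditional hazard \eqref{eq: p_star_id_3}. Because $\phi_m$ does not involve $Y$ or $L$, they are constants given $(v,\overline{a}^+)$, and cancelling them leaves exactly the ratio of the g-formula expressions at $k$ and $k-1$, which equals \eqref{eq: p_star_id_1}.

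\textbf{Main obstacle.} The delicate step is the support bookkeeping in Step 2. We must check that every conditioning event in \eqref{eq: p_star_id_3} has positive $P^*$-mass when $e(v)+k\le\mathcal{T}$. This comes from \RPos at each $(e(v),m)$ combined with the stabilized numerators $\phi_m$ being positive, since they are themselves identified under \RPos. We also need that the forced censoring \eqref{eq: secc} only applies on histories with $e(v)+m>\mathcal{T}$, which are excluded here. I would handle this by proving positivity of $P^*(Y_{m-1}=0,\overline{A}_m=\overline{a}_m^+,C_{m-1}=0\mid V=v)$ inductively in $m$ alongside the factorization.
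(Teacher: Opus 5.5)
Your proof is correct, but it takes a different route from the paper's main proof.

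\textbf{The paper's route.} It never writes the observed-data g-formula. It expresses the $P^*$ hazard as a ratio of weighted $P$-expectations. It then proves the one-step identity uc($m\mid v$): by consistency, $\RPos$, $\RExch$ and the tower property, the factor $\mathbb{1}_{\{A_m=a_m,C_{m-1}=0\}}/f(a_m,0\mid H_m)$ is integrated out and replaced by $\phi_m(\overline{a}_m,v)$. Throughout, the counterfactual indicator $\mathbb{1}_{\{Y_{k-1}^{\overline{a}_{k-1},c_{k-2}=0}=0,\,Y_k^{\overline{a}_k,c_{k-1}=0}=1\}}$ is carried along unchanged. Iterating from $m=k$ down to $0$ gives $\prod_m\phi_m(\overline{a}_m,v)$ times a counterfactual probability, and the denominator is handled the same way.

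\textbf{Your route.} You split the argument into two steps. The causal step is the g-formula in $P$, which needs only the outcome exchangeability ex and consistency at $(e(v),m)$, $m\le k$. The statistical step shows that the $P^*$ density on $\{\overline{A}_k=\overline{a}_k^+,C_{k-1}=0,V=v\}$ equals $\prod_m\phi_m$ times the g-formula integrand. This is closer to the paper's alternative proof (Lemmas \ref{lemma: factorization_pz_pstarz} and \ref{lemma: kiss_alternative}). That proof, however, factorizes in terms of counterfactual outcomes and requires the strengthened exchangeability exs; your version works with the original ex.

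\textbf{Comparison.} The paper's argument is shorter and avoids bookkeeping with the covariate conditional laws. Yours shows that the reweighted law reproduces the g-formula as a purely statistical identity. It also pinpoints where $e(v)+k\le\mathcal{T}$ enters: the $P^*$ treatment--censoring conditional equals $\phi_m(\overline{a}_m,v)$, free of $\overline{L}_m$, only on histories where $\RPos$ makes $f(a_m,0\mid\cdot)$ positive. Finally, it extends directly to the $E\notin V$ case (Corollary \ref{corollary: kiss_when_e_not_in_v}) by mixing over $E\le\mathcal{T}-k$.

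\textbf{Two minor points.}
\begin{enumerate}
\item The independence you claim in Step 2 is only needed for the $C_{m-1}=0$ component, and in general only holds there.
\item Positivity of the $P^*$ conditioning event also requires $P(Y_{k-1}^{\overline{a}_{k-1},c_{k-2}=0}=0\mid V=v)>0$, and, for non-stabilized weights, $\phi_m>0$. Neither is supplied by $\RPos$. Both are covered by the lemma's well-definedness proviso, just as the paper's proof cancels $\prod_m\phi_m$ without comment.
\end{enumerate}
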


\begin{proof}
Let $k \in \{0, \dots, K\}$. From Lemma \ref{lemma: secc_is_preserved_after_iptw} it follows that for any $v$ such that $e(v) + k > \mathcal{T}$ \eqref{eq: p_star_id_3} is not well-defined. Thus, consider $k$ and $v$ such that $e(v) + k \le \mathcal{T}$, and let $\overline{a}_k \equiv \overline{a}_k^{+}$. 

By the definition of conditional probability in the pseudo-population $P^*$, 
\begin{align}
    &P^*\left( Y_{k} = 1 \mid Y_{k-1} = 0, \overline{A}_{k} = \overline{a}_k, C_{k-1} = 0, V=v \right) \notag \\
    =& \frac{\expectation \left( W_k \mathbb{1}_{\{C_{k-1}=0, \overline{A}_k = \overline{a}_k, Y_{k-1} = 0, Y_k = 1\}} \mid V=v\right)}{\expectation \left( W_k \mathbb{1}_{\{C_{k-1}=0, \overline{A}_k = \overline{a}_k, Y_{k-1} = 0\}} \mid V=v\right)}. \label{eq: ratio_pstar}
\end{align}

Take $m \in \{0, \dots, k\}$, let $H_m \equiv (\overline{L}_m, \overline{A}_{m-1}, \overline{C}_{m-2}, \overline{Y}_{m-1})$, let 
\begin{align*}
  g(\overline{a}_m, v) \equiv W_{m-1} \phi_m(\overline{a}_m, v) \mathbb{1}_{\{C_{m-2}=0, \overline{A}_{m-1} = \overline{a}_{m-1}, Y_{m-1}^{\overline{a}_{m-1}, c_{m-2}=0} = 0\}},  
\end{align*}
and label the following identity by uc($m\mid v$):
{\scriptsize
\begin{align*}
    & \expectation \left( W_m \mathbb{1}_{\{C_{m-1}=0, \overline{A}_m = \overline{a}_m\}} \mathbb{1}_{\{Y_{m-1}^{\overline{a}_{m-1}, c_{m-2}=0} = 0, Y_k^{\overline{a}_k, {c}_{k-1}=0} = 1\}} \mid V=v \right) & \\
    =& \expectation \left( g(\overline{a}_m, v) \expectation \left[ \frac{\mathbb{1}_{\{C_{m-1}=0, A_m = a_m\}} \mathbb{1}_{\{Y_k^{\overline{a}_k, c_{k-1}=0} = 1\}}}{f_{A_m, C_{m-1} \mid H_m)}(a_m, 0 \mid \dots)} \mathrel{\Bigg|} H_m \right] \mathrel{\Bigg|} V=v \right) & \Leftarrow{(i)} \\
    =& \expectation \left( g(\overline{a}_m, v) \expectation \left[ \frac{\mathbb{1}_{\{C_{m-1}=0, A_m = a_m\}}}{f_{A_m, C_{m-1} \mid H_m}(a_m, 0 \mid \dots)} \mathrel{\Bigg|} H_m \right] \expectation \left[ \mathbb{1}_{\{Y_k^{\overline{a}_k, c_{k-1}=0} = 1\}} \mathrel{\Bigg|} H_m \right] \mathrel{\Bigg|} V=v \right) & \Leftarrow{(ii)} \\
    =& \expectation \left( g(\overline{a}_m, v) \cdot 1 \cdot \expectation \left[ \mathbb{1}_{\{Y_k^{\overline{a}_k, c_{k-1}=0} = 1\}} \mathrel{\Bigg|} H_m \right] \mathrel{\Bigg|} V=v \right)  \\
    =& \expectation \left( W_{m-1} \phi_m(\overline{a}_m, v) \mathbb{1}_{\{C_{m-2}=0, \overline{A}_{m-1} = \overline{a}_{m-1}, Y_{m-1}^{\overline{a}_{m-1}, c_{m-2}=0} = 0, Y_k^{\overline{a}_k, c_{k-1}=0} = 1\}} \mid V=v \right) & \Leftarrow{(iii)},
\end{align*}
}
where:
\begin{align*}
    (i)\equiv& \textnormal{def. of } W_m \textnormal{ and } \RPos, \\
    (ii)\equiv& \RExch, \\
    (iii) \equiv& \textnormal{tower property}.
\end{align*}

Iterating from $m=k$ down to $m=0$, the numerator evaluates to:

{
\tiny
\begin{align*}
    &\expectation \left( W_k \mathbb{1}_{\{C_{k-1}=0, \overline{A}_k = \overline{a}_k, Y_{k-1} = 0, Y_k = 1\}} \mid V=v \right) \\
    =& \expectation \left( W_k \mathbb{1}_{\{C_{k-1}=0, \overline{A}_k = \overline{a}_k\}} \mathbb{1}_{\{Y_{k-1}^{\overline{a}_{k-1}, c_{k-2}=0} = 0, Y_k^{\overline{a}_k, c_{k-1}=0} = 1\}} \mid V=v \right) & \Leftarrow{(i)} \\
    =& \expectation \left( W_{k-1} \phi_k(\overline{a}_k, v) \mathbb{1}_{\{C_{k-2}=0, \overline{A}_{k-1} = \overline{a}_{k-1}\}} \mathbb{1}_{\{Y_{k-1}^{\overline{a}_{k-1}, c_{k-2}=0} = 0, Y_k^{\overline{a}_k, c_{k-1}=0} = 1\}} \mathrel{\Bigg|} V=v \right) & \Leftarrow{(ii)} \\
    =& \expectation \left( W_{k-2} \phi_{k-1}(\overline{a}_{k-1}, v) \phi_k(\overline{a}_k, v) \mathbb{1}_{\{C_{k-3}=0, \overline{A}_{k-2} = \overline{a}_{k-2}\}} \mathbb{1}_{\{Y_{k-1}^{\overline{a}_{k-1}, c_{k-2}=0} = 0, Y_k^{\overline{a}_k, c_{k-1}=0} = 1\}} \mathrel{\Bigg|} V=v \right) & \Leftarrow{(iii)} \\
    &\vdots \\
    =& \expectation \left( 1 \cdot \left( \prod_{m=0}^k \phi_m(\overline{a}_m, v) \right) \cdot 1 \cdot \mathbb{1}_{\{Y_{k-1}^{\overline{a}_{k-1}, c_{k-2}=0} = 0, Y_k^{\overline{a}_k, c_{k-1}=0} = 1\}} \mathrel{\Bigg|} V=v \right) &\Leftarrow{(iv)}\\
    =& \left( \prod_{m=0}^k \phi_m(\overline{a}_m, v) \right) P\left(Y_k^{\overline{a}_k, c_{k-1}=0} = 1, Y_{k-1}^{\overline{a}_{k-1}, c_{k-2}=0} = 0 \mid V=v \right) & \Leftarrow{(v)}.
\end{align*}
}
where:
\begin{align*}
(i) \equiv& \RCons, \\
(ii) \equiv& \textnormal{uc($k\mid v$)}, \\
(iii) \equiv& \textnormal{uc($k-1\mid v$)}, \\
(iv) \equiv& \textnormal{uc($0 \mid v$)}, \\
(v) \equiv& \phi_m(\overline{a}_m,v) \textnormal{ const. given } V=v.
\end{align*}
By similar arguments, we have
\begin{align*}
    &\expectation \left( W_k \mathbb{1}_{\{C_{k-1}=0, \overline{A}_k = \overline{a}_k, Y_{k-1} = 0\}} \mid V=v \right) \\
    =& \left( \prod_{m=0}^k \phi_m(\overline{a}_m, v) \right) P\left(Y_{k-1}^{\overline{a}_{k-1}, c_{k-2}=0} = 0 \mid V=v \right)
\end{align*}

Thus,
{
\footnotesize
\begin{align*}
    &P^*\left( Y_{k} = 1 \mid Y_{k-1} = 0, \overline{A}_{k} = \overline{a}_k^+, C_{k-1} = 0, V=v \right) &\Leftarrow{\eqref{eq: ratio_pstar}}\\
    =& \frac{\left( \prod_{m=0}^k \phi_m(\overline{a}_m^+, v) \right) P\left(Y_k^{\overline{a}_k^+, c_{k-1}=0} = 1, Y_{k-1}^{\overline{a}_{k-1}^+, c_{k-2}=0} = 0 \mid V=v \right)}{\left( \prod_{m=0}^k \phi_m(\overline{a}_m^+, v) \right) P\left(Y_{k-1}^{\overline{a}_{k-1}^+, c_{k-2}=0} = 0 \mid V=v \right)} \\
    =& \frac{P\left(Y_k^{\overline{a}_k^+, c_{k-1}=0} = 1, Y_{k-1}^{\overline{a}_{k-1}^+, c_{k-2}=0} = 0 \mid V=v \right)}{P\left(Y_{k-1}^{\overline{a}_{k-1}^+, c_{k-2}=0} = 0 \mid V=v \right)}  \\
    =& P\left(Y_k^{\overline{a}_k^+, c_{k-1}=0} = 1 \mathrel{\Bigg|} Y_{k-1}^{\overline{a}_{k-1}^+, c_{k-2}=0} = 0, V=v \right) \\
    =& h_k^{\overline{a}_k^+}(v) & \Leftarrow{\textnormal{def. of } h_k^{\overline{a}_k}(v)}.
\end{align*}
}
\end{proof}

We now additionally provide an alternative proof by drawing from the marvelous causal proof proposed by \citet{robins_marginal_2000}. Therein, the result has already been argued to hold, but no explicit proof was given.
Following \citet{robins_marginal_2000}, consider a strengthening of ex$(e,k)$, 
{\footnotesize
\begin{align*}
   \textnormal{exs$(e,k)$: } {\scriptsize\left(Y_K^{\mathcal{A}}, \dots, Y_k^{\mathcal{A}} \right)
{\independent}
\{A_k, C_{k-1}\}
\mid
\overline{L}_k, {Y}_{k-1} = 0, \overline{A}_{k-1}, \overline{C}_{k-2} = 0, E=e},
\end{align*}
}
whereby, for $k \in \{0, \dots, K\}$, $Y_k^{\mathcal{A}} \coloneqq \{Y_k^{\overline{a}_k = \overline{0}, c_{k-1} = 0}, Y_k^{\overline{a}_k = \overline{1}, c_{k-1} = 0}\}$.  In Lemmas \ref{lemma: factorization_pz_pstarz} and \ref{lemma: kiss_alternative} we say that $\RAID$ holds if exs$(e,k)$ in place of ex$(e,k)$ holds for every $(e,k)$ such that $\tau = (e+k) \in  \{0,\dots, \mathcal{T}\}$.

For every $k \in \{0, \dots, K\}$, let
\begin{align*}
Z \equiv Z_k \coloneqq (L_0, C_{-1}, A_0, Y_0, \dots, L_{k}, C_{k-1}, A_k, Y_k).
\end{align*}
Let $f_Z$ denote the density of $P(Z)$ and $f_Z^*$ the density of the pseudopopulation $P^*$ generated by $P(Z)$ via weights $\{W_m\}_{m=0}^k$.

\begin{lemma}[Factorization of $f_Z$ and $f_Z^{*}$]
\label{lemma: factorization_pz_pstarz}
Suppose $\RAID$ holds. Let $z$ be such that $c_{k-1}(z)=y_{k-1}(z)=0$. Let $\overline{a}_k \equiv \overline{a}_k(z)$. Then, the density $f_{Z}(z)$ can be factorized as 
\begin{equation}
  \begin{aligned}
    &f_Z(z) \\
    =& f_{L_0, {Y}_{k-1}^{\overline{a}_{k-1}}, {Y}_{k}^{\overline{a}_k}}(l_0, 0, y) \\
    \cdot& \prod_{m=1}^{k}f_{L_m \mid \overline{L}_{m-1}, \overline{A}_{m-1}, {C}_{m-2},{Y}_{k-1}^{\overline{a}_{k-1}}, Y_k^{\overline{a}_k}} (l_m \mid \overline{l}_{m-1}, \overline{a}_{m-1}, 0, y)\\
    \cdot & \prod_{m=0}^{k}f_{A_m, C_{m-1} \mid \overline{L}_{m}, \overline{A}_{m-1}, {C}_{m-2}, {Y}_{m-1} } (a_m,0\mid \overline{l}_{m}, \overline{a}_{m-1}, 0),
\end{aligned}  
\label{eq: f_z}
\end{equation}
and the density $f_Z^{*}(z)$ of $P^*$, with ${\{W_m\}}_{m=0}^k$ can be factorized as
\begin{equation}
    \begin{aligned}
        &f_Z^{*}(z) \\
        =& f_{L_0, {Y}_{k-1}^{\overline{a}_{k-1}}, {Y}_{k}^{\overline{a}_k}}(l_0, 0, y) \\
        \cdot& \prod_{m=1}^{k}f_{L_m \mid \overline{L}_{m-1}, \overline{A}_{m-1}, {C}_{m-2}, {Y}_{k-1}^{\overline{a}_{k-1}}, Y_k^{\overline{a}_k}} (l_m \mid \overline{l}_{m-1}, \overline{a}_{m-1}, 0, y)\\
        \cdot & 
        \prod_{m=0}^{k} \phi_{m}(\overline{a}_m, v).
    \end{aligned}
\label{eq: f_star_z}
\end{equation}

\begin{proof}
Take $k \in \{0, \dots, K\}$. Let $z$ be such that $c_{k-1}(z)=y_{k-1}(z)=0$ (which implies, from \eqref{eq: secc} that $e(z) + k \leq \mathcal{T}$) and denote $\overline{a}_k \equiv \overline{a}_k(z)$. Consider
{\scriptsize
\begin{align*}
    f_Z (z) =& f_{Y_k, {Y}_{k-1}, \overline{L}_k, \overline{A}_k, {C}_{k-1}}(y, 0, \overline{l}_k, \overline{a}_k, 0) & \Leftarrow{\textnormal{def.} \textnormal{ and } \RPos}\\
    =& f_{Y_k^{\overline{a}_k}, {Y}_{k-1}^{\overline{a}_{k-1}}, \overline{L}_k, \overline{A}_k, {C}_{k-1}}(y, {0}, \overline{l}_k, \overline{a}_k, {0}) & \Leftarrow{\RCons}\\
    =& f_{Y_k^{\overline{a}_k}, {Y}_{k-1}^{\overline{a}_{k-1}}, L_0}(y, {0}, l_0) \\
    \cdot & \prod_{m=1}^{k}f_{L_m \mid \overline{L}_{m-1}, \overline{A}_{m-1}, {C}_{m-2},{Y}_{k-1}^{\overline{a}_{k-1}}, Y_k^{\overline{a}_k}} (l_m \mid \overline{l}_{m-1}, \overline{a}_{m-1}, 0, y)  \\
    \cdot & \prod_{m=0}^{k}f_{A_m, C_{m-1}\mid \overline{L}_{m}, \overline{A}_{m-1}, {C}_{m-2}, {Y}_{k-1}^{\overline{a}_{k-1}}, Y_k^{\overline{a}_k} } (a_m,0 \mid \overline{l}_{m}, \overline{a}_{m-1}, 0,y) & \Leftarrow{\RCons}.
\end{align*}
}
For $m \in \{0, \dots, k\}$:
{\footnotesize
\begin{align*}
    &f_{A_m, C_{m-1}\mid \overline{L}_{m}, \overline{A}_{m-1}, {C}_{m-2}, {Y}_{k-1}^{\overline{a}_{k-1}}, Y_k^{\overline{a}_k} } (a_m,0 \mid \overline{l}_{m}, \overline{a}_{m-1}, {0},y) \\
    \equiv& f_{A_m, C_{m-1}\mid \overline{L}_{m}, \overline{A}_{m-1}, {C}_{m-2}, {Y}_{m-1}^{\overline{a}_{m-1}}, {Y}_{m}^{\overline{a}_{m}},  \dots, Y_k^{\overline{a}_k} } (a_m,0 \mid \overline{l}_{m}, \overline{a}_{m-1}, 0,y) \\
    =& f_{A_m, C_{m-1}\mid \overline{L}_{m}, \overline{A}_{m-1}, {C}_{m-2}, {Y}_{m-1}, {Y}_{m}^{\overline{a}_{m}},  \dots, Y_k^{\overline{a}_k} } (a_m,0 \mid \overline{l}_{m}, \overline{a}_{m-1}, 0,y) & \Leftarrow{\RCons} \\
    =& f_{A_m, C_{m-1}\mid \overline{L}_{m}, \overline{A}_{m-1}, {C}_{m-2}, {Y}_{m-1}} (a_m,0 \mid \overline{l}_{m}, \overline{a}_{m-1}, 0) & \Leftarrow{\RExch}.
\end{align*}
}
This establishes \eqref{eq: f_z}. 

Since $P^*$ differs from $P^{\textnormal{o}}$ only in its conditional distributions, i.e.,
{\footnotesize
\begin{align*}
        & \frac{\textnormal{d}P^*(A_m = a, C_{m-1} = 0\mid \overline{L}_m=\overline{l}_m, Y_{m-1}=0, \overline{A}_{m-1} = \overline{a}_{m-1}, C_{m-2}=0)}{\textnormal{d}P^{\textnormal{o}}(A_m = a, C_{m-1} = 0 \mid \overline{L}_m=\overline{l}_m, Y_{m-1}=0, \overline{A}_{m-1} = \overline{a}_{m-1}, C_{m-2}=0)}  & \\
    =&  \frac{\phi\left(a, \overline{a}_{m-1},v\right)}{f_{A_{m}, C_{m-1} \mid \overline{L}_{m}, \overline{A}_{m-1},
        {C}_{m-2},
        {Y}_{m-1} } (a, 0\mid \overline{l}_{m}, \overline{a}_{m-1}, 0)} & \Leftarrow{\eqref{eq: general_weights}}\\
        \Rightarrow& f_{A_{m}, C_{m-1} \mid \overline{L}_{m}, \overline{A}_{m-1},
        {C}_{m-2},
        {Y}_{m-1} }^* (a, 0\mid \overline{l}_{m}, \overline{a}_{m-1}, 0) \\
        =& \phi\left(a, \overline{a}_{m-1},v\right), \textnormal{ for every } m \in \{0, \dots, k\}.
\end{align*}
}
Thus, \eqref{eq: f_star_z} follows from the definition of $f^*(z)$.
\end{proof}
\end{lemma}

Finally,
\begin{lemma}
\label{lemma: kiss_alternative}
Suppose $\RAID$ holds. Let $P^*$ be a law generated from some weights ${\{W_k\}}_{k=0}^K$ with $\phi_k(\overline{a}_k, \cdot)$ depending at most on $V \subseteq L_0$ for every $k \in \{0, \dots, K\}$. Suppose $E \in V$. Then, for every $k \in \{0, \dots, K\}$ the following expressions coincide:
\begin{align*}
    &h_k^{\overline{a}_k^+}(v), \\
    &P^*\left( Y_{k} = 1 \mid Y_{k-1} = 0, \overline{A}_{k} = \overline{a}_k^{+}, C_{k-1} = 0, V=v \right),
\end{align*}
for all $\overline{a}_k^{+},v$ for which the above probabilities are well-defined.
\end{lemma}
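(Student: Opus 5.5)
The plan is to derive the result directly from the factorization in Lemma \ref{lemma: factorization_pz_pstarz}, following the style of \citet{robins_marginal_2000}. Fix $k$, $\overline{a}_k^{+}$ and $v$ with $E\in V$. The first step is to note that, by Lemma \ref{lemma: secc_is_preserved_after_iptw}, the conditioning event $\{Y_{k-1}=0,\overline{A}_k=\overline{a}_k^+,C_{k-1}=0,V=v\}$ has positive $P^*$-mass only if $e(v)+k\le\mathcal{T}$. Here $e(v)$ is the value of $E$ fixed by $v$, which is available because $E\in V$. So well-definedness restricts us to the study period, where the restricted assumptions \RExch, \RCons, \RPos\ (in the strengthened exs form) are all in force. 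In particular, every $z$ with $c_{k-1}(z)=y_{k-1}(z)=0$ and $v(z)=v$ lies in the region where Lemma \ref{lemma: factorization_pz_pstarz} applies.

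Next, I would write the $P^*$-hazard as a ratio of integrals of $f_Z^*$:
\begin{align*}
P^*(Y_k=1\mid\cdot)=\frac{\int f_Z^*(z)\,\mathbb{1}_{\{y_k=1,\overline{a}_k=\overline{a}_k^+,v(z)=v\}}\,d\mu}{\int f_Z^*(z)\,\mathbb{1}_{\{\overline{a}_k=\overline{a}_k^+,v(z)=v\}}\,d\mu},
\end{align*}
with both integrals taken over $\{c_{k-1}=y_{k-1}=0\}$. Then I substitute \eqref{eq: f_star_z}. The factor $\prod_{m=0}^k\phi_m(\overline{a}_m^+,v)$ is constant once $v$ and $\overline{a}_k^+$ are fixed, so it cancels between numerator and denominator.

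The remaining step is to integrate out $\overline{l}_k\setminus v$. Working from $m=k$ down to $m=1$, I integrate each $L_m$ conditional density, which equals one, leaving $f_{L_0,Y_{k-1}^{\overline{a}_{k-1}},Y_k^{\overline{a}_k}}$. Integrating $L_0\setminus V$ then gives $P(V=v,Y_{k-1}^{\overline{a}_{k-1}^+}=0,Y_k^{\overline{a}_k^+}=1)$ in the numerator and $P(V=v,Y_{k-1}^{\overline{a}_{k-1}^+}=0)$ in the denominator. Their ratio is $h_k^{\overline{a}_k^+}(v)$ by definition. Throughout, the potential outcomes should be read as those with $c=0$, and the numerator/denominator must be marginalized in the correct order, with $L_m$ integrated before $L_{m-1}$.

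The main obstacle is justifying this iterated integration. The conditionals of $L_m$ in \eqref{eq: f_star_z} condition on the counterfactual outcomes and on $C_{m-2}=0$, and they must integrate to one uniformly over the region being summed. This requires every intermediate history $(e,m)$ with $m\le k$ to also satisfy $e+m\le\mathcal{T}$, which holds automatically since $e+m\le e+k\le\mathcal{T}$. It also requires \RPos\ to guarantee that none of the conditioning events is null. I would check these two points carefully and then conclude. As a consistency check, the result agrees with Lemma \ref{lemma: kiss}.
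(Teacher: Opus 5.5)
Your proposal is correct and follows essentially the same route as the paper: start from the factorization of $f_Z^*$ in Lemma~\ref{lemma: factorization_pz_pstarz}, integrate out $L_k,\dots,L_1$ and then $L_0\setminus V$, and read off $h_k^{\overline{a}_k^+}(v)$ as the ratio of the resulting joint and marginal counterfactual quantities, with $\prod_{m}\phi_m(\overline{a}_m^+,v)$ cancelling. The only cosmetic difference is that you cancel the weight product before marginalizing, whereas the paper carries it through and writes the joint as $f_V(v)\,h_k^{\overline{a}_k^+}(v)\prod_{m<k}(1-h_m^{\overline{a}_m^+}(v))\prod_m\phi_m(\overline{a}_m^+,v)$ before taking the ratio.
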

\begin{proof}
Fix $k \in \{0, \dots, K\}$ and let $\overline{a}_k^{+} \in \{\overline{0},\overline{1}\}$. Let $z$ be such that $c_{k-1}(z)=y_{k-1}(z)=0$, $y(z)=1$, and $\overline{a}_k \equiv \overline{a}_k(z) = \overline{a}_k^+$  Then, 
{\scriptsize
\begin{align}
         & f_Z^*(z) \notag \\
         =&f_{L_0, {Y}_{k-1}^{\overline{a}_{k-1}}, {Y}_{k}^{\overline{a}_k}}(l_0, 0, y) \\
        \cdot& \prod_{m=1}^{k}f_{L_m \mid \overline{L}_{m-1}, \overline{A}_{m-1}, \overline{C}_{m-2}, {Y}_{k-1}^{\overline{a}_{k-1}}, Y_k^{\overline{a}_k}} (l_m \mid \overline{l}_{m-1}, \overline{a}_{m-1}, {0}, y)\\
        \cdot & 
        \prod_{m=0}^{k} \phi_{m}(\overline{a}_m, v) & \Leftarrow{\textnormal{Lemma }\ref{lemma: factorization_pz_pstarz}, \eqref{eq: f_star_z}}\notag\\
        \Rightarrow& f^*(z\setminus(l_1, \dots, l_k)) = f_{L_0, {Y}_{k-1}^{\overline{a}_{k-1}}, {Y}_{k}^{\overline{a}_k}}(l_0, 0, y)  \cdot\prod_{m=0}^{k} \phi_{m}(\overline{a}_m, v) &\Leftarrow{\textnormal{ marg. over $\overline{L}_k \setminus L_0$}} \notag\\
        \Rightarrow& f^*(z\setminus(l_0 \setminus v, l_1, \dots, l_k)) = f_{V, {Y}_{k-1}^{\overline{a}_{k-1}}, {Y}_{k}^{\overline{a}_k}}(v, 0, y)  \cdot\prod_{m=0}^{k} \phi_{m}(\overline{a}_m, v) &\Leftarrow{\textnormal{ marg. over ${L}_0 \setminus V$}} \notag \\
        =& f_V(v) \cdot h_k^{\overline{a}_k}(v) \prod_{m=0}^{k-1}\left(1 - h_m^{\overline{a}_m}(v)\right)  \cdot \prod_{m=0}^{k} \phi_{m}(\overline{a}_m, v)   &\Leftarrow{\textnormal{factorization}}\notag \\
        =& f_V(v) \cdot h_k^{\overline{a}_k^+}(v) \prod_{m=0}^{k-1}\left(1 - h_m^{\overline{a}_m^+}(v)\right)  \cdot \prod_{m=0}^{k} \phi_{m}(\overline{a}_m, v)  & \Leftarrow{\overline{a}_k(z) = \overline{a}_k^+} \label{eq: hazard_id_star}. 
\end{align}
}
Thus
\begin{align*}
&P^*\left( Y_{k} = 1 \mid \overline{A}_{k} = \overline{a}_k^{+}, C_{k-1} = 0, Y_{k-1} = 0, V=v \right) \\
=& \frac{f_{Y_k, \overline{A}_k, {C}_{k-1}, {Y}_{k-1}, V}^*(y, \overline{a}_k^+, {0}, 0, v)}{f_{\overline{A}_k, {C}_{k-1}, {Y}_{k-1}, V}^*(\overline{a}_k^+, {0}, 0, v)} & \Leftarrow{\textnormal{def. of $f^*(z)$}}\\
    =& h_k^{\overline{a}_k^+}(v) & \Leftarrow{\eqref{eq: hazard_id_star}}.
\end{align*}
\end{proof}

The following two results follow from the previous lemmas.

\begin{corollary}\label{corollary: kiss_in_msm_hazard_models}
Let $k \in \{0, \dots, K\}$. Consider the assumptions and result of Lemma \ref{lemma: kiss}. If \textnormal{\hi{}} holds, then
\begin{align}
    &h_k^{a_k^+}(v), \label{eq: p_star_msm_hazard_id_1}\\
    &P^*\left( Y_{k} = 1 \mid Y_{k-1} = 0, A_{k} = a_k^{+}, C_{k-1} = 0, V=v \right) \label{eq: p_star_msm_hazard_id_3}
\end{align}
coincide for all $a_k^{+}$ and $v$ for which the above probabilities are well-defined.
\begin{proof}
    Follows immediately from Lemma \ref{lemma: kiss} as, for every $k \in\{0, \dots, K\}$ and for all $\overline{a}_{k}^+$, under \hi{}, $h_k^{a_k^+}(v) = h_k^{\overline{a}_k^+}(v)$ for $P(V)$ almost all $v$.
\end{proof}
\end{corollary}
\begin{corollary}[Identification when $E \notin V$]\label{corollary: kiss_when_e_not_in_v}
Let $k \in \{0, \dots, K\}$. Suppose the assumptions of Lemma \ref{lemma: kiss} hold, except that $E \notin V$. Then
\begin{align}\label{eq: p_star_cond_v_not_e_1}
    P^*\left( Y_{k} = 1 \mid Y_{k-1} = 0, \overline{A}_{k} = \overline{a}_k^{+}, C_{k-1} = 0, V=v \right) = h_{\prstudyp, k}^{\overline{a}_k^+}(v),
\end{align}
for all $\overline{a}_k^{+}, v$ for which the probability is well-defined. Furthermore, if \hi{} holds for $(V,E)$ with $h_k^{\overline{a}_k}(v,e) = h_k^{{a}_k}(v,e)$, this reduces to:
\begin{align}\label{eq: p_star_cond_v_not_e_2}
    P^*\left( Y_{k} = 1 \mid Y_{k-1} = 0, A_{k} = a_k^{+}, C_{k-1} = 0, V=v \right) = h_{\prstudyp, k}^{a_k^+}(v),
\end{align}
for all $a_k^{+}, v$ for which the probability is well-defined. 
\end{corollary}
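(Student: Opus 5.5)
The plan is to reduce the case $E \notin V$ to Lemma \ref{lemma: kiss}, applied with the enlarged covariate set $V^{+} \coloneqq (V,E)$. The weights $\phi_k(\overline{a}_k,\cdot)$ depend at most on $V \subseteq V^{+} \subseteq L_0$, so the hypotheses of Lemma \ref{lemma: kiss} hold with $V^{+}$ in place of $V$. That lemma then gives
\begin{align*}
P^*\left(Y_k=1 \mid Y_{k-1}=0, \overline{A}_k=\overline{a}_k^+, C_{k-1}=0, V=v, E=e\right) = h_k^{\overline{a}_k^+}(v,e)
\end{align*}
for every $e$ with $e+k\le\mathcal{T}$. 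Conditioning on $V=v$ alone, I would write the left-hand side of \eqref{eq: p_star_cond_v_not_e_1} as a ratio. The numerator is $\expectation^*(\mathbb{1}_{\{Y_k=1, Y_{k-1}=0, \overline{A}_k=\overline{a}_k^+, C_{k-1}=0\}}\mid V=v)$, and the denominator is the same expectation without $\{Y_k=1\}$. I would then decompose both over the values of $E$. By Lemma \ref{lemma: secc_is_preserved_after_iptw}, the event $\{Y_{k-1}=0, C_{k-1}=0\}$ has $P^*$-probability zero when $E+k>\mathcal{T}$, so only $e\in\{0,\dots,\mathcal{T}-k\}$ contribute.

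Next I would compute each stratum-specific term explicitly, reusing the computation in the proof of Lemma \ref{lemma: kiss} with $V^{+}=(v,e)$. The denominator stratum term equals
\begin{align*}
f(e\mid v)\Bigl(\prod_{m=0}^k \phi_m(\overline{a}_m^+,v)\Bigr) P\bigl(Y_{k-1}^{\overline{a}_{k-1}^+,c_{k-2}=0}=0\mid V=v,E=e\bigr),
\end{align*}
and the numerator stratum term is the same expression with the joint event $\{Y_k^{\overline{a}_k^+,c_{k-1}=0}=1,\,Y_{k-1}^{\overline{a}_{k-1}^+,c_{k-2}=0}=0\}$ in place of the survival event. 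The crucial point is that $\phi_m$ depends only on $(\overline{a}_m,v)$, not on $e$, so the product $\prod_m\phi_m$ factors out of the sum over $e$ and cancels between numerator and denominator. What remains is
\begin{align*}
\frac{P\bigl(Y_k^{\overline{a}_k^+,c_{k-1}=0}=1,\,Y_{k-1}^{\overline{a}_{k-1}^+,c_{k-2}=0}=0,\,E\le\mathcal{T}-k\mid V=v\bigr)}{P\bigl(Y_{k-1}^{\overline{a}_{k-1}^+,c_{k-2}=0}=0,\,E\le\mathcal{T}-k\mid V=v\bigr)},
\end{align*}
which is exactly $h_{\prstudyp,k}^{\overline{a}_k^+}(v)$ by definition. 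For the second claim, \eqref{eq: p_star_cond_v_not_e_2}, I would apply \hi{} for $(V,E)$ inside each stratum, exactly as in Corollary \ref{corollary: kiss_in_msm_hazard_models}. This shows the quantity does not depend on $\overline{a}_{k-1}$, so conditioning on $A_k=a_k^+$ alone (after marginalizing over past treatment in $P^*$) gives the same weighted average. Equivalently, it gives $h_{\prstudyp,k}^{a_k^+}(v)$.

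The main obstacle is the cancellation step. I must check that the stabilized $\phi_m$ in \eqref{eq: stabilized_weights}, conditioned on $V$ only, is genuinely free of $e$. Otherwise the $E$-strata would be reweighted and the result would be a $\phi$-weighted mixture rather than $h_{\prstudyp,k}$. I would also check the well-definedness and support conditions, namely that $f(e\mid v)>0$ contributions are handled under $\RPos$ only for $e+k\le\mathcal{T}$. Finally, the \hi{} reduction needs care, because marginalizing over $\overline{A}_{k-1}$ in $P^*$ mixes treatment histories with $e$-dependent survival. I would handle this by noting that, under \hi{}, each history yields the same stratum hazards, so any mixture of those histories preserves the identity.
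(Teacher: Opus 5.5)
Your argument for \eqref{eq: p_star_cond_v_not_e_1} is correct and is essentially the paper's proof. Lemma \ref{lemma: secc_is_preserved_after_iptw} restricts everything to $\{E\le\mathcal{T}-k\}$. The iteration of Lemma \ref{lemma: kiss} is then run inside each stratum $(V,E)=(v,e)$. Finally, the sum over $e$ collapses to $h_{\prstudyp,k}^{\overline{a}_k^+}(v)$, because $\prod_m\phi_m(\overline{a}_m,v)$ does not depend on $e$ and cancels.

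\textbf{The gap is in the reduction to \eqref{eq: p_star_cond_v_not_e_2}.} You rightly flag that marginalizing over $\overline{A}_{k-1}$ mixes histories with $e$-dependent survival, but your resolution does not work. \hi{} for $(V,E)$ makes the stratum hazard $h_k^{a_k}(v,e)$ history-free. What changes with the history is something else: the $E$-composition of the risk set. Running your own computation with $A_k=a_k^+$ alone gives
\begin{align*}
P^*\bigl(Y_k=1 \mid Y_{k-1}=0, A_k=a_k^+, C_{k-1}=0, V=v\bigr)&=\frac{\sum_{e\le\mathcal{T}-k} f(e\mid v)\, w(e)\, h_k^{a_k^+}(v,e)}{\sum_{e\le\mathcal{T}-k} f(e\mid v)\, w(e)},\\
w(e)&\coloneqq\sum_{\overline{a}_{k-1}}\Bigl(\prod_{m=0}^{k}\phi_m(\overline{a}_m,v)\Bigr)S_{k-1}^{\overline{a}_{k-1}}(v,e),
\end{align*}
with $\overline{a}_k=(\overline{a}_{k-1},a_k^+)$ and $S_{k-1}^{\overline{a}_{k-1}}(v,e)=\prod_{j=0}^{k-1}(1-h_j^{a_j}(v,e))$. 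By contrast, $h_{\prstudyp,k}^{(\overline{a}_{k-1},a_k^+)}(v)$ is the same ratio with $w(e)$ replaced by $S_{k-1}^{\overline{a}_{k-1}}(v,e)$ for a single fixed history.

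These survival factors depend on $\overline{a}_{k-1}$ in an $e$-dependent way. Hence $h_{\prstudyp,k}^{\overline{a}_k}(v)$ is in general not free of $\overline{a}_{k-1}$, and the left-hand side is a $\phi$-weighted mixture of distinct history-specific values. A concrete counterexample:
\begin{itemize}
\item take $V=\emptyset$, $\mathcal{T}=2$, $k=1$, and $E$ uniform;
\item set $h_0^{1}(0)=0.5$ and all other time-$0$ hazards to zero;
\item set $h_1^{a_1}(0)=0.9$ and $h_1^{a_1}(1)=0.1$ for both $a_1$.
\end{itemize}
Then \hi{} holds for $(V,E)$, yet $h_{\prstudyp,1}^{(1,a_1)}=0.55/1.5\approx 0.37$ while $h_{\prstudyp,1}^{(0,a_1)}=0.5$. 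So whenever both values of $A_0$ have positive $P^*$-mass given $A_1=a_1$, the left-hand side of \eqref{eq: p_star_cond_v_not_e_2} lies strictly between them.

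The paper's proof of this part makes the same leap. It asserts that $h_{\prstudyp,k}^{a_k}(v)$ ``must depend only on $a_k$'' because it is built from hazards $h_k^{\overline{a}_k}(v,e)$ satisfying \hi{}. That overlooks the survival factors entering its expression in \eqref{eq: hazard_variation_dependence_ct}, so following the paper will not close the gap.

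The second display needs an additional hypothesis. Either of the following would suffice.
\begin{itemize}
\item The restricted hazard $h_{\prstudyp,k}^{\overline{a}_k}(v)$ is itself free of $\overline{a}_{k-1}$. Then your mixture-over-histories argument, combined with \eqref{eq: p_star_cond_v_not_e_1} applied to each history, goes through verbatim.
\item $e\mapsto h_k^{a_k}(v,e)$ is constant on $\{e\le\mathcal{T}-k\}$, i.e., there is no study-period calendar-time change at $k$ given $V$. Then the weights $w(e)$ are irrelevant.
\end{itemize}
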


\begin{proof}
Let us start by establishing \eqref{eq: p_star_cond_v_not_e_1}. Fix $k \in \{0, \dots, K\}$ and let $\mathcal{S}_k \coloneqq \{Y_{k-1} = 0, \overline{A}_{k} = \overline{a}_k^+, C_{k-1} = 0, V=v\}$ denote the conditioning event. Then, 
{\footnotesize
\begin{align*}
    P^*\left( Y_{k} = 1 \mid \mathcal{S}_k \right) 
    =& \frac{P^*\left( Y_k = 1, \mathcal{S}_k \right)}{P^*\left( \mathcal{S}_k \right)} & \\
    =& \frac{P^*\left( Y_k = 1, \mathcal{S}_k, E+k \le \mathcal{T} \right)}{P^*\left( \mathcal{S}_k, E+k \le \mathcal{T} \right)} &\Leftarrow{(i)} \\
    =& P^*\left( Y_{k} = 1 \mid \mathcal{S}_k, E+k \leq \mathcal{T} \right).
\end{align*}
}
where:
\begin{align*}
(i) \equiv& \textnormal{ from Lemma } \ref{lemma: secc_is_preserved_after_iptw}, \textnormal{the administrative censoring mechanism} \eqref{eq: secc}  \\
& \textnormal{is preserved in } P^*; \textnormal{ by contrapositive we have }
    \mathbb{1}_{\mathcal{S}_k} = \mathbb{1}_{\mathcal{S}_k} \cdot \mathbb{1}_{\{E+k \leq \mathcal{T}\}}. 
\end{align*}
Applying the steps analogous to those in the proof of Lemma \ref{lemma: kiss}, we have
{\footnotesize
\begin{align*}
    &P^*\left( Y_{k} = 1 \mid Y_{k-1} = 0, \overline{A}_{k} = \overline{a}_k, C_{k-1} = 0, V=v, E+k \leq \mathcal{T} \right) \\
    =& \frac{\expectation \left( W_k \mathbb{1}_{\{C_{k-1}=0, \overline{A}_k = \overline{a}_k, Y_{k-1} = 0, Y_k = 1\}} \mid V=v, E+k \leq \mathcal{T} \right)}{\expectation \left( W_k \mathbb{1}_{\{C_{k-1}=0, \overline{A}_k = \overline{a}_k, Y_{k-1} = 0\}} \mid V=v, E+k \leq \mathcal{T} \right)} \\
    =& \frac{\left( \prod_{m=0}^k \phi_m(\overline{a}_m, v) \right) P\left(Y_k^{\overline{a}_k, {c}_{k-1}=0} = 1, Y_{k-1}^{\overline{a}_{k-1}, {c}_{k-2}=0} = 0 \mid V=v, E+k \leq \mathcal{T} \right)}{\left( \prod_{m=0}^k \phi_m(\overline{a}_m, v) \right) P\left(Y_{k-1}^{\overline{a}_{k-1}, {c}_{k-2}=0} = 0 \mid V=v, E+k \leq \mathcal{T} \right)} & \Leftarrow{(i)} \\
    =& P\left(Y_k^{\overline{a}_k, {c}_{k-1}=0} = 1 \mathrel{\Bigg|} Y_{k-1}^{\overline{a}_{k-1}, {c}_{k-2}=0} = 0, V=v, E+k \leq \mathcal{T} \right) \\
    =& h_{\prstudyp, k}^{\overline{a}_k}(v) & \Leftarrow{(ii)},
\end{align*}
}
where:
{\footnotesize
\begin{align*}
    (i) \equiv& \textnormal{following the iterative steps of Lemma \ref{lemma: kiss} under } \RAID. \textnormal{ For } m \le k, \\
               & \textnormal{we have } H_m \ni L_0 \ni E. \textnormal{ Thus, we condition on } \\
               & (V=v, E=e), \textnormal{ apply the same rationale, and then} \\
               & \textnormal{invoke the tower property to marginalize over the support of } P(E \mid V, E+k \le \mathcal{T}) \\
    (ii) \equiv& \textnormal{def. } h_{\prstudyp, k}^{\overline{a}_k}(v).
\end{align*}
}
Alternatively, using Lemma \ref{lemma: kiss_alternative} the result follows from \eqref{lemma: secc_is_preserved_after_iptw} and by observing for any $k \in \{0, \dots, K\}$ the marginalization over $\overline{L}_k \setminus V$ involves a marginalization over $E$; as we are marginalizing over observed observations, i.e., $C_{k-1}=0$, \eqref{eq: secc_ps} implies that $e$ is integrated (summed) over $\{0, \dots, \mathcal{T}-k\}$, that is $E \in \{0, \dots, \mathcal{T}-k\}$.

Finally, to establish \eqref{eq: p_star_cond_v_not_e_2} it suffices to notice that $h_{\prstudyp,k}^{a_k}(v)$ can be expressed via hazards $h_k^{\overline{a}_k}(v,e)$ (see \eqref{eq: hazard_variation_dependence_ct}) for which \hi{} holds; thus it must depend only on $a_k$; hence, the result follows from \eqref{eq: p_star_cond_v_not_e_1}.
\end{proof}

\begin{figure}[h]
    \centering
    \subcaptionbox[0.6\linewidth]{For $\{E=2\}$, at $k=0$, there are alive and uncensored individuals; weights $W_{s_1}, W_{s_2}, \dots, W_{s_p}$ redefine the proportions of strata $s_1$, $s_2$, $\dots$, $s_p$. \label{fig:iptw_preserves_secc_1}}[0.95\linewidth]{
        \includegraphics[width=0.6\textwidth]{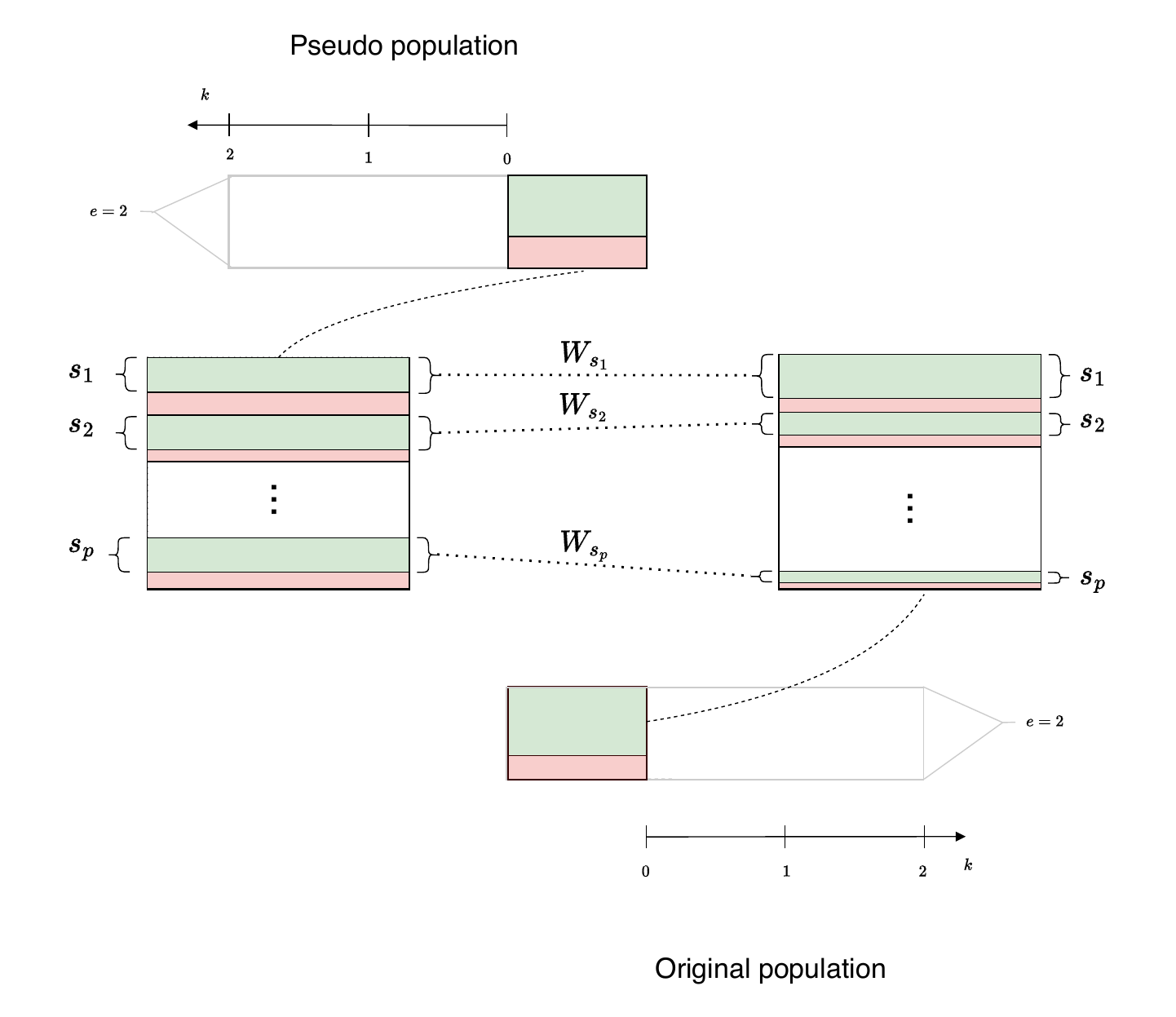}
    }

    \vspace{0.2cm}

    \subcaptionbox[0.6\linewidth]{For $\{E=2\}$, at $k=1$ there are no individuals alive and uncensored (dashed-violet horizontal line); weights $W$ are not even well-defined.
        \label{fig:y}}[0.9\linewidth]{
            \includegraphics[width=0.6\textwidth]{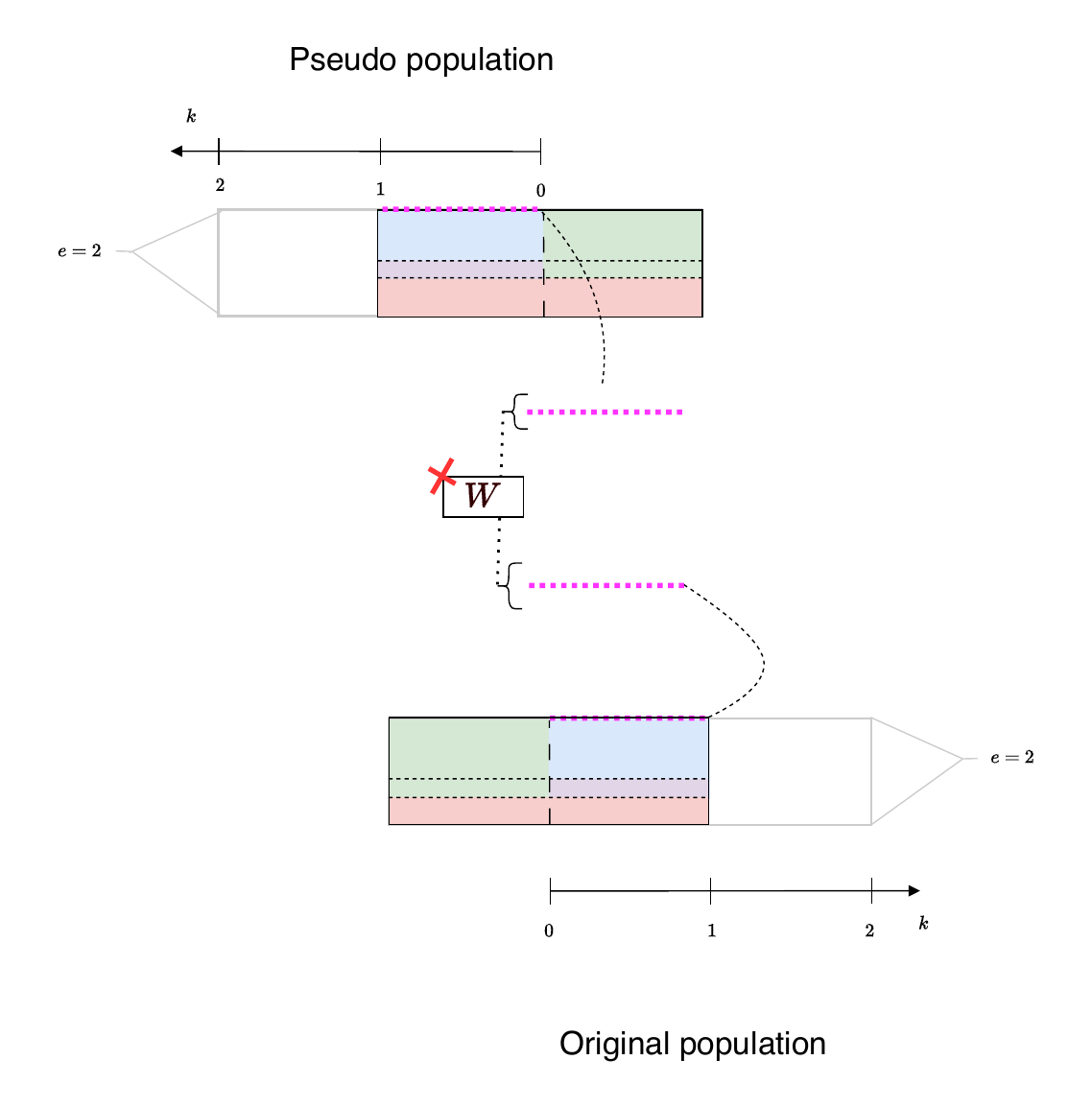}
    }
    \caption{Weights are defined and modified only for those conditional distributions that can be observed from the observed data law $P$. Here, we illustrate it for individuals $\{E=2\}$; the subscript $s$ in $W_{s}$ does not denote the follow-up time $s$, but the value taken by individuals in the stratum $\{S=s\}$.
    }
    \label{fig: iptw_preserves_secc}
\end{figure}\clearpage
\section{Weighted survival curves and MSM-hazard procedures: technical results}
\label{sec_app: consequences_of_ctc}
This appendix presents technical results for WSC and MSM-hazard procedures. It is organized into two main sections. 

Section \ref{sec_app: weighted_survival_curves} concerns WSC procedures and is organized into three subsections. In Section \ref{sec_app: supplementary_wsc_deferred_main_text}, we provide supplementary considerations for the procedure presented in the main text and introduce an additional strategy based on stratification by $E$. Sections \ref{sec_app: wsc_discrete} and \ref{sec_app: continuous_time_wsc} establish formal results for the discrete-time and continuous-time settings. 

Section \ref{sec_app: msm_hazard} presents results for the MSM-hazard procedures introduced in Section \ref{sec: msm_hazard_procedures} of the main text, laying the theoretical groundwork for Section \ref{sec: extrapolation_assumptions_in_msm_hazard} of the main text. Section \ref{sec_app: msm_hazard} is organized into four subsections. In Section \ref{sec_app: formalization_person_time_law}, we begin by describing MSM-hazard procedures and establishing a set of preliminary lemmas. In Section \ref{sec_app: pt_def_and_correspondence}, we offer a definition of the person-time law $P^{\textnormal{pt}}$. We use this to demonstrate that $\beta^*$ can be equivalently conceptualized as the solution to a population score equation defined with respect to a person-time law $P^{\textnormal{pt},*}$, and that $h_k^{*,a_k}(v)$ corresponds to a regression function under this same law. This conceptualization places the notion of correct specification for \textit{``pooled models''} in MSM-hazard procedures on a sound statistical footing, and allows us to formally reason about, and expose the otherwise-overlooked strength of, the conditions required in Proposition \ref{proposition: msm_hazard_characterization} to circumvent staggered-entry-specific complications due to \CTC{}. In Section \ref{sec_app: notion_correct_specification}, we elaborate on the properties of correct specification. In Section \ref{sec_app: staggered_entry_characterization}, we demonstrate the incompatibility of ${\mbps}_{\mid V}^{\textnormal{cs}}$ and ${\mbps}_{\mid V}^{*\textnormal{-cs}}$ when $E \notin V$. We then conclude by providing a formal proof of Proposition \ref{proposition: msm_hazard_characterization} and establishing sharp bounds for scenarios in which its conditions are not satisfied.

\subsection{Weighted survival curves procedures}
\label{sec_app: weighted_survival_curves}

\subsubsection{Considerations deferred from the main text and bias in stratification analyses}
\label{sec_app: supplementary_wsc_deferred_main_text}

Analogous arguments to those presented in Section \ref{sec: wsc_procedures} of the main text apply to settings with a point treatment $A$ \citep{kaplan_nonparametric_1958, cupples_ageadjusted_1995, cole_adjusted_2004, xie_adjusted_2005, bleicher_time_2016, kishan_radical_2018, pouwels_estimating_2020}, upon an appropriate reformulation of $\RAID$ and of the corresponding weights ${\{W_m\}}_{m=0}^k$, both in discrete and continuous settings. For example, studies that include $E$ in the propensity models for weights and that therefore implicitly operate under \CTC{}, e.g., \citet{westreich_time_2010, olarte_parra_trial_2022}, will target parameters isomorphic to \eqref{eq: wscp_y1_km_biased}.

A commonly-used companion strategy to weighting is stratification \citep{cole_adjusted_2004}. Procedures adopting stratification and implicitly operating under censoring assumptions deemed to hold conditional on $E$ target statistical estimands defined as an expectation over $E$ of $E$-stratum-specific Kaplan-Meier curves $S_{k}^{*,\textnormal{km},\overline{a}_k}(E)$ (as in \eqref{eq: wscp_y1_km}, but conditional on $E$):
\begin{align}
\label{eq: wscp_y1}
S_k^{*,\textnormal{km-strat}, \overline{a}_k} \coloneqq \expectation S_{k}^{*,\textnormal{km}, \overline{a}_k}(E).
\end{align} 
We show that under \CTC{}, if $\RAID$ holds, then, for $k> 0$, $S_k^{*, \textnormal{km-strat}, \overline{a}_k}$ targets 
$\mathcal{T}-E$-restricted survival
\begin{align}\label{eq: restricted_survival}
    P\{Y_{k_E}^{\overline{a}_{k_E}, {c}_{k_E-1}= 0}=0\}, \quad k_E \coloneqq \min(\mathcal{T}-E, k)
\end{align}
which differs from  $S_k^{\overline{a}_k}$:
\begin{align}\label{eq: bias_wscp_y1}
      \expectation\left( \mathbb{1}_{\{E+k > \mathcal{T}\}} \cdot (S_{\mathcal{T}-E}^{\overline{a}_{\mathcal{T}-E}}(E)- S_{k}^{\overline{a}_k}(E))\right)= S_k^{*,\textnormal{km-strat}, \overline{a}_k} - S_k^{\overline{a}_k} \geq 0,
\end{align}
where 
\begin{align}\label{eq: survival_conditional}
    S_{k}^{\overline{a}_k}(e) \coloneqq P(Y_k^{\overline{a}_k, {c}_{k-1}= 0}=0\mid E=e).
\end{align}
We establish this result in the following section within a discrete-time counterfactual framework (see Proposition \ref{proposition: bias_discrete_wsc}). We also establish this result within a counting-process continuous-time framework in Proposition \ref{proposition: staggered_entry_mortal_time_bias} of Section \ref{sec_app: continuous_time_wsc} in this appendix.

Expression \eqref{eq: bias_wscp_y1} can be interpreted as the expectation over $E$ of the difference between $S_{k}^{*,\textnormal{km}, \overline{a}_k}(E)$ and $S_{k}^{\overline{a}_k}(E)$. This difference will be positive when one naively extrapolates beyond the last observable follow-up $ \mathcal{T}-E$, to all subsequent follow-up $k$ ($S_{k}^{*,\textnormal{km}, \overline{a}_k}(E) = S_{\mathcal{T}-E}^{*,\textnormal{km}, \overline{a}_k}(E)$ for $k \in \{ \mathcal{T}-E + 1, \dots, \mathcal{T}\}$). That is, $S_k^{*, \textnormal{km-strat}, \overline{a}_k}$ exceeds $S_k^{\overline{a}_k}$ unless 
{\small
\begin{multline} \label{cond: immort_time_period}
\textnormal{on } \{E=e\}, \quad
Y_{\mathcal{T}-e}^{\overline{a}_{\mathcal{T}-e}, \overline{c}_{\mathcal{T}-e-1}= 0} =0 \Rightarrow Y_{k}^{\overline{a}_{k}, \overline{c}_{k-1}= 0} = 0, \quad \textnormal{a.s. } \textnormal{for every } k > \mathcal{T} - e, \\
    \textnormal{for all $e\in \{0, \dots, \mathcal{T}\}$}, \tag{\textnormal{IMT}}
\end{multline}
}
i.e., unless, for every $e$, event-free individuals $\{E=e\}$ at time $k=\mathcal{T}-e$ remain so at subsequent follow-up times. Such a requirement is arguably incompatible with any plausible scientific inquiry. In our running example, for $k=K=2$ and say, $\overline{a}_2=1$, it would require that, in a world where every individual received zidovudine and remained uncensored throughout follow-up time $k=2$, both the proportion of individuals who entered at calendar time $\tau=1$, alive at $k=1$ and able to experience failure at follow-up $k=2$, and the proportion of individuals who entered at calendar time $\tau=2$, alive at $k=0$
and who could experience failure at follow-up $k=2$ must be zero.
That is, for \eqref{eq: bias_wscp_y1} to be zero, the individual-specific time since study-entry $E$ must determine the time interval within which an individual can experience failure; for this reason, we term \eqref{eq: bias_wscp_y1} \textit{study-entry-determined immortal time bias}. In practice, bias in the form of \eqref{eq: bias_wscp_y1} is to be expected if studies that correctly only reported survival curves for different calendar-time periods through follow-up $k$ such that $k \leq \mathcal{T}-E$ \citep{van_not_improving_2024, harrysson_temporal_2025, lian_survival_2025}, instead reported an average of $E$-stratum-specific-$(\mathcal{T}-E)$-truncated Kaplan-Meier curves. It will also manifest in procedures adopting the estimators proposed by \citep{hansen_estimating_2017}[Eq. 5, p. 8], which can be easily shown to target
\eqref{eq: wscp_y1} (see Corollary \ref{corollary: hansen_one_of_us} in Appendix \ref{sec_app: continuous_time_wsc}), if the estimand of interest considered is survival (e.g., $S_k^{\overline{a}_k}$), and not $\mathcal{T}-E$- restricted survival (as in \eqref{eq: restricted_survival}). A trivial instance of bias similar to \textit{study-entry-determined immortal time bias} in a real-world analysis can be found in Appendix \ref{sec_app: case_study_2}.

\subsubsection{Technical results - discrete-time counterfactual framework}
\label{sec_app: wsc_discrete}
We establish results for the setting and data structure presented in the main text (\hyperref[table: conf_8]{\textbf{VIII}} in Table \ref{table: settings_and_data_structures} in Appendix \ref{sec_app: censoring_assumptions_discrete_time}). Our proofs make clear that generalizations to other settings (e.g., point treatment settings) can be established analogously and straightforwardly. Throughout, we consider weights as defined in \eqref{eq: general_weights} in Appendix \ref{sec_app: meaning_of_law_generated_by_weights}.

Consider an i.i.d. sample of $n$ individuals drawn from a common observed law $P$. Let 
\begin{align}
    \hat{S}_k^{*, \textnormal{km}, \overline{a}_k} \coloneqq& \prod_{m=0}^{k} \frac{{P}_n \left(W_m \cdot \mathbb{1}_{\{C_{m-1}=0, \overline{A}_m = \overline{a}_m, Y_{m} = 0\}}\right)}{{P}_n \left(W_m \cdot \mathbb{1}_{\{Y_{m-1} = 0, C_{m-1}=0, \overline{A}_m = \overline{a}_m\}}\right)}, \label{eq: km_estimator}\\
    \hat{S}_k^{*, \textnormal{km-strat}, \overline{a}_k} \coloneqq& {P}_n (\hat{S}_{k}^{*, \textnormal{km}, \overline{a}_k}(E))\label{eq: km_strat_estimator}
\end{align}
where $P_n$ denotes the empirical measure,\footnote{${P}_n(O) \equiv n^{-1} \cdot \sum_{i=1}^n O_i$ with $O_i$ iid draws from $P(O)$} $\hat{S}_{k}^{*, \textnormal{km}, \overline{a}_k}(e)$ the Kaplan-Meier estimator \eqref{eq: km_estimator} restricted to individuals $i$ such that $E_i = e$.

\begin{proposition}[WSC bias: discrete-time]
\label{proposition: bias_discrete_wsc}
Suppose $\RAID$ holds. Suppose that
\begin{enumerate}
    \item for $\hat{S}_k^{*, \textnormal{km}, \overline{a}_k}$, the weights ${\{W_k\}}_{k=0}^K$ are chosen so that $\phi_k(\overline{a}_k, \cdot)$ depends only on $\overline{a}_k$ ($V=\emptyset$), and
    \item for $\hat{S}_k^{*, \textnormal{km-strat}, \overline{a}_k}$, the weights ${\{W_k\}}_{k=0}^K$ are chosen so that $\phi_k(\overline{a}_k, \cdot)$ depends on $\overline{a}_k$ at most additionally on $e$ ($V\in E$).
\end{enumerate}
 Then for every $k \in \{0, \dots, K\}$, 
\begin{align}
    \norminfdiscrete{\hat{S}_m^{*, \textnormal{km}, \overline{a}_m} - S_m^{\overline{a}_m}} &\overset{P}{\longrightarrow} \norminfdiscrete{\prod_{j=0}^m(1 - h_{\prstudyp,j}^{\overline{a}_j} ) - \prod_{j=0}^m(1 - h_{j}^{\overline{a}_j} )}, \label{eq_app: bias_wscp_km}\\
    \norminfdiscrete{\hat{S}_m^{*, \textnormal{km-strat}, \overline{a}_m} - S_m^{\overline{a}_m}} &\overset{P}{\longrightarrow} \expectation \left(\mathbb{1}_{\{E+k > \mathcal{T}\}} \cdot (S_{\mathcal{T}-E}^{\overline{a}_{\mathcal{T}-E}}(E)- S_{k}^{\overline{a}_k}(E))\right).\label{eq_app: bias_wscp_km_strat}
\end{align}
\end{proposition}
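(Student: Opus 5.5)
The plan is to reduce both convergence statements to identification of population-level hazards in the pseudopopulation $P^*$. Then the empirical estimators follow by the law of large numbers and the continuous mapping theorem.

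Since $k$ is finite, $\norminfdiscrete{\cdot}$ is a maximum over finitely many terms, and $x\mapsto|x|$ is continuous. It therefore suffices to show, for each fixed $m\le k$, that $\hat{S}_m^{*,\textnormal{km},\overline{a}_m}\overset{P}{\to}\prod_{j=0}^m(1-h^{\overline{a}_j}_{\prstudyp,j})$. For the stratified estimator we show the analogous stratum-wise limit and then average.

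\textbf{Step 1 (weighted Kaplan--Meier).} Each factor in \eqref{eq: km_estimator} is a ratio of empirical means $P_n(W_j\mathbb{1}_{\{\cdot\}})$. By the weak law of large numbers (weights are integrable under \RPos{}) and the continuous mapping theorem, the factor converges in probability to the ratio of expectations. That ratio equals $1-P^*(Y_j=1\mid Y_{j-1}=0,\overline{A}_j=\overline{a}_j,C_{j-1}=0)$ by \eqref{eq: ratio_pstar}, provided the denominator is positive. Positivity of the denominator holds for $j\le \mathcal{T}$ because the stratum $E=0$ is observable under \RPos{}. With $V=\emptyset$, Corollary \ref{corollary: kiss_when_e_not_in_v} gives $P^*(\cdot)=h^{\overline{a}_j}_{\prstudyp,j}$. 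The product of finitely many convergent factors converges, and subtracting the deterministic $S_m^{\overline{a}_m}=\prod_{j}(1-h_j^{\overline{a}_j})$ yields \eqref{eq_app: bias_wscp_km}.

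\textbf{Step 2 (stratified estimator).} Fix $e$ in the finite support of $E$ and apply the same argument within $\{E=e\}$, with $V=E$.
\begin{itemize}
\item For $j\le \mathcal{T}-e$, Lemma \ref{lemma: kiss} gives limit factors $1-h_j^{\overline{a}_j}(e)$.
\item For $j>\mathcal{T}-e$, \eqref{eq: secc} (preserved in $P^*$ by Lemma \ref{lemma: secc_is_preserved_after_iptw}) makes the risk set empty. Adopting the usual Kaplan--Meier convention that a factor with empty risk set equals $1$, the estimator freezes.
\end{itemize}
Hence $\hat S^{*,\textnormal{km},\overline{a}_m}_m(e)\overset{P}{\to}S^{\overline{a}_{k_e}}_{k_e}(e)$ with $k_e=\min(\mathcal{T}-e,m)$. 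This uses that, under \hi{}-free identification conditional on $E=e$, $\prod_{j\le k_e}(1-h_j^{\overline{a}_j}(e))=S^{\overline{a}_{k_e}}_{k_e}(e)$, a chain-rule factorization of survival. Averaging over the empirical distribution of $E$, which converges, gives the limit $\expectation S^{\overline{a}_{k_E}}_{k_E}(E)$. Using $S_m^{\overline{a}_m}=\expectation S_m^{\overline{a}_m}(E)$, the difference equals $\expectation(\mathbb{1}_{\{E+m>\mathcal{T}\}}(S^{\overline{a}_{\mathcal{T}-E}}_{\mathcal{T}-E}(E)-S^{\overline{a}_m}_m(E)))$. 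This term is nonnegative by monotonicity of survival, and it is nondecreasing in $m$. Therefore its maximum over $m\le k$ is attained at $m=k$, which gives \eqref{eq_app: bias_wscp_km_strat}.

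\textbf{Main obstacle.} The main obstacle is the bookkeeping of well-definedness rather than any deep estimate. We must make precise the convention for factors with empty risk sets in the stratified estimator, and ensure the denominators in Step 1 are bounded away from zero so the continuous mapping theorem applies. The order of the supremum and the absolute value must also be matched: for the first display we take the maximum of the absolute limits, and for the second we rely on the monotonicity just noted. I would handle the empty-risk-set issue first, by stating the convention explicitly before invoking Lemma \ref{lemma: kiss}.
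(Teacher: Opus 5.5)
Your proposal is correct and follows essentially the same route as the paper. It reduces to the population parameters via the law of large numbers and the continuous mapping theorem, and identifies the unstratified factors as $1-h_{\prstudyp,j}^{\overline{a}_j}$ through the $E\notin V$ corollary with $V=\emptyset$. It identifies the stratum-wise factors through the $E\in V$ lemma up to $\mathcal{T}-e$, with the curve frozen beyond that point because the risk set is empty. It concludes from the monotonicity in $m$ of $\mathbb{1}_{\{E+m>\mathcal{T}\}}(S_{\mathcal{T}-E}^{\overline{a}_{\mathcal{T}-E}}(E)-S_m^{\overline{a}_m}(E))$. The differences are only cosmetic: you work with the product form rather than the paper's telescoped sums, and you avoid the paper's sample split by noting that $E$ has finite support, so $P_n(\hat{S}_k^{*,\textnormal{km},\overline{a}_k}(E))$ is a finite sum of convergent terms.
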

\begin{proof}
Let $k \in \{0, \dots, K\}$ and fix $\overline{a}_k \in \{\overline{0},\overline{1}\}$. 

For $(\hat{S}_k^{*,\overline{a}_k},S_k^{*,\overline{a}_k}) \in \{(\hat{S}_k^{*, \textnormal{km}, \overline{a}_k}, {S}_k^{*,\textnormal{km}, \overline{a}_k}), (\hat{S}_k^{*, \textnormal{km-strat}, \overline{a}_k}, {S}_k^{*,\textnormal{km-strat},\overline{a}_k})\}$, from the LLN and continuous mapping theorem \footnote{For $\hat{S}_k^{*,\textnormal{km-strat},\overline{a}_k}$, the sample of size $n$ (with $n \to \infty$) is split into two halves: one used to estimate $\hat{S}_{k}^{*, \textnormal{km}, \overline{a}_k}(e)$ and the other to evaluate ${P}_n (\hat{S}_{k}^{*, \textnormal{km}, \overline{a}_k}(E))$.} it follows that 
\begin{align*}
    \hat{S}_k^{*,\overline{a}_k}&\overset{P}{\longrightarrow} S_k^{*,\overline{a}_k}, \textnormal{ and } \\
    \norminfdiscrete{\hat{S}_m^{*,\overline{a}_m} - S_m^{\overline{a}_m}} &\overset{P}{\longrightarrow} \norminfdiscrete{S_m^{*,\overline{a}_m} - S_m^{\overline{a}_m}}.
\end{align*}
Thus, it suffices to study ${S}_k^{*,\textnormal{km}, \overline{a}_k} - S_k^{\overline{a}_k}$ and ${S}_k^{*,\textnormal{km-strat},\overline{a}_k} - S_k^{\overline{a}_k}$.

To establish \eqref{eq_app: bias_wscp_km},
\begin{align*}
    {S}_k^{*,\textnormal{km}, \overline{a}_k} =& \prod_{m=0}^{k} \frac{\expectation \left(W_m \cdot \mathbb{1}_{\{C_{m-1}=0, \overline{A}_m = \overline{a}_m, Y_{m} = 0\}}\right)}{\expectation \left(W_m \cdot \mathbb{1}_{\{Y_{m-1} = 0, C_{m-1}=0, \overline{A}_m = \overline{a}_m\}}\right)} \\
    =& \prod_{m=0}^k (1 - h_m^{*, \overline{a}_m})&\Leftarrow{(i)}\\
    =& \prod_{m=0}^k(1 - h_{\prstudyp,m}^{\overline{a}_m} ) &\Leftarrow{(ii)}.
\end{align*}
where:
\begin{align*}
     (i) \equiv& \textnormal{def. }h_m^{*, \overline{a}_m} \coloneqq P^*\left( Y_{m} = 1 \mid Y_{m-1} = 0, \overline{A}_{m} = \overline{a}_m, C_{m-1} = 0, V=v \right), \\
     (ii) \equiv& \textnormal{Corollary } \eqref{corollary: kiss_when_e_not_in_v}.
\end{align*}
Next, consider:
{\footnotesize
\begin{align}
    S_{k}^{*,\textnormal{km}, \overline{a}_k}(e) =&\prod_{m=0}^{\min(k, \mathcal{T}-e)} \frac{\expectation \left(W_m \cdot \mathbb{1}_{\{C_{m-1}=0, \overline{A}_m = \overline{a}_m, Y_{m} = 0\}}\mid E=e\right)}{\expectation \left(W_m \cdot \mathbb{1}_{\{Y_{m-1} = 0, C_{m-1}=0, \overline{A}_m = \overline{a}_m\}}\mid E=e\right)} & \Leftarrow{\textnormal{def. and }\textnormal{Lemma }\ref{lemma: secc_is_preserved_after_iptw}}\notag\\
    =&\prod_{m=0}^{\min(k, \mathcal{T}-e)} (1 - h_m^{*, \overline{a}_m}(e))  \notag\\
    =& 1 - \sum_{m=0}^{\textnormal{min}(k, \mathcal{T}-e)} h_m^{*, \overline{a}_m}(e) \prod_{j=0}^{m-1}(1 - h_j^{*, \overline{a}_j}(e)) &\Leftarrow{(iii)}\label{id_proof_1}
\end{align}
}
and similarly, 
\begin{align}
    S_{k}^{\overline{a}_k}(e) 
    =&\prod_{m=0}^{k}(1 - h_m^{\overline{a}_m}(e)) &\Leftarrow{\textnormal{algebra}}\notag\\
    =&1 - \sum_{m=0}^{k} h_m^{\overline{a}_m}(e) \prod_{j=0}^{m-1}(1 - h_j^{\overline{a}_j}(e)) &\Leftarrow{(iii)} \label{id_proof_2}
\end{align}
where $(iii) \equiv [\textnormal{telescoping }\prod_{m=0}^p (1 - x_m) = 1 - \sum_{m=0}^p x_m \prod_{j=0}^{m-1} (1 - x_j) \textnormal{ for any } x_j \in [0,1] ], p > 0$ and where $\prod_{m=0}^{-1}(\cdot)=1$.
Then,
{\footnotesize
\begin{align*}
    &S_{k}^{*,\textnormal{km-strat}, \overline{a}_k} - S_{k}^{\overline{a}_k} \\
    =& \expectation(S_{k}^{*,\textnormal{km}, \overline{a}_k}(E) - S_{k}^{\overline{a}_k}(E))\\
    =& \expectation \left[\sum_{m=0}^{k} h_m^{\overline{a}_m}(E) \prod_{j=0}^{m-1}(1 - h_j^{\overline{a}_j}(E)) -  \sum_{m=0}^{\textnormal{min}(k, \mathcal{T}-E)} h_m^{*,\overline{a}_m}(E) \prod_{j=0}^{m-1}(1 - h_j^{*,\overline{a}_j}(E))\right] & \Leftarrow{(iv)} \\
    =& \expectation \left[\sum_{m=0}^{k} h_m^{\overline{a}_m}(E) \prod_{j=0}^{m-1}(1 - h_j^{\overline{a}_j}(E)) -  \sum_{m=0}^{\textnormal{min}(k, \mathcal{T}-E)} h_m^{\overline{a}_m}(E) \prod_{j=0}^{m-1}(1 - h_j^{\overline{a}_j}(E)) \right]& \Leftarrow{(v)} \\
    =&\expectation \left(\mathbb{1}_{\{E+k > \mathcal{T}\}}\left(S_{\mathcal{T}-E}^{\overline{a}_{\mathcal{T}-E}}(E) - S_{k}^{\overline{a}_{k}}(E)\right)\right) & \Leftarrow{(vi)},
\end{align*}
}
where:
\begin{align*}
    (iv) \equiv& \eqref{id_proof_1}, \eqref{id_proof_2}, \\
    (v) \equiv& \textnormal{Lemma }\ref{lemma: kiss}, \\
    (vi) \equiv& \eqref{id_proof_2}, \textnormal{ and } \pm 1.
\end{align*}
Finally, \eqref{eq_app: bias_wscp_km_strat} follows from noting that $k \mapsto \mathbb{1}_{\{E+k > \mathcal{T}\}}\left(S_{\mathcal{T}-E}^{\overline{a}_{\mathcal{T}-E}}(E) - S_{k}^{\overline{a}_{k}}(E)\right)$ is increasing in $k$ a.s.
\end{proof}

\subsubsection{Technical results - continuous-time counting process framework}
\label{sec_app: continuous_time_wsc}
We preliminarily point the reader toward the notation presented in Appendix \ref{sec_app: continuous_time_ct} and the assumptions formulated in Appendix \ref{sec_app: continuous_time_secc}. Contrary to the discrete-time notation and in keeping with standard conventions, round brackets following a survival symbol enclose the time to event rather than the subpopulation.

Here, we establish our results under a law $P^c$ for which $\RNPcont$ holds. Accordingly, our results apply to any law $P^c \equiv P^{*,c}(P^{c,i})$ derived from an initial law $P^{c,i}$ obtained by weighting, since, analogously to the discrete-time case, \eqref{cond: secc_continuous_time} is preserved under weighting. In the interest of space, we present the continuous-time analogue of \eqref{eq_app: bias_wscp_km_strat}, noting that the analogue of \eqref{eq_app: bias_wscp_km} can also be established. 

In the following, we work toward establishing the following proposition.

\begin{proposition}[Study-entry-determined immortal time bias: continuous-time]
Let  $t \in [0, \mathrm{T}]$. Suppose $\RNPcont$ holds. Then,  under mild regularity conditions
\begin{align*}
    \norminf{\hat{S}(s)- S(s)} \overset{P^c}{\longrightarrow} \expectation\left(\mathbb{1}_{\{t > t_{E}\}} \cdot \left(S_E(t_E) - S_E(t)\right)\right).
\end{align*}
\label{proposition: staggered_entry_mortal_time_bias}
\end{proposition}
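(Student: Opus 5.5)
Throughout I read $\hat{S}(s)$ as the continuous-time analogue of \eqref{eq: km_strat_estimator}: the average over $E$ of $E$-stratum-specific Kaplan--Meier curves, $\hat{S}(s) = P_n(\hat{S}_E(s)) = \sum_e \hat{P}(E=e)\,\hat{S}_e(s)$. I read $S(s) = \expectation(S_E(s))$ as the target survival, where $S_e$ is the survival function corresponding to the hazard $\alpha_e$. The plan mirrors the proof of Proposition \ref{proposition: bias_discrete_wsc}. First identify the probability limit of each stratum-specific curve. Then average over the finitely many values of $E$. Finally reduce the supremum over $s \in [0,t]$ to the endpoint $s=t$ by a monotonicity argument.

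\textbf{Step 1 (study period).} Fix $e$. On $[0,t_e]$, assumption \hyperref[ass: cra1]{(C.R.A1)} states that the observed intensity of uncensored failures in stratum $e$ equals $\alpha_e(s)$ on the risk set. Assumption \hyperref[ass: cra2]{(C.R.A2)} states that the limiting at-risk proportion $P^c(\tilde{T} \ge s \mid E=e)$ is bounded away from zero on $[0,t_e]$. I would invoke the standard uniform consistency of the Nelson--Aalen and Kaplan--Meier estimators on compact intervals where the risk set stays positive (e.g.\ \citet{andersen_statistical_1993}, Thm.~IV.3.2, via Lenglart's inequality and the product-integral map). This gives $\sup_{s\le t_e}|\hat{S}_e(s) - S_e(s)| \to 0$ in probability. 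These are the ``mild regularity conditions'': finite cumulative hazard on $[0,t_e]$, and $P(E=e)>0$ so that each stratum has size growing with $n$.

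\textbf{Step 2 (post-study period).} By \eqref{cond: secc_continuous_time}, $\textnormal{supp}(\tilde{T}\mid E=e)\subseteq[0,t_e]$, so no event in stratum $e$ is observed after $t_e$. The counting process is therefore constant on $(t_e,t]$, and $\hat{S}_e(s)=\hat{S}_e(t_e)$ there. Together with Step 1 this yields $\sup_{s\le t}|\hat{S}_e(s)-S_e(\min(s,t_e))| \to 0$ in probability.

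\textbf{Step 3 (averaging and the supremum).} Because $E$ takes finitely many values, $\hat{P}(E=e)\to P(E=e)$, and Slutsky together with a finite sum gives
\[
\sup_{s\le t}\Big|\hat{S}(s)-\expectation\big(S_E(\min(s,t_E))\big)\Big|\to 0 \quad \textnormal{in probability}.
\]
It follows that $\sup_{s\le t}|\hat{S}(s)-S(s)|$ converges to $\sup_{s\le t}|g(s)|$, where
\[
g(s) := \expectation\big(\mathbb{1}_{\{s>t_E\}}\,(S_E(t_E)-S_E(s))\big).
\]
Each summand of $g$ is nonnegative and nondecreasing in $s$: the indicator switches on at $t_E$, where the difference is zero, and $S_E$ is nonincreasing. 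Hence $g\ge 0$ is nondecreasing, and its supremum over $[0,t]$ equals $g(t)$, which is the claimed limit. The continuous-mapping step for the sup norm is just the reverse triangle inequality.

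\textbf{Main obstacle.} I expect the main difficulty to be uniform consistency up to and including the endpoint $t_e$. The censoring time $C$ has a point mass at $t_e$ and the risk set collapses immediately afterwards. I would handle this by applying the consistency result on $[0,t_e]$, which is legitimate because \hyperref[ass: cra2]{(C.R.A2)} keeps $P^c(\tilde{T}\ge t_e\mid E=e)>0$. I would also note that events at exactly $t_e$ have probability zero, since $T$ has a density. Continuity of $S_e$ then ensures that the closed endpoint creates no jump discrepancy.
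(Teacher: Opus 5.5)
Your proposal is correct and follows essentially the same route as the paper. Both proofs use stratum-wise uniform consistency of the Kaplan--Meier curve on $[0,t_e]$, constancy of $\hat{S}_e$ beyond $t_e$, and replacement of the empirical entry frequencies by $P^c(E=e)$. The paper obtains the consistency via the Andersen et al.\ ratio results relative to the survival truncated at the largest observed time $\tilde{T}_{e,(n)}$, which is the standard proof of the uniform consistency you cite. Both then conclude with the reverse triangle inequality and the fact that $s\mapsto \mathbb{E}\bigl(\mathbb{1}_{\{s>t_E\}}(S_E(t_E)-S_E(s))\bigr)$ is nonnegative and nondecreasing, so its supremum over $[0,t]$ is attained at $t$.
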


To prove Proposition \ref{proposition: staggered_entry_mortal_time_bias}, we preliminarily introduce the following objects. For individuals with $\{E=e\}$, let $S_e(t)$, $\hat{S}_e(t), S_e^{-}(t), S_e^{(n)-}(t)$ denote, respectively, survival, its Kaplan-Meier estimator, the survival truncated at $t_e$, and the survival truncated at the maximum observed time $\Tilde{T}_{e, (n)} \coloneqq \max_{i \in \{1, \dots, n\}, E_i=e}\{\Tilde{T}_{i}\}$:
{\scriptsize
\begin{align}
    S_e(t) &\coloneqq P(T>t \mid E=e), \\
    N_e(t) &\coloneqq \#\{ i: \Tilde{T}_i \leq t \textnormal{, not censored } E_i = e\} & \left(\textnormal{Observed counting process for $\{E_i=e\}$}\right), \nonumber \\
    Y_e^{(n)}(t) &\coloneqq \# \{i : \Tilde{T}_i \geq t \textnormal{ and } E_i=e\} & \left(\textnormal{Num. of individuals at risk at time $t$ for $\{E_i = e\}$}\right), \nonumber  \\
    J_e^{(n)}(t) &\coloneqq \mathbb{1}_{\{\Tilde{T}_{e,(n)} \geq t \}} & \left(\textnormal{Max. obs.time for ind. with $E = e$ exceeds $t$}\right),\\
    \hat{A}_e^{\circ}(t) &\coloneqq \int_{\{0\leq s\leq t\}}  \frac{J_e^{(n)}(s)}{Y_e^{(n)}(s)}\cdot \textnormal{d}N_e(s), \\
    \hat{S}_e(t) &\coloneqq \Prodi\limits_{0 \leq s \leq t}\left(1 - \textnormal{d}\hat{A}_e^{\circ}(s)\right), \nonumber \\
    S_e^{-}(t) &\coloneqq \exp\left(-\int_{0\leq s \leq \min(t,t_e)}\alpha_e(s)\textnormal{d}s\right), \label{eq: e_truncated_cont_survival}\\
    S_e^{(n)-}(t) &\coloneqq \exp\left(-\int_{0\leq s \leq \min(t,\Tilde{T}_{e, (n)})}\alpha_e(s)\textnormal{d}s\right),
\end{align}
}
and let $S(t)$, $\hat{S}(t)$, $S^{-}(t)$, $S^{(n)-}(t)$ denote their marginalization with respect to $P(E)$:
\begin{align}
    S(t) &\coloneqq \expectation S_E(t), \notag\\
    \hat{S}^{\circ}(t) &\coloneqq \expectation \hat{S}_E(t), \notag\\
    {S}^{-}(t) &\coloneqq \expectation{S}_E^{-}(t), \label{eq: truncated_cont_survival}\\
    {S}^{(n)-}(t) &\coloneqq \expectation{S}_E^{(n)-}(t).\notag
\end{align}
Finally, let
\begin{align}
    \hat{S}(t) \coloneqq n^{-1} \cdot \sum_{i \in \{1, \dots, n\}} \hat{S}_{E_i}(t)
\end{align}
denote the adjusted Kaplan-Meier estimator, with $P^c(E=e)$ estimated nonparametrically as $\frac{\#\{E_i = e\}}{n}$.

\begin{lemma}\label{lemma: conv_c1}
Suppose $\RNPcont$ and mild regularity conditions hold. Then,
    \begin{align}
        \norminf{\hat{S}(s) - \hat{S}^{\circ}(s)} \overset{P^c}{\longrightarrow} 0, \label{conv: c1_1}\\
        \norminf{\hat{S}^{\circ}(s) - S^{(n)-}(s)} \overset{P^c}{\longrightarrow} 0, \label{conv: c1_2}\\
        \norminf{{S}^{(n)-}(s) - S^{-}(s)} \overset{P^c}{\longrightarrow} 0. \label{conv: c1_3}
    \end{align}
    \begin{proof}
    Let $\epsilon_1, \epsilon_2 > 0$.
       We start by proving \eqref{conv: c1_1}.
       {\scriptsize
       \begin{align*}
           &P^c\left\{\norminf{\hat{S}(s) - \hat{S}^{\circ}(s)} > \epsilon_2\right\} \\
           =& P^c\left\{\norminf{\sum_{e \in \{0, \dots, \mathcal{T}\}} \left(P^c(E=e) - \frac{\#\{E_i = e\}}{n}\right) \cdot \hat{S}_e(s)} > \epsilon_2 \right\} & \Leftarrow{ \textnormal{Def. of $\hat{S}(s)$ and $\hat{S}^{\circ}(s)$}}\\
           \leq& P^c\left\{\sum_{e \in \{0, \dots, \mathcal{T}\}} \norminf{\left(P^c(E=e) - \frac{\#\{E_i = e\}}{n}\right) \cdot \hat{S}_e(s)} > \epsilon_2 \right\} & \Leftarrow{ \norminf{\expectation X} \leq \expectation\norminf{X}}\\
           \leq& P^c\left\{\sum_{e \in \{0, \dots, \mathcal{T}\}} \left|P^c(E=e) - \frac{\#\{E_i = e\}}{n}\right| > \epsilon_2\right\} & \Leftarrow{ S_e(t) \leq 1} \\
           \leq& P^c\bigcup_{e \in \{0, \dots, \mathcal{T}\}} \left\{\left| P^c(E=e) - \frac{\#\{E_i = e\}}{n}\right| > \frac{\epsilon_2}{|\{0, \dots, \mathcal{T}\}|} \right\} \\
           \leq& \sum_{e \in \{0, \dots, \mathcal{T}\}} P^c\left\{\left| P^c(E=e) - \frac{\#\{E_i = e\}}{n}\right| > \frac{\epsilon_2}{|\{0, \dots, \mathcal{T}\}|}\right\}& \Leftarrow{ \textnormal{Union bound}}.
       \end{align*}
       }
       Then, since for every $e$, $\frac{\#\{E_i = e\}}{n} \overset{P^c}{\longrightarrow}P^c(E=e)$ by the LLN, for every $e$ and for every $\delta_1, \delta_2 > 0$ there exists a $n_e^{\delta_1,\delta_2}$ such that for $n \geq n_e^{\delta_1,\delta_2}$, $P^c\left\{\left|P^c(E=e) - \frac{\#\{E_i = e\}}{n}\right| > \delta_2\right\} < \delta_1$.
       
       Define $n^{\vee} = \max \left( n_0^{\left(\frac{\epsilon_1}{\left|\{0, \dots, \mathcal{T}\}\right|},\frac{\epsilon_2}{\left|\{0, \dots, \mathcal{T}\}\right|}\right)}, \dots, n_{\mathcal{T}}^{\left(\frac{\epsilon_1}{\left|\{0, \dots, \mathcal{T}\}\right|},\frac{\epsilon_2}{\left|\{0, \dots, \mathcal{T}\}\right|}\right)}\right)$; then if $n \geq n^{\vee}$
    \begin{multline*}
        \sum_{e \in \{0, \dots, \mathcal{T}\}} P^c\left\{\left| P^c(E=e) - \frac{\#\{E_i = e\}}{n}\right| > \frac{\epsilon_2}{|\{0, \dots, \mathcal{T}\}|}\right\} \\
        < \sum_{e \in \{0, \dots, \mathcal{T}\}} \frac{\epsilon_1}{|\{0, \dots, \mathcal{T}\}|} \\
        = \epsilon_1,
    \end{multline*}
    which proves \eqref{conv: c1_1}.
    
    To establish \eqref{conv: c1_2} and \eqref{conv: c1_3},  consider the following.
    We have
       {\scriptsize
        \begin{align*}
            \norminf{\hat{S}^{\circ}(s) - S^{(n)-}(s)} =&\norminf{\expectation\left(\hat{S}_E(s) - S_E^{(n)-}(s)\right)} & \Leftarrow{ \textnormal{Def. $\hat{S}^{\circ}(s)$ and $S^{(n)-}(s)$}}\\
            \leq&\expectation\norminf{\hat{S}_E(s) - S_E^{(n)-}(s)}& \Leftarrow{ \norminf{\expectation X} \leq \expectation\norminf{X}} \\
            =&\expectation\left(\norminf{\left(\frac{\hat{S}_E(s)}{S_E^{(n)-}(s)} -1 \right) \cdot S_E^{(n)-}(s)}\right) \\
            \leq&\expectation\left(\norminf{\frac{\hat{S}_E(s)}{S_E^{(n)-}(s)} -1}\right) & \Leftarrow{ S_E^{(n)-}(s) \leq 1}
        \end{align*}
        }
        and, using analogous arguments, we have
        \begin{align*}
             \norminf{{S}^{(n)-}(s) - S^{-}(s)} \leq \expectation\left(\norminf{\frac{S_E^{-}(s)}{S_E^{(n)-}(s)} - 1}\right). 
        \end{align*}
        Because for every $e$ and every $n$ we have that $t_e \geq \Tilde{T}_{e,(n)}$, if $s > t_e$, then $\frac{\hat{S}_e(s)}{S_e^{(n)-}(s)}$ and  $\frac{S_e^{-}(s)}{S_e^{(n)-}(s)}$
        are equal to $\frac{\hat{S}_e(t_e)}{S_e^{(n)-}(t_e)}$ and  $\frac{S_e^{-}(t_e)}{S_e^{(n)-}(t_e)}$, respectively. Thus, we have
        \begin{align*}
        \norminf{\frac{\hat{S}_e(s)}{S_e^{(n)-}(s)} - 1}&= \sup\limits_{s \in \left[0, t_e\right]}\left|\frac{\hat{S}_e(s)}{S_e^{(n)-}(s)} -1\right|, \\
            \norminf{\frac{S_e^{-}(s)}{S_e^{(n)-}(s)} - 1}&= \sup\limits_{s \in \left[0, t_e\right]}\left|\frac{S_e^{-}(s)}{S_e^{(n)-}(s)} - 1\right|
        \end{align*}
        for every $e$. 
        Under mild regularity conditions, $\RNPcont$ implies that $\inf_{s \in \left[0, t_e\right]} Y_e^{(n)}(s) \overset{P^c}{\longrightarrow} \infty$ which in turn implies the following conditions (see p. 190 \citep{andersen_statistical_1993}) 
\begin{align*}
    \int_{\{0 \leq s \leq t_e\}} \frac{J_e^{(n)}(s)}{Y_e^{(n)}(s)} \cdot \alpha_e(s) \cdot \textnormal{d}s \overset{P^c}{\longrightarrow} 0 \textnormal{ and }
    \\
    \int_{\{0 \leq s \leq t_e\}} \left(1 - J_e^{(n)}(s)\right) \cdot \alpha_e(s) \cdot \textnormal{d}s \overset{P^c}{\longrightarrow} 0 
\end{align*}
hold. \footnote{In words, the first condition ensures that as the sample size $n \to \infty$, the number of individuals at risk, $Y_e^{(n)}(s)$, grows to infinity, driving the asymptotic variance of the estimator to zero. The second condition ensures that the maximum observed time $\tilde{T}_{e,(n)}$ approaches its upper bound $t_e$ as $n \to \infty$. Together, these conditions guarantee that the observation window expands to fully cover $[0, t_e]$ and that we have infinitely many observations remaining at risk up to $t_e$. See \citep{andersen_statistical_1993} for more details.} Thus, we can directly apply the results of page 262 in \citep{andersen_statistical_1993}. For any $e \in \{0, \dots, \mathcal{T}\}$, result 4.3.12 implies 
\begin{align*}
    \norminf{\frac{\hat{S}_e(s)}{S_e^{(n)-}(s)} -1} \overset{P^c}{\longrightarrow} 0,
\end{align*}
and result 4.3.13 implies
\begin{align*}
    \norminf{\frac{S_e^{-}(s)}{S_e^{(n)-}(s)} - 1} \overset{P^c}{\longrightarrow} 0.
\end{align*}

Thus, since the $E=e$ specific conditional supremum norms converge to zero in probability, their finite weighted sum (i.e., its expectation over $P(E)$) also converges to zero in probability. Because this sum upper-bounds our target quantities, it establishes \eqref{conv: c1_2} and \eqref{conv: c1_3}.
\end{proof}

\end{lemma}
We are now ready to prove Proposition \ref{proposition: staggered_entry_mortal_time_bias}.
\begin{proposition*}[Re-statement of Proposition \ref{proposition: staggered_entry_mortal_time_bias}]
Let  $t \in [0, \mathrm{T}]$. Suppose $\RNPcont$ holds. Then, under mild regularity conditions
\begin{align*}
    \norminf{\hat{S}(s)- S(s)} \overset{P^c}{\longrightarrow} \expectation\left(\mathbb{1}_{\{t > t_{E}\}} \cdot \left(S_E(t_E) - S_E(t)\right)\right).
\end{align*}
\begin{proof}  
Fix $t \in [0, \mathrm{T}]$, $s \in \left[0, t\right]$ and let 
\begin{align*}
    \circleddash_{1}^{(n)}(s) &\coloneqq \left(\hat{S}(s) - \hat{S}^{\circ}(s)\right) + \left(\hat{S}^{\circ}(s) - S^{(n)-}(s)\right) + \left(S^{(n)-}(s) - S^{-}(s)\right), \\
    \circleddash_{2}(s) &\coloneqq S^{-}(s) - S(s).
\end{align*}
Since $\circleddash_{1}^{(n)}(s) + \circleddash_{2}(s) = \hat{S}(s)- S(s)$, to prove our result, it suffices to show that for every $\epsilon_1, \epsilon_2 > 0$ there exists a $n^{(\epsilon_1, \epsilon_2)} \in \mathbb{N}_{+}$ such that
\begin{align*}
    n \geq n^{(\epsilon_1, \epsilon_2)} \Rightarrow P^c\left\{ \left| \norminf{\circleddash_{1}^{(n)}(s) +  \circleddash_{2}(s)} - \circleddash_{2}(t) \right| > \epsilon_2\right\} < \epsilon_1.
\end{align*}
Let us preliminarily study $\circleddash_{2}(s)$:
\begin{align*}
    \circleddash_{2}(s)
    =& \expectation \left(S_E^{-}(s) - S_E(s)\right)
    &\Leftarrow{ \textnormal{def. $S^{-}(s),S(s)$}}\\
    =& \expectation \left(
    \exp\left\{-\int_{0 \leq u \leq \min(s,t_E)}
    \alpha_E(u)\textnormal{d}u\right\}
    -
    \exp\left\{-\int_{0 \leq u \leq s}
    \alpha_E(u)\textnormal{d}u\right\}
    \right)
    &\Leftarrow{ \textnormal{def. $S_e^{-}(s),S_e(s)$}}\\
    =& \expectation \left(
    \mathbb{1}_{\{s \leq t_E\}} \cdot 0
    +
    \mathbb{1}_{\{s > t_E\}}
    \left(S_E(t_E) - S_E(s)\right)
    \right)\\
    =& \expectation \left(
    \mathbb{1}_{\{s > t_E\}}
    \left(S_E(t_E) - S_E(s)\right)
    \right).
\end{align*}
For every $e \in \{0, \dots, \mathcal{T}\}$ the function
\begin{align*}
    s \mapsto \mathbb{1}_{\{s > t_e\}} \left(S_e(t_e) - S_e(s)\right)
\end{align*}
is non-negative and non-decreasing in $\left[0, t\right]$. It follows that its supremum and so that of $ \left|\circleddash_{2}(s)\right|$ which is equal to $\circleddash_{2}(s)$, is attained at $t$.

Now, 
{\scriptsize
\begin{align*}
    &\left\{\left|\norminf{\hat{S}(s)- S(s)} - \expectation\left(\mathbb{1}_{\{t > t_{E}\}} \cdot \left(S_E(t_E) - S_E(t)\right)\right)\right| > \epsilon_2\right\} \\
    \equiv &\left\{ \left| \norminf{\circleddash_{1}^{(n)}(s) +  \circleddash_{2}(s)} - \circleddash_{2}(t) \right| > \epsilon_2 \right\}\\
    = &\left\{ \left| \norminf{\circleddash_{1}^{(n)}(s) +  \circleddash_{2}(s)} - \norminf{\circleddash_{2}(s)} \right| > \epsilon_2 \right\} \quad \Leftarrow{\norminf{\circleddash_{2}(s)} = \circleddash_{2}(t)} \\
    \subseteq & \left\{\norminf{ \circleddash_{1}^{(n)}(s)} > \epsilon_2\right\} \quad \Leftarrow{\textnormal{reverse triangle ineq.: } \left| \norminf{X+Y} - \norminf{Y} \right| \leq \norminf{X}} \\
    \subseteq & \left\{\underbrace{\norminf{\hat{S}(s) - \hat{S}^{\circ}(s)}}_{\equiv \circleddash_{1}^{\textsc{I}}} >\frac{\epsilon_2}{3}\right\} \bigcup\left\{\underbrace{\norminf{\hat{S}^{\circ}(s) - S^{(n)-}(s)}}_{\equiv \circleddash_{1}^{\textsc{II}}}>\frac{\epsilon_2}{3}\right\}\bigcup\left\{\underbrace{\norminf{S^{(n)-}(s) - S^{-}(s)}}_{\equiv \circleddash_{1}^{\textsc{III}}}>\frac{\epsilon_2}{3} \right\}. 
\end{align*}
}
Thus,
\begin{multline*}
    P^c\left\{ \left| \sup\limits_{s \in \left[0, t\right]} \left| \circleddash_{1}^{(n)}(s) +  \circleddash_{2}(s)\right| - \circleddash_{2}(t) \right| > \epsilon_2\right\} \\
    \leq P^c\left\{ \circleddash_{1}^{\textsc{I}} >\frac{\epsilon_2}{3}\right\}+ P^c\left\{\circleddash_{1}^{\textsc{II}}>\frac{\epsilon_2}{3}\right\} + P^c\left\{\circleddash_{1}^{\textsc{III}}>\frac{\epsilon_2}{3} \right\}.
\end{multline*}
From Lemma \ref{lemma: conv_c1} it follows that for every 
$\circleddash \in \{\circleddash_{1}^{\textsc{I}}, \circleddash_{1}^{\textsc{II}}, \circleddash_{1}^{\textsc{III}}\}$ there exists a $n_{\circleddash}^{\left(\frac{\epsilon_1}{3},\frac{\epsilon_2}{3}\right)}$ such that $n \geq n_{\circleddash}^{\left(\frac{\epsilon_1}{3},\frac{\epsilon_2}{3}\right)} \Rightarrow P^c\left\{ \circleddash >\frac{\epsilon_2}{3}\right\} < \frac{\epsilon_1}{3}$; take $n^{\epsilon_1, \epsilon_2} \coloneqq \max\left(n_{\circleddash_{1}^{\textsc{I}}}^{\left(\frac{\epsilon_1}{3},\frac{\epsilon_2}{3}\right)},n_{\circleddash_{1}^{\textsc{II}}}^{\left(\frac{\epsilon_1}{3},\frac{\epsilon_2}{3}\right)}, n_{\circleddash_{1}^{\textsc{III}}}^{\left(\frac{\epsilon_1}{3},\frac{\epsilon_2}{3}\right)} \right)$ to conclude.
\end{proof}

\end{proposition*}

\begin{corollary}\label{corollary: hansen_one_of_us}
   Under the assumptions of Proposition 
   \ref{proposition: staggered_entry_mortal_time_bias}, the estimator \citep{hansen_estimating_2017}[Eq. 5, p. 8] targets $S^-(t)$.
   \begin{proof}
       First, $\hat{S}_w(t)$ Eq. 5, p. 8 \citep{hansen_estimating_2017} is $\hat{S}(t)$: their $B$ is our $E$, and therefore $\frac{n_i}{n}$ is our 
       \begin{align*}
           \frac{\#\{E_i = i-1 \equiv i^{\textnormal{th}} \textnormal{ (in the increasing order) value in } \{0, \dots, \mathcal{T}\}\}}{n},
       \end{align*}
       their $\hat{S}_i(t)$ is our $\hat{S}_{E=(i-1)}(t)$, and their $k$ is our $\mathcal{T}+1$. The result then follows immediately using the steps in the proof of Proposition 
   \ref{proposition: staggered_entry_mortal_time_bias}.
   \end{proof}
\end{corollary}

\textbf{Remarks on the form and origin of the bias.}
Reasoning about $\hat{S}_{e}(t)$ permits pinpointing the origins of the bias term. Fix $t > 0$. For all those $e$ such that $t > t_e$ the contribution $\alpha_e(s)\textnormal{d}s$ in $S_e^{-}(s)$ is absent because the risk set is empty: from \eqref{cond: secc_ramifications_continuous_time} it follows that no individual is at risk after $t_e$. Since $\hat{S}_e(t)$ targets $S_{e}^{-}$ and $S_e(t) < S_{e}^{-}(t)$ the terms $S_e^{-}(t) - S_e(t)$ are not accounted for in the expectation for all $e$ such that $t > t_e$. Authoritative guidelines have warned about this issue ``which hopefully hardly ever happens if nonparametric estimation of $A(t)$ \textit{(cumulative hazard and thus survival)} is to be meaningful.'' \citep{andersen_statistical_1993}. \footnote{Italics have been added by the present writer.}     

The results of Proposition \ref{proposition: staggered_entry_mortal_time_bias} can be interpreted as follows. In words, the supremum of the absolute difference between the Kaplan-Meier estimates $\hat{S}(t)$ and survival $S(t)$, as the sample size $n$ tends to infinity, comes arbitrarily close to the sum over all $e$ (such that $e+t$ exceeds the calendar time of administrative end of the study $\mathcal{T}$) of the difference between survival at time $\mathcal{T}-e$ and that at time $t$ for the individuals $\{E=e\}$ weighted by the corresponding population proportions. The bias term
\begin{align*}
    \expectation\left(\mathbb{1}_{\{t > t_{E}\}} \cdot \left(S_E(t_E) - S_E(t)\right)\right)
\end{align*}
is non-decreasing in $t$. For a given $t$, it is equal to zero if and only if
\begin{align} \label{eq: staggered_entry structural_failure_condition}
    S_e(t_e)=S_e(t)
    \quad \textnormal{for $P(E)$-a.s. $e$ such that $t>t_e$},
\end{align}
that is, if there is no additional failure probability over $(t_e,t]$ for the relevant $E=e$ strata. A sufficient, but considerably stronger, condition is
\begin{align*}
    [0,t_e]=\textnormal{supp}(T\mid E=e)
    \quad\textnormal{for every $e$}.
\end{align*}

\newpage
\subsection{MSM-hazard procedures}
\label{sec_app: msm_hazard}
Throughout, we denote by $P^*$ the law generated from $P$ by an arbitrary choice of weights compatible with the definition given in Section \ref{sec_app: meaning_of_law_generated_by_weights}. For simplicity, we assume that the weights are known, although the results extend to cases in which they are unknown. We also assume that \hi{} holds throughout. This simplifies the exposition and allows us to introduce the person-time law, which depends only on a single, i.e., point treatment, in the context of MSM-hazard procedures. We note, however, that the results of Proposition \ref{proposition: msm_hazard_characterization} can be straightforwardly extended without assuming \hi{}.

\textbf{Observed data structure}.  Let $K \in \mathbb{N}^{+}$ and let $P^{\textnormal{long}}$ denote the observed data law for
\begin{align}
    O_k = (L_0, C_{-1}, A_0, Y_0, \dots, L_k, C_{k-1}, A_{k}, Y_{k}), \quad A_m, Y_m, C_{m-1} \in \{0,1\},
    \label{def: observation_O}
\end{align}
where $C_{-1}=Y_{-1}=0$. $P^{\textnormal{long}}$ is nothing but $P$ where we added the superscript ``long'' to emphasize that observations are longitudinal. Throughout the remainder of this section, we will appeal to the following defining properties of the observed data structure: 
\begin{align}
    &C_{k-1}=1 \Rightarrow C_{k}=1; Y_k=1 \Rightarrow Y_{k+1}=1, \quad k \in \{0, \dots, K-1\} \label{def: monotonicity_cy}\\
    &C_{k}=1 \Rightarrow (L_{k+1}, A_{k+1}, Y_{k+1}) \textnormal{ cannot be observed},\quad k \in \{0, \dots, K-1\}; \\ 
    &C_{k-1}=Y_{k-1}=0 \textnormal{ and } Y_{k}=1 \Rightarrow C_{k^{'}}=0 \textnormal{ for } k^{'} \geq k \label{def: impossibility_11}.
\end{align}
We will also consider the stopping time
\begin{align}
    &K_{\textnormal{st}} \coloneqq \begin{cases}
        K & C_{K-1}=Y_{K-1}=0 \\
        \textnormal{min}\{l \in \{0, \dots, K-1\} : C_{l} + Y_l > 0 \} & \textnormal{otherwise},
    \end{cases}
    \label{eq: stopping_time_K_star}
\end{align}
which equals the number of person-time observations minus 1. 

Consider a regression model $\mbps$ indexed by $\mathbb{B}^*$. Let $V \subseteq L_0$. \textbf{We say that MSM-hazard procedures target the parameter} $\beta^* \in \mathbb{B}^*$
if, for a given choice of the person-time score function, $s_{\textnormal{pool}}(v,a,k,y;\beta)$, the parameter $\beta^*$ satisfies the population score-equation
\begin{multline} \label{eq: population_pt_score_eq}
0 = {P^{*}\left(K_{\textnormal{st}} + 1\right)}^{-1} \cdot  P^{*} \left(\sum_{k \in \{0, \dots, K\}} \mathbb{1}_{\{K_{\textnormal{st}} \geq k\}} \cdot s_{\textnormal{pool}}(V, A_{k}, k, Y_{k}; \beta) \right). 
\end{multline}
We will consider score functions of the form:
\begin{multline}\label{eq: s_pool}
 s_{\textnormal{pool}}(v,a,k,y;\beta) \coloneqq\left(\frac{y}{p(v,a,k;\beta)} - \frac{1-y}{1-p(v,a,k;\beta)} \right) \nabla_{\beta}p(v,a,k;\beta), \\
p(v,a,k;\beta) \in \mbps
\end{multline}
which, for the popular regression models used in the applied literature
{\small
\begin{align}\label{eq: popular_working_model}
\mbps =& \{ p(v,a,k;\beta) \coloneqq \mu^{-1}(\lambda(v,a,k;\beta)) \\
:& \beta \in \mathbb{B}^*, \mu \textnormal{ link function},  \lambda \textnormal{ linear predictor}\},
\end{align}
}
for $\mu = \text{logit}$ simplifies as
\begin{align}\label{eq: score_popular}
    s_{\textnormal{pool}}(v,a,k,y;\beta) \coloneqq& (y - p(v,a,k; \beta)) \cdot \nabla_{\beta}\lambda(v,a,k; \beta). 
\end{align}

For example, the choice of $\mbps$ resulting in $s_{\textnormal{pool}}(v,a,k,y;\cdot):\beta \mapsto  (y - p(v,a,k; \beta)) \cdot \left[ 1, v, a, k\right]^{\textnormal{T}}$ constitutes the prototypical MSM-logistic models introduced by \citet{robins_marginal_2000}, used in the majority of analyses based on the target trial emulation framework, see e.g., \citep{hernan_observational_2008, dickerman_avoidable_2019}.

\clearpage

\subsubsection{Person-time law in the context of MSM-hazard procedures}
\label{sec_app: formalization_person_time_law}

We begin by establishing a set of preliminary lemmas. We then offer a definition of the person-time law $P^{\textnormal{pt}}$ and show how it can be used to characterize the parameter $\beta^*$ targeted by MSM-hazard procedures via a population score equation. 

For notational convenience throughout Section \ref{sec_app: formalization_person_time_law}, we use probability symbols both to denote probabilities of events and expectations with respect to the corresponding probability law. Probabilities of events are written using curly brackets (except for conditional probabilities, where round brackets are used), while expectations are written using round brackets. In particular, $P\{\cdot\} = P(\mathbb{1}_{\{\cdot\}})$.

\textbf{The following lemmas will be used in the subsequent subsections.}

\begin{lemma}[Stopping time: characterization 1]
\label{lemma: mutual_exclusive}
Let $k \in \{0, \dots, K\}$. Then,
{\footnotesize
\begin{align}\label{eq: characterization_1}
        &\{ K_{\textnormal{st}} = k \} \\
        =& \begin{cases}
            \{C_{k-1} = 0, Y_{k-1} = 0\} &k=K \\
            \{C_{k-1} = 0, Y_{k-1} = 0\}\bigcap \left(\{C_k = 1, Y_k = 0\} \bigcup \{C_k = 0, Y_k = 1\}\right) & \textnormal{otherwise}.
        \end{cases} 
\end{align}
}
\begin{proof}
For $k=K$ the result follows directly from the definition. Consider $k \in \{0, \dots, K-1\}$. To establish the result we note that 
    \begin{align*}
        &\{C_k = 1, Y_k = 0\} \bigcup \{C_k = 0, Y_k = 1\} \\
        =&\{C_k = 1, Y_k = 0\} \oplus \{C_k = 0, Y_k = 1\} \\
        \equiv&\textnormal{ either } \{C_k = 1, Y_k = 0\} \textnormal{ or } \{C_k = 0, Y_k = 1\},
    \end{align*}
    since the events are mutually exclusive and we consider the corresponding propositional formulation
    {\footnotesize
    \begin{align*}
            K_{\textnormal{st}} = k \Leftrightarrow \left[C_{k-1} = 0, Y_{k-1} = 0\right] \textnormal{ and either } \left[C_k = 1, Y_k = 0\right] \textnormal{ or } \left[C_k = 0, Y_k = 1\right]. 
    \end{align*}
    }
        For $``\Leftarrow''$. The result follows directly from \eqref{eq: stopping_time_K_star}.\\
        For $``\Rightarrow''$. Let $k \in \{0, \dots, K-1\}$. From \eqref{eq: stopping_time_K_star}, we have
        \begin{align*}
            K_{\textnormal{st}} = k &\Rightarrow \left[ Y_k = 0  \Rightarrow C_k = 1 \right] &, \\
            K_{\textnormal{st}} = k &\Rightarrow \left[{C}_{k-1} = 0 \textnormal{ and } Y_{k-1} = 0\right] & \\
            &\Rightarrow \left[ Y_k = 1  \Rightarrow C_k = 0 \right]&.
        \end{align*}
        Thus 
        {\small
        \begin{align*}
            K_{\textnormal{st}} = k \Rightarrow& \left[{C}_{k-1} = 0 \textnormal{ and } Y_{k-1} = 0\right] \textnormal{ and } \left[ \left[ Y_k = 0  \Rightarrow C_k = 1 \right] \textnormal{ and } \left[ Y_k = 1  \Rightarrow C_k = 0 \right] \right]\\
            \Leftrightarrow& \left[C_{k-1} = 0, Y_{k-1} = 0\right] \textnormal{ and either } \left[C_k = 1, Y_k = 0\right] \textnormal{ or } \left[ C_k = 0, Y_k = 1\right], 
        \end{align*}}
        which concludes the proof.
    \end{proof}
\end{lemma}
\begin{lemma}[Stopping time: characterization 2]
\label{lemma: stopping_time}
Let $k \in \{0, \dots, K\}$. Then,
\begin{align}
    & \{ K_{\textnormal{st}} \geq  k \} \label{equi_1: stopping_time} \\
    \Leftrightarrow& \{C_{k-1} = 0, Y_{k-1} = 0\}. \label{equi_2: stopping_time}
\end{align}
    \begin{proof}
        To prove \eqref{equi_2: stopping_time}, let us preliminarily denote $S_{(c,y)}^{k} \equiv \{C_{k} = c, Y_{k} = y\}$, $c,y \in \{0,1\}$. We argue by backward induction from $k=K$ to $k=0$. The base case $k=K$ follows from the definition of $K_{\textnormal{st}}$ since $\{K_{\textnormal{st}} \geq K\} = \{ K_{\textnormal{st}} = K \}$.
        Now, let $k \in \{0, \dots, K-1\}$.
        \begin{align*}
            &\{K_{\textnormal{st}} \geq k\} \\
            =& \{K_{\textnormal{st}} = k \} \bigcup\{K_{\textnormal{st}} \geq k+1\} &\Leftarrow{\textnormal{union decomposition}}\\
            =& \left(S_{(0,0)}^{k-1} \bigcap \left(S_{(0,1)}^{k} \bigcup S_{(1,0)}^{k} \right)\right) \bigcup\{K_{\textnormal{st}} \geq k+1\} &\Leftarrow{\textnormal{Lemma \ref{lemma: mutual_exclusive}}, \eqref{eq: characterization_1}}\\
            =& \left(S_{(0,0)}^{k-1} \bigcap \left(S_{(0,1)}^{k} \bigcup S_{(1,0)}^{k} \right)\right) \bigcup S_{(0,0)}^{k} &\Leftarrow{\textnormal{inductive hypothesis}}\\
            =& \left(S_{(0,0)}^{k-1} \bigcup S_{(0,0)}^{k}\right) \bigcap \left(S_{(0,0)}^{k} \bigcup S_{(0,1)}^{k} \bigcup S_{(1,0)}^{k}\right) & \Leftarrow \textnormal{Distributive law}\\
            =& S_{(0,0)}^{k-1} \bigcap \left(S_{(0,0)}^{k} \bigcup S_{(0,1)}^{k} \bigcup S_{(1,0)}^{k}\right) & \Leftarrow \left[S_{(0,0)}^{k-1}  \supseteq S_{(0,0)}^{k} \Leftarrow \eqref{def: monotonicity_cy}\right] \\
            =& S_{(0,0)}^{k-1} & \Leftarrow \left[\emptyset = S_{(1,1)}^{k}  \Leftarrow \eqref{def: impossibility_11}\right].
        \end{align*}
        This concludes the proof.
    \end{proof}
\end{lemma}

\begin{lemma}[Expectation identity]
\label{lemma: expectation_identity}
To lighten notation, let $P \equiv P^{\textnormal{long}}$.
\begin{align}
    \sum_{k \in \{0, \dots, K\}}P\{ K_{\textnormal{st}} \geq k\} = P(K_{\textnormal{st}} + 1)
\end{align}
\begin{proof}
Let us preliminarily denote $S_{(c,y)}^{k} \equiv \{C_{k} = c, Y_{k} = y\}$, $c,y \in \{0,1\}$. For $k^{\prime}, k^{\prime\prime} \in \{0, \dots, K\}$, $k^{\prime} < k^{\prime\prime}$, it can be shown that $\{K_{\textnormal{st}} = k^{\prime}\} \bigcap \{K_{\textnormal{st}} = k^{\prime\prime}\} = \emptyset$. Thus,
{\footnotesize    
\begin{align*}
        &\sum_{k \in \{0, \dots, K\}}P\{ K_{\textnormal{st}} \geq k\} \\ 
        =& \sum_{k \in \{0, \dots, K\}} P\bigcup_{l \in \{k, \dots, K\}}\{K_{\textnormal{st}} = l\} &\Leftarrow{(i)}\\
        =& \sum_{k \in \{0, \dots, K\}} P\{K_{\textnormal{st}} = k\} +  \dots + P\{K_{\textnormal{st}} = K\} &\Leftarrow{(ii)}\\
        =& \phantom{+}P\{K_{\textnormal{st}} = 0\} + P\{K_{\textnormal{st}} = 1\} + P\{K_{\textnormal{st}} = 2\} + \dots + P\{K_{\textnormal{st}} = K\} \\
        \phantom{=}&+ \phantom{P\{K_{\textnormal{st}} = 0\} + }P\{K_{\textnormal{st}} = 1\} + P\{K_{\textnormal{st}} = 2\} + \dots + P\{K_{\textnormal{st}} = K\} \\
        \phantom{=}&+ \phantom{P\{K_{\textnormal{st}} = 0\} + P\{K_{\textnormal{st}} = 1\} + }P\{K_{\textnormal{st}} = 2\} + \dots + P\{K_{\textnormal{st}} = K\} \\
        \phantom{=}&+ \phantom{P\{K_{\textnormal{st}} = 0\} + P\{K_{\textnormal{st}} = 1\} + P\{K_{\textnormal{st}} = 2\} +} \dots \phantom{+ P\{K_{\textnormal{st}} = K\}}\\
        \phantom{=}&+ \phantom{P\{K_{\textnormal{st}} = 0\} + P\{K_{\textnormal{st}} = 1\} + P\{K_{\textnormal{st}} = 2\} + \dots + }P\{K_{\textnormal{st}} = K\} \\
        =& \sum_{k\in\{0, \dots, K\}} (k+1) \cdot P\{K_{\textnormal{st}} = k\} & \Leftarrow{(iii)}\\
        =& P(K_{\textnormal{st}}+1) &\Leftarrow{(iv)},
\end{align*}
}
where:
\begin{align*}
    (i) \equiv& \textnormal{union decomposition}, \\
    (ii) \equiv& \{K_{\textnormal{st}} = k^{\prime}\} \bigcap \{K_{\textnormal{st}} = k^{\prime\prime}\} = \emptyset, \\ 
    (iii) \equiv& \textnormal{rearranging terms}, \\
    (iv) \equiv&\textnormal{def. of expectation w.r.t. to $P(K_{\textnormal{st}})$}. 
\end{align*}
This concludes the proof.
\end{proof}
\end{lemma}

\subsubsection{Person-time law definition and correspondence}
\label{sec_app: pt_def_and_correspondence}

To estimate the parameter $\beta^*$, in many analyses (see, e.g., \citep{hernan_observational_2008, dickerman_avoidable_2019}), users of MSM-hazard procedures fit a model using datasets formed by the so-called person-time observations generated from longitudinal individual observations
\citep{singer_its_1993}. To be explicit, if an observation 
$(V=v, A_0=0, Y_0=0, C_0=0, A_1=1, Y_1=0, C_1=1, A_2 = 1, Y_2 = 0)$ is drawn from the observed law $P^{\textnormal{o}}(P)$, then the person-time observations generated from that individual longitudinal observation are $\{(V=v, A=0, K_{\textnormal{pt}}= 0, Y=0), (V=v, A=1, K_{\textnormal{pt}}= 1, Y=0)\}$, where $Y$ and $A$ indicate the outcome and treatment for an individual with $V=v$ at follow-up time $K_{\textnormal{pt}}=k$.  The marginal distribution of person-time observations can be represented by a law $P^{\textnormal{pt}} \equiv P^{\textnormal{pt}}(P)$ generated from $P$, which we term \textit{person-time law}. The law $P^{\textnormal{pt},*} \equiv P^{\textnormal{pt},*}(P^*)$ appearing in the statement of Proposition \ref{proposition: msm_hazard_characterization} in Section \ref{sec_app: staggered_entry_characterization} is the person-time
law generated from the pseudopopulation $P^*$. 

In the following, we define the person-time law $P^{\textnormal{pt}}$ induced by the longitudinal law $P^{\textnormal{long}}$ as the probability law obtained by normalizing
the expected occupation measure of the observable person-time records
generated by longitudinal observations. In words, $P^{\textnormal{pt}}$ represents the marginal distribution of a record drawn from the pooled population of observable person-time records. Person-time records contributed by the same individual need not be independent.

In order to aid intuition and simplify notation, we present our results in discrete-time settings and for binary treatment, but our results can be extended to continuous covariates and treatments.
\begin{definition}[Person-time law]
The person-time law $P^{\textnormal{pt}} \equiv P^{\textnormal{pt}}\left(P\right)$ generated by the longitudinal law $P$ is defined as:
\begin{align}
    &P^{\textnormal{pt}}\left( \{l\}\times\{ a\}\times  \{y \} \times B_{\textnormal{fw}}\right) \nonumber\\
    \coloneqq&  P(K_{\textnormal{st}}+1)^{-1} \cdot \sum_{k \in B_{\textnormal{fw}}}P
     (L_0 = l, A_{k} = a, Y_{k} = y, \mathbb{1}_{\{K_{\textnormal{st}} \geq k\}} = 1),\label{eq: discrete_person_time_def} \\
     &\textnormal{for every $a,y,l$ in their respective supports and every $B_{\textnormal{fw}} \subseteq \{0, \dots, K\}$.} \notag
\end{align}
For a given $(l,a,y,k)$ we will denote by $L, A, Y, K_{\textnormal{pt}}$ the corresponding coordinate projections.
\end{definition}
The above display defines a probability law. 

Indeed, $P^{\textnormal{pt}}\emptyset = 0$. Furthermore,
{\footnotesize
\begin{align*}
    &P(K_{\textnormal{st}}+1) \cdot P^{\textnormal{pt}}\left( \mathcal{L}\times\{ 0,1\} \times \{0,1 \} \times \{0, \dots, K\}\right) \\
    =& \sum_{k \in \{0, \dots, K\}}
    P\{L_0 \in \mathcal{L}, A_{k} \in \{0,1\}, Y_{k} \in \{0,1\}, \mathbb{1}_{\{ K_{\textnormal{st}} \geq k\}} = 1\} & \Leftarrow{ (i)} \\
    =&\sum_{k \in \{0, \dots, K\}}P\{ K_{\textnormal{st}} \geq k\} & \Leftarrow (ii)\\
    =&P(K_{\textnormal{st}} + 1)& \Leftarrow (iii) \\
    \Rightarrow& P^{\textnormal{pt}}\left((L, A, Y, K_{\textnormal{pt}}) \in \mathcal{L}\times\{ 0,1\} \times \{0,1 \} \times \{0, \dots, K\}\right) = 1,
\end{align*}
}
where
\begin{align*}
    (i) \equiv& \textnormal{def.}  \eqref{eq: discrete_person_time_def}, \\
    (ii) \equiv& \textnormal{marg. w.r.t. $L_0, A_{k}, Y_{k}$}, \\
    (iii) \equiv&\textnormal{Lemma \ref{lemma: expectation_identity}}.
\end{align*}

We now relate conditional probabilities formulated under $P^{\textnormal{pt}}$ and $P$. 

\begin{lemma}[Conditioning: person-time and longitudinal correspondence]
\label{lemma: factorization_pt_long}
Let $k \in \{0, \dots, K\}$ and $X \subseteq L_0$. Let $P^{\textnormal{pt}} \equiv P^{\textnormal{pt}}(P)$ denote the person-time law generated from the law $P$, as defined in Definition \ref{eq: discrete_person_time_def}. Then, 
{\footnotesize
    \begin{align}
    &P^{\textnormal{pt}} (X=x, K_{\textnormal{pt}}=k) \nonumber  \\
    =& P(K_{\textnormal{st}}+1)^{-1} \cdot P\left(X =x, C_{k-1} = 0, Y_{k-1} = 0\right), \label{eq: distintegration_marginal} \\
    &P^{\textnormal{pt}} (Y=y, A=a \mid X=x, K_{\textnormal{pt}} = k) \nonumber \\
    =& P(Y_{k} = y,  A_{k}=a \mid X = x, C_{k-1} = 0, Y_{k-1} = 0), \label{eq: distintegration_conditional_1}  \\
    &P^{\textnormal{pt}} (Y = y \mid X=x, A=a, K_{\textnormal{pt}} = k) \nonumber \\
    =& P(Y_{k} = y \mid X = x, A_{k} = a, C_{k-1} = 0, Y_{k-1} = 0). \label{eq: distintegration_conditional_2} 
\end{align}
}
for every $x,a,y$ in their respective supports.
\begin{proof}
    For \eqref{eq: distintegration_marginal}. Let $k \in \{0, \dots, K\}$. Then, for every $l$:
    {\footnotesize
    \begin{align*}
        &P^{\textnormal{pt}}(L=l, K_{\textnormal{pt}} = k) \\
        =&  P^{\textnormal{pt}}(\{l\} \times \{0,1\} \times \{0,1\}  \times \{ k\}) & \Leftarrow (i) \\
        =&\sum_{(a,y) \in \{0,1\}^{2}}P^{\textnormal{pt}}\left(L=l, A=a, Y=y, K_{\textnormal{pt}} = k\right) \\
        =& \sum_{(a,y) \in \{0,1\}^{2}}P(K_{\textnormal{st}}+1)^{-1} \cdot P(L_0=l, A_{k}=a, Y_{k}=y, \mathbb{1}_{\{K_{\textnormal{st}} \geq k\}} = 1) & \Leftarrow{(ii)} \\
        =& P(K_{\textnormal{st}}+1)^{-1} \cdot \sum_{(a,y) \in \{0,1\}^{2}}  P(L_0=l, C_{k-1} = 0, Y_{k-1} = 0, A_{k}=a, Y_{k}=y) & \Leftarrow{(iii)}\\
        =& P(K_{\textnormal{st}}+1)^{-1} \cdot   
        P(L_0=l, C_{k-1}=0, Y_{k-1}=0) \\
        \Rightarrow& P^{\textnormal{pt}}(X=x, K_{\textnormal{pt}}=k) \\
        =& \sum_{l\setminus x} P(K_{\textnormal{st}}+1)^{-1} \cdot   
        P(L_0=l, C_{k-1}=0, Y_{k-1}=0) \\
        =& P(K_{\textnormal{st}}+1)^{-1} \cdot   
        P(X=x, C_{k-1}=0, Y_{k-1}=0),
    \end{align*}
    }
    where:
    \begin{align*}
        (i) \equiv& \textnormal{def. of joint law},\\
        (ii) \equiv& \textnormal{def. }  \eqref{eq: discrete_person_time_def},\\
        (iii) \equiv& \textnormal{Lemma } \ref{lemma: stopping_time}.
    \end{align*}
    For \eqref{eq: distintegration_conditional_1}:
    {\footnotesize
    \begin{align*}
        &P^{\textnormal{pt}} (Y=y, A=a \mid X=x, K_{\textnormal{pt}} = k) \\
        =&\frac{P^{\textnormal{pt}}(X=x, A=a, Y=y, K_{\textnormal{pt}} = k)}{P^{\textnormal{pt}}(X=x, K_{\textnormal{pt}} =k)} &\\ 
        =&\frac{P(K_{\textnormal{st}}+1)^{-1} \cdot P(X = x,  A_{k} = a, Y_{k} = y,  \{K_{\textnormal{st}} \geq k\} = 1)}{P(K_{\textnormal{st}}+1)^{-1} \cdot   
        P(X=x, \{K_{\textnormal{st}} \geq k\} = 1)} &  \Leftarrow{\textnormal{def. }\eqref{eq: discrete_person_time_def}} \\
        =&\frac{P(K_{\textnormal{st}}+1)^{-1} \cdot P(X = x,  A_{k} = a, Y_{k} = y,  Y_{k-1} = 0, C_{k-1}=0)}{P(K_{\textnormal{st}}+1)^{-1} \cdot   
        P(X=x, C_{k-1}=0, Y_{k-1}=0)} &\Leftarrow{\textnormal{Lemma } \ref{lemma: stopping_time}} \\
        =&P(Y_{k} = y,  A_{k}=a \mid X = x, C_{k-1} = 0, Y_{k-1} = 0).  &
    \end{align*}
    }
    Analogous arguments can be used to establish \eqref{eq: distintegration_conditional_2}.
\end{proof}
\end{lemma}

\begin{landscape}
\begin{figure}[H]
\includegraphics[width=16cm]{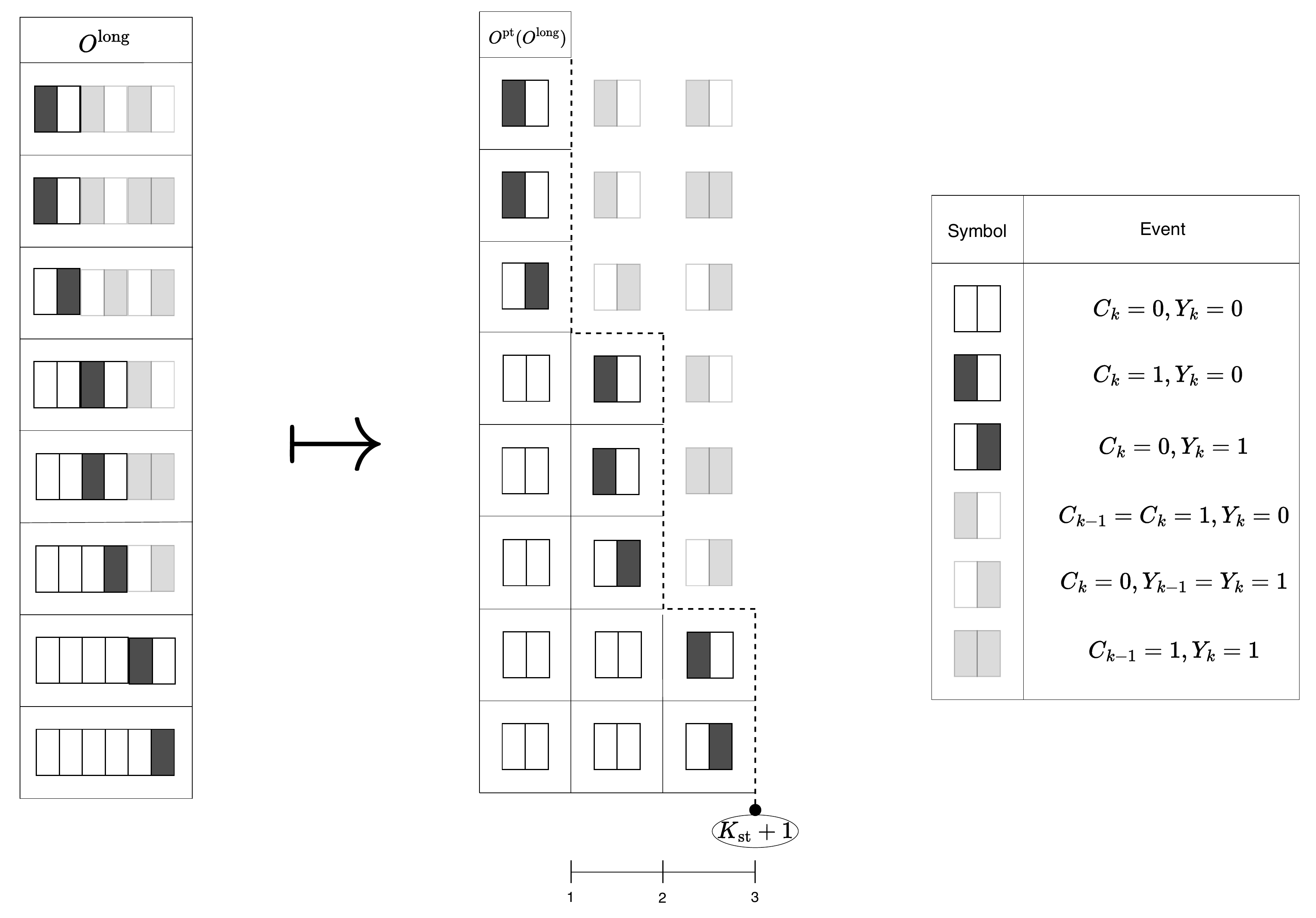}
\caption{Let $O^{\textnormal{long}}$ denote an observation drawn from $P^{\textnormal{o}}(P)$. From $O^{\textbf{long}}$ practitioners create $K_{\textnormal{st}}(O^{\textbf{long}}) + 1$ person-time observations, that is, as many observations as the number of follow-ups that can be observed before the individual either experiences failure or is right-censored. In keeping with standard practice, person-time units corresponding to any $k > K_{\textnormal{st}}$, depicted with opaque symbols, are not considered.
}
\label{fig: person_time_illustration}
\end{figure}
\end{landscape} 

We are now ready to characterize $\beta^*$ as the solution of a score equation formulated under an appropriate law.

\begin{proposition}[Person-time equivalent formulation of population person-time score equation]
\label{proposition: correspondence_score_equation}
Let $P^{\textnormal{pt},*}(P^*)$ denote the person-time law generated from the pseudopopulation $P^*$ as defined in Definition \ref{eq: discrete_person_time_def}. Suppose $\mbps^{\textnormal{*-\textnormal{cs}}}$ holds. With a slight abuse of notation, let $V$ denote the coordinate projection of $L$ in the entries of the vector $V$ in $L_0$. Assume the parameter space $\mathbb{B}^{*}$ is open, $p(v,a,k;\beta) \in (0,1)$ is continuously differentiable with respect to $\beta$, and the norm of the gradient of the log-likelihood is bounded by a $P^{\textnormal{pt},*}$-integrable function. Then,
\begin{enumerate}
    \item $\beta^* \textnormal{ satisfies } \eqref{eq: population_pt_score_eq} \Leftrightarrow  \beta^* \textnormal{ satisfies } P^{\textnormal{pt},*}\left(s_{\textnormal{pool}}(V,A,K_{\textnormal{pt}},Y;\beta)\right)=0$
    \item $P^{\textnormal{pt},*}\left(s_{\textnormal{pool}}(V,A,Y,K_{\textnormal{pt}};\beta)\right)=0$ represents the population score equation of a model for the conditional distribution of $P^{\textnormal{pt},*}(Y \mid V=v, A=a, K_{\textnormal{pt}}=k)$ (for $(v,a,k) \in \textnormal{supp}(P^{\textnormal{pt},*}(V,A,K_{\textnormal{pt}}))$) assumed to be Bernoulli with probability of success $p(v,a,k;\beta)$. 
\end{enumerate}

\begin{proof}
First,
{\scriptsize
    \begin{align*}
    & P^*\left(\sum_{k \in \{0, \dots, K \}} \mathbb{1}_{\{K_{\textnormal{st}} \geq k\}} \cdot s_{\textnormal{pool}}(V, A_{k}, Y_{k}, k; \beta) \right) & \Leftarrow{\eqref{eq: population_pt_score_eq}}\\
        =& \sum_{k \in \{0, \dots, K \}} P^*\left(\mathbb{1}_{\{K_{\textnormal{st}} \geq k\}} \cdot s_{\textnormal{pool}}(V, A_{k}, Y_{k}, k; \beta)\right) & \Leftarrow \textnormal{linearity}\\
        =& \sum_{k \in \{0, \dots, K \}}
        \sum_{(v,a,y)} s_{\textnormal{pool}}(v,a,y,k; \beta)\cdot P^*\{V = v, A_k=a, Y_k=y, \mathbb{1}_\{K_{\textnormal{st}} \geq k\} = 1 \} & \Leftarrow \textnormal{def.}\\
        =& P^*(K_{\textnormal{st}}+1) \sum_{k,v,a,y} s_{\textnormal{pool}}(v,a,y,k; \beta) \cdot P^{\textnormal{pt},*}(V=v, A=a, Y=y,K_{\textnormal{pt}}=k) &\Leftarrow \times \frac{P^*(K_{\textnormal{st}}+1) }{P^*(K_{\textnormal{st}}+1) }\\
        =& P^*(K_{\textnormal{st}}+1) \cdot P^{\textnormal{pt},*}\left( s_{\textnormal{pool}}(V,A,Y,K_{\textnormal{pt}}; \beta) \right) & \Leftarrow \eqref{eq: discrete_person_time_def}
    \end{align*}
    }
    Second, 
    {\footnotesize
    \begin{align*}
        &P^{\textnormal{pt}, *} \left( s_{\textnormal{pool}}(V,A,Y,K_{\textnormal{pt}}; \beta) \right)\\
        =& P^{\textnormal{pt}, *} \left( \left( \frac{Y}{p(V,A,K_{\textnormal{pt}}; \beta)} - \frac{1-Y}{1-p(V,A,K_{\textnormal{pt}}; \beta)} \right) \nabla_{\beta} p(V,A,K_{\textnormal{pt}}; \beta) \right) 
        &\Leftarrow{\textnormal{def.} \eqref{eq: s_pool}} \\
        =& P^{\textnormal{pt}, *} \left( \nabla_{\beta} \left(Y \log p(V,A,K_{\textnormal{pt}}; \beta) + (1-Y) \log (1-p(V,A,K_{\textnormal{pt}}; \beta))\right) \right) 
        &\Leftarrow{\textnormal{algebra}}  \\
        =& \nabla_{\beta}P^{\textnormal{pt}, *} \left( Y \log p(V,A,K_{\textnormal{pt}}; \beta) + (1-Y) \log (1-p(V,A,K_{\textnormal{pt}}; \beta))\right) 
        &\Leftarrow{\textnormal{reg. cond.}} 
    \end{align*}
    }
    Thus, since $\mbps^{\textnormal{*-\textnormal{cs}}}$ holds, $P^{\textnormal{pt},*}\left(s_{\textnormal{pool}}(V,A,K_{\textnormal{pt}},Y;\beta)\right)=0$ represents the population score equation of a model for the conditional distribution of $P^{\textnormal{pt},*}(Y \mid V=v, A=a, K_{\textnormal{pt}}=k)$ (for $(v,a,k) \in \textnormal{supp}(P^{\textnormal{pt},*}(V,A,K_{\textnormal{pt}}))$) assumed to be Bernoulli with probability of success $p(v,a,k;\beta)$.
\end{proof}
\end{proposition}

\subsubsection{Correct specification in MSM-hazard procedures}
\label{sec_app: notion_correct_specification}
Correct specification of a regression model $\mathcal{M}(\mathbb{B})$ indexed by a set $\mathbb{B}$ for a given regression parameter, say $h_k^{a_k}(v)$, requires the existence of a unique $\beta \in \mathbb{B}$ such that $h_k^{a_k}(v) = p(v,a_k,k;\beta) \in \mathcal{M}(\mathbb{B})$. We emphasize that the domain in which the equality between $p(v,a_k,k;\beta)$ and the regression parameter $h_k^{a_k}(v)$ is required to hold is a distinctive feature in our elaborations. In
$\mb^{\textnormal{cs}}$ the equality should be understood for every $v \in \textnormal{supp}( P(V))$ for every $k \in \{0, \dots, K\}$ and every $a_k \in \{0,1\}$, while in $\mb^{*\textnormal{-cs}}$ it should be understood for every $v \in \textnormal{supp}(P^{\textnormal{pt},*}(V\mid A=a_k, K_{\textnormal{pt}}=k))$ for every $k \in \{0, \dots, K\}$ and every $a_k \in \{0,1\}$. A feature of $P^{\textnormal{pt},*}$ is that for any two functions, $f_1$ and $f_2$, the equality
$f_1(v,a_{k},k)=f_2(v,a_{k},k)$
is weaker if it holds for every $v \in \textnormal{supp}(P^{\textnormal{pt},*}(V \mid A=a_k, K_{\textnormal{pt}}=k))$ than if it holds for every $v \in \textnormal{supp}(P(V))$ whenever  $k \in \{1,\dots, K\}$ and $a_k \in \{0,1\}$. Conceptually, this follows because $P^{\textnormal{pt},*}$ involves only uncensored observations; if $K_{\textnormal{pt}}=k$, that is, if $Y_{k-1}=C_{k-1}=0$, then since $\eqref{eq: secc}$ is preserved upon weighting (see the arguments in the proof of Corollary \ref{corollary: kiss_when_e_not_in_v} in Appendix \ref{sec_app: iptw_technical}), the equality is required to hold only for $v$ such that $e(v) \equiv e \in \{0, \dots, \mathcal{T}-k\}$. To be explicit, let $V=E$ and consider, as in our running example in Section \ref{sec: illustrative_example_censoring_assumptions_ill_posed} of the main text, $K=\mathcal{T}=2$; a person-time observation, say, $(E=2, A=a_k, K_{\textnormal{pt}} = 2, Y=y)$, for any $a_k,y \in \{0,1\}$, cannot possibly be realized under $P^{\textnormal{pt},*}$ because it would imply that an observation $(E=2, A_0=a_0, Y_0=0, C_0=0, A_1=a_1, Y_1=0, C_1=0, A_2 = a_2, Y_2 = y)$ is realizable under $P^*$, which, since $C_{k\textcolor{gray}{\equiv 2} -1} = 0, Y_{k\textcolor{gray}{\equiv 2} - 1} = 0 \Rightarrow E\textcolor{gray}{\equiv 2} \textnormal{ + } \textcolor{gray}{2\equiv}k < \mathcal{T}\textcolor{gray}{\equiv 2} + 1$, $P^*$-\textnormal{a.s.}, contradicts \eqref{eq: secc}. See Lemma \ref{lemma: cp_full_implies_observed_pt}.

For any two functions, $f_1(v,a_{k},k)$ and $f_2(v,a_{k},k)$, we write $f_1(v,a_{k},k)\stackrel{\prstudyp}{=}f_2(v,a_{k},k)$ if for every $a_k \in \{0,1\}$, every $k \in \{0, \dots, K\}$, and every $v \in \textnormal{supp}(P^{\textnormal{pt},*}(V \mid A=a_k, K_{\textnormal{pt}}=k))$ it holds that
$f_1(v,a_{k},k)=f_2(v,a_{k},k)$. We write $f_1(v,a_{k},k)\stackrel{\prstudyp\prgapp}{=}f_2(v,a_{k},k)$ if the same equality holds for every $a_k \in \{0,1\}$, every $k \in \{0, \dots, K\}$, and for every $v \in \textnormal{supp}(P(V))$.

\begin{lemma}[Relations between correct specifications]\label{lemma: cp_full_implies_observed_pt}
   Consider two functions $f_1(v,a_{k},k)$ and $f_2(v,a_{k},k)$ well-defined for every $k \in \{0, \dots, K\}$ and $a_k \in \{0,1\}$, for every $v \in \textnormal{supp}(P(V))$. Then,
    \begin{align}
    f_1(v,a_{k},k)\stackrel{\prstudyp\prgapp}{=}f_2(v,a_{k},k) \Rightarrow f_1(v,a_{k},k)\stackrel{\prstudyp}{=}f_2(v,a_{k},k).
    \end{align}
    \begin{proof}
        Fix $k \in \{0, \dots, K\}$ and $a_k \in \{0,1\}$. Then
        {\footnotesize
        \begin{align*}
            &\textnormal{supp}(P^{\textnormal{pt},*}(V \mid A=a_k, K_{\textnormal{pt}}=k)) \\
            =& \textnormal{supp}(P^{\textnormal{pt},*}(V \mid A=a_k, K_{\textnormal{pt}}=k, E \leq \mathcal{T}-k)) &\Leftarrow{\textnormal{Corollary \ref{corollary: kiss_when_e_not_in_v}}}\\ 
            \subseteq& \textnormal{supp}(P^{\textnormal{pt},*}(V \mid E \leq \mathcal{T}-k)) &\Leftarrow{\textnormal{supp}(X_1 \mid X_3) \supseteq \textnormal{supp}(X_1 \mid X_2, X_3)} \\
            \subseteq& \textnormal{supp}(P^{*}(V \mid E \leq \mathcal{T}-k)) & \Leftarrow{\textnormal{def. } \eqref{eq: discrete_person_time_def}}\\
            =& \textnormal{supp}(P(V \mid E \leq \mathcal{T}-k)) & \Leftarrow{\textnormal{def. of $P^*$}} \\
            \subseteq& \textnormal{supp}(P(V)). &\Leftarrow{\textnormal{supp}(X_1) \supseteq \textnormal{supp}(X_1 \mid X_3)}
        \end{align*}
        }
        By definition, $f_1 \stackrel{\prstudyp\prgapp}{=} f_2$ implies that the equality holds for every $v \in \textnormal{supp}(P(V))$. Because we established that $\textnormal{supp}(P^{\textnormal{pt},*}(V \mid A=a_k, K_{\textnormal{pt}}=k)) \subseteq \textnormal{supp}(P(V))$, the equality necessarily holds for every $v \in \textnormal{supp}(P^{\textnormal{pt},*}(V \mid A=a_k, K_{\textnormal{pt}}=k))$ as well. This directly satisfies the definition of $\stackrel{\prstudyp}{=}$, concluding the proof.
    \end{proof}
\end{lemma}

A counterexample showing that $f_1(v,a_k,k)\stackrel{\prstudyp}{=}f_2(v,a_{k},k)$ does not imply $f_1(v,a_{k},k)\stackrel{\prstudyp\prgapp}{=}f_2(v,a_{k},k)$ is illustrated in
Figure \ref{fig: not.i.3}.
\begin{landscape}
   \begin{figure}[H]
\includegraphics[width=20cm]{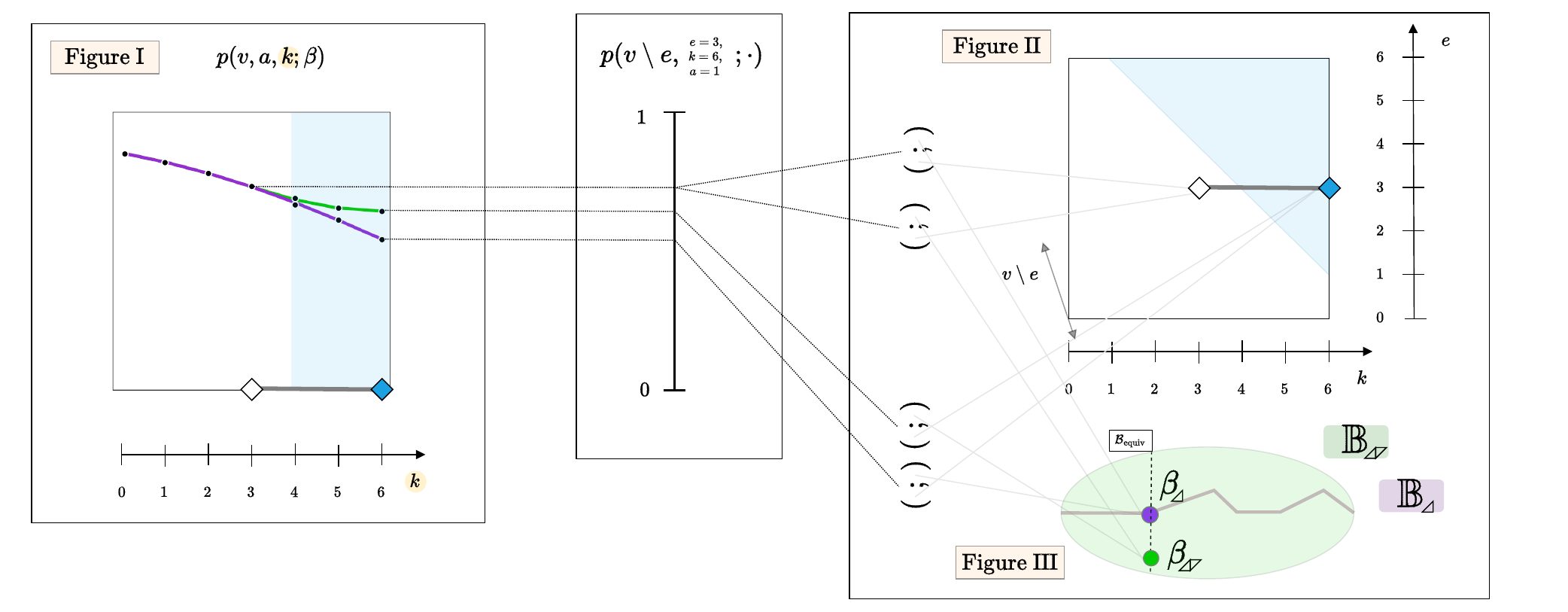}
\caption{Suppose $K=\mathcal{T}=6$ and consider, without loss of generality, a fixed $e \equiv e(v) = 3$, $v\setminus e$, and, say, $a=1$. Suppose ${\protect\mbs}^{*-\textnormal{cs}}$ and ${\protect\mbf}^{\textnormal{cs}}$ hold for two different models $\protect\mbs$ and $\protect\mbf$; let $\beta_{\protect\prstudyp} \in \mathbb{B}_{\protect\prstudyp}$ and $\beta_{\protect\prstudyp\protect\prgapp} \in \mathbb{B}_{\protect\prstudyp\protect\prgapp}$ denote the unique values for which ${\protect\mbs}^{*-\textnormal{cs}}$ and ${\protect\mbf}^{\textnormal{cs}}$ respectively hold.
Let $p(v \setminus e, e=3, a=1,k; \beta_{\protect\prstudyp\protect\prgapp}) = \textnormal{expit}\left(0.2 -0.3 \cdot k - 0.05 \cdot \mathbb{1}_{\{e+k > \mathcal{T}\}} \cdot (e+k - \mathcal{T})^{2}\right)$, $(\beta_{K}, \beta_{\mathcal{T}>}) = (-0.3, - 0.05)$.  The functions $k \mapsto p(v \setminus e, e=3, a=1,k;\textcolor{violet}{\beta_{\protect\prstudyp}}) = \textnormal{expit}\left(0.2 -0.3 \cdot k \right)$ and $k \mapsto p(v \setminus e, e=3, a=1,k;\textcolor{Green}{\beta_{\protect\prstudyp\protect\prgapp}}) = \textnormal{expit}\left(0.2 -0.3 \cdot k - 0.05 \cdot \mathbb{1}_{\{e+k > \mathcal{T}\}} \cdot (e+k - \mathcal{T})^{2}\right)$ in \textbf{Figure I}, on the left-hand side, correspond respectively to the violet and green curves for $e=3$, $a=1$, and the given $v\setminus e$. In keeping with condition ${\protect\mbs}^{*-\textnormal{cs}}$, $p(v,a,k;\beta_{\protect\prstudyp}) = p(v,a,k;\beta_{\protect\prstudyp\protect\prgapp})$ for $(v\setminus e,e=3,a=1,k)$ such that $k \in \{0,1,2,3\}$; however, $p(v,a,k;\beta_{\protect\prstudyp}) \neq p(v,a,k;\beta_{\protect\prstudyp\protect\prgapp})$ for $(v\setminus e,e=3,a=1,k)$ such that $k \in \{4,5,6\}$. Hence, ${\protect\mbs}^{\textnormal{cs}}$ does not hold.  \textbf{Figure II}, on the right-hand side, clarifies that the two curves differ when evaluated in the blue region, corresponding to the set of points that cannot be possibly realized from $P^{\textnormal{pt},*}$. \textbf{Figure III} illustrates the relation between $\protect\mbs$ and $\protect\mbf$.
}

\label{fig: not.i.3}
\end{figure} 
\end{landscape}

\subsubsection{Staggered-entry characterization for MSM-hazard procedures}
\label{sec_app: staggered_entry_characterization}
Throughout, it is assumed that \CTC{} are present (i.e., \CTC{}, as an assumption, holds). We will also consider the following regularity condition:
\begin{align}\label{ass: regularity_support}
\textnormal{supp}(P^*(V)) = \textnormal{supp}(P^{\textnormal{pt},*}(V \mid K_{\textnormal{pt}} =k, A=a_k)), \notag \\
\textnormal{ for every } k \in \{0, \dots, K\}, \, a_k \in \{0,1\}. \tag{RSUP}
\end{align}
Since $\textnormal{supp}(P^*(V))=\textnormal{supp}(P(V))$, it follows from \eqref{eq: secc} that \ref{ass: regularity_support} is automatically violated if $E \in V$. If instead $E \notin V$, it can be shown that for any two functions $f_1$ and $f_2$,  $f_1(v,a_k,k) \stackrel{\prstudyp}{=} f_2(v,a_k,k) \Leftrightarrow f_1(v,a_k,k) \stackrel{\prstudyp\prgapp}{=} f_2(v,a_k,k)$. \footnote{For any two functions, $f_1(v,a_{k},k)$ and $f_2(v,a_{k},k)$, we write $f_1(v,a_{k},k)\stackrel{\prstudyp}{=}f_2(v,a_{k},k)$ if for every $a_k \in \{0,1\}$, every $k \in \{0, \dots, K\}$, and every $v \in \textnormal{supp}(P^{\textnormal{pt},*}(V \mid A=a_k, K_{\textnormal{pt}}=k))$ it holds that
$f_1(v,a_{k},k)=f_2(v,a_{k},k)$. We write $f_1(v,a_{k},k)\stackrel{\prstudyp\prgapp}{=}f_2(v,a_{k},k)$ if the same equality holds for every $a_k \in \{0,1\}$, every $k \in \{0, \dots, K\}$, and for every $v \in \textnormal{supp}(P(V))$.}

\begin{lemma}[Omission of $E$ in $V$]\label{lemma: omission_e_is_problematic}
Suppose $\RAID$ and \ref{ass: regularity_support} hold. 

\begin{enumerate}
    \item If ${\mbps}_{\mid V}^{*-\textnormal{cs}}$ holds with $E \notin V$, then $\mbps \ni p(v,a_k,k;\beta^*) \stackrel{\prstudyp\prgapp}{=} h_{\prstudyp,k}^{a_k}(v)$.
    \item Suppose that there are no perfect cancellations, i.e., \eqref{eq: hazard_variation_dependence_ct} does not hold. If $E \notin V$, then ${\mbps}_{\mid V}^{\textnormal{cs}}$ and ${\mbps}_{\mid V}^{*\textnormal{-cs}}$ cannot simultaneously hold for the same parameter values indexing a given working regression model $\mbps$, i.e., $\beta^*=\beta_{\prstudyp\prgapp}$ cannot hold for $\beta^*,\beta_{\prstudyp\prgapp} \in \mathbb{B}^*$.
\end{enumerate}

\end{lemma}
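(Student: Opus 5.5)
Part 1 will be proved by identifying the pseudopopulation hazard on person-time. Fix $k$ and $a_k$, and take $v \in \textnormal{supp}(P^{\textnormal{pt},*}(V \mid A=a_k, K_{\textnormal{pt}}=k))$.

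First, Lemma \ref{lemma: factorization_pt_long}, eq. \eqref{eq: distintegration_conditional_2}, applied to $P^*$, gives $P^{\textnormal{pt},*}(Y=1\mid V=v,A=a_k,K_{\textnormal{pt}}=k)=h_k^{*,a_k}(v)$.

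Second, Corollary \ref{corollary: kiss_when_e_not_in_v}, eq. \eqref{eq: p_star_cond_v_not_e_2}, holds because $E\notin V$ and \hi{} is in force. It gives $h_k^{*,a_k}(v)=h_{\prstudyp,k}^{a_k}(v)$.

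Third, ${\mbps}_{\mid V}^{*\textnormal{-cs}}$ identifies $h_k^{*,a_k}(v)$ with $p(v,a_k,k;\beta^*)$ on the same support. Combining the three steps yields $p(v,a_k,k;\beta^*)\stackrel{\prstudyp}{=}h_{\prstudyp,k}^{a_k}(v)$.

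Finally, the remark preceding the lemma says that when $E\notin V$, \eqref{ass: regularity_support} makes $\stackrel{\prstudyp}{=}$ and $\stackrel{\prstudyp\prgapp}{=}$ equivalent. I would verify this briefly. Under \eqref{ass: regularity_support} the person-time support equals $\textnormal{supp}(P^*(V))=\textnormal{supp}(P(V))$, since weighting only changes the treatment and censoring conditionals. The equality therefore holds on all of $\textnormal{supp}(P(V))$, which is the claim $\stackrel{\prstudyp\prgapp}{=}$.

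Part 2 will be proved by contradiction. Suppose $\beta^*=\beta_{\prstudyp\prgapp}=:\beta$ with $E\notin V$. By ${\mbps}_{\mid V}^{\textnormal{cs}}$, $p(v,a_k,k;\beta)=h_k^{a_k}(v)$ for all $v\in\textnormal{supp}(P(V))$, all $k$ and all $a_k$. By Part 1, $p(v,a_k,k;\beta)=h_{\prstudyp,k}^{a_k}(v)$ on the same domain. Hence $h_{\prstudyp,k}^{a_k}(v)=h_k^{a_k}(v)$ for every $k\in\{1,\dots,K\}$ and every $v$.

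Proposition \ref{proposition: equivalence_hazard_under_ct} then applies. This requires \CTC{}, $E\notin V$, and noting that \hi{} lets us write $\overline{a}_m$ as $a_m$. Under $\ctFull$, or under $\ctPost$ with the stated replacement, the equality $h_{\prstudyp,m}^{a_m}(v)=h_m^{a_m}(v)$ is equivalent to the variation-dependence identity \eqref{eq: hazard_variation_dependence_ct}. The no-perfect-cancellation hypothesis says exactly that \eqref{eq: hazard_variation_dependence_ct} fails for some $m$ and $v$, which gives the contradiction. Because the two cases $\ctFull$ and $\ctPost$ exhaust \CTC{} by \eqref{expression: ctc_decomposition}, both branches close.

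The main obstacle is a bookkeeping issue. The hypothesis ``\eqref{eq: hazard_variation_dependence_ct} does not hold'' must be read as failing at some $(m,v)$ that actually lies in the relevant domain, namely $m\ge 1$ and $v\in\textnormal{supp}(P(V))$ at which both $h_{\prstudyp,m}$ and $h_m$ are well-defined. I also have to ensure that, under \eqref{ass: regularity_support}, Part 1's equality reaches that $v$. I would handle this by stating explicitly that the failure point is taken inside $\textnormal{supp}(P(V))$, where both hazards are well-defined by \eqref{ass: regularity_support} and $\RAID$. Uniqueness of $\beta$ in the correct-specification conditions then makes the contradiction rule out equality of the two parameter values, not merely of the two functions.
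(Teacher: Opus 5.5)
Your proposal is correct and follows the paper's proof. Part 1 chains ${\mbps}^{*\textnormal{-cs}}$, Corollary~\ref{corollary: kiss_when_e_not_in_v} and \eqref{ass: regularity_support} exactly as the paper does, and Part 2 derives $h_{\prstudyp,k}^{a_k}(v)\stackrel{\prstudyp\prgapp}{=}h_k^{a_k}(v)$ from $\beta^*=\beta_{\prstudyp\prgapp}$ and contradicts the no-perfect-cancellation hypothesis, with your explicit appeal to Proposition~\ref{proposition: equivalence_hazard_under_ct} spelling out a step the paper leaves implicit (your closing uniqueness remark is unnecessary, since $\beta^*=\beta_{\prstudyp\prgapp}$ already forces the two fitted functions to coincide).
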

\begin{proof}
Suppose ${\mbps}_{\mid V}^{*\textnormal{-cs}}$ holds with $E \notin V$. Then, for some $\beta^* \in \mathbb{B}^*$ 
\begin{align*}
    {\mbps}_{\mid V}^{*\textnormal{-cs}} \Rightarrow \mbps \ni p(v,a_k,k;\beta^*) \stackrel{\prstudyp}{=}& h_k^{*,a_k}(v) \\
    \stackrel{\prstudyp}{=}&h_{\prstudyp,k}^{a_k}(v) &\Leftarrow{\textnormal{Corollary }\ref{corollary: kiss_when_e_not_in_v}, \RAID}.\\
    \stackrel{\prstudyp\prgapp}{=}&h_{\prstudyp,k}^{a_k}(v) &\Leftarrow{\ref{ass: regularity_support}}.
\end{align*}
This proves the first point. 

Next, suppose ${\mbps}_{\mid V}^{\textnormal{cs}}$ holds with $E \notin V$. Let $\beta^*, \beta_{\prstudyp\prgapp} \in \mathbb{B}^*$. If $\beta^* = \beta_{\prstudyp\prgapp}$, then

\begin{align*}
    h_{\prstudyp,k}^{a_k}(v) \stackrel{\prstudyp\prgapp}{=} p(v,a_k,k;\beta^*) 
    \stackrel{\prstudyp\prgapp}{=}p(v,a_k,k;\beta_{\prstudyp\prgapp}) \stackrel{\prstudyp\prgapp}{=} h_k^{a_k}(v), 
\end{align*}
i.e., $h_{\prstudyp,k}^{a_k}(v)\stackrel{\prstudyp\prgapp}{=}h_k^{a_k}(v)$, which is absurd given that we assumed \eqref{eq: hazard_variation_dependence_ct} does not hold. This establishes the second point.
\end{proof}

\clearpage
\begin{proposition*}[Restatement of MSM-hazard identification]
\label{proposition: proof_characterization}
Suppose $\RAID$. If ${\mbps}^{*\textnormal{-cs}}$ holds and $\beta^* \in \mathbb{B}^*$ is identifiable from $P^{\textnormal{pt},*} \equiv P^{\textnormal{pt},*}(P^*)$, then, under the following regularity conditions: 
\begin{enumerate}
    \item \ref{ass: regularity_support} holds for $E \notin V$;
    \item $p(v,a_k,k;\beta^*) < 1$ for every $k\in\{0, \dots, K\}$ and $a_k \in \{0,1\}$ and for every $v \in \textnormal{supp}(P(V))$;
    \item no perfect cancellations, i.e., if neither
    {\scriptsize
    \begin{align*}
        (\textnormal{pc}_i) & \equiv \expectation\left(\prod_{m=0}^k (1 - h_{m}^{a_m}(V))\right) = \expectation\left(\prod_{m=0}^k (1 - h_{\prstudyp, m}^{a_m}(V))\right), \textit{ nor} \\
        (\textnormal{pc}_{ii}) &\equiv  \expectation \left( \mathbb{1}_{\{k > \mathcal{T} - E \}} \prod_{m=0}^{k}(1 - p(V,a_m,m;\beta^*)) \cdot \left( \prod_{m=\mathcal{T}-E+1}^{k}\frac{1 - h_m^{a_m}(V)}{1 - p(V,a_m,m;\beta^*)}- 1\right)\right) = 0
    \end{align*}
    }
holds for every $k \in \{0, \dots, K\}$;
\end{enumerate}
we have
\begin{align*}
{\mbps}_{\mid V}^{\textnormal{cs}} \textnormal{ with }
    \textnormal{$E \in V$ } \quad \Leftrightarrow \quad S_k^{\overline{a}_k} = S_k^{*, \textnormal{msm}, \overline{a}_k}\\ \small{\textnormal{for every } k \in \{0, \dots, K\} \textnormal{ and }  \overline{a}_k \in \{\overline{0}, \overline{1}\}}.
\end{align*}
\begin{proof}
For $``\Rightarrow''$. Let $k \in \{0, \dots, K\}$, $\overline{a}_k \in \{\overline{0}, \overline{1}\}$. %
We have
\begin{align*}
    S_k^{*,\overline{a}_k} =& \expectation\left(\prod_{m=0}^k (1 - p(V,a_m,m;\beta^*))\right) &\Leftarrow{\textnormal{def. } \eqref{eq: survival_star}}\\
    =&\expectation\left(\prod_{m=0}^k (1 - p(V,a_m,m;\beta_{\prstudyp\prgapp}))\right) &\Leftarrow{(i)} \\
    =&\expectation\left(\prod_{m=0}^k (1 - h_m^{a_m}(V))\right)  &\Leftarrow{{\mb}^{\textnormal{cs}}} \\
    =& S_k^{\overline{a}_k}.
\end{align*}
where $(i)$ denotes the following implication $(i) \equiv [$
{\footnotesize
\begin{align*}
\textnormal{for some } \beta^*, \beta_{\prstudyp\prgapp} \in \mathbb{B}^*,&\\
    \mbps^{*-\textnormal{cs}} \Rightarrow p(v,a_k,k;\beta^*) \stackrel{\prstudyp}{=}&h_k^{*,a_k}(v) \\
    \stackrel{\prstudyp}{=}&h_k^{a_k}(v) & \Leftarrow{\textnormal{Corollary \ref{corollary: kiss_in_msm_hazard_models}}, \RAID}\\
    \stackrel{\prstudyp}{=}&p(v,a_k,k;\beta_{\prstudyp\prgapp}) & \Leftarrow{\mbps^{\textnormal{cs}}} \\
\Rightarrow& \beta^*= \beta_{\prstudyp\prgapp}].
\end{align*}}

For $``\Leftarrow''$. Consider the contrapositive statement:
\begin{align*}
    \neg \left[ {\mbps}_{\mid V}^{\textnormal{cs}} \textnormal{ with }
    \textnormal{$E \in V$ } \right] \Rightarrow S_k^{\overline{a}_k} \neq S_k^{*,\overline{a}_k}
\end{align*}
for some $k \in \{0, \dots, K\}$ and $\overline{a}_k \in \{\overline{0}, \overline{1}\}$. We consider two cases: either $E \notin V$, or
$E \in V$ and ${\mbps}_{\mid V}^{\textnormal{cs}}$ does not hold.

\begin{enumerate}
    \item Suppose $E \notin V$. By
    ${\mbps}_{\mid V}^{*\textnormal{-cs}}$, Lemma
    \ref{lemma: omission_e_is_problematic}, and
    \ref{ass: regularity_support},
    \begin{align*}
        p(v,a_k,k;\beta^*)
        \stackrel{\prstudyp\prgapp}{=}
        h_{\prstudyp,k}^{a_k}(v).
    \end{align*}
    Hence, for every $k \in \{0,\dots,K\}$,
    \begin{align*}
    S_k^{*,\overline{a}_k}
    =&
    \expectation\left(
        \prod_{m=0}^k
        (1-p(V,a_m,m;\beta^*))
    \right)
    &&\Leftarrow{\textnormal{def. } \eqref{eq: survival_star}}
    \\
    =&
    \expectation\left(
        \prod_{m=0}^k
        (1-h_{\prstudyp,m}^{a_m}(V))
    \right)
    &&\Leftarrow{
        {\mbps}_{\mid V}^{*\textnormal{-cs}},
        \textnormal{ Lemma }\ref{lemma: omission_e_is_problematic}
    }.
    \end{align*}
    Therefore, since $(\textnormal{pc}_i)$ does not hold for every
    $k \in \{0,\dots,K\}$, there exists some $k$ such that
    \begin{align*}
        S_k^{*,\overline{a}_k}
        \neq
        \expectation\left(
            \prod_{m=0}^k
            (1-h_m^{a_m}(V))
        \right)
        =
        S_k^{\overline{a}_k}.
    \end{align*}

    \item Suppose $E \in V$ and
    ${\mbps}_{\mid V}^{\textnormal{cs}}$ does not hold.
    Let $k \in \{1,\dots,K\}$. Then,
    {\footnotesize
    \begin{align*}
        S_k^{\overline{a}_k}
        =&
        \expectation\left(
            \prod_{m=0}^k
            (1-h_m^{a_m}(V))
        \right)
        \\
        =&
        \expectation\Bigg(
        \mathbb{1}_{\{k > \mathcal{T}-E\}}
        \prod_{m=0}^{\mathcal{T}-E}
        (1-h_m^{a_m}(V))
        \prod_{m=\mathcal{T}-E+1}^{k}
        (1-h_m^{a_m}(V))
        \\
        &\qquad\qquad+
        \mathbb{1}_{\{k \leq \mathcal{T}-E\}}
        \prod_{m=0}^{k}
        (1-h_m^{a_m}(V))
        \Bigg)
        &&\Leftarrow{(ii)}
        \\
        =&
        \expectation\Bigg(
        \mathbb{1}_{\{k > \mathcal{T}-E\}}
        \prod_{m=0}^{\mathcal{T}-E}
        (1-p(V,a_m,m;\beta^*))
        \prod_{m=\mathcal{T}-E+1}^{k}
        (1-h_m^{a_m}(V))
        \\
        &\qquad\qquad+
        \mathbb{1}_{\{k \leq \mathcal{T}-E\}}
        \prod_{m=0}^{k}
        (1-p(V,a_m,m;\beta^*))
        \Bigg)
        &&\Leftarrow{(iii)},
    \end{align*}
    }
    where
    {\footnotesize
    \begin{align*}
        (ii) \equiv&
        \ 1 =
        \mathbb{1}_{\{k > \mathcal{T}-E\}}
        +
        \mathbb{1}_{\{k \leq \mathcal{T}-E\}},
        \\
        (iii) \equiv&
        \left[
        p(v,a_k,k;\beta^*)
        \stackrel{\prstudyp}{=}
        h_k^{*,a_k}(v)
        \stackrel{\prstudyp}{=}
        h_k^{a_k}(v)
        \textnormal{ for every }a_k\in\{0,1\},
        \ k\in\{0,\dots,K\}
        \right]
        \\
        &\hspace{5cm}
        \Leftarrow
        \RAID
        \textnormal{ and }
        \mbps^{*-\textnormal{cs}}.
    \end{align*}
    }
    Thus,
    {\scriptsize
    \begin{align}
    \label{eq: case_3_characterization}
        S_k^{\overline{a}_k}
        -
        S_k^{\overline{a}_k,*}
        =&
        \expectation\Bigg(
        \mathbb{1}_{\{k > \mathcal{T}-E\}}
        \prod_{m=0}^{\mathcal{T}-E}
        (1-p(V,a_m,m;\beta^*))
        \prod_{m=\mathcal{T}-E+1}^{k}
        (1-h_m^{a_m}(V))
        \notag\\
        &\qquad-
        \mathbb{1}_{\{k > \mathcal{T}-E\}}
        \prod_{m=0}^{k}
        (1-p(V,a_m,m;\beta^*))
        \Bigg)
        \notag\\
        =&
        \expectation\left(
        \mathbb{1}_{\{k > \mathcal{T}-E\}}
        \prod_{m=0}^{k}
        (1-p(V,a_m,m;\beta^*))
        \left[
        \prod_{m=\mathcal{T}-E+1}^{k}
        \frac{1-h_m^{a_m}(V)}
             {1-p(V,a_m,m;\beta^*)}
        -1
        \right]
        \right)
        \\
        \neq&\ 0
        \qquad\Leftarrow{\neg(\textnormal{pc}_{ii})}.
        \notag
    \end{align}
    }
\end{enumerate}

Thus, whenever
$\neg[{\mbps}_{\mid V}^{\textnormal{cs}}\textnormal{ with }E\in V]$
holds, either $E\notin V$, in which case
$\neg(\textnormal{pc}_i)$ yields the result, or $E\in V$ and
${\mbps}_{\mid V}^{\textnormal{cs}}$ fails, in which case
$\neg(\textnormal{pc}_{ii})$ yields the result. This concludes the proof.
\end{proof}
\end{proposition*}

The same arguments apply to settings with a point treatment $A$. We define the point-treatment setting as the case where $A \coloneqq A_0$, i.e., the treatment received at baseline, and $A_k = A_0$ for all $k \in \{1, \dots, K\}$.

\textbf{Remarks on the conditions involving perfect cancellations:}
\begin{itemize}
    \item $(\textnormal{pc}_{i})$ will generally not hold under \CTC{}, because equality of the two marginal survival expressions requires a particular cancellation of the differences between the full and study-period hazards. See for example eq. \eqref{eq: hazard_variation_dependence_wsc_main} in Section \ref{sec: wsc_procedures} of the main text.
    \item  \eqref{eq: case_3_characterization} is defined as the expectation of the product of two functions: $\mathbb{1}_{\{k > \mathcal{T} - E \}} \prod_{m=0}^{k}(1 - p(V,a_m,m;\beta^*))$ and $\left( \prod_{m=\mathcal{T}-E+1}^{k}\frac{1 - h_m^{a_m}(V)}{1 - p(V,a_m,m;\beta^*)}- 1\right)$. Since, by hypothesis, $p(v,a_m,m;\beta^*) < 1$ and (when ${\mbps}_{\mid V}^{\textnormal{cs}}$ with $E \in V$ does not hold) $p(v,a_m,m;\beta^*) = h_m^{a_m}(v)$ do not hold for every $v \in \textnormal{supp}(P(V))$, then the integrand will generally be non-zero on part of the post-study region. Its expectation can nevertheless equal zero through cancellation but condition $(\textnormal{pc}_{ii})$ explicitly rules out this possibility.
\end{itemize}

\textbf{To establish bounds}, consider the case where $E \in V$. Let $\overline{f}_k = \{f_0(v), f_1(v), \dots, f_k(v)\}$ be a vector of real-valued functions, and define the functional:
{\scriptsize
    \begin{align*}
    \Gamma(\overline{f}_k)&=\expectation \left( \mathbb{1}_{\{k > \mathcal{T} - E \}} \prod_{m=0}^{k}(1 - p(V,a_m,m;\beta^*)) \cdot \left( \prod_{m=\mathcal{T}-E+1}^{k}\frac{1 - f_m(V) \cdot p(V,a_m,m;\beta^*)}{1 - p(V,a_m,m;\beta^*)}- 1\right)\right).
    \end{align*}
    }
\begin{corollary}[Bounds]
\label{corollary: bounds_ass_characterization_not_met}
Consider the assumptions in Proposition
\ref{proposition: proof_characterization}. Suppose $E \in V$ and ${\mbps}^{\textnormal{cs}}$ does not hold.
Let $k \in \{1, \dots, K\}$. Suppose two vectors of functions,
$\overline{l}_k^{\Gamma}$ and $\overline{u}_k^{\Gamma}$, are such that
\begin{align*}
h_{m}^{a_m}(v)
\in
[\textnormal{l}_m^{\Gamma}(v) \cdot p(v,a_m,m;\beta^*),
 \textnormal{u}_m^{\Gamma}(v) \cdot p(v,a_m,m;\beta^*)]
\end{align*}
for every $(v,a_m,m)$ such that $e+m>\mathcal{T}$ and every
$m\in\{1,\dots,k\}$, with
\begin{align*}
0 \leq
\textnormal{l}_m^{\Gamma}(v) \cdot p(v,a_m,m;\beta^*)
\leq
\textnormal{u}_m^{\Gamma}(v) \cdot p(v,a_m,m;\beta^*)
\leq 1.
\end{align*}
Then,
\begin{align*}
    \Gamma(\overline{u}_k^{\Gamma})
    \leq
    S_k^{\overline{a}_k} - S_k^{\overline{a}_k,*}
    \leq
    \Gamma(\overline{l}_k^{\Gamma}).
\end{align*}

\begin{proof}
Let $k \in \{1, \dots, K\}$. In the proof of Proposition
\ref{proposition: proof_characterization}, we established the identity
$S_k^{\overline{a}_k} - S_k^{\overline{a}_k,*} =$ 
\eqref{eq: case_3_characterization}.
Because all factors $(1-p(\cdot;\beta^*))$ are non-negative, the
expectation is a monotonically decreasing function of the true hazards
$h_m^{a_m}(V)$. Substituting $u_m^{\Gamma}(V)$ for $f_m(V)$ in
$\Gamma(\cdot)$ minimizes
\begin{align*}
    \prod_{m=\mathcal{T}-E+1}^{k}
    \frac{1-f_m(V)\cdot p(V,a_m,m;\beta^*)}
         {1-p(V,a_m,m;\beta^*)}.
\end{align*}
Because all other terms in the expectation are functions of
$\mathbb{1}_{\{k>\mathcal{T}-E\}}$ and $p(V,a_m,m;\beta^*)$, this
minimization gives the lower bound
$\Gamma(\overline{u}_k^{\Gamma})$. Substituting
$l_m^{\Gamma}(V)$ for $f_m(V)$ maximizes the same product, giving the
upper bound $\Gamma(\overline{l}_k^{\Gamma})$.
\end{proof}
\end{corollary}\clearpage
\section{Extrapolation assumptions, c-values and extrapolation curves}
\label{sec_app: extrapolation_assumptions_and_c_values}
This appendix supplements Sections \ref{sec: extrapolation_assumptions_in_msm_hazard} and \ref{sec: c_values_and_extrapolation_curves} of the main text and is organized into four sections. In Section \ref{sec_app: scrutiny_proof_characterization}, we revisit the conditions in Proposition \ref{proposition: msm_hazard_characterization} that permit identification in MSM-hazard procedures. In Section \ref{sec_app: assessment_extrapolation_ass_msm}, we propose a user-assisted procedure for assessing the plausibility of extrapolation assumptions in MSM-hazard procedures. In Section \ref{sec_app: first_sensitivity_analysis}, we introduce a related sensitivity analysis strategy, which we argue exhibits undesirable properties. To avoid technical complications, the results of Sections \ref{sec_app: assessment_extrapolation_ass_msm} and \ref{sec_app: first_sensitivity_analysis} apply when $V$ is discrete, but can, in principle, be extended to accommodate continuous $V$. Finally, Section \ref{sec_app: c_value} presents a formal treatment of c-values and extrapolation curves, where results deferred from the main text are provided.

Unless stated otherwise, throughout we will consider MSM-hazard procedures for a working model of the type \eqref{eq: popular_working_model}:
\begin{align*}
\mbps =& \{ p(v,a,k;\beta) \coloneqq \mu^{-1}(\lambda(v,a,k;\beta)):  \beta \in \mathbb{B}^*\},
\end{align*}
for some set $\mathbb{B}^*$, link function $\mu$, and linear predictor $\lambda$.

\subsection{Conditions of Proposition \ref{proposition: proof_characterization} as extrapolation assumptions}
\label{sec_app: scrutiny_proof_characterization}
Assume $E \in V$. Consider $\stackrel{\prstudyp}{=}$ and $\stackrel{\prstudyp\prgapp}{=}$ as defined in Appendix \ref{sec_app: notion_correct_specification}. If, for given model $\mb$, condition ${\mb}^{\text{cs}}$ is met, then there must exist a unique $\beta_{\prstudyp\prgapp} \in \mathbb{B}$ such that $h_k^{a_k}(v) \stackrel{\prstudyp\prgapp}{=} p(v,a_k,k; {\beta}_{\prstudyp\prgapp}) \in \mb$. Under $\RAID$, if $\mbps^{*\text{-cs}}$ holds for some $\mbps$, then\footnote{See proof of Proposition \ref{proposition: proof_characterization} in Appendix \ref{sec_app: staggered_entry_characterization} for a justification.}
\begin{align}
     \mbps\ni p(v,a_k,k;\beta^*)\stackrel{\prstudyp}{=} h_k^{*,a_k}(v)\stackrel{\prstudyp}{=} h_k^{a_k}(v).
     \label{id: kiss_prstudyp}
\end{align}
This condition is implied by, but
does not necessarily imply that,
\begin{align}
     \mbps\ni p(v,a_k,k;\beta^*)\stackrel{\prstudyp\prgapp}{=} h_k^{a_k}(v) \text{ and } \beta^*  = \beta_{\prstudyp\prgapp},
     \label{id: kiss_prstudypprgapp}
\end{align}
that is, it does not imply that the model $\mb$ satisfying ${\mb}^{\text{cs}}$ is $\mbps$. Thus, to meet the conditions of Proposition \ref{proposition: msm_hazard_characterization}, researchers must assume the following extrapolation:
\begin{align}\label{eq: cs_extrapolation}
   \mbps\ni p(v,a_k,k;\beta^*) \stackrel{\prstudyp}{=} h_k^{a_k}(v) \Rightarrow  {\mbps}^{\text{cs}} \text{ holds.}
\end{align}

To be explicit, let us return to our running example of Section \ref{sec: illustrative_example_censoring_assumptions_ill_posed} of the main text, where $\mathcal{T}=K=2$ and $V=E$, and consider
\begin{multline}\label{eq: stylized_msm}
        \mbps = \Big\{  
        \text{logit} ( p(e,a_k,k;\beta) ) \stackrel{\prstudyp\prgapp}{=}\beta_0 \
+ \beta_{k=0} \cdot \mathbb{1}_{\{k=0\}} + \beta_{k=1} \cdot \mathbb{1}_{\{k=1\}} \\
        + \beta_{E=0} \cdot \mathbb{1}_{\{e=0\}} +  \beta_{E=1} \cdot \mathbb{1}_{\{e=1\}} +  \beta_A \cdot \mathbb{1}_{\{a_k = 1\}} \\
        \text{ for }\left(\beta_0, \beta_{k=0}, \beta_{k=1}, \beta_{E=0}, \beta_{E=1}, \beta_A \right)^{\text{T}} \equiv \beta \in  \mathbb{B}^* \subseteq \mathbb{R}^{6}
        \Big\},
    \end{multline}  
and suppose that on the basis of, say, goodness of fit tests, it is deemed to satisfy $\mbps^{*\text{-cs}}$ and $\mbps^{\text{cs}}$. Let us turn our attention to Table \ref{tab: calendar_time_running}. While $h_{k}^{a_k=1}(e) = p(e,a_k=1,k;\beta^*) \ni \mbps$ for every $(e,a_k=1,k) \in \mathcal{S}_{\prstudyp}^{\overline{1}}$, $h_{k}^{a_k=1}(e) \neq p(e,a_k=1,k;\beta^*)$ for every $(e,a_k=1,k) \in \mathcal{S}_{\prgapp}^{\overline{1}}$. In contrast, letting $\mbcirc$ be a supermodel of $\mbps$ defined as
        \begin{multline}
         \mbcirc  \coloneqq  \{  \text{logit}(p(e,a_k,k; \beta(\beta^{\circ}))) + \textcolor{gray}{\beta_{\geq 3}^{(e,k)}}(\beta^{\circ}) \cdot \mathbb{1}_{\{e+k \geq 3, a_k=1\}} \\  : 
         \beta^{\circ} \in \mathbb{B}^{\circ} \coloneqq \{ (\beta, \beta_{\geq 3}^{(2,1)}, \beta_{\geq 3}^{(1,2)}, \beta_{\geq 3}^{(2,2)}) : \beta \in \mathbb{B}^*, \textcolor{gray}{\beta_{\geq 3}^{(2,1)}, \beta_{\geq 3}^{(1,2)}, \beta_{\geq 3}^{(2,2)}}\in \mathbb{R}\}, \\
         \text{logit}(p(e,a_k,k;\beta(\beta^{\circ}))) \in \mbps \},
    \end{multline}
with $\beta(\beta^{\circ})$ and $\beta_{\geq}^{(e,k)}(\beta^{\circ})$ denoting, respectively, the coordinate projections of $\beta^{\circ}$ onto $\beta$ and $\beta_{\geq}^{(e,k)}$, we have that 
$\mbcirc \ni p(e,a_k=1,k;\beta^{\circ}=\beta_{\prstudyp\prgapp}) = h_{k}^{a_k=1}(e)$ holds for every $(e,a_k=1,k) \in \mathcal{S}_{\prstudyp}^{\overline{1}} \cup \mathcal{S}_{\prgapp}^{\overline{1}}$. Thus, in our running example, only $\mbcirc^{\text{cs}}$ holds: $\mbps^{*\text{-cs}}$ holds, but $\mbps^{\text{cs}}$ does not.

Clearly, \textit{only} the selection of $\mbps$ designed to satisfy \eqref{id: kiss_prstudyp} can be informed via the observed data law $P$. Returning to our running example, where $V=E$ and $\mathcal{T}=K=2$, for any $a_k \in \{0,1\}$ there are no observations to test whether $h_k^{a_k}(e) = p(e,a_k,k;\beta^*)$ holds for, say, $k=2$ and $e \in \{1,2\}$ because $P^{\text{pt},*}(E=1 \mid A = a_k, K_{\text{pt}}=2)=P^{\text{pt},*}(E=2 \mid A = a_k, K_{\text{pt}}=2)=0$: all $(v, a_k,k)$ such that $\tau = (e(v)+k) \in \{\mathcal{T}+1,\dots,\mathcal{T}+K\}$ cannot be observed as they are located during the post-study period. Such a condition indirectly imposes restrictions on the true functional form of $h_k^{a_k}(v)$ and, consequently, on the effects of $A_k$ on $Y_k$ that can be inferred. Whether the restrictions on the true functional form of $h_k^{a_k}(v)$ implied by $P^{\text{pt},*}$-identifiability of the working model, $\mbps$, are justified depends on the degree to which the regularity structure induced by $p(v,a_k,k;\beta^*) \in \mbps$ on $h_k^{a_k}(v)$ accords with the deemed contributions of $(v,a_k,k)$ on $h_k^{a_k}(v)$ during the post-study period. That is, the contributions $(v,a_k,k)$ on $h_k^{a_k}(v)$ during the post-study period, even under \CTC{}, must perfectly align with what is implied by $p(e,a_k,k;\beta^*) \in \mbps$ under \eqref{eq: cs_extrapolation}, i.e., there are no contributions of the form $ g(v,a_k,k) \cdot \mathbb{1}_{\{e(v)+k > \mathcal{T}\}}$ (for some non-trivially zero function $g$) on $h_k^{a_k}(v)$.

\subsection{Assessment of extrapolation assumptions in MSM-hazard procedures}
\label{sec_app: assessment_extrapolation_ass_msm}

A second feature of $P^{\text{pt},*}$ is that if 
${\mbps}^{*\text{-cs}}$ holds, then 
\begin{align*}
    p(v,a_k,k;\beta^*) 
\stackrel{\prstudyp}{=} P^{\text{pt},*}\left(Y \mid V=v, A=a_k, K_{\text{pt}} = k\right).
\end{align*}
See Proposition \ref{proposition: correspondence_score_equation} of Appendix \ref{sec_app: msm_hazard} for a formal proof. Through such a conceptualization, here, we propose a procedure to aid the assessment of \eqref{eq: cs_extrapolation} that is based on a user-assisted inspection of the behaviour of $h_k^{*,a_k}(v)$ outside the support of $P^{\text{pt},*}(V,A,K_{\text{pt}})$. Our strategy clarifies that extrapolation assumptions will likely not be met whenever characteristic features of the estimand of interest change from the study-period to the post-study period and when such features are captured as a change in the contribution of $(v,a_k,k)$ on $h_k^{a_k}(v)$ during the post-study period from what is otherwise dictated by the working model, $p(v,a_k,k;\beta^*) \in \mbps$.

\subsubsection{The procedure}
\label{sec_app: procedure_assess_extrapolation_assumption_msm_hazard}
\begin{table}[!htpt]
\centering
\scriptsize
\setlength{\tabcolsep}{6pt} 
\resizebox{\textwidth}{!}{%
\begin{tabular}{@{}lccc:cc@{}}
\hline
Calendar time year(s) & 87' & 88--89' & 90--93' & 93--95' & 95--97' \\
\hline
$\tau$ & 0 & 1 & 2 & 3 & 4 \\
Study($\bullet$) - post-study($\circ$) period & $\bullet$ & $\bullet$ & $\bullet$ & $\circ$ & $\circ$ \\
$(e,k)$ & (0,0) & (0,1), (1,0) & (1,1), (2,0), (0,2) & (2,1), (1,2) & (2,2) \\
$(e,a_k=1,k) \in\mathcal{S}_{\prstudyp}^{\overline{1}}$ & $\{(0,1,0)$ & $(0,1,1),(1,1,0)$ & $(1,1,1),(2,1,0),(0,1,2)\}$ &  &  \\ 
$(e,a_k=1,k) \in\mathcal{S}_{\prgapp}^{\overline{1}}$ &  &  &  & $\{(2,1,1),(1,1,2)$ & $\{(2,1,2)\}$ \\
$h_k^{a_k=1}(e)$ & 0.05 & 0.13, 0.05 & 0.11, 0.04, 0.29 & 0.07, 0.24 & 0.20 \\
$p(e,a_k=1,k;\beta^*) \in \mbps $ & 0.05 & 0.13, 0.05 & 0.11, 0.04, 0.29 & 0.09, 0.26 & 0.21 \\
$p(e,a_k=1,k;\beta_{\prstudyp\prgapp}) \in \mbcirc$ & 0.05 & 0.13, 0.05 & 0.11, 0.04, 0.29 & \textcolor{blue}{0.07}, \textcolor{blue}{0.24} & \textcolor{blue}{0.20} \\
\hline\hline
\multicolumn{6}{@{}l@{}}{}\\
\multicolumn{6}{@{}c@{}}{%
  \begin{tabular}{@{}lcccccc:ccc@{}}
    & $\beta_0$ & $\beta_{k=0}$ & $\beta_{k=1}$ & $\beta_{E=0}$ & $\beta_{E=1}$ & $\beta_A$ & $\beta_{\geq 3}^{(2,1)}$ & $\beta_{\geq 3}^{(1,2)}$ & $\beta_{\geq 3}^{(2,2)}$\\
    \hline
    $\mathbb{B}^* \ni \beta^*$ & $-1$ & $-2$ & $-1$ & $0.4$ & $0.25$ & $-0.3$ & $\times$ & $\times$ & $\times$ \\
    $\mathbb{B}^{\circ} \ni \beta^{\circ,*}$ & $-1$ & $-2$ & $-1$ & $0.4$ & $0.25$ & $-0.3$ & $0$ & $0$ & $0$\\
    $\mathbb{B}^{\circ} \ni \beta_{\prstudyp\prgapp}$ 
& $-1$ & $-2$ & $-1$ & $0.4$ & $0.25$ & $-0.3$ & \textcolor{blue}{$-0.287$} & \textcolor{blue}{$-0.103$} & \textcolor{blue}{$-0.086$} \\
  \end{tabular}
}\\[30pt]
\multicolumn{6}{@{}c@{}}{%
  \begin{tabular}{@{}lccccccccc@{}}
    & $\gamma_{(0,0)}^{*,\cdot}$ & $\gamma_{(0,1)}^{*,\cdot}$ & $\gamma_{(1,0)}^{*,\cdot}$ & $\gamma_{(1,1)}^{*,\cdot}$ & $\gamma_{(2,0)}^{*,\cdot}$ & $\gamma_{(0,2)}^{*,\cdot}$ & $\gamma_{e=2}^{*,\cdot}$ & $\gamma_{k=2}^{*,\cdot}$ & $\gamma_{\geq 3}^{*,\cdot}$ \\
    \hline
    $\mathbb{\Gamma}^* \ni \gamma^{*, a_k=0}$ & 0.07 & 0.17 & 0.06 & 0.15 & 0.10 & 0.20 & -0.05 & 0.15 & 0.17 \\
    $\mathbb{\Gamma}^* \ni \gamma^{*, a_k=1}$ & 0.05 & 0.13 & 0.05 & 0.11 & 0.09 & 0.17 & -0.05 & 0.12 & 0.14 \\
  \end{tabular}
}\\[8pt]
\hline
\end{tabular}
}
\caption{Stylized scenario drawn from \citep{hernan_marginal_2000}. The model $\mbps$ and the parameter $\beta^*$ inspired by \citet[Table~2, p.~564]{hernan_marginal_2000}. In the original analysis, time intervals differ in length; here, to facilitate the illustration of our points, we assumed time points are equally-spaced.}
\label{tab: calendar_time_running}
\end{table}

To illustrate this, consider our running example and let us place ourselves from the perspective of a researcher in \citet{hernan_marginal_2000} analyzing data from the MACS cohort under the working model $\mbps$. Our procedure requires the following ordered steps. 
\begin{enumerate}
\item Select a set of terms of the type $\mathbb{1}_{\{e+k>\mathcal{T}\}} \cdot g$ for some function $g$ through which meaningful \textit{deviations}
in the contributions of $v,a_k,k$ on $h_k^{a_k}(v)$ for $(v,a_k,k)$ in $\mathcal{S}_{\prgapp}^{\overline{a}_k}$, relative to what is dictated by $p(e,a_k,k;\beta^*)$, can be made explicit. Call $\mathcal{U}$ the set of such \textit{calendar-time-interaction terms}. 
Returning to our running example, during the time period 93'-95', here $\tau=3$, combined therapy (e.g., zidovudine + ddl) constituted the most prevalent treatment \citep{detels_effectiveness_1998} in the MACS cohort. 
 \citep{hernan_marginal_2000} assumed that patients remain on zidovudine once they start it. Thus, it is possible that a subset of individuals who initiated monotherapy at say, $\tau=2$, received combined therapy, i.e., zidovudine, during or after $\tau=3$. Hence, if interest lies in zidovudine administered as either monotherapy or combined therapy, then (given that combined therapy has been shown to improve survival \citep{darbyshire_delta_1996}) subject-matter experts can inspect whether the hazards 
 implied by the procedure coincide with those deemed to hold given such state of knowledge; that is, subject matter experts can ascertain whether
$p(e,a_k,k;\beta^*)$ actually accord with what $h_k^{a_k=1}(e)$ is deemed to be for $(v,a_k,k) \in \mathcal{S}_{\prgapp}^{\overline{1}}$, through the calendar-time-interaction term $\mathbb{1}_{\{e+k \geq 3\}}\mathbb{1}_{\{a_k=1\}}=\mathbb{1}_{\{e+k \geq 3, a_k=1\}} \in \mathcal{U}$. 

\item Select a set of variables
$\mathcal{B}_{\mathcal{U}} \equiv \mathcal{B}_{\mathcal{U}}(\mathcal{U})$, a user-specified bijective link-function $\Phi$, and a function $\gamma_{\mathcal{U}}$  
such that for any parameter $\beta \in \mathbb{B}^*$ indexing the working model $\mbps$, there exists a unique parameter $\gamma \equiv \gamma_{\mathcal{U}}(\beta)$ in the set $ \mathbb{\Gamma}$ indexing the saturated model with variables $\mathcal{B}_{\mathcal{U}}$,
$\mathcal{M}_{\mathcal{U}}(\mathbb{\Gamma}) \coloneqq \{ \Psi(e,a_k,k;\gamma) \equiv \sum_{v_j(e,a_k,k) \in \mathcal{B}_{\mathcal{U}}}v_j(e,a_k,k) \cdot  \gamma_{j} : (\gamma_{1}, \dots,  \gamma_{d_{\mathcal{U}}})\equiv \gamma \in \mathbb{\Gamma}\}$, satisfying 
\begin{align}\label{eq: identity_saturated_with_working}
\Psi(e,a_k,k;\gamma) \stackrel{\prstudyp\prgapp}{=} \Phi ( p(e,a_k,k;\beta)).
\end{align}
In our running example $p(e,a_k,k;\beta^*)$ is equal to
\begin{multline}\label{eq: reparametrization_new_msm_spec}
\Psi(e,a_k,k;\gamma^*) = \sum_{(e^\prime,k^{\prime}) : e^\prime + k^{\prime} \leq 2 } \mathbb{1}_{\{e=e^{\prime}, k=k^{\prime}\}} \cdot \gamma_{(e^{\prime},k^{\prime})}^{*, a_k} \\
+ \mathbb{1}_{\{e=2\}} \cdot \gamma_{e=2}^{*, a_k}  + \mathbb{1}_{\{k=2\}} \cdot \gamma_{k=2}^{*, a_k} \\
+ \mathbb{1}_{\{e+k \geq 3\}} \cdot \gamma_{\geq 3}^{*, a_k},
\end{multline}
whereby $\Phi = \text{identity}$, $\mathcal{B}_{\mathcal{U}} = \{ \mathbb{1}_{\{e=e^{\prime}, k=k^{\prime}\}}:e^{\prime}+k^{\prime} \leq 2 \} \cup \{ \mathbb{1}_{\{e=2\}}\}\cup \{\mathbb{1}_{\{k=2\}}\}\cup \left(\mathcal{U} \equiv \{\mathbb{1}_{\{e+k \geq 3 \}}\}\right)$, and $\gamma^* \coloneqq (\gamma^{*, a_k=0}, \gamma^{*, a_k=1})$ as tabulated in Table \ref{tab: calendar_time_running}. We emphasize that both the choice of the scale $\Phi$ and that of $\mathcal{U}$, rather than serving computational convenience, are intended to facilitate the assessment of whether the contributions of $(e,a_k,k) \in \mathcal{S}_{\prgapp}^{\overline{1}}$ on $h_k^{a_k}(e)$, as dictated by $p(e,a_k,k;\beta^*) \stackrel{\prstudyp\prgapp}{=} \Psi(e,a_k,k;\gamma^*)$, accord with what is deemed true on subject-matter grounds. 
\item If so, then $\mbps^{\text{cs}}$ is deemed tenable; if not, then $p(e,a_k,k;\beta^*) \stackrel{\prstudyp\prgapp}{=} h_{k}^{a_k}(e)$ is not deemed to hold and, correspondingly, $\mbps^{\text{cs}}$ is not deemed tenable. 
\end{enumerate}

\subsubsection{Existence of saturated model $\mathcal{M}_{\mathcal{U}}(\mathbb{\Gamma})$ formed by user-specified set of variables $\mathcal{U}$}

Throughout Section \ref{sec_app: limitations_first_sensitivity_analysis} of this Appendix, for ease of exposition, denote $\Omega$ the set of all possible strata $(v,a_k,k) \in \text{supp}(P(V)) \times \{0,1\} \times \{0, \dots K\}$, with $|\Omega| = N$. Let $\Omega_{\text{id}} \coloneqq \{ \omega \in \Omega : P^{\text{pt},*}(\omega) > 0 \}$ denote the combinations in the support of $P^{\text{pt},*}(V,A,K_{\text{pt}})$ (as implied by $\RAID$) and $\Omega_{\neg\text{id}} \coloneqq \{ \omega \in \Omega : P^{\text{pt},*}(\omega) = 0 \}$ such that $\Omega = \Omega_{\text{id}} \cup \Omega_{\neg\text{id}}$.

Let $\mathbf{V}_{\mathcal{M}}$ denote the design matrix for a $P^{\text{pt},*}$-identifiable model $\mbps$. 

Then, $P^{\text{pt},*}$-identifiability (which we subsequently make precise) implies that for any hazard vector 
\begin{align*}
    \mathbf{h}^* \coloneqq \text{vec}\left( \{p(v,a_k,k;\beta^*) \in \mbps : (v,a_k,k) \in \Omega\} \right) \in \mathbb{R}^N,
\end{align*}
there exists a unique $\beta^*$ such that 
\begin{align*}
    \Lambda^* \coloneqq \mu(\mathbf{h}^*) = \mathbf{V}_{\mathcal{M}} \beta^*.
\end{align*}

We want to find a set of variables $\mathcal{B}_{\mathcal{U}} \equiv \mathcal{B}_{\mathcal{U}}(\mathcal{U})$ defining the saturated model $\mathcal{M}_{\mathcal{U}}(\mathbb{\Gamma})$ indexed by $\mathbb{\Gamma}$. Let $\mathcal{U} = \{ v_1, \dots, v_m \}$ be the set of user-specified terms, and define 
\begin{align*}
    \mathbf{v}_{j} =& \text{vec}\left( \{v_j(v,a_k,k) : (v,a_k,k) \in \Omega\}\right), \quad j \in \{1, \dots, m\}, \\
    \mathbf{V}_{\mathcal{U}, \text{sub}} =& [\mathbf{v}_{1} \mid \dots \mid \mathbf{v}_{m}].
\end{align*}
Since $\Omega$ is finite, the space of all possible hazard functions is isomorphic to $\mathbb{R}^N$. Thus, on the scale induced by $\Phi$, manufacturing a saturated model containing the user-specified terms in $\mathcal{U}$ reduces to completing the corresponding vectors to a basis in $\mathbb{R}^N$.

Let, with a slight abuse of notation, $\mathbf{V}_{\mathcal{U}}$ indicate both the set of basis vectors composing the design matrix and the design matrix itself. Then,
\begin{enumerate}
    \item if $\text{rank}(\mathbf{V}_{\mathcal{U},\text{sub}}) = m$, then, by the basis extension theorem, this set can be completed to form a full saturated basis $\mathbf{V}_{\mathcal{U}} \coloneqq  [\mathbf{V}_{\mathcal{U},\text{sub}} \mid \mathbf{V}_{\mathcal{U},\text{comp}}] \in \mathbb{R}^{N \times N}$; the choice of $\mathbf{V}_{\mathcal{U},\text{comp}}$ is not unique and can be chosen to accommodate the user's preferences.
    \item if $\text{rank}(\mathbf{V}_{\mathcal{U},\text{sub}}) < m$, the set $\mathcal{U}$ induces a design matrix $\mathbf{V}_{\mathcal{U},\text{sub}}$ with linearly dependent columns and therefore the user-specified vectors cannot all be included as distinct elements of a basis of $\mathbb{R}^N$. Thus, one must redefine $\mathcal{U}$ and go to step 1.
\end{enumerate}
\subsubsection{Non-uniqueness of the parameter for a given user-specified choice of coefficients}
\label{sec_app: non_uniqueness_projection}
Suppose the user found a saturated model $\mathcal{M}_{\mathcal{U}}(\mathbb{\Gamma})$. Suppose the investigator believes the true coefficients expressed in the saturated basis are  $\gamma_{\text{u}}$. 
Let $\mathbf{h}_{\text{u}} \coloneqq \text{vec}\left( \{h_{\text{u},k}^{a_k}(v) : (v,a_k,k) \in \Omega\}\right) \in \mathbb{R}^N$ the vector of hazards specified by the user and let 

\begin{align*}
    \Lambda_{\text{u}} \coloneqq \mu(\mathbf{h}_{\text{u}}) = \mathbf{V}_{\mathcal{U}} \gamma_{\text{u}}.
\end{align*}
Suppose $\gamma_{\text{u}}$ differs from the coefficients $\gamma^*$ satisfying $\Lambda^* \coloneqq \mathbf{V}_{\mathcal{U}} \gamma^*$ (such a $\gamma^*$ always exists as $\mathbf{V}_{\mathcal{U}}$ is full rank), only in those coefficients that cannot be identified from $P^{\text{pt},*}$. 

For instance, in the running example of Section \ref{sec_app: procedure_assess_extrapolation_assumption_msm_hazard}, the coefficient $\gamma_{\geq 3}^{*,a_k=1}$, corresponding to the interaction term $\mathbb{1}_{\{e+k \geq 3,a_k=1\}}$ is not $P^{\text{pt},*}$-identifiable as $P^{\text{pt},*}( E=e, K_{\text{pt}} = k) = 0$ for $k +e \geq 3$. Then, 
in the following Proposition we show that there is no $\beta_{\text{u}} \in \mathbb{B}^*$ satisfying $\mathbf{V}_{\mathcal{U}} \gamma_{\text{u}} = \mathbf{V}_{\mathcal{M}} \beta_{\text{u}}$. 

\begin{proposition}[Projection necessity]\label{proposition: projection_necessity}
Let $\gamma_{\text{u}} = \gamma^* + \Delta_{\text{u}}$ for some $\Delta_{\text{u}} \neq 0$.
If $\Delta_{\text{u}}$ is non-zero only at components corresponding to basis functions in $\mathcal{B}_{\mathcal{U}}$ supported exclusively on $\Omega_{\neg \text{id}}$, then $\Lambda_{\text{u}} \notin \text{span}(\mathbf{V}_{\mathcal{M}})$.  
\end{proposition}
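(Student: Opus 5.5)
The plan is a proof by contradiction: assume $\Lambda_{\text{u}} \in \text{span}(\mathbf{V}_{\mathcal{M}})$ and derive $\Delta_{\text{u}} = 0$. Suppose there is a $\beta_{\text{u}}$ with $\mathbf{V}_{\mathcal{U}}\gamma_{\text{u}} = \mathbf{V}_{\mathcal{M}}\beta_{\text{u}}$. Since $\Lambda^* = \mathbf{V}_{\mathcal{M}}\beta^* = \mathbf{V}_{\mathcal{U}}\gamma^*$, subtracting the two gives
\begin{align*}
\mathbf{V}_{\mathcal{U}}\Delta_{\text{u}} = \mathbf{V}_{\mathcal{M}}(\beta_{\text{u}} - \beta^*).
\end{align*}
I will then restrict this identity to the rows indexed by $\Omega_{\text{id}}$ and use the two hypotheses separately on the two sides.

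On the left-hand side, $\Delta_{\text{u}}$ is nonzero only on basis functions supported exclusively on $\Omega_{\neg\text{id}}$. Hence every row of $\mathbf{V}_{\mathcal{U}}\Delta_{\text{u}}$ indexed by $\omega \in \Omega_{\text{id}}$ vanishes. On the right-hand side, this gives $[\mathbf{V}_{\mathcal{M}}]_{\Omega_{\text{id}}}(\beta_{\text{u}}-\beta^*) = 0$.

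The next step is to make $P^{\text{pt},*}$-identifiability of $\mbps$ precise. I take it to mean that $\beta$ is uniquely determined by the values $\{p(\omega;\beta):\omega\in\Omega_{\text{id}}\}$. Because the link $\mu$ is bijective, this is equivalent to $[\mathbf{V}_{\mathcal{M}}]_{\Omega_{\text{id}}}$ having full column rank. This matches how identifiability was used in Proposition \ref{proposition: proof_characterization}, where $\beta^*$ is pinned down from $P^{\text{pt},*}$, and Proposition \ref{proposition: correspondence_score_equation} shows this is exactly what the score equation sees. Full column rank then forces $\beta_{\text{u}}=\beta^*$.

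Substituting back, $\mathbf{V}_{\mathcal{U}}\Delta_{\text{u}} = 0$ on all of $\Omega$. Since $\mathbf{V}_{\mathcal{U}}$ is an $N\times N$ basis, i.e.\ invertible, this gives $\Delta_{\text{u}}=0$, contradicting the hypothesis $\Delta_{\text{u}}\neq 0$. Therefore $\Lambda_{\text{u}}\notin\text{span}(\mathbf{V}_{\mathcal{M}})$.

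The main obstacle is the identifiability step, since the notion was only announced (``which we subsequently make precise'') and not yet formalized. I will state it explicitly as injectivity of $\beta\mapsto(\mathbf{V}_{\mathcal{M}}\beta)_{\Omega_{\text{id}}}$. For a linear predictor this is the same as full column rank of the $\Omega_{\text{id}}$ rows. I will also note that the running example satisfies it: model \eqref{eq: stylized_msm} has 6 parameters, all estimable from the cells with $e+k\le 2$. A minor point to check is that $\mathbf{h}_{\text{u}}$ and $\mathbf{h}^*$ lie in the domain of $\mu$, so that $\Lambda_{\text{u}}$ is well defined; this holds since they are hazards in $(0,1)$.
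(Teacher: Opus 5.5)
Your proposal is correct and is essentially the paper's argument. The paper sets $\Lambda_{\Delta}=\mathbf{V}_{\mathcal{U}}\Delta_{\text{u}}$, notes that $\mathbf{P}_{\text{id}}\Lambda_{\Delta}=\mathbf{0}$, and uses identifiability in the form $\ker(\mathbf{P}_{\text{id}}\mathbf{V}_{\mathcal{M}})=\{0\}$ to force $\Lambda_{\Delta}=\mathbf{0}$, which contradicts full rank of $\mathbf{V}_{\mathcal{U}}$; your explicit subtraction of $\Lambda^*=\mathbf{V}_{\mathcal{M}}\beta^*$ also spells out the step from $\Lambda_{\Delta}\notin\text{span}(\mathbf{V}_{\mathcal{M}})$ to $\Lambda_{\text{u}}\notin\text{span}(\mathbf{V}_{\mathcal{M}})$ that the paper leaves implicit.
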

\begin{proof}
Denote $\Lambda_{\Delta} \equiv \mathbf{V}_{\mathcal{U}}\Delta_{\text{u}}$. We have
\begin{align*}
        \Lambda_{\text{u}} = \mathbf{V}_{\mathcal{U}} \gamma_{\text{u}}
= \mathbf{V}_{\mathcal{U}} \left( \gamma^* +  \Delta_{\text{u}}\right)
= \Lambda^* +  \mathbf{V}_{\mathcal{U}}\Delta_{\text{u}}
\equiv\Lambda^* + \Lambda_{\Delta}.
\end{align*}
We want to show that $\Lambda_{\Delta} \notin \text{span}(\mathbf{V}_{\mathcal{M}})$. We establish the proof by contradiction. Let $\mathbf{P}_{\text{id}}$ be the linear operator projecting onto the subspace supported on the observed law $P^{\text{pt},*}$, $\Omega_{\text{id}}$, and  recall
that a model is $P^{\text{pt},*}$-identifiable if and only if 
\begin{align}
    &\lambda(v,a,k;\beta_1) = \lambda(v,a,k;\beta_2) \quad \text{for every } (v,a_k,k) \in \Omega_{\text{id}} \Rightarrow \beta_1 = \beta_2,  \notag\\
    \Leftrightarrow&\mathbf{P}_{\text{id}} \mathbf{V}_{\mathcal{M}} \beta_1 = \mathbf{P}_{\text{id}} \mathbf{V}_{\mathcal{M}} \beta_2 \Rightarrow \beta_1 = \beta_2 \notag\\
    \Leftrightarrow& \text{rank}(\mathbf{P}_{\text{id}} \mathbf{V}_{\mathcal{M}}) = \text{rank}(\mathbf{V}_{\mathcal{M}})  \notag\\
    \Leftrightarrow& \text{ker}(\mathbf{P}_{\text{id}} \mathbf{V}_{\mathcal{M}}) = \{0\}. \label{cond: trivial_kernel}
\end{align}

Letting
{\footnotesize
\begin{align*}
    (i) \equiv& [ \mathbf{P}_{\text{id}} \Lambda_{\Delta} = \mathbf{0} \Leftarrow \Lambda_{\Delta}(v,a_k,k) = 0 \text{ for every } (v,a_k,k) \in \Omega_{\text{id}} \\
    &\text{ since } \Lambda_{\Delta} \text{ is a linear combination of basis functions supported exclusively on } \Omega_{\neg \text{id}} ],
\end{align*}
}
we have
\begin{align*}
    \Lambda_{\Delta} \in \text{span}(\mathbf{V}_{\mathcal{M}}) \Rightarrow& \Lambda_{\Delta} = \mathbf{V}_{\mathcal{M}} c,\quad  \text{for some } c &\\
    \Rightarrow &\mathbf{P}_{\text{id}} \Lambda_{\Delta} = \mathbf{P}_{\text{id}} \mathbf{V}_{\mathcal{M}} c,\quad  \text{for some } c & \\
    \Rightarrow&\mathbf{0}=\mathbf{P}_{\text{id}} \mathbf{V}_{\mathcal{M}} c,\quad  \text{for some } c&\Leftarrow{(i)} \\
        \Rightarrow&c=\mathbf{0}&\Leftarrow{\eqref{cond: trivial_kernel}} \\
    \Rightarrow& \Lambda_{\Delta} = \mathbf{0}& 
\end{align*}
which is a contradiction since $\mathbf{V}_{\mathcal{U}}$ is full rank and thus $\Delta_{\text{u}} \neq \mathbf{0} \Rightarrow \Lambda_{\Delta} \neq \mathbf{0}$.
\end{proof}

Thus, since under \CTC{} the user will likely select $\gamma_{\text{u}}$ as in the conditions of Proposition \ref{proposition: projection_necessity},
${\mbps}_{\mid V}^{\text{cs}}$ will typically fail as no $\beta_{\text{u}} \in \mathbb{B}^*$ satisfies $\mathbf{V}_{\mathcal{U}} \gamma_{\text{u}} = \mathbf{V}_{\mathcal{M}} \beta_{\text{u}}$.

\subsection{A first strategy as a sensitivity analysis for extrapolation assumptions}
\label{sec_app: first_sensitivity_analysis}
An investigator might nonetheless wish to select a $\beta_{\text{u}}$ that aligns, in a sense made precise below, with $\mathbf{h}_{\text{u}}$ implied by a given choice of $\gamma_{\text{u}}$. Returning to our running example (where $E = V$), suppose we have grounds to encode changes due to the adoption of combined therapy from $\tau = 3$ onward as a decrease in the coefficient $\gamma_{\geq 3}^{*,a_k=1}$ from $0.14$ to, say, $0.12 \eqqcolon \gamma_{\text{u}, \geq 3}^{a_k=1}$. This amounts to assuming that changes in the standard of care not captured by statistical procedures based on $P^{\text{o}}$ reduce the linear predictor by $0.14 - 0.12 = 0.02$ for any individual at risk at any calendar time $\tau \in \{3,4\}$. The analogy with $\beta_A^*$ (up to scale) is straightforward: at \textit{any individual-specific follow-up time} $k$, $\beta_A^* = - 0.3$ in \eqref{eq: stylized_msm} implies a reduction in the rate of failure due to receiving treatment, in the logit scale, of 0.3. Letting $\gamma_{\text{u}} \coloneqq (\gamma^* \setminus \gamma_{\geq 3}^{*,a_k=1}, \gamma_{\text{u}, \geq 3}^{a_k=1})$ and $h_{\text{u},k}^{a_k}(e) \coloneqq \Phi^{-1}\Psi(e,a_k,k;\gamma_{\text{u}})$, we can choose a $\beta_{\text{u}}$ that aligns with 
\begin{align*}
    \mathbf{h}_{\text{u}} \coloneqq \textnormal{vec} \left( \{ h_{\text{u},k}^{a_k}(e) : (v \equiv e, a_k, k) \in \Omega\}\right).
\end{align*}

One way of summarizing the discrepancy between $\Lambda_{\mathrm u}$ and the working model is to focus on the treatment contrasts. Here, we illustrate this approach by projecting the stratum-specific treatment contrasts implied by $\Lambda_{\mathrm u}$ onto the space of constant treatment contrasts. We offer a high-level idea of such a strategy in the context of proportional MSM-hazard models, as in \eqref{eq: stylized_msm} or \eqref{eq: model_hernan_00}, although this approach extends to more elaborate models. Let
\begin{align*}
    \Omega^- \coloneqq \{(v,k) : (v,a_k=1,k) \in \Omega\},
\end{align*}
and define the stratum-specific treatment contrasts
\begin{align*}
    D_{\text{u}}(v,k)
    \coloneqq
    \mu(h_{\text{u},k}^{a_k=1}(v))
    -
    \mu(h_{\text{u},k}^{a_k=0}(v)),
    \qquad (v,k)\in\Omega^-.
\end{align*}
In a proportional MSM-hazard model, $\beta_{\text{u},A}$ represents a
constant treatment contrast. It can therefore be selected by projecting
the collection of contrasts
$\{D_{\text{u}}(v,k):(v,k)\in\Omega^-\}$ onto the space of constant treatment contrasts. 

For nonnegative weights $w_{(v,k)}$ satisfying
\begin{align*}
    w_{(v,k)} &\geq 0, \qquad (v,k)\in\Omega^-, \\
    \sum_{(v,k)\in\Omega^-} w_{(v,k)} &= 1,
\end{align*}
consider
\begin{align*}
    \beta_{\text{u},A}
    =
    \arg\min_{b\in\mathbb R}
    \sum_{(v,k)\in\Omega^-}
    w_{(v,k)}
    \cdot {\left( D_{\text{u}}(v,k)-b\right)}^2.
\end{align*}
The solution is
\begin{align*}
    \beta_{\text{u},A}
    =
    \sum_{(v,k)\in\Omega^-}
    w_{(v,k)} \cdot D_{\text{u}}(v,k).
\end{align*}

For example, placing all weight on a $(v,k)$ pair with the smallest
treatment contrast gives
\begin{align*}
    \beta_{\text{u},A}
    =
    \inf_{(v,k)\in\Omega^-}
    \left(
    \mu(h_{\text{u},k}^{a_k=1}(v))
    -
    \mu(h_{\text{u},k}^{a_k=0}(v))
    \right),
\end{align*}
i.e., the largest treatment benefit when negative contrasts are
beneficial. Vice versa, placing all weight on a $(v,k)$ pair with the largest treatment contrast gives
\begin{align*}
    \beta_{\text{u},A}
    =
    \sup_{(v,k)\in\Omega^-}
    \left(
    \mu(h_{\text{u},k}^{a_k=1}(v))
    -
    \mu(h_{\text{u},k}^{a_k=0}(v))
    \right).
\end{align*}
Finally, applying uniform weights,
$w_{(v,k)}=|\Omega^-|^{-1}$, gives
\begin{align*}
    \beta_{\text{u},A}
    =
    \frac{1}{|\Omega^-|}
    \sum_{(v,k)\in\Omega^-}
    \left(
    \mu(h_{\text{u},k}^{a_k=1}(v))
    -
    \mu(h_{\text{u},k}^{a_k=0}(v))
    \right).
\end{align*}
Thus, $\beta_{\text{u},A}$ represents the unweighted average of the
stratum-specific treatment contrasts on the additive link-function
(e.g., logit) scale.

We emphasize that all three choices above have properties that differ
from those implied by the extrapolation assumptions. Under
${\mbps}^{\text{cs}}$,
\begin{align*}
    &\operatorname{logit}\left(p(v,a_k=1,k;\beta^*)\right)
    -
    \operatorname{logit}\left(p(v,a_k=0,k;\beta^*)\right)
    =
    \beta_A^*
\end{align*}
for every stratum $V=v$ and every follow-up time $k$, i.e., $\beta_A^*$ represents a treatment contrast that is constant across $v$ and $k$.

\subsubsection{Limitations of the projection-based strategy}
\label{sec_app: limitations_first_sensitivity_analysis}
The procedure presented in the preceding section requires (i) to find a saturated model $\mathcal{M}_{\mathcal{U}}(\mathbb{\Gamma})$, (ii) quantitatively encode post-study variations in this model, (iii) select an appropriate projection, and (iv) interpret the resulting parameter $\beta_{\text{u},A}$ as a function of the chosen projection.

These requirements present substantial practical and theoretical challenges. Even if a saturated model can be specified, applied researchers often lack the precise quantitative knowledge required to meaningfully encode unobserved post-study variations. Furthermore, the choice of projection determines the interpretation of $\beta_{\text{u}, A}$, i.e., the population for which the effects are defined. It is highly impractical to expect applied researchers to contextually justify a specific projection matrix for every analysis. Furthermore, even if we, by default, resorted to, say, the equal-weight least squares solution projection which leads to $\beta_{\text{u},A} = {\mid \Omega^- \mid}^{-1} \cdot \sum_{(v,k) \in \Omega^-}\left(\mu(h_{\text{u},k}^{a_k=1}(v)) - \mu(h_{\text{u},k}^{a_k=0}(v))\right)$, then a decision based on parameter $\beta_{\text{u},A}$ might be sensitive to the values $h_{\text{u},k}^{a_k}(v)$ assumed near the boundaries, $\{0,1\}$. This carries the implication that it might always be possible to find a parameter $\beta_{\text{u},A}$ whose sign is opposed to that of $\beta_{A}^*$. As an example, consider the following heuristic argument. Because the logit link function $\mu(x) = \log({x} \cdot {(1-x)}^{-1})$ diverges towards $+\infty$ as $x \to 1$ and $-\infty$ as $x \to 0$, positing that even a single stratum $(v', k')$ experiences an unobserved post-study hazard under treatment $h_{\text{u},k'}^{a_k=1}(v')$ extremely close to $1$ (while the untreated hazard $h_{\text{u},k'}^{a_k=0}(v')$ remains moderate) produces an arbitrarily large positive value for $\mu(h_{\text{u},k'}^{a_k=1}(v')) - \mu(h_{\text{u},k'}^{a_k=0}(v'))$. Because $\beta_{\text{u},A}$ is a simple arithmetic mean, this single extreme log-odds difference can easily dominate the entire sum, leading to a positive $\beta_{\text{u},A}$, even if the treatment effect remains strictly negative (with negative logit differences) for all other $\mid \Omega^- \mid -1$ observations in $\Omega^-$.

This also makes it clear that if we were to define a measure of extrapolation based on an automation of the above user-assisted procedure introduced in Section \ref{sec_app: procedure_assess_extrapolation_assumption_msm_hazard}, it would come with undesirable properties.\footnote{For example, one can consider $\gamma_{\text{u}}$ of the form $\gamma_{\text{u}} = \gamma^* + \Delta_{\text{u}}$ where the components of $\Delta_{\text{u}}$ corresponding to $(v,a_k,k) \in \Omega_{\text{id}}$ are zero and those $(v,a_k,k) \in \Omega_{\neg \text{id}}$  are equal to $\Delta_{\neg \text{id}}^{a=0}$ for $a_k=0$ and $\Delta_{\neg \text{id}}^{a=1}$ for $a_k=1$, with $\Delta_{\neg \text{id}}^{a=0}$ and $\Delta_{\neg \text{id}}^{a=1}$ two constants.}

\subsection{C-values}
\label{sec_app: c_value}

\subsubsection{Preliminaries}
\label{sec_app: c_values_preliminaries}
Let $E \in V \subseteq L_0$. For every $k \in \{0, \dots, K\}$ we will consider two strategies, $\overline{a}_k \in \{\overline{0}, \overline{1}\}$. To lighten notation, we will therefore use the symbol $a \in \{0,1\}$ ($\overline{a}_k = \overline{1} \cdot a$) and we let $R_{\prstudyp,k}^{\rightsquigarrow, a} \equiv {\prstudyp}_k^{a}$ and $R_{\prgapp,k}^{\rightsquigarrow, a} \equiv {\prgapp}_k^{a}$, so that $R_k^{\textnormal{c}, a}(\textnormal{c}_k^a) = {\prstudyp}_k^{a} + \textnormal{c}_k^a \cdot {\prgapp}_k^{a}$.

For every $a \in \{0, 1\}$, every $k \in \{0, \dots, K\}$ and every $v \in \text{supp}(P(V))$, consider:
\begin{align}\label{eq: r_k}
    h_k^{\rightsquigarrow,a}(v),& \notag\\
    S_{-1}^{\rightsquigarrow,a}(v)& \coloneqq 1, \notag\\
    S_{k}^{\rightsquigarrow,a}(v)& \coloneqq \prod_{m=0}^k(1 - h_m^{\rightsquigarrow,a}(v)) \notag, \\
    r_k^{\rightsquigarrow,a}(v)& \coloneqq S_{k-1}^{\rightsquigarrow,a}(v) \cdot  h_k^{\rightsquigarrow, a}(v).
\end{align} 
For individuals with $V=v$, $h_k^{\rightsquigarrow,a}(v)$ denotes the hazard under extrapolation assumptions (e.g., in MSM-hazard procedures $h_k^{\rightsquigarrow,a}(v) \coloneqq p(v,a_k=a,k;\beta^*)$), while $r_k^{\rightsquigarrow,a}(v)$ represents the proportion of individuals who first experience failure at follow-up time $k$ under extrapolation assumptions and it is given by the product of the survival up to the previous follow-up, $S_{k-1}^{\rightsquigarrow, a}(v)$, and the hazard at the current follow-up, $h_k^{\rightsquigarrow,a}(v)$.

Further, for every $a \in \{0, 1\}$, every $k \in \{0, \dots, K\}$ and every $v \in \text{supp}(P(V))$ let
\begin{align}
    \prstudyp_{k}^a(v) \coloneqq& \sum_{m=0}^{k} \mathbb{1}_{\{e+m \leq \mathcal{T}\}} \cdot r_m^{\rightsquigarrow,a}(v), \text{ and } \label{eq: pr_studyp_v}\\
    \prgapp_{k}^a(v) \coloneqq& \sum_{m=0}^{k} \mathbb{1}_{\{e+m > \mathcal{T}\}} \cdot r_m^{\rightsquigarrow,a}(v),\label{eq: pr_gapp_v}
\end{align}
denote the proportion of individuals with $V=v$ who experience failure by follow-up $k$ during the study and post-study period, respectively.

The following lemma permits us to express $\prstudyp_{k}^a(v)$ and $\prgapp_{k}^a(v)$ as a function of survival.

\begin{lemma}[Risk decomposition]
Let $k \in \{1, \dots, K\}$. For an individual $v$ with entry time $e \equiv e(v)$:
    \begin{align*}
    \prstudyp_{k}^a(v) =& 1 - S_{\min(k, \mathcal{T}-e)}^{\rightsquigarrow,a}(v), \\
    \prgapp_{k}^a(v) =& \mathbb{1}_{\{e+k > \mathcal{T}\}} \left(S_{\mathcal{T}-e}^{\rightsquigarrow,a}(v) - S_{k}^{\rightsquigarrow,a}(v)\right).
    \end{align*}
Marginalizing over $P(V)$ we obtain the marginal risks:
    \begin{align*}
    \prstudyp_k^a =& \expectation\left( 1 - S_{\min(k, \mathcal{T}-E)}^{\rightsquigarrow, a}(V)\right),\\
    {\prgapp}_k^{a} =& \expectation\left(\mathbb{1}_{\{E+k > \mathcal{T}\}} \cdot (S_{\mathcal{T}-E}^{\rightsquigarrow,a}(V) - S_{k}^{\rightsquigarrow,a}(V))\right).
\end{align*}
\end{lemma}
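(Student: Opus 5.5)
The plan is to work stratum by stratum, fixing $v$ with entry time $e \equiv e(v)$, and to use the telescoping identity (iii) from the proof of Proposition \ref{proposition: bias_discrete_wsc}. Applied to the extrapolated hazards $h_m^{\rightsquigarrow,a}(v)$, it gives, for every $p \ge 0$,
\begin{align*}
\sum_{m=0}^{p} r_m^{\rightsquigarrow,a}(v) = \sum_{m=0}^{p} S_{m-1}^{\rightsquigarrow,a}(v)\, h_m^{\rightsquigarrow,a}(v) = 1 - S_{p}^{\rightsquigarrow,a}(v).
\end{align*}
This holds because $r_m^{\rightsquigarrow,a}(v) = S_{m-1}^{\rightsquigarrow,a}(v) - S_{m}^{\rightsquigarrow,a}(v)$, so the sum telescopes, with $S_{-1}^{\rightsquigarrow,a}(v)=1$.

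\textbf{Study-period term.} The indicator $\mathbb{1}_{\{e+m \le \mathcal{T}\}}$ equals one exactly when $m \le \mathcal{T}-e$. Since $e \le \mathcal{T}$, the admissible indices in \eqref{eq: pr_studyp_v} are $m \in \{0,\dots,\min(k,\mathcal{T}-e)\}$, which is a nonempty initial segment. The telescoping identity with $p=\min(k,\mathcal{T}-e)$ then gives $\prstudyp_{k}^a(v) = 1 - S_{\min(k,\mathcal{T}-e)}^{\rightsquigarrow,a}(v)$.

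\textbf{Post-study term.} The admissible indices in \eqref{eq: pr_gapp_v} are $m \in \{\mathcal{T}-e+1,\dots,k\}$, and this set is empty unless $e+k>\mathcal{T}$; this accounts for the indicator. When it is nonempty, we write the sum as the difference of the two partial sums up to $k$ and up to $\mathcal{T}-e$. Applying the telescoping identity to each gives $(1-S_k^{\rightsquigarrow,a}(v)) - (1-S_{\mathcal{T}-e}^{\rightsquigarrow,a}(v)) = S_{\mathcal{T}-e}^{\rightsquigarrow,a}(v) - S_{k}^{\rightsquigarrow,a}(v)$. Equivalently, $\prstudyp_k^a(v)+\prgapp_k^a(v) = 1-S_k^{\rightsquigarrow,a}(v)$, and subtracting the study-period term yields the same expression.

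\textbf{Marginal risks.} Here I take $\prstudyp_k^a$ and $\prgapp_k^a$ to be the $P(V)$-expectations of the stratum-specific quantities. This is consistent with $R_k^{\rightsquigarrow,a} = 1 - S_k^{*,\textnormal{msm},\overline{a}_k} = \expectation(1 - S_k^{\rightsquigarrow,a}(V))$ when $h^{\rightsquigarrow,a}_m(v) = p(v,a,m;\beta^*)$. With that reading, the marginal statements follow by replacing $v$ with $V$ (and $e$ with $E$, since $E\in V$) and taking expectations; linearity applies because every term is bounded in $[0,1]$.

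I expect no step to be a real obstacle. The only points needing care are the index bookkeeping at the boundary $m=\mathcal{T}-e$, and the fact that $\min(k,\mathcal{T}-e)\ge 0$ because $E\le\mathcal{T}$, which keeps the study-period index set nonempty.
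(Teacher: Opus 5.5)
Your proposal is correct and follows the paper's proof essentially step for step. Like the paper, you restrict the index sets via the indicators and telescope using $r_m^{\rightsquigarrow,a}(v)=S_{m-1}^{\rightsquigarrow,a}(v)-S_m^{\rightsquigarrow,a}(v)$, writing the post-study sum as the difference of the partial sums up to $k$ and up to $\mathcal{T}-e$; you then obtain the marginal statements by taking expectations over $P(V)$.
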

\begin{proof}
    Let $k \in \{1, \dots, K\}$. We have:
{\footnotesize
\begin{align*}
    \prstudyp_{k}^a(v) =& \sum_{m=0}^{k} \mathbb{1}_{\{e+m \leq \mathcal{T}\}} \cdot r_m^{\rightsquigarrow,a}(v) & \Leftarrow{\eqref{eq: pr_studyp_v}}\\
    =& \sum_{m=0}^{\min(k, \mathcal{T}-e)} r_m^{\rightsquigarrow,a}(v) \\
    =& \sum_{m=0}^{\min(k, \mathcal{T}-e)} \left( S_{m-1}^{\rightsquigarrow,a}(v) - S_{m}^{\rightsquigarrow,a}(v) \right) &\Leftarrow{r_m^{\rightsquigarrow,a}(v) = S_{m-1}^{\rightsquigarrow,a}(v) - S_{m}^{\rightsquigarrow,a}(v)}\\
    =& 1 - S_{\min(k, \mathcal{T}-e)}^{\rightsquigarrow,a}(v)&\Leftarrow{\text{telescoping}}\\
\end{align*}
}
and similarly for the post-study gap:
{\footnotesize
\begin{align*}
        \prgapp_{k}^a(v) =& \sum_{m=0}^{k} \mathbb{1}_{\{e+m > \mathcal{T}\}} \cdot r_m^{\rightsquigarrow,a}(v)& \Leftarrow{\eqref{eq: pr_gapp_v}} \\
    =& \mathbb{1}_{\{e+k> \mathcal{T}\}} \cdot \left(\sum_{m=\mathcal{T}-e+1}^{k} r_m^{\rightsquigarrow,a}(v)\right) \\
        =& \mathbb{1}_{\{e+k> \mathcal{T}\}} \cdot\left(\sum_{m=0}^{k} r_m^{\rightsquigarrow,a}(v) - \sum_{m=0}^{\mathcal{T}-e} r_m^{\rightsquigarrow,a}(v) \right)\\
    =&\mathbb{1}_{\{e+k > \mathcal{T}\}} \cdot \left(S_{\mathcal{T}-e}^{\rightsquigarrow,a}(v) - S_{k}^{\rightsquigarrow,a}(v)\right) & \Leftarrow{r_m^{\rightsquigarrow,a}(v) = S_{m-1}^{\rightsquigarrow,a}(v) - S_{m}^{\rightsquigarrow,a}(v)}.
\end{align*}
}
The remaining results follow trivially.
\end{proof}

\begin{corollary}
Let $a \in \{0,1\}$, $k \in \{1, \dots, K\}$. Suppose ${\prgapp}_k^{a} > 0$. Shifted risks are compatible with risks under extrapolation assumptions if:
\begin{align}\label{eq: bounds_shifted_gapp}
    \textnormal{c}_k^a \cdot {\prgapp}_k^{a} \leq \expectation\left(\mathbb{1}_{\{E+k > \mathcal{T}\}} \cdot S_{\mathcal{T}-E}^{\rightsquigarrow,a}(V)\right),
\end{align} 
i.e., $\textnormal{c}_k^a \cdot {\prgapp}_k^{a} $ cannot exceed the proportion of individuals alive at follow-up $\mathcal{T}-E$ under $\overline{a}_k$ and such that $E+k > \mathcal{T}$. Therefore, we will consider
\begin{align}\label{eq: max_variation_bullet}
\textnormal{c}_k^a \in& \left[0, \textnormal{c}_k^{a,+}\right] \text{ with } \notag \\
 {\textnormal{c}}_k^{a,+} \coloneqq& \frac{\expectation\left(\mathbb{1}_{\{E+k > \mathcal{T}\}} \cdot S_{\mathcal{T}-E}^{\rightsquigarrow,a}(V)\right)}{\prgapp_k^{a}},
\end{align}
guaranteeing that 
\begin{align}\label{eq: shifted_risks_are_bounded}
     {R_k^{\textnormal{c}, a}(\textnormal{c}_k^{a,+}) \leq 1}.
\end{align}
\end{corollary}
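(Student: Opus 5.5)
The bound in \eqref{eq: bounds_shifted_gapp} can be obtained by decomposing the post-study risk individual by individual (i.e., stratum by stratum in $V$) and comparing it with the mass of survivors available at the end of the study period. I would first use the Risk decomposition lemma just established. It gives ${\prgapp}_k^{a}=\expectation(\mathbb{1}_{\{E+k>\mathcal{T}\}}(S_{\mathcal{T}-E}^{\rightsquigarrow,a}(V)-S_k^{\rightsquigarrow,a}(V)))$ and $\prstudyp_k^a=\expectation(1-S_{\min(k,\mathcal{T}-E)}^{\rightsquigarrow,a}(V))$. The interpretation of ``compatible'' I would commit to is this: the shifted post-study mass $\textnormal{c}_k^a\cdot{\prgapp}_k^a$ can only be drawn from individuals who are still event-free at the end of the study period and whose follow-up $k$ reaches into the post-study period. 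That pool has mass $\expectation(\mathbb{1}_{\{E+k>\mathcal{T}\}}S_{\mathcal{T}-E}^{\rightsquigarrow,a}(V))$. Individuals with $E+k\le\mathcal{T}$ contribute nothing to ${\prgapp}_k^a$, so they cannot be reallocated. Requiring the shifted mass not to exceed this pool gives \eqref{eq: bounds_shifted_gapp}. Since ${\prgapp}_k^a>0$, dividing by it yields the admissible interval $[0,\textnormal{c}_k^{a,+}]$ in \eqref{eq: max_variation_bullet}.

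For \eqref{eq: shifted_risks_are_bounded}, I would evaluate $R_k^{\textnormal{c},a}(\textnormal{c}_k^{a,+})=\prstudyp_k^a+\expectation(\mathbb{1}_{\{E+k>\mathcal{T}\}}S_{\mathcal{T}-E}^{\rightsquigarrow,a}(V))$. I would then split the expectation in $\prstudyp_k^a$ according to the indicator $\mathbb{1}_{\{E+k>\mathcal{T}\}}$.
\begin{itemize}
\item On $\{E+k\le\mathcal{T}\}$ we have $\min(k,\mathcal{T}-E)=k$, so the integrand is $1-S_k^{\rightsquigarrow,a}(V)\le 1$.
\item On $\{E+k>\mathcal{T}\}$ we have $\min(k,\mathcal{T}-E)=\mathcal{T}-E$. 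Adding the survivor term then gives exactly $1-S_{\mathcal{T}-E}^{\rightsquigarrow,a}(V)+S_{\mathcal{T}-E}^{\rightsquigarrow,a}(V)=1$.
\end{itemize}
Putting the two pieces together,
\begin{align*}
R_k^{\textnormal{c},a}(\textnormal{c}_k^{a,+})=\expectation\bigl(\mathbb{1}_{\{E+k\le\mathcal{T}\}}(1-S_k^{\rightsquigarrow,a}(V))+\mathbb{1}_{\{E+k>\mathcal{T}\}}\bigr)\le 1 .
\end{align*}
Because $R_k^{\textnormal{c},a}$ is nondecreasing in $\textnormal{c}_k^a$ (as ${\prgapp}_k^a\ge 0$), the same bound holds on the whole interval.

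The main obstacle is conceptual rather than computational: ``compatible'' must be pinned down precisely. The survivor pool has to be the mass at follow-up $\mathcal{T}-E$ restricted to $\{E+k>\mathcal{T}\}$, and not, for instance, survival at $k$. I would make this explicit by noting that each survivor can fail at most once, so the pool is an upper bound on any reallocated post-study risk. I would also check that all quantities lie in $[0,1]$, using $h_m^{\rightsquigarrow,a}(v)\in[0,1]$.
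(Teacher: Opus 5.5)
Your proposal is correct and follows essentially the same route as the paper. The paper also treats the compatibility bound and the admissible interval $[0,\textnormal{c}_k^{a,+}]$ as immediate. It then proves $R_k^{\textnormal{c},a}(\textnormal{c}_k^{a,+})\le 1$ by splitting $\expectation\bigl(S_{\min(k,\mathcal{T}-E)}^{\rightsquigarrow,a}(V)\bigr)$ along $\mathbb{1}_{\{E+k\le\mathcal{T}\}}$, so that the survivor term cancels and leaves $1-\expectation\bigl(\mathbb{1}_{\{E+k\le\mathcal{T}\}}S_k^{\rightsquigarrow,a}(V)\bigr)\le 1$. Your extra monotonicity remark, which extends the bound to every $\textnormal{c}_k^a$ in the interval, is a small addition the paper does not spell out.
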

\begin{proof}
    \eqref{eq: bounds_shifted_gapp} trivial; \eqref{eq: max_variation_bullet} follows directly from \eqref{eq: bounds_shifted_gapp}. For \eqref{eq: shifted_risks_are_bounded} consider
    {\tiny    
    \begin{align*}
    \prstudyp_k^a + \textnormal{c}_k^{a,+} \cdot \prgapp_k^a =& \expectation\left( 1 - S_{\min(k, \mathcal{T}-E)}^{\rightsquigarrow, a}(V)\right) + \expectation\left(  \mathbb{1}_{\{E+k > \mathcal{T}\}} \cdot S_{\mathcal{T}-E}^{\rightsquigarrow, a}(V) \right) \\
    =& 1 - \left[ \expectation\left( \mathbb{1}_{\{E+k \le \mathcal{T}\}} \cdot S_{k}^{\rightsquigarrow, a}(V)\right) + \expectation\left( \mathbb{1}_{\{E+k > \mathcal{T}\}} \cdot S_{\mathcal{T}-E}^{\rightsquigarrow, a}(V) \right) \right] + \expectation\left( \mathbb{1}_{\{E+k > \mathcal{T}\}} \cdot S_{\mathcal{T}-E}^{\rightsquigarrow, a}(V) \right) & \Leftarrow{(i)}\\
    =& 1 - \expectation\left( \mathbb{1}_{\{E+k \le \mathcal{T}\}} \cdot S_{k}^{\rightsquigarrow, a}(V)\right) \\
    \leq& 1
\end{align*}
}
where 
{\footnotesize
\begin{align*}
     (i) \equiv \expectation\left( S_{\min(k, \mathcal{T}-E)}^{\rightsquigarrow, a}(V)\right) = \expectation\left( \mathbb{1}_{\{E+k \le \mathcal{T}\}} \cdot S_{k}^{\rightsquigarrow, a}(V) \right) + \expectation\left(\mathbb{1}_{\{E+k > \mathcal{T}\}} \cdot S_{\mathcal{T}-E}^{\rightsquigarrow, a}(V) \right).
\end{align*}
}
\end{proof}

\subsubsection{C-values as a minimization problem}
Having bounded the maximum possible variation for each strategy individually, we now aim to find a paired set of variations that nullify the treatment effect inferred under extrapolation assumptions while remaining as symmetric as possible.

Let $k \in \{1, \dots, K\}$. We denote by $\Omega_k$ the region of valid perturbations:
\begin{align*}
    \Omega_k = \left[0, \textnormal{c}_{k}^{0,+}\right] \times \left[0, \textnormal{c}_{k}^{1,+}\right].
\end{align*}
 
We seek to find a pair $(\textnormal{c}_k^{0}, \textnormal{c}_k^{1}) \in \Omega_k$ that satisfies two conditions. First, we want 
$(\textnormal{c}_k^{0}, \textnormal{c}_k^{1})$ to minimize 
the objective function
\begin{align}\label{eq: penalization_c_values}
    J(\textnormal{c}_k^{0}, \textnormal{c}_k^{1}) = \begin{cases}
        \left(\ln(\textnormal{c}_k^{0} \cdot \textnormal{c}_k^{1})\right)^2 & c_k^0 \cdot c_k^1 > 0 \\
        +\infty & c_k^0 \cdot c_k^1 = 0.
    \end{cases}
\end{align}
We select the squared logarithmic penalty because it is smooth, strictly convex in the log-product, and penalizes deviations from the ideal symmetric solution $\textnormal{c}_k^{0} \cdot \textnormal{c}_k^{1} = 1$.

Second, $(\textnormal{c}_k^{0}, \textnormal{c}_k^{1})$ must nullify the effect, i.e., $R_k^{\textnormal{c}, 0}(\textnormal{c}_k^0) = R_k^{\textnormal{c}, 1}(\textnormal{c}_k^1)$ (as in \eqref{eq: shifted_risks_equality}). 

We call c-values the pair $(\textnormal{c}_k^{0}, \textnormal{c}_k^{1}) \in \Omega_k$ minimizing \eqref{eq: penalization_c_values} and satisfying \eqref{eq: shifted_risks_equality}.

\subsubsection{Existence of c-values}
\label{sec_app: existence_c_values}
C-values do not necessarily exist for every $k \in \{1, \dots, K\}$. To fix ideas, suppose $RR_m^{\rightsquigarrow} < 1$ for every $m \in \{1, \dots, K\}$, i.e., $\overline{a}_k=\overline{1}$ constitutes the most beneficial treatment strategy. This occurs when, for a given $k \in \{1, \dots, K\}$, even the most extreme variation, i.e., that resulting in failure by follow-up $k$ of all individuals under $\overline{a}_k=\overline{1}$ ($a=1$) who are alive at the end of the study period, together with that implying no failure by follow-up $k$ of all individuals alive under $\overline{a}_k=\overline{0}$ ($a=0$), is insufficient to make the maximum shifted risk under $\overline{a}_k=\overline{1}$ exceed the minimum shifted risk under $\overline{a}_k=\overline{0}$.

This idea is formalized in the following lemma.

\begin{lemma}[Existence of positive feasible perturbations at follow-up $k$]
\label{lemma: existence_c_values}
Let $k \in \{1,\dots,K\}$ and suppose $\prgapp_k^a>0$ for
$a\in\{0,1\}$. Suppose ${RR}_k^{\rightsquigarrow} \neq 1$. Then,
\begin{align*}
&\exists\,
(\textnormal{c}_k^0,\textnormal{c}_k^1)\in\Omega_k
\textnormal{ with }
\textnormal{c}_k^0>0,\,
\textnormal{c}_k^1>0
\textnormal{ such that }
R_k^{\textnormal{c},0}(\textnormal{c}_k^0)
=
R_k^{\textnormal{c},1}(\textnormal{c}_k^1)
\\
\stackrel{(1)}{\Leftrightarrow}\;&
\begin{cases}
\prstudyp_k^1
+\textnormal{c}_k^{1,+}\cdot\prgapp_k^1
>
\prstudyp_k^0,
& {RR}_k^{\rightsquigarrow}<1,\\[3pt]
\prstudyp_k^0
+\textnormal{c}_k^{0,+}\cdot\prgapp_k^0
>
\prstudyp_k^1,
& {RR}_k^{\rightsquigarrow}>1, 
\end{cases} \\
\stackrel{(2)}{\Leftrightarrow}\;&
\max\{\prstudyp_k^0,\prstudyp_k^1\}
<
\min\Big\{
\prstudyp_k^0+\textnormal{c}_k^{0,+} \cdot \prgapp_k^0,\,
\prstudyp_k^1+\textnormal{c}_k^{1,+} \cdot\prgapp_k^1
\Big\}.
\end{align*}
\end{lemma}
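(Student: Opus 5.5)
The plan is to reduce everything to the intermediate value theorem applied to the two affine maps $\textnormal{c}\mapsto R_k^{\textnormal{c},a}(\textnormal{c}) = \prstudyp_k^a + \textnormal{c}\cdot\prgapp_k^a$. Since $\prgapp_k^a>0$, each is continuous and strictly increasing on $[0,\textnormal{c}_k^{a,+}]$. Its image is therefore the interval $I_a \coloneqq [\prstudyp_k^a,\ \prstudyp_k^a+\textnormal{c}_k^{a,+}\prgapp_k^a]$, and by \eqref{eq: shifted_risks_are_bounded} this interval lies in $[0,1]$. Restricting to $\textnormal{c}_k^a>0$ replaces $I_a$ by the half-open interval $I_a^{\circ}\coloneqq(\prstudyp_k^a,\ \prstudyp_k^a+\textnormal{c}_k^{a,+}\prgapp_k^a]$. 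The first claim, existence of positive feasible $(\textnormal{c}_k^0,\textnormal{c}_k^1)$ with equal shifted risks, is then exactly the statement $I_0^{\circ}\cap I_1^{\circ}\neq\emptyset$. Indeed, given a common value $r$, we recover $\textnormal{c}_k^a=(r-\prstudyp_k^a)/\prgapp_k^a\in(0,\textnormal{c}_k^{a,+}]$, so each common value yields a feasible pair and conversely.

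Equivalence (2) comes first, because it is pure interval algebra. Two half-open intervals $(x_0,y_0]$ and $(x_1,y_1]$ intersect if and only if $\max\{x_0,x_1\}<\min\{y_0,y_1\}$. For the "if" direction, take $r=\min\{y_0,y_1\}$. For the "only if" direction, any common $r$ satisfies $x_a<r\le y_a$ for both $a$. This yields precisely the displayed max/min condition. The nondegeneracy $x_a<y_a$ holds provided $\textnormal{c}_k^{a,+}>0$, i.e.\ $\expectation(\mathbb{1}_{\{E+k>\mathcal{T}\}}S_{\mathcal{T}-E}^{\rightsquigarrow,a}(V))>0$. This follows from $\prgapp_k^a>0$, since $\prgapp_k^a\le$ that expectation by the risk decomposition lemma.

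For equivalence (1), I would split according to the sign of $RR_k^{\rightsquigarrow}-1$. The key observation is that when $RR_k^{\rightsquigarrow}<1$, one of the two min/max terms in (2) is automatically settled. By the risk decomposition lemma, $R_k^{\rightsquigarrow,a}=\prstudyp_k^a+\prgapp_k^a$, and $R_k^{\rightsquigarrow,a}$ is the shifted risk at $\textnormal{c}=1$. The upper endpoint satisfies $\prstudyp_k^a+\textnormal{c}_k^{a,+}\prgapp_k^a\ge R_k^{\rightsquigarrow,a}$ because $\textnormal{c}_k^{a,+}\ge1$, which again follows from the risk decomposition lemma since $S_k^{\rightsquigarrow,a}\ge0$. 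In the case $RR_k^{\rightsquigarrow}<1$, i.e.\ $R_k^{\rightsquigarrow,1}<R_k^{\rightsquigarrow,0}$, we get
\begin{align*}
\prstudyp_k^0+\textnormal{c}_k^{0,+}\prgapp_k^0 \;\ge\; R_k^{\rightsquigarrow,0} \;>\; R_k^{\rightsquigarrow,1} \;>\; \prstudyp_k^1 .
\end{align*}
So in (2) the inequalities involving $\max$ against the $a=0$ upper endpoint, and $\prstudyp_k^1$ against the $a=1$ upper endpoint, hold automatically. The only remaining nontrivial inequality is $\prstudyp_k^0<\prstudyp_k^1+\textnormal{c}_k^{1,+}\prgapp_k^1$, which is the first line of (1). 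The case $RR_k^{\rightsquigarrow}>1$ is symmetric, with the labels $0$ and $1$ swapped.

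The main obstacle is this case reduction. It requires care in checking that $\textnormal{c}_k^{a,+}\ge1$, which is needed so that the unshifted risk lies inside the feasible image, and that strict inequalities survive when $\prstudyp_k^0=\prstudyp_k^1$ is possible. I would verify these carefully via the risk decomposition lemma and the bound \eqref{eq: bounds_shifted_gapp}. Everything else is elementary monotonicity.
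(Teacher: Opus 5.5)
Your proof is correct and follows the paper's argument: like the paper, you identify the shifted risks attainable with $\textnormal{c}_k^a>0$ with the half-open interval $(\prstudyp_k^a,\prstudyp_k^a+\textnormal{c}_k^{a,+}\prgapp_k^a]$, and reduce existence to the max/min criterion for two such intervals to intersect. The paper only calls equivalence (1) ``trivial''. Your case reduction supplies the detail it omits: you use $\textnormal{c}_k^{a,+}\ge 1$ and $R_k^{\rightsquigarrow,a}=\prstudyp_k^a+\prgapp_k^a$ to show that three of the four inequalities hold automatically.
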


\begin{proof}
$\stackrel{(1)}{\Leftrightarrow}$ is trivial. Equivalence
$\stackrel{(2)}{\Leftrightarrow}$ follows since $\prgapp_k^a>0$,
\begin{align*}
\left\{
R_k^{\textnormal{c},a}(\textnormal{c}_k^a):
\textnormal{c}_k^a\in(0,\textnormal{c}_k^{a,+}]
\right\}
=
\left(
\prstudyp_k^a,\,
\prstudyp_k^a+\textnormal{c}_k^{a,+}\cdot \prgapp_k^a
\right].
\end{align*}
Hence,
\begin{align*}
&\exists\,
(\textnormal{c}_k^0,\textnormal{c}_k^1)\in\Omega_k
\textnormal{ with }
\textnormal{c}_k^0>0,\,
\textnormal{c}_k^1>0
\textnormal{ such that }
R_k^{\textnormal{c},0}(\textnormal{c}_k^0)
=
R_k^{\textnormal{c},1}(\textnormal{c}_k^1)
\\
\Leftrightarrow\;&
\left(
\prstudyp_k^0,\,
\prstudyp_k^0+\textnormal{c}_k^{0,+}\prgapp_k^0
\right]
\cap
\left(
\prstudyp_k^1,\,
\prstudyp_k^1+\textnormal{c}_k^{1,+}\prgapp_k^1
\right]
\neq\emptyset
\\
\Leftrightarrow\;&
\max\{\prstudyp_k^0,\prstudyp_k^1\}
<
\min\Big\{
\prstudyp_k^0+\textnormal{c}_k^{0,+}\prgapp_k^0,\,
\prstudyp_k^1+\textnormal{c}_k^{1,+}\prgapp_k^1
\Big\}.
\end{align*}
This concludes the proof.
\end{proof}

Under mild regularity assumptions, it can be shown that when c-values exist for $k=K_{\textnormal{ex}}$, they exist for the subsequent follow-up times.

Furthermore, when a solution exists, it is unique.

\begin{proposition}[C-values as the solution of an optimization problem]\label{proposition: c_values_derivation}
Let $k \in \{1, \dots, K\}$. Assume that c-values at follow-up $k$ exist, i.e., conditions in Lemma \ref{lemma: existence_c_values} are met. Define $\mathcal{D}_k \coloneqq \{(\textnormal{c}_k^0, \textnormal{c}_k^1) \in \Omega_k : R_k^{\textnormal{c}, 0}(\textnormal{c}_k^0) = R_k^{\textnormal{c}, 1}(\textnormal{c}_k^1)\}$. A solution $(\textnormal{c}_k^{0}, \textnormal{c}_k^{1}) \in \mathcal{D}_k$ that minimizes \eqref{eq: penalization_c_values} exists, is unique, and is given by:
\begin{align*}
(\textnormal{c}_k^{0}, \textnormal{c}_k^{1}) = 
\begin{cases}
    \left(\textnormal{c}_k^{\mid}, \, (\textnormal{c}_k^{\mid})^{-1}\right) & \text{if } \textnormal{c}_k^{0,+,\mathcal{D}} \cdot \textnormal{c}_k^{1,+,\mathcal{D}} \ge 1, \\
    \left(\textnormal{c}_k^{0,+,\mathcal{D}}, \, \textnormal{c}_k^{1,+,\mathcal{D}}\right) & \text{if } \textnormal{c}_k^{0,+,\mathcal{D}} \cdot \textnormal{c}_k^{1,+,\mathcal{D}} < 1,
\end{cases} 
\end{align*}
where the boundary limits $\textnormal{c}_k^{0,+,\mathcal{D}}$, $\textnormal{c}_k^{1,+,\mathcal{D}}$, and the symmetric root $\textnormal{c}_k^{\mid}$ are algebraic constants defined solely by $\{{\prstudyp}_k^{a}, {\prgapp}_k^a, \textnormal{c}_k^{a,+} : a \in \{0, 1\}\}$:
\begin{align*}
m_k \equiv& \frac{\prgapp_k^0}{\prgapp_k^1} > 0, \\ 
d_k \equiv& \frac{\prstudyp_k^0 - \prstudyp_k^1}{\prgapp_k^1} \in \mathbb{R}, \\
\textnormal{c}_k^{\mid} \equiv& \frac{-d_k + \sqrt{d_k^2 + 4m_k}}{2m_k}, \\
\textnormal{c}_k^{0,-} \equiv& \max(0, -\frac{d_k}{m_k}), \\
\textnormal{c}_k^{0,+,\mathcal{D}} \equiv& \min\left(\textnormal{c}_k^{0,+}, \frac{\textnormal{c}_k^{1,+} - d_k}{m_k}\right), \\
\textnormal{c}_k^{1,+,\mathcal{D}} \equiv& m_k \cdot \textnormal{c}_k^{0,+,\mathcal{D}} + d_k.
\end{align*}
\end{proposition}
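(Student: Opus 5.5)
The idea is to reduce the problem to a one-dimensional optimization along the constraint line. Write $x=\textnormal{c}_k^0$ and $y=\textnormal{c}_k^1$. The nullification constraint $\prstudyp_k^0 + x\,\prgapp_k^0 = \prstudyp_k^1 + y\,\prgapp_k^1$ becomes, after dividing by $\prgapp_k^1>0$,
\[
y = m_k x + d_k .
\]
So $\mathcal{D}_k$ is the segment of this line inside the box $\Omega_k$. The objective $J$ is $+\infty$ at points with $xy=0$, and the existence lemma (Lemma \ref{lemma: existence_c_values}) provides a feasible point with $x,y>0$. We may therefore restrict attention to $x\in(\textnormal{c}_k^{0,-},\textnormal{c}_k^{0,+,\mathcal{D}}]$. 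The lower end comes from requiring $y>0$, i.e. $x>-d_k/m_k$, together with $x\ge 0$. The upper end comes from $x\le \textnormal{c}_k^{0,+}$ and $y\le \textnormal{c}_k^{1,+}$, i.e. $x\le (\textnormal{c}_k^{1,+}-d_k)/m_k$. The existence lemma is exactly what guarantees that this interval is nonempty.

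Next, consider the product $g(x)=x(m_kx+d_k)$ on this interval. Both factors are positive and increasing in $x$ because $m_k>0$, so $g$ is strictly increasing. At the left endpoint, $g\to 0^+$: either $x\to 0$, or $y\to 0$ when $d_k<0$. Minimizing $(\ln g)^2$ is the same as bringing $g$ as close to $1$ as possible. Since $g$ is continuous and strictly increasing, it maps the interval onto $(0,\, g(\textnormal{c}_k^{0,+,\mathcal{D}})]$, and $g(\textnormal{c}_k^{0,+,\mathcal{D}}) = \textnormal{c}_k^{0,+,\mathcal{D}}\textnormal{c}_k^{1,+,\mathcal{D}}$. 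This splits into two cases.

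\emph{Case 1:} $\textnormal{c}_k^{0,+,\mathcal{D}}\textnormal{c}_k^{1,+,\mathcal{D}}\ge 1$. By the intermediate value theorem and strict monotonicity, there is a unique $x$ with $g(x)=1$. That point attains $J=0$, which is the global minimum, and it is unique. Solving $m_kx^2+d_kx-1=0$ gives two roots whose product is $-1/m_k<0$, so exactly one root is positive, namely $\textnormal{c}_k^{\mid}$. At that root $y=1/x$, which is the claimed pair $(\textnormal{c}_k^{\mid},(\textnormal{c}_k^{\mid})^{-1})$.

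\emph{Case 2:} the product is $<1$. Then $g<1$ on the whole interval, so $(\ln g)^2$ is strictly decreasing in $x$. The unique minimizer is therefore the right endpoint $(\textnormal{c}_k^{0,+,\mathcal{D}},\textnormal{c}_k^{1,+,\mathcal{D}})$.

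The main obstacle is bookkeeping rather than any deep step. We must check that the right endpoint formula correctly encodes both box constraints, and that $\textnormal{c}_k^{1,+,\mathcal{D}}\le \textnormal{c}_k^{1,+}$ holds automatically. We must also check that the existence condition of Lemma \ref{lemma: existence_c_values} is equivalent to $\textnormal{c}_k^{0,-}<\textnormal{c}_k^{0,+,\mathcal{D}}$, so that the interval is nonempty and the limits at its left endpoint behave as claimed. This equivalence should follow by translating the interval-intersection characterization in that lemma into the $x$ coordinate via $y=m_kx+d_k$.
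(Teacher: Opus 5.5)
Your proposal is correct and follows essentially the same route as the paper. Both parametrize $\mathcal{D}_k$ along the line $\textnormal{c}_k^1 = m_k \textnormal{c}_k^0 + d_k$, use that $\textnormal{c}_k^0(m_k\textnormal{c}_k^0 + d_k)$ is continuous and strictly increasing on $[\textnormal{c}_k^{0,-}, \textnormal{c}_k^{0,+,\mathcal{D}}]$, and split on whether $\textnormal{c}_k^{0,+,\mathcal{D}}\textnormal{c}_k^{1,+,\mathcal{D}}$ is at least $1$ (intermediate value theorem and the positive root of $m_k x^2 + d_k x - 1$) or below $1$ (right endpoint). Your explicit check that Lemma \ref{lemma: existence_c_values} is equivalent to $\textnormal{c}_k^{0,-} < \textnormal{c}_k^{0,+,\mathcal{D}}$ is slightly more careful than the paper: its footnote only shows $\textnormal{c}_k^{0,+,\mathcal{D}} > 0$ before asserting the endpoint product is strictly positive, which also requires $\textnormal{c}_k^{0,+,\mathcal{D}} > -d_k/m_k$ when $d_k<0$.
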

\begin{proof}
To lighten notation, for every symbol we drop the subscripts ($k$). By Lemma \ref{lemma: existence_c_values}, $\mathcal{D}$ is non-empty. Furthermore, we have
\begin{align*}
(\textnormal{c}^0, \textnormal{c}^1) \in \mathcal{D} \Leftrightarrow& \Big[ (\textnormal{c}^0, \textnormal{c}^1) \in \Omega \Big] \land \Big[ R^{\textnormal{c}, 0}(\textnormal{c}^0) = R^{\textnormal{c}, 1}(\textnormal{c}^1) \Big]\\
\Leftrightarrow& \Big[ \textnormal{c}^0 \in [0, \textnormal{c}^{0,+}]\land \textnormal{c}^1 \in [0, \textnormal{c}^{1,+}] \Big] \land \Big[ \textnormal{c}^1 = m \cdot \textnormal{c}^0 + d \Big] \\
\Leftrightarrow& \left[0 \le \textnormal{c}^0 \le \textnormal{c}^{0,+} \land 0 \le m \cdot \textnormal{c}^0 + d \le \textnormal{c}^{1,+}\right]\land \Big[ \textnormal{c}^1 = m \cdot \textnormal{c}^0 + d \Big] \\
\Leftrightarrow& \left[\textnormal{c}^0 \in [0, \textnormal{c}^{0,+}]  \land \textnormal{c}^0 \in \left[-\frac{d}{m}, \frac{\textnormal{c}^{1,+} - d}{m}\right] \right] \land \Big[ \textnormal{c}^1 = m \cdot \textnormal{c}^0 + d \Big] \\
\Leftrightarrow& \left[\textnormal{c}^0 \in [\textnormal{c}^{0,-}, \textnormal{c}^{0,+, \mathcal{D}}\right] \land \Big[ \textnormal{c}^1 = m \cdot \textnormal{c}^0 + d \Big],
\end{align*}
where $\textnormal{c}^{0,-} \equiv \max(0,- \frac{d}{m})$ and $\textnormal{c}^{0,+, \mathcal{D}} = \min(\textnormal{c}^{0,+}, \frac{\textnormal{c}^{1,+} -d}{m})$.
Thus, the interval $[\textnormal{c}^{0,-}, \textnormal{c}^{0,+, \mathcal{D}}]$ characterizes the domain of $\textnormal{c}^{0}$ for all valid solutions. To find $(\textnormal{c}^{0}, \textnormal{c}^{1}) \in \mathcal{D}$ that minimizes $J(\textnormal{c}_k^{0}, \textnormal{c}_k^{1})$ (eq. \eqref{eq: penalization_c_values}) for $(\textnormal{c}^{0},\textnormal{c}^{1}) \in \mathcal{D}$, we will consider the function $f: [\textnormal{c}^{0,-}, \textnormal{c}^{0,+,\mathcal{D}}] \ni \textnormal{c}^0 \mapsto \textnormal{c}^0 \cdot(m \textnormal{c}^0 + d) \in \mathbb{R}$. Clearly $f$ is non-negative, continuous and furthermore its derivative is strictly positive, i.e., $f^{\prime}(\textnormal{c}^0) = 2m\textnormal{c}^0 + d > 0$, on the interior of $[{\textnormal{c}}^{0,-}, {\textnormal{c}}^{0,+,\mathcal{D}}]$. For $\textnormal{c}^0 \in [\textnormal{c}^{0,-}, \textnormal{c}^{0,+, \mathcal{D}}]$ we have two cases: either $f(\textnormal{c}^{0,+,\mathcal{D}}) < 1$ or $f(\textnormal{c}^{0,+,\mathcal{D}}) \ge 1$:\footnote{This case is non-empty, i.e., there exists a law satisfying Lemma \ref{lemma: existence_c_values} that leads to $f(\textnormal{c}^{0,+,\mathcal{D}}) \ge 1$: $\prstudyp^0 = 0.30$; $\prstudyp^1 = 0.20$; $\prgapp^0 = 0.05$; $\prgapp^1 = 0.04$; $\textnormal{c}^{0,+} = \frac{0.60}{0.05} = 12$; $\textnormal{c}^{1,+} = \frac{0.70}{0.04} = 17.5$. So that $m = \frac{0.05}{0.04} = 1.25$, $d = \frac{0.30 - 0.20}{0.04} = 2.5$. And, $\textnormal{c}^{0,-} = \max(0,\frac{-2.5}{1.25} ) = 0$, $\textnormal{c}^{0,+,\mathcal{D}} = \min(12, \frac{17.5 - 2.5}{1.25}) = \min(12, 12) = 12$. Finally,  $f(12) = 12 \cdot (1.25 \cdot 12 + 2.5) = 12 \cdot 17.5 = 210$.}
\begin{enumerate}
    \item $f(\textnormal{c}^{0,+,\mathcal{D}}) \ge 1$. Since $f(\textnormal{c}^{0,-}) = 0 \le 1 \le f(\textnormal{c}^{0,+,\mathcal{D}})$ and $f$ is continuous, the Intermediate Value Theorem guarantees that there exists a root where $f(\textnormal{c}^0) = 1$; since $f$ is strictly increasing, this root is unique. But, $J(\textnormal{c}^0, m \textnormal{c}^0 + d) = {\ln(f(\textnormal{c}^0))}^{2}=(\ln 1)^2= 0$,
    i.e., $(\textnormal{c}^0, m \textnormal{c}^0 + d)$ is the global minimum and furthermore $(m\textnormal{c}^0 +d ) = (\textnormal{c}^0)^{-1}$. For $\textnormal{c}^0 > 0$, we have $f(\textnormal{c}^0) = 1 \Leftrightarrow 
    m \cdot {(\textnormal{c}^0) }^2 + d \cdot \textnormal{c}^0 - 1 = 0 \Leftrightarrow \textnormal{c}^0  = \frac{-d + \sqrt{d^2 + 4m}}{2m}$.
    \item $f({\textnormal{c}}^{0,+,\mathcal{D}}) < 1$. Because the strict existence condition in Lemma \ref{lemma: existence_c_values} guarantees $\textnormal{c}^{1,+} > d$ and $\textnormal{c}^{0,+} > 0$, the upper bound of the domain is strictly positive: $\textnormal{c}^{0,+,\mathcal{D}} > 0$.\footnote{By definition, $\textnormal{c}^{0,+,\mathcal{D}} = \min\left(\textnormal{c}^{0,+}, \frac{\textnormal{c}^{1,+} - d}{m}\right)$. By assumption, $\textnormal{c}^{0,+} > 0$ and $m > 0$. We need only show $\textnormal{c}^{1,+} > d$. If $d \le 0$, this holds trivially since $\textnormal{c}^{1,+} > 0$. If $d > 0$, Lemma \ref{lemma: existence_c_values} requires $\prstudyp^1 + \textnormal{c}^{1,+} \cdot \prgapp^1 > \prstudyp^0$, which gives $\textnormal{c}^{1,+} > d$.} Thus, the maximum product is strictly positive: $f({\textnormal{c}}^{0,+,\mathcal{D}}) > 0$. Since $f$ is strictly increasing, we have $0 \le f(\textnormal{c}^0) < 1$ for all $\textnormal{c}^0 \in [{\textnormal{c}}^{0,-}, {\textnormal{c}}^{0,+,\mathcal{D}}]$. Because the function $x \mapsto (\ln x)^2$ is strictly decreasing on the interval $(0, 1)$, minimizing $\left(\ln f({\textnormal{c}}^{0})\right)^2$ over $\mathcal{D}$ is equivalent to maximizing $f({\textnormal{c}}^{0})$. Since $f$ is strictly increasing, its maximum is attained at $\textnormal{c}^0 = \textnormal{c}^{0,+,\mathcal{D}}$ and it is unique; thus, $\textnormal{c}^1 = m \cdot \textnormal{c}^{0,+,\mathcal{D}} + d$.
\end{enumerate}
That is, in each case, the solution in the statement of the Proposition is the $({\textnormal{c}}^{0}, {\textnormal{c}}^{1}) \in \mathcal{D}$ minimizing $J(\textnormal{c}^0, \textnormal{c}^1)$.
\end{proof}

\subsubsection{Extrapolation curves: c-values and their equivalent representations of variation through additive hazard shifts}
\label{sec_app: equivalence_c_values_hazards}
Let $k \in \{K_{\text{ex}}, \dots, K\}$, $E \in V$. Suppose  c-values $(\textnormal{c}_k^{0}, \textnormal{c}_k^{1})$ exist. We want to show that for any given c-values $(\textnormal{c}_k^{0}, \textnormal{c}_k^{1})$, it is possible to find a $(\delta_k^{0}, \delta_k^{1})$ that leads to the same shifted risks. To this end, for every $a \in \{0, 1\}$, every $k \in \{0, \dots, K\}$ and every $v \in \text{supp}(P(V))$ consider
\begin{align*}
h_k^{\rightsquigarrow,a}(v;\delta,\epsilon) =& \begin{cases}
   h_k^{\rightsquigarrow,a}(v) & e + k \leq \mathcal{T}  \\
       \min (\max(h_k^{\rightsquigarrow,a}(v) + \delta,\epsilon),1-\epsilon) & \text{otherwise}. 
\end{cases}
\\
    S_k^{\rightsquigarrow,a}(v;\delta, \epsilon)\coloneqq& \prod_{m=0}^k(1-h_m^{\rightsquigarrow,a}(v;\delta,\epsilon)), \\
    \prstudyp_{k}^a(v;\delta,\epsilon) \coloneqq& 1 - S_{\min(k, \mathcal{T}-e)}^{\rightsquigarrow,a}(v;\delta, \epsilon), \\
    \prgapp_{k}^a(v;\delta,\epsilon) \coloneqq& \mathbb{1}_{\{e+k > \mathcal{T}\}} \cdot \left(S_{\mathcal{T}-e}^{\rightsquigarrow,a}(v;\delta, \epsilon) - S_{k}^{\rightsquigarrow,a}(v;\delta, \epsilon)\right), \\
    \prstudyp_{k}^a(\delta,\epsilon) \coloneqq& \expectation\left(\prstudyp_{k}^a(V;\delta,\epsilon)\right), \\
    \prgapp_{k}^a(\delta,\epsilon) \coloneqq& \expectation\left(\prgapp_{k}^a(V;\delta,\epsilon)\right),
\end{align*}
where $h_k^{\rightsquigarrow,a}(v;\delta,\epsilon)$ are the hazards shifted by $\delta$ and clipped at $(\epsilon, 1-\epsilon)$, and $S_k^{\rightsquigarrow,a}(v;\delta, \epsilon)$, $\prstudyp_{k}^a(v;\delta,\epsilon)$, $\prgapp_{k}^a(v;\delta,\epsilon)$, $ \prstudyp_{k}^a(\delta,\epsilon)$, $\prgapp_{k}^a(\delta,\epsilon)$ are the analogues of parameters to those defined in Appendix \ref{sec_app: c_values_preliminaries} resulting from replacing $h_k^{\rightsquigarrow,a}$ with $h_k^{\rightsquigarrow,a}(v;\delta,\epsilon)$.

For a given pair of c-values $(\textnormal{c}_k^0, \textnormal{c}_k^1)$, we say that equivalent
hazards exist on the additive scale if, for every $a \in \{0,1\}$, there exist $\epsilon^a \geq 0$ and $\delta_k^a \in [-1,1]$ such that
\begin{align}\label{eq: kiss_c_values_and_hazards}
    \textnormal{c}_k^a \cdot
    \prgapp_{k}^a(\delta=0;\epsilon^a) = \prgapp_{k}^a(\delta=\delta_k^a;\epsilon^a).
\end{align}
When $0<\textnormal{c}_k^a<\textnormal{c}_k^{a,+}$, we refer to the unique $\delta_k^a$ satisfying \eqref{eq: kiss_c_values_and_hazards} as the equivalent additive hazard shift.

Now we show that this is the case under mild regularity conditions: the post-study hazards must be uniformly bounded away from the deterministic values 0 and 1. We also define the active index set $\mathcal{I}_k \coloneqq \{m \le k \mid P(E+m > \mathcal{T}) > 0\}$, which collects all follow-up times corresponding to the post-study period for a positive proportion of the population.

\begin{proposition}
\label{prop: equivalence_c_values_and_hazards}
Let $a\in\{0,1\}$ and $k \in \{K_{\textnormal{ex}},\dots,K\}$. Suppose that, $\prgapp_k^a>0$ and $\textnormal{c}_k^a\in(0,\textnormal{c}_k^{a,+}]$.
Assume there exists $\epsilon^+\in(0,1/2)$ such that, for every $m\in\mathcal I_k$,
\begin{align*}
    \operatorname{supp}
\left(P(h_m^{\rightsquigarrow,a}(V)\mid E+m>\mathcal T)\right) \subset(\epsilon^+,1-\epsilon^+).
\end{align*}
If
\begin{align*}
    0<\textnormal{c}_k^a<\textnormal{c}_k^{a,+},
\end{align*}
then there exists a sufficiently small $\epsilon^a>0$ for which there exists a unique $\delta_k^a\in(-1,1)$ satisfying
\begin{align*}
    \prgapp_k^a(\delta_k^a,\epsilon^a) = \textnormal{c}_k^a \cdot \prgapp_k^a(0,\epsilon^a).
\end{align*}
If
\begin{align*}
    \textnormal{c}_k^a=\textnormal{c}_k^{a,+},
\end{align*}
a solution exists with $\epsilon^a=0$ (for example,
$\delta_k^a=1$), but the solution need not be unique.
\end{proposition}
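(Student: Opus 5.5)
The plan is to reduce the statement to a one-dimensional intermediate-value argument for the continuous map $g(\delta)\coloneqq\prgapp_k^a(\delta,\epsilon)$, and then obtain uniqueness from strict monotonicity in $\delta$.

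\textbf{Choice of $\epsilon$.} Fix $\epsilon^a\in(0,\epsilon^+)$. For every $m\in\mathcal I_k$ and every $v$ with $e+m>\mathcal T$ we have $h_m^{\rightsquigarrow,a}(v)\in(\epsilon^+,1-\epsilon^+)$. Hence the clipping is inactive at $\delta=0$, so $\prgapp_k^a(0,\epsilon^a)=\prgapp_k^a$. Study-period hazards are untouched by $\delta$, so $S_{\mathcal T-e}^{\rightsquigarrow,a}(v;\delta,\epsilon^a)=S_{\mathcal T-e}^{\rightsquigarrow,a}(v)$ for every $\delta$. On $\{e+k>\mathcal T\}$ write
\begin{align*}
\prgapp_k^a(v;\delta,\epsilon^a)=S_{\mathcal T-e}^{\rightsquigarrow,a}(v)\Big(1-\prod_{m=\mathcal T-e+1}^{k}\big(1-h_m^{\rightsquigarrow,a}(v;\delta,\epsilon^a)\big)\Big).
\end{align*}
Each clipped hazard is continuous and nondecreasing in $\delta$. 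It is strictly increasing on the open set where $\epsilon^a<h+\delta<1-\epsilon^a$. Taking expectations over $P(V)$ (a finite or bounded sum), $g$ is continuous and nondecreasing on $[-1,1]$.

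\textbf{Range of $g$.} At $\delta=1$ every post-study hazard clips to $1-\epsilon^a$. This gives $g(1)=\expectation\big(\mathbb 1_{\{E+k>\mathcal T\}}S_{\mathcal T-E}^{\rightsquigarrow,a}(V)\,(1-(\epsilon^a)^{n_E})\big)$ with $n_E=k-(\mathcal T-E)\ge1$. By \eqref{eq: max_variation_bullet} this equals $\textnormal{c}_k^{a,+}\prgapp_k^a-O(\epsilon^a)$. Since $\textnormal{c}_k^a<\textnormal{c}_k^{a,+}$ and $\prgapp_k^a>0$, choosing $\epsilon^a$ small enough gives $g(1)>\textnormal{c}_k^a\prgapp_k^a$.

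At $\delta=-1$ every post-study hazard clips to $\epsilon^a$, so $g(-1)=\expectation\big(\mathbb 1_{\{E+k>\mathcal T\}}S_{\mathcal T-E}^{\rightsquigarrow,a}(V)\,(1-(1-\epsilon^a)^{n_E})\big)=O(\epsilon^a)$. Since $\textnormal{c}_k^a\prgapp_k^a>0$, shrinking $\epsilon^a$ further gives $g(-1)<\textnormal{c}_k^a\prgapp_k^a$. The intermediate value theorem then yields some $\delta_k^a\in(-1,1)$ with $g(\delta_k^a)=\textnormal{c}_k^a\,\prgapp_k^a(0,\epsilon^a)$.

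\textbf{Uniqueness (main obstacle).} The difficulty is that clipping makes $g$ flat wherever every active hazard is saturated, so monotonicity alone is only weak. I would first show that the target value lies strictly between the two plateau levels:
\begin{align*}
g(\delta_{\min})<\textnormal{c}_k^a\prgapp_k^a<g(\delta_{\max}),
\end{align*}
where $\delta_{\min}$ is the largest $\delta$ at which all hazards are clipped to $\epsilon^a$, and $\delta_{\max}$ the smallest at which all are clipped to $1-\epsilon^a$. These are exactly the limits $O(\epsilon^a)$ and $\textnormal{c}_k^{a,+}\prgapp_k^a-O(\epsilon^a)$ computed above. On the interval $(\delta_{\min},\delta_{\max})$, at least one index $m\in\mathcal I_k$ has an unclipped hazard on a set of $v$ of positive probability. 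By the support condition, $S_{\mathcal T-E}^{\rightsquigarrow,a}(V)$ is bounded below on $\{E+k>\mathcal T\}$, and the remaining factors $(1-h)$ are at least $\epsilon^a>0$. Therefore the derivative of the product contribution is strictly positive and $g$ is strictly increasing there, which gives uniqueness. I need to check carefully that the positive-probability requirement holds on the whole interval, using $m\in\mathcal I_k$ and the fact that $\operatorname{supp}$ is bounded away from the clipping levels.

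\textbf{Boundary case $\textnormal{c}_k^a=\textnormal{c}_k^{a,+}$.} Take $\epsilon^a=0$ and $\delta=1$. Then all post-study hazards equal $1$ and $g(1)=\expectation\big(\mathbb 1_{\{E+k>\mathcal T\}}S_{\mathcal T-E}^{\rightsquigarrow,a}\big)=\textnormal{c}_k^{a,+}\prgapp_k^a$. Any $\delta\ge 1-\inf h$ gives the same value, so the solution need not be unique.
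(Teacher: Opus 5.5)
Your route is the paper's: $g(\delta)=\prgapp_k^a(\delta,\epsilon^a)$ is continuous and nondecreasing on $[-1,1]$, the intermediate value theorem applies between the plateau values $g(-1)$ and $g(1)$, and uniqueness comes from strict increase wherever some post-study hazard is unclipped with positive probability. Your existence step is tidier than the paper's limit argument with $L_k(\epsilon),T_k(\epsilon),U_k(\epsilon)$ as $\epsilon\downarrow 0$. Taking $\epsilon^a<\epsilon^+$ makes clipping inactive at $\delta=0$, so the target is exactly $\textnormal{c}_k^a\prgapp_k^a$. Your bounds $g(-1)\le K\epsilon^a$ and $g(1)\ge \textnormal{c}_k^{a,+}\prgapp_k^a-\epsilon^a$ then give an explicit threshold for $\epsilon^a$. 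Your boundary case also exhibits a whole interval of solutions rather than just asserting non-uniqueness.

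\textbf{The gap is in uniqueness.} You assert, but do not prove, that for every $\delta\in(\delta_{\min},\delta_{\max})$ some $m\in\mathcal I_k$ satisfies $P\bigl(h_m^{\rightsquigarrow,a}(V)+\delta\in(\epsilon^a,1-\epsilon^a)\mid E+m>\mathcal T\bigr)>0$. This is exactly where the support hypothesis does its work. The operative fact is not that the supports are ``bounded away from the clipping levels'' but a width comparison, which is the paper's argument:
\begin{itemize}
\item The union of all supports lies in $(\epsilon^+,1-\epsilon^+)$, which has width $1-2\epsilon^+<1-2\epsilon^a$, the width of the unclipped window $I_\delta=(\epsilon^a-\delta,1-\epsilon^a-\delta)$.
\item So if $I_\delta$ missed every support, the union could not have points on both sides of $I_\delta$, because two such points would differ by at least $1-2\epsilon^a$.
\item It would therefore lie entirely on one side. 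Left forces $\delta<\epsilon^a-\epsilon^+$; right forces $\delta>\epsilon^+-\epsilon^a$.
\item All post-study hazards would then saturate at the same level, so $g(\delta)\in\{g(-1),g(1)\}$, i.e.\ $\delta\le\delta_{\min}$ or $\delta\ge\delta_{\max}$.
\end{itemize}
The paper runs this by contradiction at a root rather than on the whole interval; the two are equivalent once this step is in place.

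Two smaller points.
\begin{itemize}
\item The support condition does not bound $S_{\mathcal T-E}^{\rightsquigarrow,a}(V)$ below. That survival is a product of study-period factors, which the condition does not constrain. You only need $S_{\mathcal T-E}^{\rightsquigarrow,a}(V)>0$ on a positive-probability part of the unclipped set. The paper also uses this tacitly; it holds, e.g., when study-period hazards are $<1$, as in expit models. Alternatively, run the argument on $\{E+k>\mathcal T,\ S_{\mathcal T-E}^{\rightsquigarrow,a}(V)>0\}$. This set has positive probability because $\prgapp_k^a>0$, and its conditional supports still lie in $(\epsilon^+,1-\epsilon^+)$.
\item Since $g$ has kinks, replace the ``derivative is strictly positive'' step with the direct comparison $g(\delta_2)>g(\delta_1)$ for $\delta_1<\delta_2$, as the paper does.
\end{itemize}
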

\begin{proof}
Let $a \in \{0,1\}$ and $k \in \{K_{\textnormal{ex}},\dots,K\}$
satisfy the conditions of the proposition, and omit the superscript $a$ except for the solution $\delta_k^a$. If $\textnormal{c}_k = \textnormal{c}_k^+$, trivially set $\epsilon = 0$ and $\delta_k^a = 1$, so that post-study hazards are identically equal to $1$. 

If $0 < \textnormal{c}_k < \textnormal{c}_k^+$, fix $\epsilon \in (0, 0.5)$. As
\begin{align*}
\prgapp_k(v;\delta_k, \epsilon) \coloneqq \mathbb{1}_{\{e+k>\mathcal{T}\}} \cdot S_{\mathcal{T}-e}(v) \cdot \left( 1 - \prod_{m=\mathcal{T}-e+1}^k (1 - h_m(v; \delta_k, \epsilon)) \right),
\end{align*}
is defined as a composition of continuous functions in $\delta_k$ and it is bounded by $1$ for all $v \in \textnormal{supp}(P(V))$, the Dominated Convergence Theorem ensures that $\delta_k \mapsto \prgapp_k(\delta_k, \epsilon) = \mathbb{E}\left(\prgapp_k(V; \delta_k, \epsilon)\right)$ is continuous on $[-1,1]$. 

Define the bounds $L_k(\epsilon) \coloneqq \prgapp_k(-1, \epsilon)$ and $U_k(\epsilon) \coloneqq \prgapp_k(1, \epsilon)$, and target $T_k(\epsilon) \coloneqq \textnormal{c}_k \cdot \prgapp_k(0, \epsilon)$. As $\epsilon \downarrow 0$, these limits converge to $0$, $\textnormal{c}_k^+ \cdot \prgapp_k(0,0)$, and $\textnormal{c}_k \cdot \prgapp_k(0,0)$, respectively. Letting $\mathcal{S}_m \equiv \textnormal{supp}(P(h_m(V) \mid E+m >\mathcal{T}))$ for every $m \in \mathcal{I}_k$, we have
\begin{align*}
    &0 < \textnormal{c}_k < \textnormal{c}_k^+ \text{ and } \prgapp_k(0,0) > 0 &\Leftarrow{(i)}\\
    \Rightarrow& 0 < \lim_{\epsilon \downarrow 0} T_k(\epsilon) < \lim_{\epsilon \downarrow 0} U_k(\epsilon) \\
    \Rightarrow& L_k(\epsilon) < T_k(\epsilon) < U_k(\epsilon) \text{ for all } \epsilon \in (0, \epsilon_1) \text{ for some } \epsilon_1 > 0, &\Leftarrow{(ii)} \\
    \Rightarrow& \exists \delta_k^a \in (-1, 1) \text{ such that } \prgapp_k(\delta_k^a, \epsilon) = T_k(\epsilon) &\Leftarrow{(iii)}.
\end{align*}
Here, $(i) \equiv \textnormal{hypotheses}$; $(ii) \equiv$ strict order-preserving property of limits: because the limits of the differences $T_k(\epsilon) - L_k(\epsilon)$ and $U_k(\epsilon) - T_k(\epsilon)$ as $\epsilon \downarrow 0$ are strictly positive constants, there exists a sufficiently small neighborhood $(0, \epsilon_1)$ where both differences remain strictly positive; $(iii) \equiv$  Intermediate Value Theorem and the continuity of $\delta_k \mapsto \prgapp_k(\delta_k, \epsilon)$.

To prove uniqueness, first note that $\delta_k \mapsto \prgapp_k(\delta_k, \epsilon)$ is globally (i.e., in $[-1,1]$) non-decreasing as $x \mapsto \min(\max(x, \epsilon), 1-\epsilon)$ is non-decreasing. 
Thus, it suffices to show that $\delta_k \mapsto \prgapp_k(\delta_k, \epsilon)$ is strictly increasing at any root $\delta_k^a$. Fix $\epsilon \in (0, \min(\epsilon_1, \epsilon^+))$ and the root $\delta_k^a$; define $I(\delta_k^a) \coloneqq (\epsilon - \delta_k^a, 1 - \epsilon - \delta_k^a)$ which has length $|I(\delta_k^a)| =1 - 2\epsilon$. Since
{\small
\begin{align*}
    & P(h_m(V) \in I(\delta_k^a) \mid E+m > \mathcal{T}) = 0 \text{ for all } m \in \mathcal{I}_k \\
    \Rightarrow& \mathcal{S}_m \cap I(\delta_k^a) = \emptyset \\
    \Rightarrow& \mathcal{S}_m \text{ lies entirely left or right of } I(\delta_k^a) & \Leftarrow{(iv)} \\
    \Rightarrow& \mathcal{S}_m \text{ is on the same side of } I(\delta_k^a) \text{ for all } m \in \mathcal{I}_k & \Leftarrow{(v)} \\
    \Rightarrow& \text{either } [\forall m, h_m(v; \delta_k^a, \epsilon) = \epsilon] \text{ or } [\forall m, h_m(v; \delta_k^a, \epsilon) = 1-\epsilon] & \Leftarrow{(vi)} \\
    \Rightarrow& \prgapp_k(\delta_k^a, \epsilon) \in \{L_k(\epsilon), U_k(\epsilon)\} \\
    \Rightarrow& \neg \left[L_k(\epsilon) < T_k(\epsilon) < U_k(\epsilon)\right],
\end{align*}
}
where
\begin{align*}
    (iv) \equiv& \sup(\mathcal{S}_m) - \inf(\mathcal{S}_m) \le 1 - 2\epsilon^+ < |I(\delta_k^a)|,\\
    (v) \equiv& \left[\text{left} \Rightarrow \delta_k^a \le \epsilon-\epsilon^+ < 0; \, \text{right} \Rightarrow \delta_k^a \ge \epsilon^+-\epsilon > 0\right],\\
    (vi) \equiv& \text{for every }m\in\mathcal{I}_k\text{ and every }v \text{ with }e(v)+m>\mathcal{T},
\end{align*}
is absurd, we have that for some $m^{\sim} \in \mathcal{I}_k$, the set
\begin{align*}
    B^{\sim} \coloneqq \{v \in \textnormal{supp}(P(V)) \mid e+m^{\sim} > \mathcal{T} \text{ and } h_{m^{\sim}}(v) + \delta_k^a \in (\epsilon, 1-\epsilon)\}
\end{align*}
satisfies $P(V \in B^{\sim}) > 0$. Now, for any $\delta_k > \delta_k^a$:
{\footnotesize
\begin{align*}
    & h_{m^{\sim}}(v; \delta_k, \epsilon) > h_{m^{\sim}}(v; \delta_k^a, \epsilon), \quad v \in B^{\sim} \\
    \Rightarrow& \prod_{m=\mathcal{T}-e+1}^k (1 - h_m(v; \delta_k, \epsilon)) < \prod_{m=\mathcal{T}-e+1}^k (1 - h_m(v; \delta_k^a, \epsilon)), \quad v \in B^{\sim} &\Leftarrow{(vii)}\\
    \Rightarrow& \prgapp_k(v; \delta_k, \epsilon) > \prgapp_k(v; \delta_k^a, \epsilon), \quad v \in B^{\sim} \\
    \Rightarrow& \expectation\left(\prgapp_k(V; \delta_k, \epsilon)\right) > \expectation\left(\prgapp_k(V; \delta_k^a, \epsilon)\right) &\Leftarrow{(viii)}\\
    \Rightarrow& \prgapp_k(\delta_k, \epsilon) > \prgapp_k(\delta_k^a, \epsilon).
\end{align*}
}
where
{\footnotesize
\begin{align*}
    (vii) \equiv& \text{$1 - h_m(v; \cdot, \epsilon) \ge \epsilon > 0$ for all $m$} \\
    (viii) \equiv& \text{$\prgapp_k(v; \delta_k, \epsilon) \ge \prgapp_k(v; \delta_k^a, \epsilon)$ for every $v \in \textnormal{supp}(P(V))$ and $P(V \in B^{\sim}) > 0$}.
\end{align*}
}
A symmetric argument holds for $\delta_k < \delta_k^a$. This concludes the proof.
\end{proof}

\textbf{Remarks on the choice of }$\epsilon^a$. For every $a \in \{0,1\}$ and every $k$, the exact solution $\delta_k^a$ depends on the choice of $\epsilon^a$. As established, to guarantee that a unique solution exists, $\epsilon^a$ must be chosen sufficiently small. In practice, a value such as $10^{-4}$ will often be sufficiently small, although this should be verified for the hazards under consideration. $\delta_k^a$ is typically found using a standard 1D scalar root-finding algorithm (e.g., Brent's method) bounded on the interval $[-1, 1]$. However, when evaluating this iterative solver at scale is computationally demanding it is possible to bypass the solver entirely by deriving a closed-form, first-order approximation of $\delta_k^a$.
\clearpage
\section{Case studies: specifics and additional results}
\label{sec_app: case_studies_specifics}
The aim of this appendix is twofold. In Section \ref{sec_app: case_study_1}, we provide supplementary details for the simulations of case study 1. In Section \ref{sec_app: case_study_2}, we report estimates for case study 2 and further elaborate on why some of the findings published in \citep{grippin_sars-cov-2_2025} are biased, with the resulting bias having an expression that resembles that of \textit{study-entry-determined immortal time bias}.

\subsection{Case study 1}
\label{sec_app: case_study_1}
\subsubsection{Data structure}
\label{sec_app: case_study_1_data_structure}
\begin{figure}[H]
\centering
\fbox{%
\resizebox{0.6\textwidth}{!}{%
\begin{tikzpicture}[
  edge/.style={line width=0.6pt},
  lab/.style={font=\small},
  every node/.style={inner sep=1pt}
]

\node (L0)   at (0,  -8)   {$L_0$};
\node (Lkp)  at (4.8, 0.2) {$L_k^{\prime}$};
\node (tx)  at (0, 2.2) {For $k \in \{0, \dots, 15=\mathcal{T}\}$:};
\node (tx)  at (0, -1.8) {For $k =0$:};

\node (Ak)   at (0, 0.2) {$A_k$};
\node (Yk)   at (2.4, 0.2) {$Y_k$};
\node[lab, right=2mm of Ak] {$\in\{0,1\}$};
\node[lab, right=2mm of Yk] {$\in\{0,1\}$};

\node (L0m)  at (2.2,  -5.2) {$L_0^{-}$};
\node (L0p0) at (4.8, -10.2) {$L_{0}^{\prime}$};

\draw[edge] (L0) -- (L0m);
\draw[edge] (L0) -- (L0p0);

\node (V) at (4.8,  -3.6) {$V$};
\node (B) at (4.8,  -6.8) {$B$};

\draw[edge] (L0m) -- (V);
\draw[edge] (L0m) -- (B);

\node (V1) at (7.8,  -2.8) {$V_1$};
\node (V2) at (7.8,  -4.4) {$V_2$};

\draw[edge] (V) -- (V1);
\draw[edge] (V) -- (V2);

\node[lab, right=2mm of V1] {$\in0.25 \cdot\{0,\dots,15\}$};
\node[lab, right=2mm of V2] {$\in\{0,1\}$};

\node (B1) at (7.8,  -6.4) {$B_1$};
\node (B2) at (7.8, -8.2) {$B_2$};

\draw[edge] (B) -- (B1);
\draw[edge] (B) -- (B2);

\node[lab, right=2mm of B1] {$\in\{0,1\}$};
\node[lab, right=2mm of B2] {$\in\{0,1\}$};

\node (L10p) at (7.8, -9.0) {$L_{1,0}^{\prime}$};
\node (L20p) at (7.8, -11.2) {$L_{2,0}^{\prime}$};

\draw[edge] (L0p0) -- (L10p);
\draw[edge] (L0p0) -- (L20p);

\node[lab, right=2mm of L10p] {$\in\{0,1\}$};
\node[lab, right=2mm of L20p] {$\in\{0,1\}$};

\node (L1kp) at (7.8, 1.2) {$L_{1,k}^{\prime}$};
\node (L2kp) at (7.8, -0.4) {$L_{2,k}^{\prime}$};

\draw[edge] (Lkp) -- (L1kp);
\draw[edge] (Lkp) -- (L2kp);

\node[lab, right=2mm of L1kp] {$\in\{0,1\}$};
\node[lab, right=2mm of L2kp] {$\in\{0,1\}$};

\end{tikzpicture}}}
\end{figure}
We considered four baseline covariates, $L_0^{-}$, and 2 time-varying confounders, $L_{k}^{\prime}$, so that, according to the data structure of the main text, $L_0 \coloneqq L_0^{-} \cup L_{0}^\prime$, $L_{k} \coloneqq L_{k}^{\prime}$ for $k>0$. 

The variable $V_1$ is just a rescaled version of $E$, i.e., $V_1 = 0.25 \cdot E$. Accordingly, when we write $E \notin V$ in the main text, we mean $V = (V_2)$. 

\subsubsection{Data generating mechanism}
\label{sec_app: scenarios_dgms}
To generate the observed data, we adapted the procedure proposed by \citet{seaman_simulating_2024}. For all scenarios we considered: \footnote{$E \sim  \text{Discrete uniform}\{0, \dots, \mathcal{T}\}$ if $P(E=e) = (\mathcal{T}+1)^{-1}$, $E \sim  \text{Discr. exponential with parameter } \lambda $ if $P(E=e) =  \frac{\text{exp}(-\lambda \cdot e)}{\sum_{j=0}^{\mathcal{T}} \text{exp}(-\lambda \cdot j)} \quad \text{for } e \in \{0, 1, 2, \dots, \mathcal{T}\}$.}
{\scriptsize
\begin{align*}
    V_1 \sim& 0.25 \cdot 
    \begin{cases} 
      \text{Discr. exponential } (\lambda=0.5) & \text{scenarios C) and F)}, \\
      \text{Discr. uniform} & \text{otherwise},
    \end{cases} \\
    V_2 \mid V_1 \sim& \text{Bernoulli}(0.5), \\
    B_1 \mid V_2, V_1 \sim& \text{Normal}(-0.2 + 0.4 \cdot V_2), \\
    B_2 \mid B_1, V_2, V_1 \sim& \text{Normal}(0.2 \cdot V_1), \\
    L_{1,0}^{\prime} \mid L_0^- \sim& \text{Bernoulli}(\text{expit}(-0.3 + 0.6\cdot V_1)), \\
    L_{2,0}^{\prime} \mid L_{1,0}^{\prime}, L_0^- \sim& \text{Bernoulli}(\text{expit}(-0.2 + 0.4\cdot V_2)), \\ 
    L_{1,k}^{\prime} \mid Y_{k-1}=0, \overline{A}_{k-1}, \overline{L}_{k-1} \sim& \text{Bernoulli}(\text{expit}(-0.3 + 0.6\cdot V_1)), \\
    L_{2,k}^{\prime} \mid L_{1,k}^{\prime}, Y_{k-1}=0, \overline{A}_{k-1}, \overline{L}_{k-1} \sim& \text{Bernoulli}(\text{expit}(-0.2 + 0.4\cdot V_2)), \\ 
    A_k \mid \overline{L}_k \sim& \text{Bernoulli}\big(\text{expit}\big(-1 + 0.2 \cdot V_1 + 0.3 \cdot V_2 + 0.2 \cdot B_1 \\
    &\qquad\qquad\qquad\quad\; + 0.6 \cdot L_{1,k}^{\prime} + 0.6 \cdot L_{2,k}^{\prime} + A_{k-1}\big)\big),
\end{align*}
}
and a model for $Y_k \mid \overline{A}_k, \overline{L}_k, Y_{k-1}$ is generated in such a way as to be compatible, under $\RAID$ and $\FAID$, with the scenario-specific MSM-hazard for $h_k^{a_k}(v)$:
{\footnotesize
\begin{align*}
h_k^{a_k}(v) =& \begin{cases}
        \text{expit}(h_0 + \beta_{v_1}(e) + 1.9 \cdot v_2 - 0.4 \cdot a_k), & \text{A) and D)} \\
         \text{expit}(h_0 + \beta_{v_1}(e) + 1.9 \cdot v_2 - 0.4 \cdot a_k - 0.5 \cdot a_k \cdot  \mathbb{1}_{\{e+k > \mathcal{T} \}}) , & \text{B),C),E), and F)},
    \end{cases} \\
        \beta_{v_1}(e)=& -\frac{2}{\mathcal{T}}\cdot  e + 2,
\end{align*}
}
where $\beta_{v_1}(e)$ is decreasing in $e$ to be consistent with the estimates reported by \citet{hernan_marginal_2000}.
\subsubsection{Scenario-specific estimators}
\label{sec_app: procedures_estimators_case_study_1}
The estimators used in our procedures are:
{\footnotesize
\begin{align*}
    &\text{1) MSM-hazard}, \quad \hat{S}_k^{*,\text{msm}} = {P}_n\left(\prod_{m=0}^{k}\left(1-p(V,a_m,m;\hat{\beta}^{*})\right)\right) \quad V=(V_1, V_2),\\
    &\text{2) MSM-hazard } E \notin V, \quad \hat{S}_k^{*,\text{msm-we}}= {P}_n\left(\prod_{m=0}^{k}\left(1-p(V,a_m,m;\hat{\beta}^{*})\right)\right) \quad V=V_2,\\
    &\text{3) WSC-Kaplan-Meier},\quad \hat{S}_k^{*,\text{km}}= \prod_{m=0}^{k} \frac{{P}_n \left(\hat{W_m} \cdot \mathbb{1}_{\{C_{m-1}=0, \overline{A}_m = \overline{a}_m, Y_{m} = 0\}}\right)}{{P}_n \left(\hat{W_m} \cdot \mathbb{1}_{\{Y_{m-1} = 0, C_{m-1}=0, \overline{A}_m = \overline{a}_m\}}\right)}.
\end{align*}
}
The propensity score models in the weights $W_m$ used to generate $P^*$ from $P$ have been estimated under a correctly specified model. In procedures 1) and 2), we have considered 
{\small
\begin{align*}
\mathcal{M}_{\circ} \equiv \mathcal{M}(\mathbb{B}_{\text{A}}^{\text{D}})  =& \{\text{logit}(p(v,a_k,k;\beta)) = \lambda(v,a_k,k;\beta)\}, \\
\mathcal{M}_{\circ\circ} \equiv \mathcal{M}(\mathbb{B}_{\text{BC}}^{\text{EF}}) =& \{\text{logit}(p(v,a_k,k;\beta)) =\lambda(v,a_k,k;\beta) + \beta_{A, \geq \mathcal{T}} \cdot a_k \cdot \mathbb{1}_{\{e+k > \mathcal{T}\}}\}, \\
\mathcal{M}(\mathbb{B}_{\text{sat}}) =& \{\text{logit}(p(v,a_k,k;\beta)) = \lambda_{\text{we}}(v,a_k,k;\beta)\},
\end{align*}
}
where
{\footnotesize
\begin{align}
\lambda(v,a_k,k;\beta) =&
\beta_0
+ \beta_{v_1} v_1
+ \beta_{v_2} v_2
+ \beta_{A} a_k \label{cs_1_msm_hazard_no_interaction},\\
\lambda_{\text{we}}(v_2,a_k,k;\beta) =&
\text{all possible interactions between}(v_2, a_k, \text{natural cubic spline}(k)).
\end{align}
}
Here, $\mathcal{M}(\mathbb{B}_{\text{A}}^{\text{D}})$ constitutes a stylized version of the working model considered in \citep{hernan_marginal_2000}; the choice of $\mathcal{M}(\mathbb{B}_{\text{sat}})$ is meant to guarantee that ${\mathcal{M}(\mathbb{B}_{\text{sat}})}^{*-\text{cs}}$ holds. This, under $\RAID$, guarantees that $h_{\prstudyp,k}^{a_k}$ is correctly specified, i.e., there exists a unique $\beta \in \mathbb{B}_{\text{sat}}$ such that $p(v,a_k,k;\beta) \stackrel{\prstudyp}{=} h_{\prstudyp,k}^{a_k}$. \footnote{Technically, here the model is not saturated, but is chosen to be sufficiently flexible for ${\mathcal{M}(\mathbb{B}_{\text{sat}})}^{*-\text{cs}}$ to be tenable. This suffices to make our points.} Thereby, we can validate the bias resulting from not including $E$ in $V$. Finally, for $\hat{S}_k^{*,\text{km}, \overline{a}_k}$ to gain efficiency, we consider an observation to be consistent with the strategy $\overline{a}_k=\overline{1}$ if, over the four most recent follow-ups up to time $k$, $\{k-3, k-2, k-1, k\}$, we have $A_j = 1$ for all $j \in \{k-3, k-2, k-1, k\}$.

\subsubsection{Comparison estimates vs. ground-truth values and extrapolation curves}
\label{sec_app: plots_case_study_1}
Figure \ref{fig: sim_hernan_00} presents the survival curves for scenarios A), B), and C). Figure \ref{fig: sim_hernan_00_df} presents the survival curves for scenarios D), E), and F). Figure \ref{fig: extrapolation_curves_hernan_A_B} presents the extrapolation curves for scenarios A) and B).

\begin{figure}[H]
    \centering
    \adjustbox{trim={.0 \width} {.01\height} {0\width} {.02\height},clip}%
{
    \begin{tikzpicture}[scale=1]
        \node[anchor=north, scale=0.4] at (0,0){%
            \pgfimage{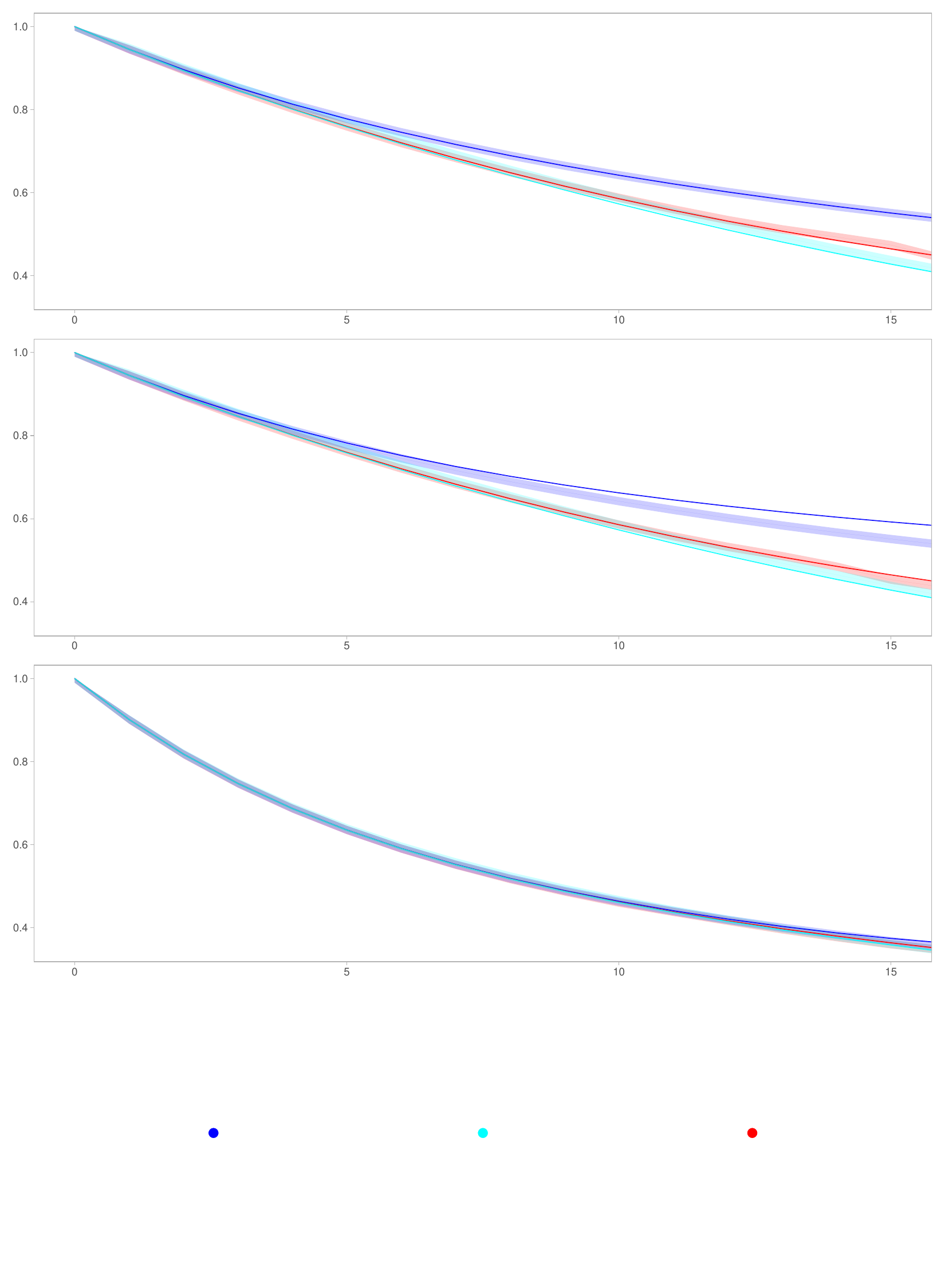}};

        \node[anchor=north west] at (-7,-0.2){\small A)}; 
        \node[anchor=north west] at (-7,-4){\small B)}; 
        \node[anchor=north west] at (-7,-8){\small C)}; 
        
        \node[anchor=north] at (0.5,-11.8){\small $k+1$};
        \node[anchor=north west] at (-3.5,-12.5){\small $\hat{S}_k^{*,\text{msm}}$};
        \node[anchor=north west] at (-0.2,-12.5){\small $\hat{S}_k^{*,\text{msm-we}}$};
        \node[anchor=north west] at (3,-12.5){\small $\hat{S}_k^{*,\text{km}}$}; 
\end{tikzpicture}
}
\caption{Simulation results for Scenarios A), B), and C) for sample size $n = 5 \cdot 10^6$ . The x-axis reports time $k^{\prime} \coloneqq k+1$ (At $k^{\prime}=0$, all survival curves equal 1). Highlighted curves correpond to point estimates; solid lines show the scenario-specific ground truth values (In particular, the solid blue line corresponds to $S_k^{\overline{a}_k=\overline{1}}$).}
\label{fig: sim_hernan_00}
\end{figure}

\begin{figure}
\centering
    \adjustbox{trim={.0 \width} {.01\height} {0\width} {.02\height},clip}%
{
    \begin{tikzpicture}[scale=1]
        \node[anchor=north, scale=0.4] at (0,0){%
            \pgfimage{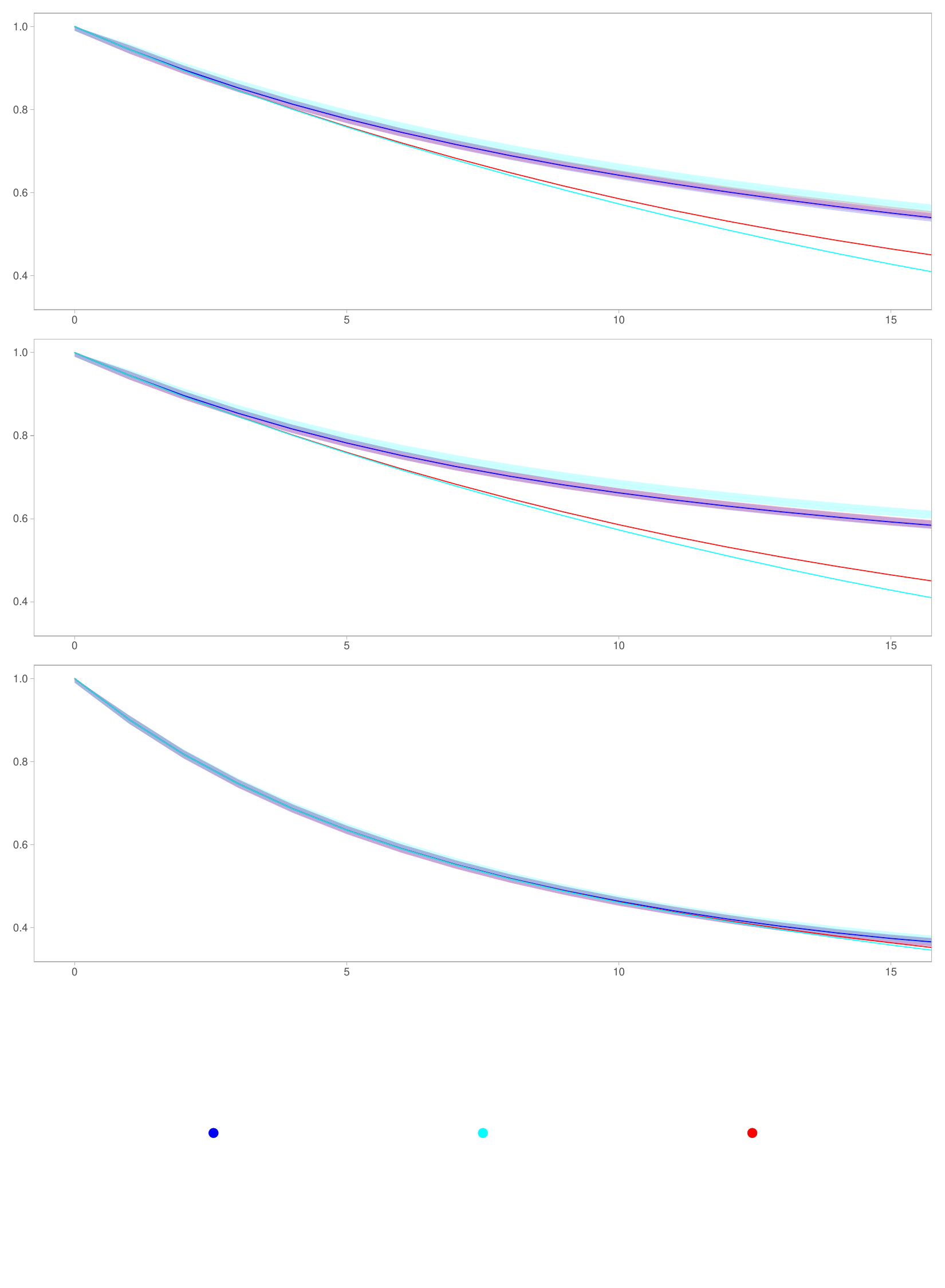}};

        \node[anchor=north west] at (-7,-0.2){\small D)}; 
        \node[anchor=north west] at (-7,-4){\small E)}; 
        \node[anchor=north west] at (-7,-8){\small F)}; 
        
        \node[anchor=north] at (0.5,-11.8){\small $k+1$};
        \node[anchor=north west] at (-3.5,-12.5){\small $\hat{S}_k^{*,\text{msm}}$};
        \node[anchor=north west] at (-0.2,-12.5){\small $\hat{S}_k^{*,\text{msm-we}}$};
        \node[anchor=north west] at (3,-12.5){\small $\hat{S}_k^{*,\text{km}}$}; 
\end{tikzpicture}
}
\caption{Simulation results for Scenarios D), E) and F) for a sample size $n = 50000$. The x-axis reports time $k^{\prime} \coloneqq k+1$ (At $k^{\prime}=0$, all survival curves equal 1). Highlighted curves correpond to point estimates; solid lines show the scenario-specific ground truth values (In particular, the solid blue line corresponds to $S_k^{\overline{a}_k=\overline{1}}$).}
\label{fig: sim_hernan_00_df}
\end{figure}
\begin{figure}[H]
    \centering%
{
    \begin{tikzpicture}[scale=1]
        \node[anchor=north, scale=0.6] at (0,0){\pgfimage{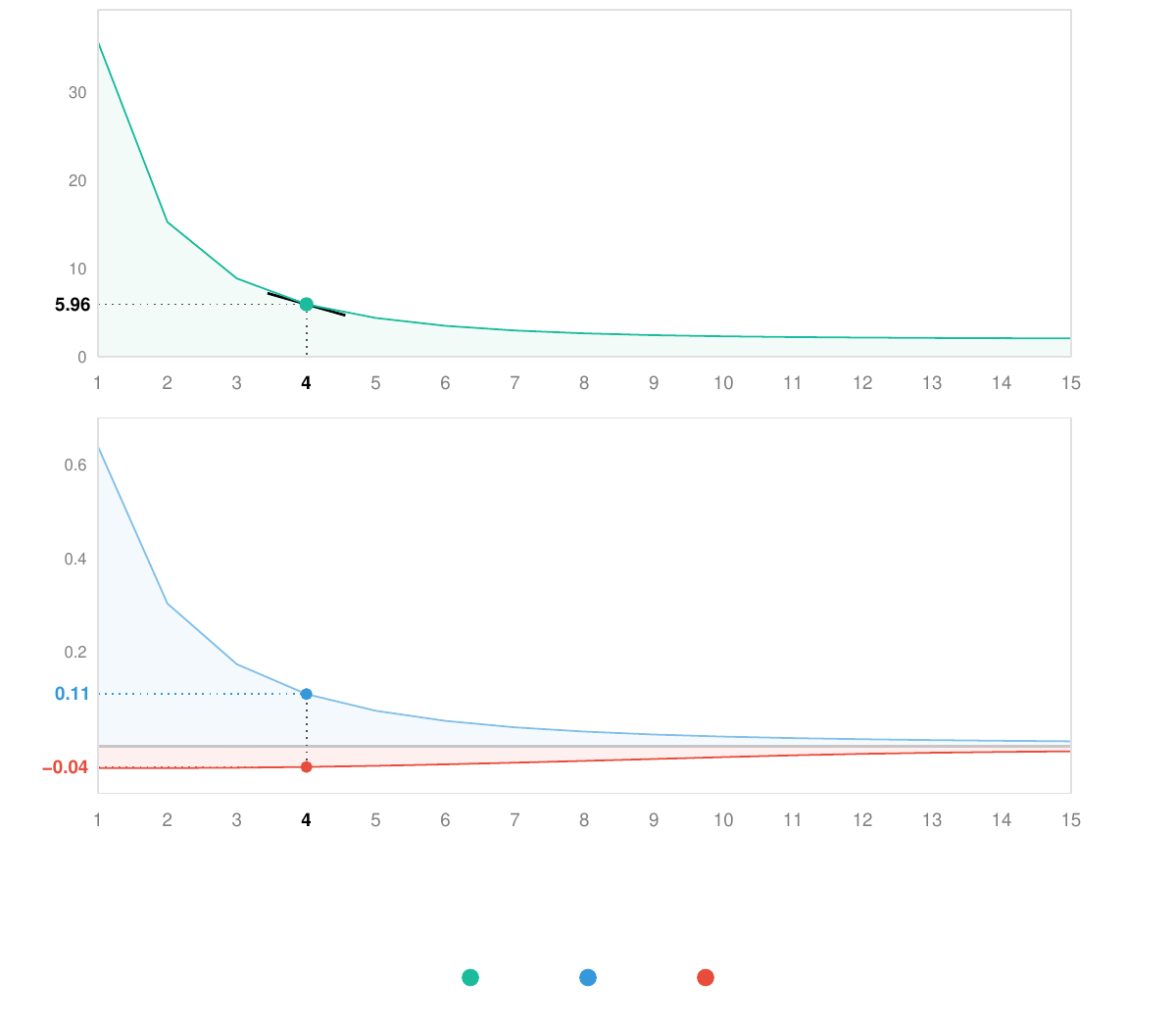}
        };    
        \node[anchor=north] at (0,-9){\footnotesize $k$};
        \node[anchor=north west] at (-2,-9.5){\footnotesize $\text{c}_k^0 + \text{c}_k^1$};
        \node[anchor=north west] at (-0.3,-9.5){\footnotesize $ \delta_{k}^1$}; 
        \node[anchor=north west] at (0.8,-9.5){\footnotesize $ \delta_{k}^0$}; 
\end{tikzpicture}
}
\caption{Extrapolation curves for case study 1, scenarios A) and B).}
\label{fig: extrapolation_curves_hernan_A_B}
\end{figure}

\subsubsection{Comments on scenarios C) and F)}
\label{sec_app: comments_cf_study_case_1}
To explain why the curves are barely distinguishable, consider the following heuristic argument. The survival of interest, $S_k$, can be expressed as  $S_k = \expectation S_{k}(E)$, where $S_{k}(e)$ is defined as in \eqref{eq: survival_conditional} (and alternatively expressible as $S_{k}(e) \coloneqq \expectation \left(\frac{\mathbb{1}_{\{E=e\}}}{P(E=e)} \cdot \prod_{m=0}^{k}\left(1-h_m^{a_m}(V)\right)\right)$) and denotes the survival of interest at follow-up $k$ for individuals $\{E=e\}$. Because approximately 95\% of the population entered the study during the calendar-time period $\{0, \dots, 5\}$, i.e., $P\{E \leq  5\} \simeq 0.95$, $S_k$ is dominated by $S_{k}(e)$ for $e \in \{0,\dots, 5\}$, i.e., $S_k \simeq \expectation( \mathbb{1}_{\{E \leq 5\}} \cdot S_{k}(E))$. Furthermore, since $e \mapsto S_k(e)$ is sufficiently smooth in $\{0, \dots, 5\}$, at every $k \in \{0, \dots, 15=K\}$, $S_k$ can be approximated using two survival quantities, $S_{\prstudyp}$ and $S_{\prgapp}$, as $S_k\simeq S_{\prstudyp} \cdot P\{E \leq 5, E \leq \mathcal{T}-k\} + S_{\prgapp} \cdot P\{E \leq 5, E > \mathcal{T}-k\}$. 

First consider $k \in \{0,\dots, 10\}$. We have $P\{E \leq 5, E > \mathcal{T}-k\} = 0$ and thus $S_k\simeq S_{\prstudyp} \cdot P\{E \leq 5, E \leq \mathcal{T}-k\}$. Hence,
{\scriptsize
\begin{align*}
    h_{k}^{a_k=1} = 1 - \frac{\expectation((\mathbb{1}_{\{E+k \leq \mathcal{T}\}} + \mathbb{1}_{\{E+k > \mathcal{T}\}} ) \cdot S_{k}(E))}{\expectation((\mathbb{1}_{\{E+k \leq \mathcal{T}\}} + \mathbb{1}_{\{E+k > \mathcal{T}\}} ) \cdot S_{k-1}(E))} \simeq 1 - \frac{\expectation(\mathbb{1}_{\{E+k \leq \mathcal{T}\}} \cdot S_{k}(E))}{\expectation(\mathbb{1}_{\{E+k \leq \mathcal{T}\}} \cdot S_{k-1}(E))}=  h_{\prstudyp, k}^{a_k=1}
\end{align*}
}
and therefore $\hat{S}_k^{*, \text{km}} \simeq S_k$. Using analogous arguments, it can be shown that $\hat{S}_k^{*, \text{msm}} \simeq \hat{S}_k^{*, \text{msm}-\text{we}}$. Thus, since $\hat{S}_k^{*, \text{msm}}$ targets a parameter that, similarly to $S_k$, can be approximated via $S_{\prstudyp} \cdot P\{E \leq 5, E \leq \mathcal{T}-k\} + S_{\prgapp}^{\sim} \cdot P\{E \leq 5, E > \mathcal{T}-k\}$, $\hat{S}_k^{*, \text{msm}}$ approximately targets $S_{\prstudyp} \cdot  P\{E \leq 5, E \leq \mathcal{T}-k\} \simeq S_k$. \footnote{$S_{\prstudyp}$ in this last expression is the same as used to decompose $S_k$; this follows from the fact that in scenario C) $\RAID$ holds and $\mathcal{M}(\mathbb{B}_{\text{BC}}^{\text{EF}}) \ni p(v,a_k,k;\beta^*) \stackrel{\prstudyp}{=} h_k^{a_k}(v))$.}

Next consider $k \in \{11, \dots, 15\}$. Now $P\{E \leq 5, E > \mathcal{T}-k\} = P(E \in \{\mathcal{T}-k, \dots, 5\})$. Because $e \mapsto S_{k}(e)$ is sufficiently smooth $e \in \{0,\dots,5\}$, these conditional survivals are approximately equal for $k$ close to $10$, but they increasingly diverge as $k$ grows. Consequently, 
$S_k \simeq S_{\prstudyp} \cdot P(E \in \{ 0, \dots, \mathcal{T}-k\}) + S_{\prgapp} \cdot P( E \in \{\mathcal{T}-k, \dots, 5\}) \simeq S_{\prstudyp} \cdot P(E \in \{ 0, \dots, 5\})$ for $k \simeq 10$ and this approximation becomes increasingly worse as $k$
increases. Thus, $h_k^{a_k}$ increasingly diverges from $h_{\prstudyp,k}^{a_k}$ and using similar arguments, $h_k^{a_k}(v)$ increasingly diverges from
$h_{\prstudyp,k}^{a_k}(v)$. See Table \ref{table: case_study_1_estimates} for the numerical results.

\subsubsection{Numerical results}
\label{sec_app: numerical_results_case_study_1}
\begin{landscape}
\begin{table}[p]
\centering
\scriptsize
\setlength{\tabcolsep}{4pt}
\begin{minipage}[t]{0.32\linewidth}
\centering
\captionof{table}{A)}
\begin{tabular}{rrrr}
\hline
$k^{\prime} \equiv k+1$ & $\hat{S}_k^{*, \text{km}}$ & $\hat{S}_k^{*, \text{msm}}$ & $\hat{S}_k^{*, \text{msm-we}}$ \\
\hline
   0 & 1.000 & 1.000 & 1.000 \\ 
     1 & 0.945 & 0.945 & 0.949 \\ 
     2 & 0.894 & 0.896 & 0.900 \\ 
     3 & 0.847 & 0.853 & 0.854 \\ 
     4 & 0.802 & 0.813 & 0.811 \\ 
     5 & 0.759 & 0.778 & 0.769 \\ 
     6 & 0.719 & 0.746 & 0.730 \\ 
     7 & 0.683 & 0.716 & 0.692 \\ 
     8 & 0.649 & 0.690 & 0.655 \\ 
     9 & 0.617 & 0.665 & 0.620 \\ 
    10 & 0.588 & 0.642 & 0.586 \\ 
    11 & 0.560 & 0.621 & 0.554 \\ 
    12 & 0.534 & 0.602 & 0.523 \\ 
    13 & 0.512 & 0.584 & 0.493 \\ 
    14 & 0.493 & 0.567 & 0.465 \\ 
    15 & 0.474 & 0.551 & 0.438 \\ 
    16 & 0.440 & 0.536 & 0.413 \\ 
\hline
\end{tabular}
\end{minipage}
\hfill
\begin{minipage}[t]{0.32\linewidth}
\centering
\captionof{table}{B)}
\begin{tabular}{rrrr}
\hline
$k^{\prime} \equiv k+1$ & $\hat{S}_k^{*, \text{km}}$ & $\hat{S}_k^{*, \text{msm}}$ & $\hat{S}_k^{*, \text{msm-we}}$ \\
\hline
   0 & 1.000 & 1.000 & 1.000 \\ 
     1 & 0.945 & 0.945 & 0.949 \\ 
     2 & 0.894 & 0.896 & 0.901 \\ 
     3 & 0.847 & 0.852 & 0.855 \\ 
     4 & 0.802 & 0.813 & 0.811 \\ 
     5 & 0.760 & 0.778 & 0.769 \\ 
     6 & 0.721 & 0.745 & 0.729 \\ 
     7 & 0.683 & 0.716 & 0.691 \\ 
     8 & 0.649 & 0.689 & 0.655 \\ 
     9 & 0.617 & 0.665 & 0.620 \\ 
    10 & 0.586 & 0.642 & 0.586 \\ 
    11 & 0.558 & 0.621 & 0.553 \\ 
    12 & 0.532 & 0.602 & 0.522 \\ 
    13 & 0.510 & 0.584 & 0.493 \\ 
    14 & 0.485 & 0.567 & 0.465 \\ 
    15 & 0.453 & 0.551 & 0.438 \\ 
    16 & 0.434 & 0.536 & 0.413 \\ 
\hline
\end{tabular}
\end{minipage}
\hfill
\begin{minipage}[t]{0.32\linewidth}
\centering
\captionof{table}{C)}
\begin{tabular}{rrrr}
\hline
$k^{\prime} \equiv k+1$ & $\hat{S}_k^{*, \text{km}}$ & $\hat{S}_k^{*, \text{msm}}$ & $\hat{S}_k^{*, \text{msm-we}}$ \\
\hline
   0 & 1.000 & 1.000 & 1.000 \\ 
     1 & 0.901 & 0.901 & 0.902 \\ 
     2 & 0.818 & 0.818 & 0.819 \\ 
     3 & 0.747 & 0.747 & 0.749 \\ 
     4 & 0.687 & 0.687 & 0.690 \\ 
     5 & 0.635 & 0.635 & 0.639 \\ 
     6 & 0.590 & 0.591 & 0.596 \\ 
     7 & 0.552 & 0.552 & 0.557 \\ 
     8 & 0.517 & 0.519 & 0.524 \\ 
     9 & 0.487 & 0.489 & 0.494 \\ 
    10 & 0.460 & 0.463 & 0.466 \\ 
    11 & 0.438 & 0.440 & 0.442 \\ 
    12 & 0.417 & 0.419 & 0.419 \\ 
    13 & 0.396 & 0.400 & 0.398 \\ 
    14 & 0.377 & 0.383 & 0.379 \\ 
    15 & 0.360 & 0.368 & 0.361 \\ 
    16 & 0.346 & 0.353 & 0.344 \\ 
\hline
\end{tabular}
\end{minipage}

\par\medskip

\begin{minipage}[t]{0.32\linewidth}
\centering
\captionof{table}{D)}
\begin{tabular}{rrrr}
\hline
$k^{\prime} \equiv k+1$ & $\hat{S}_k^{*, \text{km}}$ & $\hat{S}_k^{*, \text{msm}}$ & $\hat{S}_k^{*, \text{msm-we}}$ \\
  \hline
   0 & 1.000 & 1.000 & 1.000 \\ 
     1 & 0.945 & 0.945 & 0.949 \\ 
     2 & 0.897 & 0.896 & 0.903 \\ 
     3 & 0.853 & 0.852 & 0.861 \\ 
     4 & 0.813 & 0.813 & 0.824 \\ 
     5 & 0.778 & 0.777 & 0.790 \\ 
     6 & 0.745 & 0.745 & 0.759 \\ 
     7 & 0.717 & 0.716 & 0.731 \\ 
     8 & 0.690 & 0.689 & 0.705 \\ 
     9 & 0.666 & 0.665 & 0.682 \\ 
    10 & 0.644 & 0.642 & 0.660 \\ 
    11 & 0.624 & 0.621 & 0.640 \\ 
    12 & 0.605 & 0.602 & 0.621 \\ 
    13 & 0.588 & 0.584 & 0.604 \\ 
    14 & 0.572 & 0.567 & 0.588 \\ 
    15 & 0.556 & 0.551 & 0.572 \\ 
    16 & 0.542 & 0.537 & 0.558 \\ 
   \hline
\end{tabular}
\end{minipage}
\hfill
\begin{minipage}[t]{0.32\linewidth}
\centering
\captionof{table}{E)}
\begin{tabular}{rrrr}
\hline
$k^{\prime} \equiv k+1$ & $\hat{S}_k^{*, \text{km}}$ & $\hat{S}_k^{*, \text{msm}}$ & $\hat{S}_k^{*, \text{msm-we}}$ \\
  \hline
   0 & 1.000 & 1.000 & 1.000 \\ 
     1 & 0.945 & 0.945 & 0.949 \\ 
     2 & 0.897 & 0.897 & 0.903 \\ 
     3 & 0.854 & 0.854 & 0.863 \\ 
     4 & 0.816 & 0.816 & 0.828 \\ 
     5 & 0.783 & 0.783 & 0.796 \\ 
     6 & 0.753 & 0.753 & 0.768 \\ 
     7 & 0.726 & 0.726 & 0.743 \\ 
     8 & 0.703 & 0.703 & 0.721 \\ 
     9 & 0.682 & 0.682 & 0.701 \\ 
    10 & 0.663 & 0.663 & 0.684 \\ 
    11 & 0.647 & 0.646 & 0.668 \\ 
    12 & 0.632 & 0.631 & 0.653 \\ 
    13 & 0.619 & 0.617 & 0.640 \\ 
    14 & 0.606 & 0.605 & 0.628 \\ 
    15 & 0.594 & 0.593 & 0.617 \\ 
    16 & 0.584 & 0.583 & 0.607 \\ 
   \hline
\end{tabular}
\end{minipage}
\hfill
\begin{minipage}[t]{0.32\linewidth}
\centering
\captionof{table}{F)}
\begin{tabular}{rrrr}
\hline
$k^{\prime} \equiv k+1$ & $\hat{S}_k^{*, \text{km}}$ & $\hat{S}_k^{*, \text{msm}}$ & $\hat{S}_k^{*, \text{msm-we}}$ \\
  \hline
   0 & 1.000 & 1.000 & 1.000 \\ 
     1 & 0.901 & 0.901 & 0.901 \\ 
     2 & 0.818 & 0.818 & 0.819 \\ 
     3 & 0.748 & 0.747 & 0.750 \\ 
     4 & 0.688 & 0.687 & 0.691 \\ 
     5 & 0.636 & 0.635 & 0.641 \\ 
     6 & 0.591 & 0.591 & 0.597 \\ 
     7 & 0.552 & 0.552 & 0.558 \\ 
     8 & 0.518 & 0.519 & 0.524 \\ 
     9 & 0.489 & 0.489 & 0.494 \\ 
    10 & 0.463 & 0.463 & 0.468 \\ 
    11 & 0.441 & 0.441 & 0.445 \\ 
    12 & 0.421 & 0.420 & 0.426 \\ 
    13 & 0.402 & 0.403 & 0.408 \\ 
    14 & 0.386 & 0.387 & 0.393 \\ 
    15 & 0.373 & 0.374 & 0.380 \\ 
    16 & 0.362 & 0.362 & 0.368 \\ 
   \hline
\end{tabular}
\end{minipage}
\caption{Survival curves values for case study 1.}
\label{table: case_study_1_estimates}
\end{table}
\end{landscape}

\begin{table}[ht]
\centering
\small

\captionof{table}{scenarios A) and B)}
\scriptsize
\begin{tabular}{ccccccc}
  \toprule
$k$ & $\text{c}_k^0+\text{c}_k^1$ & $\text{c}_k^0$ & $\text{c}_k^1$ & $\delta_k^0$ & $\delta_k^1$ \\ 
  \midrule
  1 & 35.72 & 0.03 & 35.70 & -0.05 & 0.64  \\ 
    2 & 15.28 & 0.07 & 15.22 & -0.05 & 0.30 \\
    3 & 8.88 & 0.11 & 8.77 & -0.05 & 0.18 \\
    4 & 5.96 & 0.17 & 5.79 & -0.04 & 0.11 \\
    5 & 4.41 & 0.24 & 4.17 & -0.04 & 0.08 \\
    6 & 3.52 & 0.31 & 3.21 & -0.04 & 0.05 \\
    7 & 2.99 & 0.38 & 2.60 & -0.03 & 0.04 \\
    8 & 2.66 & 0.45 & 2.21 & -0.03 & 0.03 \\
    9 & 2.46 & 0.51 & 1.95 & -0.03 & 0.03 \\
   10 & 2.33 & 0.57 & 1.76 & -0.02 & 0.02 \\
   11 & 2.25 & 0.61 & 1.63 & -0.02 & 0.02 \\
   12 & 2.19 & 0.65 & 1.54 & -0.02 & 0.02 \\
   13 & 2.15 & 0.68 & 1.46 & -0.01 & 0.01 \\
   14 & 2.12 & 0.71 & 1.41 & -0.01 & 0.01 \\
   15 & 2.10 & 0.73 & 1.36 & -0.01 & 0.01 \\
   \bottomrule
\end{tabular}

\vspace{1em} 

\captionof{table}{scenario C)}
\scriptsize
\begin{tabular}{ccccccc}
  \toprule
$k$ & $\text{c}_k^0+\text{c}_k^1$ & $\text{c}_k^0$ & $\text{c}_k^1$ & $\delta_k^0$ & $\delta_k^1$ \\
  \midrule
   13 & 7.54 & 0.13 & 7.41 & -0.10 & 0.43 \\
   14 & 5.11 & 0.20 & 4.90 & -0.09 & 0.21 \\
   15 & 3.68 & 0.30 & 3.38 & -0.05 & 0.11 \\
   \bottomrule
\end{tabular}
\caption{Extrapolation curves for case study 1}
\end{table}

\begin{figure}[H]
    \centering
{
    \begin{tikzpicture}[scale=0.55]
        \node[anchor=north, scale=0.4] at (0,0){%
        \pgfimage{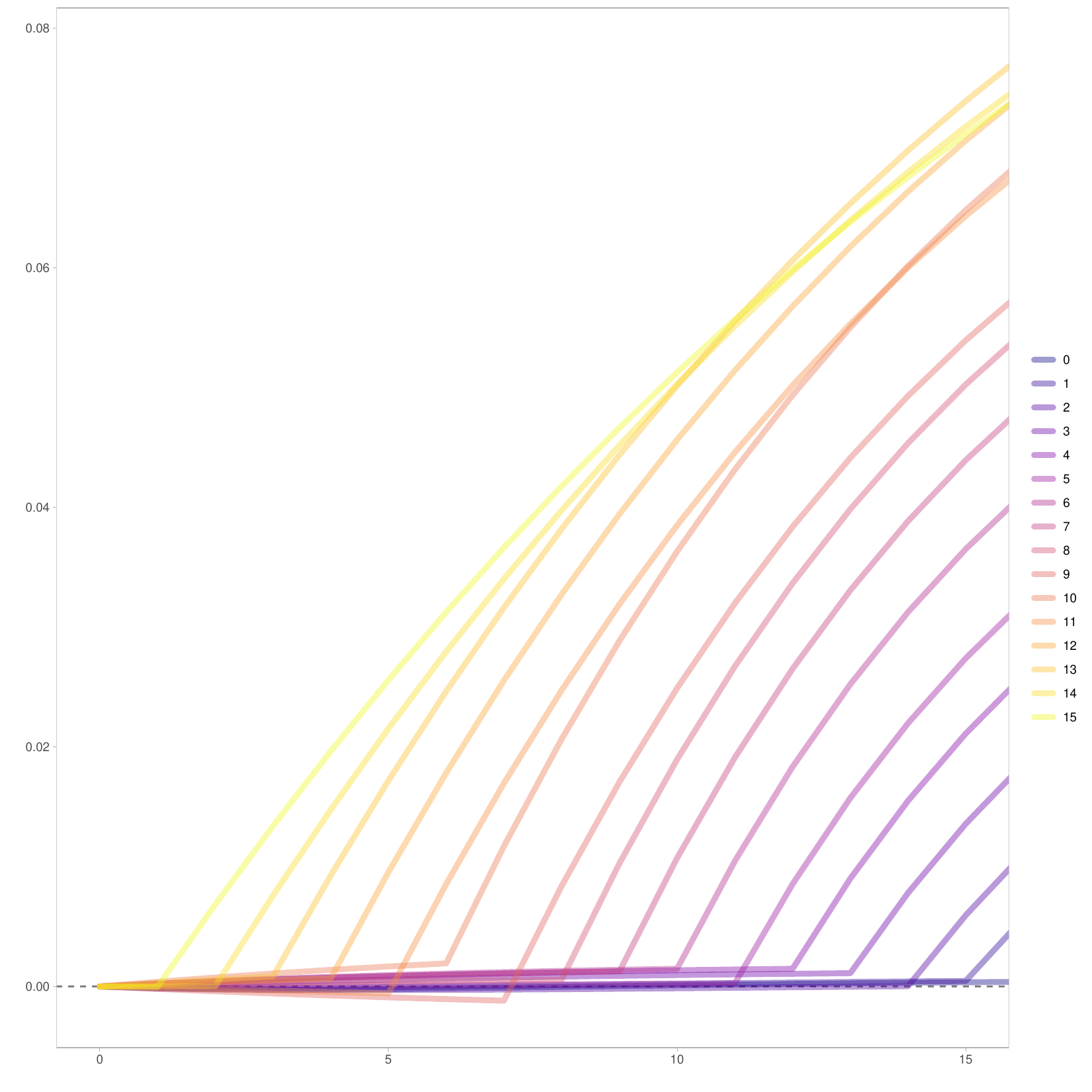}
        };    
        \node[anchor=north] at (0,-21){\small $k$};
        \node[anchor=north] at (9.2,-5.8){\footnotesize $e$};
        \node[anchor=north] at (-11.8,-6){\scriptsize $S_{k}(e) - \hat{S}_{k}^{*,\text{msm}}(e)$};

\end{tikzpicture}
}
\vspace{0.5cm}
\caption{ $S_{k}(e) - \hat{S}_{k}^{*, \text{msm}}(e)$ for $(e,k) \in \{0, \dots, K\} \times \{0, \dots, \mathcal{T}\} = {\{0, \dots, K\}}^2$ for case-study 1, scenario C).}
\label{fig: s_e_diff}
\end{figure}

\subsection{Case study 2}
\label{sec_app: case_study_2}
\subsubsection{Specifics and numerical results}
\label{sec_app: specifics_numerical_case_study_2}
The assumption $\RAID$ mentioned in the main text is articulated as in setting $\hyperref[table: conf_3]{\textbf{III}}$ in Appendix \ref{sec_app: censoring_assumptions_meaning_violation}, with 
\begin{align*}
    L_0 = \{ \text{treatment year}, \text{gender}, \text{ethnicity}, 
\text{age at ici start}, \text{ecog}, \text{liver failure}\}.
\end{align*}
See \citep{grippin_sars-cov-2_2025} for details on the covariates. 
The time since study-entry was available at the year level, not at the month level. Therefore, to compute c-values and extrapolation curves, we assumed individuals were uniformly spread within each year. Survival estimates, along with 95\% confidence intervals, are tabulated in Table \ref{table: survival_case_study_2}. Numerical results for extrapolation curves are tabulated in Table \ref{table: extrapolation_curves_case_study_2}.

\begin{table}[ht]
\centering
\tiny
\begin{tabular}{lcccccc}
  \toprule
$k^{\prime} \equiv k+1$ & Vaccine $(a=1)$ & ${\text{CI}}_{\text{low}}^{a=1}$ & ${\text{CI}}_{\text{high}}^{a=1}$ & No vaccine $(a=0)$ & ${\text{CI}}_{\text{low}}^{a=0}$ & ${\text{CI}}_{\text{high}}^{a=0}$ \\ 
  \midrule
0 & 1.00 & 1.00 & 1.00 & 1.00 & 1.00 & 1.00 \\ 
  1 & 0.99 & 0.97 & 1.00 & 0.99 & 0.97 & 1.00 \\ 
  2 & 0.96 & 0.94 & 0.98 & 0.96 & 0.94 & 0.98 \\ 
  3 & 0.95 & 0.92 & 0.97 & 0.95 & 0.92 & 0.97 \\ 
  4 & 0.93 & 0.90 & 0.96 & 0.93 & 0.90 & 0.96 \\ 
  5 & 0.93 & 0.89 & 0.95 & 0.90 & 0.83 & 0.94 \\ 
  6 & 0.92 & 0.88 & 0.95 & 0.88 & 0.81 & 0.93 \\ 
  7 & 0.90 & 0.86 & 0.94 & 0.85 & 0.77 & 0.91 \\ 
  8 & 0.88 & 0.81 & 0.93 & 0.85 & 0.77 & 0.91 \\ 
  9 & 0.86 & 0.79 & 0.92 & 0.82 & 0.73 & 0.88 \\ 
  10 & 0.82 & 0.75 & 0.89 & 0.77 & 0.68 & 0.84 \\ 
  11 & 0.82 & 0.74 & 0.88 & 0.72 & 0.63 & 0.82 \\ 
  12 & 0.81 & 0.73 & 0.87 & 0.71 & 0.61 & 0.81 \\ 
  13 & 0.77 & 0.68 & 0.84 & 0.71 & 0.61 & 0.81 \\ 
  14 & 0.77 & 0.68 & 0.84 & 0.68 & 0.58 & 0.78 \\ 
  15 & 0.74 & 0.64 & 0.82 & 0.64 & 0.54 & 0.75 \\ 
  16 & 0.71 & 0.61 & 0.79 & 0.62 & 0.53 & 0.73 \\ 
  17 & 0.69 & 0.59 & 0.78 & 0.60 & 0.50 & 0.72 \\ 
  18 & 0.66 & 0.56 & 0.75 & 0.60 & 0.50 & 0.72 \\ 
  19 & 0.66 & 0.56 & 0.75 & 0.58 & 0.48 & 0.71 \\ 
  20 & 0.63 & 0.53 & 0.73 & 0.57 & 0.47 & 0.70 \\ 
  21 & 0.63 & 0.53 & 0.73 & 0.56 & 0.46 & 0.69 \\ 
  22 & 0.62 & 0.53 & 0.72 & 0.54 & 0.44 & 0.66 \\ 
  23 & 0.62 & 0.52 & 0.71 & 0.54 & 0.44 & 0.66 \\ 
  24 & 0.62 & 0.52 & 0.71 & 0.54 & 0.44 & 0.66 \\ 
  25 & 0.62 & 0.52 & 0.71 & 0.52 & 0.42 & 0.65 \\ 
  26 & 0.60 & 0.50 & 0.70 & 0.49 & 0.38 & 0.62 \\ 
  27 & 0.60 & 0.50 & 0.70 & 0.49 & 0.38 & 0.62 \\ 
  28 & 0.60 & 0.50 & 0.70 & 0.49 & 0.38 & 0.62 \\ 
  29 & 0.60 & 0.50 & 0.70 & 0.46 & 0.35 & 0.59 \\ 
  30 & 0.60 & 0.50 & 0.70 & 0.46 & 0.35 & 0.59 \\ 
  31 & 0.59 & 0.49 & 0.68 & 0.46 & 0.35 & 0.59 \\ 
  32 & 0.55 & 0.44 & 0.67 & 0.46 & 0.35 & 0.59 \\ 
  33 & 0.55 & 0.44 & 0.67 & 0.44 & 0.34 & 0.58 \\ 
  34 & 0.54 & 0.43 & 0.66 & 0.43 & 0.32 & 0.56 \\ 
  35 & 0.54 & 0.43 & 0.66 & 0.43 & 0.32 & 0.56 \\ 
  36 & 0.54 & 0.43 & 0.66 & 0.43 & 0.32 & 0.56 \\ 
  37 & 0.54 & 0.43 & 0.66 & 0.43 & 0.32 & 0.56 \\ 
  38 & 0.48 & 0.32 & 0.62 & 0.43 & 0.32 & 0.56 \\ 
  39 & 0.48 & 0.32 & 0.62 & 0.43 & 0.32 & 0.56 \\ 
  40 & 0.48 & 0.32 & 0.62 & 0.40 & 0.29 & 0.54 \\ 
  41 & 0.48 & 0.32 & 0.62 & 0.40 & 0.29 & 0.54 \\ 
  42 & 0.48 & 0.32 & 0.62 & 0.40 & 0.29 & 0.54 \\ 
  43 & 0.48 & 0.32 & 0.62 & 0.40 & 0.29 & 0.54 \\ 
  44 & 0.48 & 0.32 & 0.62 & 0.40 & 0.29 & 0.54 \\ 
  45 & 0.48 & 0.32 & 0.62 & 0.40 & 0.29 & 0.54 \\ 
   \bottomrule
\end{tabular}
\caption{Estimates $\hat{S}_k^{*,\text{km}, a}$ of \eqref{estimand_mrna} for case study 2.} \label{table: survival_case_study_2}
\end{table}

\begin{table}[H]
\centering
\scriptsize
\begin{tabular}{ccccccccccc}
  \toprule
$k$ & $\textnormal{c}_k^0 + \textnormal{c}_k^1$ & $\textnormal{c}_k^0$ & $\textnormal{c}_k^1$ & $\delta_k^0$ & $\delta_k^1$ \\ 
  \midrule
    27 & 21.702 & 0.046 & 21.656 & -0.042 & 0.945 \\
    28 & 15.210 & 0.066 & 15.144 & -0.039 & 0.463 \\
    29 & 11.311 & 0.089 & 11.222 & -0.036 & 0.280 \\
    30 & 8.801 & 0.115 & 8.686 & -0.031 & 0.188  \\
    31 & 7.154 & 0.143 & 7.011 & -0.029 & 0.139 \\
    32 & 6.007 & 0.171 & 5.836 & -0.026 & 0.107 \\
    33 & 5.100 & 0.204 & 4.896 & -0.023 & 0.083 \\
    34 & 4.432 & 0.238 & 4.194 & -0.021 & 0.066 \\
    35 & 3.920 & 0.274 & 3.646 & -0.019 & 0.053 \\
    36 & 3.518 & 0.312 & 3.206 & -0.017 & 0.044 \\
    37 & 3.220 & 0.348 & 2.872 & -0.016 & 0.037 \\
    38 & 2.987 & 0.384 & 2.603 & -0.014 & 0.031 \\
    39 & 2.805 & 0.419 & 2.385 & -0.013 & 0.027 \\
    40 & 2.661 & 0.453 & 2.209 & -0.012 & 0.023 \\
    41 & 2.548 & 0.485 & 2.063 & -0.012 & 0.020 \\
    42 & 2.457 & 0.515 & 1.942 & -0.011 & 0.018 \\
    43 & 2.386 & 0.543 & 1.843 & -0.010 & 0.016 \\
    44 & 2.327 & 0.569 & 1.758 & -0.009 & 0.015 \\
   \bottomrule
\end{tabular}
\caption{Numerical results for extrapolation curves for case study 2.}
\label{table: extrapolation_curves_case_study_2}
\end{table}

\subsubsection{Additional complications in the analysis of \citep{grippin_sars-cov-2_2025}}
\label{sec_app: additional_complications_grippin}
Here, we illustrate in which sense the survival curves reported in \citet{grippin_sars-cov-2_2025}[Figure 1b, page 2] are biased and provide an explanation of how such a bias relates to \textit{study-entry-determined immortal time bias} introduced in Appendix \ref{sec_app: supplementary_wsc_deferred_main_text}.

Consider the same data structure as in Section \ref{sec: case_study_mrna_vaccine} of the main text; restrict the population to non-small cell lung cancer (NSCLC) (n=138) and conceptualize such observations, as implied by the methods used in the original analysis (continuous-time Kaplan-Meier estimator), as drawn from a population of i.i.d. individuals. Taking overall survival as the outcome, the discrete-time counterfactual formulation of the (implicit) estimands of interest is
\begin{align}\label{grippin_estimand_mrna}
    S_{k\mid a}^{a} \coloneqq P(Y_k^{a, \overline{c}_{k-1}=0} = 0\mid A=a), \, a \in \{0,1\} \\ 
    \text{ for } k \in\{0, \dots, K=40\}.
\end{align}
That is, the implicit contrast of interest is 
survival under SARS-CoV-2 mRNA vaccination $(a=1)$, among vaccinated $\{A = 1\}$, vs. survival under no vaccination $(a=0)$, among unvaccinated $\{A = 0\}$, in the absence of censoring.

To faithfully reproduce the results of \cite{grippin_sars-cov-2_2025} we considered WSC-Kaplan-Meier survival curves estimates, $\hat{S}_k^{*,\text{km},a=1}$ and $\hat{S}_k^{*,\text{km},a=0}$, where we did not correct for informative censoring ($P^{\textnormal{o}}=P^*$) and we did not consider a grace period.  Figure \ref{fig: nsclc_stage_iii} presents the two curves.
\begin{figure}[H]
    \centering%
{
    \begin{tikzpicture}[scale=1]
        \node[anchor=north, scale=0.6] at (0,0){%
        \pgfimage{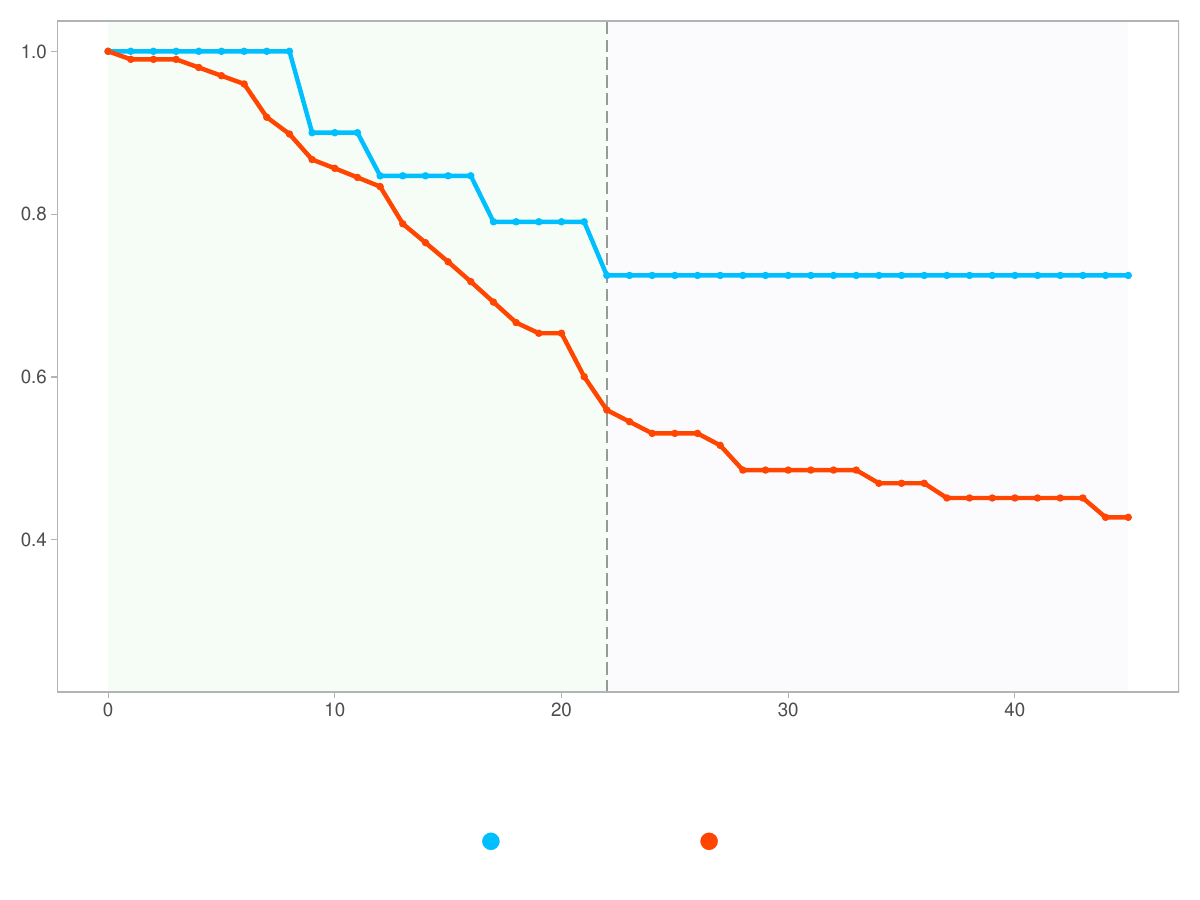}
        };    
        \node[anchor=north] at (0,-7.4){\scriptsize $k+1$};
        \node[anchor=north west] at (-1.4,-7.8){\small $\hat{S}_k^{*,\text{km},a=1}$};
        \node[anchor=north west] at (0.8,-7.8){\small $\hat{S}_k^{*,\text{km},a=0}$}; 
\end{tikzpicture}
}
\caption{$\hat{S}_k^{*,\text{km},a=1}$ and $\hat{S}_k^{*,\text{km},a=0}$ curves. The x-axis reports time $k^{\prime} = k+1$ (At $k^{\prime}=0$, all survival curves equal 1). The dashed-grey vertical line marks the time $k+1=21+1$ before (light-green area) and after (lavender area) which extrapolation assumptions are needed for the parameter targeted by $\hat{S}_k^{*,\text{km},a=1}$ to coincide with $S_{k\mid a=1}^{a=1}$.}
\label{fig: nsclc_stage_iii}
\end{figure}

Focus on $\hat{S}_k^{*,\text{km},a=1}$.  Among vaccinated individuals, no failures are observed after follow-up $k=21$ and therefore $\hat{S}_k^{*,\text{km},a=1}$ is constant for all $k>21$, i.e., the vaccination arm behaves as if follow-up had ended at $\Tilde{\mathcal{T}}=21$.

For the sake of simplicity, suppose that \CTC{} does not hold. Under $\RAID$ (formulated as in Appendix \ref{sec_app: censoring_assumptions_meaning_violation} with $L_0 = \emptyset$), the parameter targeted by $\hat{S}_k^{*,\text{km},a=1}$, $S_k^{*,a=1}$, can be shown to equal $S_{\text{min}(k,21)\mid a=1}^{a=1}$. Thus, $S_k^{*,a=1} - S_{k\mid a=1}^{a=1} = \mathbb{1}_{\{k > 21\}} \cdot \left(S_{21\mid a=1}^{a=1} - S_{k\mid a=1}^{a=1}\right)$, an expression that closely resembles that of \textit{study-entry-determined immortal time bias}, as both arise from extrapolating values after the end of the study. Accordingly, unless for every $k \in \{\Tilde{\mathcal{T}}+1=21+1, \dots, K=40\}$ $Y_{\Tilde{\mathcal{T}}}^{a=1, {c}_{\Tilde{\mathcal{T}}-1}=0} = 0  \Rightarrow Y_{k}^{a=1, {c}_{k-1}=0}=0$, $S_{k\mid a=1}^{a=1}$ undershoots the parameter targeted by $\hat{S}_k^{*,\text{km},a=1}$, $S_k^{*,a=1}$. An important consequence is that the protective effect of the vaccine for the given cohort is going to be inflated. Thus, estimates should be reported up to a maximum follow-up time $K^+ = 21$.

\begin{table}[ht]
\centering
\tiny
\begin{tabular}{lcc}
  \toprule
$k^{\prime} \equiv k+1$ & Vaccine ($a=1$) & No vaccine ($a=0$) \\ 
  \midrule
0 & 1.000 & 1.000 \\ 
  1 & 1.000 & 0.990 \\ 
  2 & 1.000 & 0.990 \\ 
  3 & 1.000 & 0.990 \\ 
  4 & 1.000 & 0.980 \\ 
  5 & 1.000 & 0.970 \\ 
  6 & 1.000 & 0.960 \\ 
  7 & 1.000 & 0.919 \\ 
  8 & 1.000 & 0.898 \\ 
  9 & 0.900 & 0.867 \\ 
  10 & 0.900 & 0.856 \\ 
  11 & 0.900 & 0.845 \\ 
  12 & 0.847 & 0.834 \\ 
  13 & 0.847 & 0.788 \\ 
  14 & 0.847 & 0.765 \\ 
  15 & 0.847 & 0.741 \\ 
  16 & 0.847 & 0.717 \\ 
  17 & 0.791 & 0.692 \\ 
  18 & 0.791 & 0.667 \\ 
  19 & 0.791 & 0.654 \\ 
  20 & 0.791 & 0.654 \\ 
  21 & 0.791 & 0.600 \\ 
  22 & 0.725 & 0.559 \\ 
  23 & 0.725 & 0.545 \\ 
  24 & 0.725 & 0.531 \\ 
  25 & 0.725 & 0.531 \\ 
  26 & 0.725 & 0.531 \\ 
  27 & 0.725 & 0.516 \\ 
  28 & 0.725 & 0.486 \\ 
  29 & 0.725 & 0.486 \\ 
  30 & 0.725 & 0.486 \\ 
  31 & 0.725 & 0.486 \\ 
  32 & 0.725 & 0.486 \\ 
  33 & 0.725 & 0.486 \\ 
  34 & 0.725 & 0.469 \\ 
  35 & 0.725 & 0.469 \\ 
  36 & 0.725 & 0.469 \\ 
  37 & 0.725 & 0.451 \\ 
  38 & 0.725 & 0.451 \\ 
  39 & 0.725 & 0.451 \\ 
  40 & 0.725 & 0.451 \\ 
  41 & 0.725 & 0.451 \\ 
   \bottomrule
\end{tabular}
\caption{Estimates $\hat{S}_k^{*,\text{km}, a}$ of \eqref{grippin_estimand_mrna}.}
\end{table}
\clearpage
\end{appendices}
    
    \renewcommand{\refname}{Appendix References}
    \renewcommand{\bibname}{Appendix References}
    
    \bibliographystyle{biom}
    \putbib[bibliography_appendices] 
\end{bibunit}

\end{document}